\pgfplotsset{compat=1.15}
\pgfplotsset{
tick label style={font=\footnotesize},
label style={font=\footnotesize},
legend style={font=\footnotesize},
}
\theoremstyle{plain}
\newtheorem{theorem}{Theorem}
\newtheorem{corollary}{Corollary}
\newtheorem{lemma}{Lemma}
\newtheorem{claim}{Claim}
\theoremstyle{definition}
\newtheorem{assumption}{Assumption}
\newtheorem{definition}{Definition}
\newtheorem{example}{Example}
\theoremstyle{remark}
\newtheorem{remark}{Remark}
\newtheorem{apptheorem}{Theorem}[section]
\let\plainqed\qedsymbol
\newcommand{\claimqed}{$\lrcorner$}
\newcommand{\ti}{(t) }
\newcommand{\q}{{\bf Q}}
\begin{document}

\begin{frontmatter}
\title{Load Balancing Policies in Heterogeneous Systems: Non-Monotone Stability and Heavy-Traffic Optimality}

\begin{aug}
\author{\fnms{Yishun} \snm{Luo}\ead[label=e2]{luo00329@umn.edu}}
\and
\author{\fnms{Martin} \snm{Zubeldia}\ead[label=e1]{zubeldia@umn.edu}}
\address{
University of Minnesota, Minneapolis, USA.}

\end{aug}

\begin{abstract}
We consider a discrete-time queueing system with \(n\) heterogeneous parallel single-server queues. Jobs arrive at a central dispatcher and must be assigned immediately to one of the queues. We develop a unified framework for a broad family of load-balancing policies, including Join the Shortest Queue (JSQ), Join the Shortest Expected Delay (JSED), and Power-of-\(d\) Choices (Po\(d\)). In this framework, the dispatcher updates queue-length information periodically, possibly at arbitrarily long intervals, and dispatches jobs based on the sampled permutation of scaled queue lengths and the servers' service rates. Leveraging this structure, we derive a closed-form, easily verifiable sufficient condition for stability. We further show that, for general policies, stability above the induced threshold need not be monotone in the arrival rate, and we obtain an exact characterization under a persistent bottleneck dominance condition. When the stability condition holds strictly, we prove state-space collapse and heavy-traffic delay optimality. We also show that the steady-state queue-length vector converges in distribution to a deterministic vector scaled by an exponential random variable in heavy traffic. Methodologically, we extend Lyapunov-drift and transform techniques to a cycle-based analysis with multi-step updates. Our results connect the policy-induced dispatch fractions and sampled permutations to stability, delay, and distributional performance, providing guidance for designing scalable load-balancing schemes with limited queue-length information.
\end{abstract}

\begin{keyword}
\kwd{Load balancing}
\kwd{Stability}
\kwd{Heavy-traffic}
\end{keyword}

\end{frontmatter}

\setcounter{tocdepth}{2}
\tableofcontents

\section{Introduction}
We consider a discrete-time queueing system with $n$ heterogeneous parallel single-server queues, each maintaining its own queue with an infinite buffer. In each time slot, a central dispatcher receives a random number of jobs and immediately assigns them to a queue based on a load balancing policy. A wide range of policies and their variants have been studied, including Random Routing (RAND) policy, Weighted Random Routing (WRAND) policy, Round Robin (RR) policy, Join the Shortest Queue (JSQ) policy, Join the Shortest Expected Delay (JSED) policy, Power-of-$d$ Choices (Po$d$) policy, Join the Idle Queue (JIQ) policy, etc. 

To evaluate such policies, the first fundamental performance metric is the system's stability region. A policy is considered throughput optimal (Definition \ref{def:throughput_optimal}) when it maximizes this region. Although JSQ is well known to be throughput optimal, policies such as Po$d$ are throughput optimal in homogeneous systems, but not necessarily in heterogeneous systems \cite{gardner2021scalable-podJIQgeneral-fastslow, hurtado2021throughput-Pod-stability-heterogeneous, abdul2022general-Pod-GeneralmodifiedPod}. Moreover, even under JSQ, routing errors that misdirect jobs to longer queues can shrink the stability region \cite{moyal2022stability-generalhomogeneous}. Theoretically, \cite{zhou2017designing-JBT, zhou2019flexible-MultiDimensionSSC, zhou2018degree-generalPpolicy} show that a policy is throughput optimal if it prefers to send jobs to shorter queues relative to RAND. However, such criteria require a specific policy design and are not applicable to general policies. In summary, the existing literature predominantly focuses on a limited set of well-structured policies and their variants. These policies are typically designed with stability guarantees. As a result, the underlying relationship between load balancing policies and stability remains unclear, and a general characterization of the stability region for general policies is still lacking.

For policies that are throughput optimal, the second fundamental performance metric is the average job delay. JSQ and JSED are known to be delay optimal in certain settings \cite{winston1977optimality-JSQoptimal, weber1978optimal-JSQoptimal, whitt1986deciding-JSED-JSQnotoptimal}. Since exact analysis for a policy is often intractable, several asymptotic regimes have been explored in the literature \cite{der2022scalable-literaturereview}, including the classical heavy-traffic regime and the many-server heavy-traffic regime. In some settings, JSQ and Po$d$ have been shown to be delay optimal if they are throughput optimal \cite{hurtado2021throughput-Pod-stability-heterogeneous, eryilmaz2012asymptotically-driftmethod} and their queue-length distributions have been characterized in heavy traffic \cite{hurtado2021throughput-Pod-stability-heterogeneous, hurtado2022heavy-JSQoptimalEquation, hurtado2022load-manyserver-JSQ}. Furthermore, a policy can be delay optimal in heavy traffic if it satisfies certain conditions that prioritize sending jobs to shorter queues \cite{zhou2017designing-JBT, zhou2019flexible-MultiDimensionSSC, zhou2018degree-generalPpolicy}. These conditions also depend on tailored policy designs and are not applicable to general policies. It remains an open question how to determine whether a general policy is delay optimal in heavy traffic. 

In practice, policy design is often constrained by communication budgets and imperfect state information. For example, JSQ and JSED can achieve excellent delay performance, but they typically require frequent system-wide queries and accurate queue-length information, leading to high communication overhead (e.g., 2$n$ messages per request). This motivates information-efficient alternatives such as Po$d$, pull-based schemes (e.g., JIQ \cite{lu2011join-JIQ}, JID \cite{gamarnik2022stability-memoryenhanced-JoinIDQ} and JBT-$r$ \cite{zhou2018heavy-JBT-r}), and policies based on historical information or estimation (e.g., RR-type variants \cite{ye2025optimal-optimalRRpolicy} and CARE \cite{mendelson2026load-Estimationtriggered-general}). Pull-based policies and CARE-type schemes can attain near-optimal delay under sparse communication through event-triggered server-to-dispatcher updates; CARE further maintains queue-length estimates by combining communication and approximation modules. However, in some deployments, pull-based signaling is impractical and accurate state estimation is infeasible or unreliable. Moreover, even when queue-state updates are available, they may only be refreshed intermittently. Consequently, the dispatcher must assign jobs under sparse and stale information.

When a policy must rely on stale queue-length information, aggressively exploiting the shortest queue may lead to poor delay performance. \cite{mitzenmacher1997useful-T-JSQ-herd-simulation} shows that naively applying JSQ can trigger herding effects, where consecutive arrivals are sent to the same server, leading to transient overload. Despite their practical relevance, dispatching policies with intermittently refreshed updates remain comparatively underexplored, particularly with respect to stability and delay performance \cite{lipshutz2019open-openproblem}.

\subsection{Our Contribution}
In this paper, we study load-balancing policies with periodic information updates in heterogeneous service systems. We focus on a broad family of policies whose dispatching decisions are based on the ordering of possibly scaled queue lengths, including RAND, WRAND, JSQ, JSED, and Po$d$. Existing analytical tools are limited in this information-constrained setting: they are typically policy-specific, rely on current queue-length information, or involve stationary quantities that are difficult to characterize in heterogeneous systems. To address these limitations, we develop a unified analytical framework for analyzing stability and heavy-traffic delay performance across this broad policy class.

Our analysis reveals fundamental mechanisms behind stability and heavy-traffic delay performance. At a high level, the behavior of ordering-based policies can be understood through the geometry of their ordering rules and the majorization relations they induce among queues. This perspective reduces complex dispatching dynamics to explicit stability and delay characterizations, and further guides the design of new policies that remain effective under sparse communication and stale queue-length information.

Our main contributions are as follows:
\begin{enumerate}
    \item \textbf{Unified policy framework.}
    We formulate a general policy class $\Pi$ (Section \ref{subsec:general_framework}) for dispatching under periodic queue-length updates. Within each update interval, policies in $\Pi$ make assignment decisions based only on the induced permutation of scaled queue lengths. This abstraction provides a common representation for analyzing and comparing a broad range of load-balancing policies in heterogeneous service systems.

    \item \textbf{Stability characterization, throughput optimality, and nonmonotone stability.}
    For any policy \(\pi\in\Pi\), we derive a closed-form sufficient condition for positive recurrence based on partial-sum load comparisons over the longest \(\boldsymbol{\gamma}\)-scaled queues (Theorem~\ref{thm_our:stable_withrate_withratio}). This condition induces a static sufficient-stability threshold \(h^*\), determined by the service rates and the dispatch fractions induced by each sampled permutation. The same partial-sum structure also yields a majorization-based sufficient condition for throughput optimality (Corollary~\ref{co_our:condition_withrate_withratio}). We then show that, for the general policy class \(\Pi\), the threshold \(h^*\) need not be necessary and stability above \(h^*\) need not be monotone in the arrival rate. To classify such load-dependent behavior, we introduce two fixed-load properties, persistent bottleneck dominance (PBD) and sequential catch-up drainage (SCD), which certify transient and stable load windows, respectively (Theorem~\ref{thm_our:fixed_load_stability_certificates} and Corollary~\ref{cor_our:load_window_certificate_classification}). Finally, under a global persistent bottleneck dominance condition, \(h^*\) becomes a necessary threshold, yielding an exact stability characterization (Theorem~\ref{thm_our:stable_withrate_withratio_necessary_general}).
    
    \item \textbf{Heavy-traffic delay optimality and characterization.}
    Under the strict majorization condition, we establish state-space collapse and derive an upper bound on the steady-state queue length (theorems~\ref{thm_our:SSC_withrate} and~\ref{thm_our:SSCUpper_withrate_withratio}). These results imply asymptotic delay optimality in heavy traffic (Corollary~\ref{co_our:delayoptimality}) and clarify the role of stale information: the error terms introduced by periodic updates are asymptotically negligible when the update interval is fixed or grows sufficiently slowly. We further characterize the steady-state queue-length distribution in the heavy-traffic limit (Theorem~\ref{thm_our:queuelength_distribution}).
    
    \item \textbf{Policy design under sparse communication.}
    The framework quantifies how communication can be reduced without sacrificing heavy-traffic delay optimality. In particular, if the interval between information updates grows more slowly than $1/\sqrt{\epsilon}$, where $\epsilon$ denotes the capacity slack defined in Section~\ref{subsec:DTQS}, then heavy-traffic delay optimality is preserved while communication per time slot and per arriving job can vanish. We apply the framework to Po$d$-ED and characterize its stability, throughput optimality, and heavy-traffic delay optimality (Corollary~\ref{co_our:Pod-ED}). We also propose $k$-WRAND-SLQ-$d$, a communication-efficient policy family that achieves both throughput optimality and heavy-traffic delay optimality (Corollary~\ref{co_our:k-WRAND-SLQ-d}). Our simulations show that, under stale information, \(k\)-WRAND-SLQ-\(d\) with appropriate parameters can yield lower delays than stale JSQ under the same communication budget.
    
    \item \textbf{Methodology.}
    Methodologically, we extend the drift-based heavy-traffic approach of \cite{eryilmaz2012asymptotically-driftmethod} and the multi-step Lyapunov drift method of \cite{Yishun2025k-slq-d} to general policies with periodic queue-length updates. The main challenge is that dispatching decisions within a dispatching cycle are based on the ordering sampled at the beginning of the cycle, which need not coincide with the current ordering as queue lengths evolve. We develop a \(T\)-step drift framework that connects these periodic decisions to the policy-induced dispatch fractions \(f_{l,\eta}\) defined in Section~\ref{subsec:NotationfromFramework}, which enables the stability and state-space-collapse analyses. To handle load-dependent stability behavior, we also combine moving-barrier supermartingale arguments with fluid limit Lyapunov arguments: the former certifies persistent bottleneck growth, while the latter certifies stability through sequential catch-up and finite-time draining of fluid limits. For the limiting distribution, we extend the transform method of \cite{hurtado2020transform-Transformmethod} from one-step to \(T\)-step identities.
\end{enumerate}

\subsection{Basic Notation} \label{subsec:basic_notation}
We use $\mathbb R$ to denote the set of real numbers and $\mathbb R_+$ to denote the set of nonnegative real numbers. We let $\mathbb N=\{0,1,2,\ldots\}$ denote the set of natural numbers, and $\mathbb R^n$ denote the set of $n$-dimensional real vectors. We use bold letters to denote vectors and, for any vector $\mathbf x$, we use $x_i$ to denote the $i$th coordinate of $\mathbf x$. The standard inner product of two vectors $\mathbf x,\mathbf y\in\mathbb R^n$ is defined as $\langle \mathbf x,\mathbf y\rangle=\mathbf x^T\mathbf y=\sum_{i=1}^n x_i y_i$. For any vector $\mathbf x\in\mathbb R^n$, the $\ell_2$-norm is denoted by $\|\mathbf x\|_2=\sqrt{\langle \mathbf x,\mathbf x\rangle}$, and the $\ell_1$-norm is denoted by $\|\mathbf x\|_1=\sum_{i=1}^n |x_i|$. For a positive vector $\boldsymbol\gamma\in\mathbb R_+^n$, we define $\gamma_{\min}:=\min_{i\in[n]}\gamma_i$ and $\gamma_{\max}:=\max_{i\in[n]}\gamma_i$. We also define the $\boldsymbol\gamma$-weighted inner product by $\langle \mathbf x,\mathbf y\rangle_\gamma:=\sum_{i=1}^n \gamma_i x_i y_i$, and the corresponding weighted norm by $\|\mathbf x\|_\gamma^2:=\langle \mathbf x,\mathbf x\rangle_\gamma=\sum_{i=1}^n\gamma_i x_i^2$. For any set $\mathcal A$, $\mathbf 1_{\mathcal A}$ denotes the indicator random variable for $\mathcal A$. We use $\lfloor\cdot\rfloor$ and $\lceil\cdot\rceil$ to denote the floor and ceiling functions, respectively. We let $[n]=\{1,2,\ldots,n\}$ and let $\mathcal S_n$ denote the set of all permutations of $[n]$. Let $e_i\in\{0,1\}$ for all $i\in[n]$ denote an indicator variable.

\section{Literature Review}

\subsection{Load Balancing Policies} \label{lr:policy}
Load balancing policies have been extensively studied in $n$-server parallel systems that operate in both homogeneous and heterogeneous settings. These policies exhibit distinct trade-offs in information requirements, communication overhead, and delay performance. 

In homogeneous systems, among the simplest are the RAND and RR \cite{roundRobin1991-optimalnocommunication}, both of which require neither system state information nor communication. However, this simplicity comes at the cost of poor delay performance. At the other end of the spectrum, JSQ achieves delay optimality \cite{winston1977optimality-JSQoptimal, weber1978optimal-JSQoptimal}, but it relies on system-wide state information and thus incurs substantial communication overhead. To balance performance with communication efficiency, Po$d$ \cite{vvedenskaya1996queueing-Pod-Po2-original, mitzenmacher2002power-Po2first, mukherjee2018universality-Pod-manyserver} has been proposed. Po$d$ significantly improves delay performance over RAND and RR, while requiring considerably less communication than JSQ. To further reduce communication overhead, the dispatcher can utilize local state information. When equipped with local memory, the dispatcher can implement memory-enhanced policies such as JIQ \cite{lu2011join-JIQ, badonnel2008dynamic-JIQ}, Join the Open Queue (JOQ) policy \cite{van2020zero-memoryenhanced-JOQ}, Po$d$ with memory \cite{mitzenmacher2002load-podwithMemory, shah2002use-memoryenhanced-podm-bin, anselmi2020power-Podwithmemory}. More generally, \cite{gamarnik2018delay-memoryenhanced-pullbased} indicates that delay performance is closely related to the availability of system resources, specifically memory capacity and message exchange rate.

In heterogeneous systems, policies such as RAND, RR and even Po$d$ may be unstable \cite{gardner2021scalable-podJIQgeneral-fastslow}, because slow servers may receive more jobs than they can handle. To address these issues, load balancing policies can take advantage of current state information, service rate information and memory to improve performance and reduce communication frequency. On one hand, some policies incorporate service rate information, current queue lengths and system status when selecting servers or assigning jobs. Policies such as WRAND, Balanced Po$d$ \cite{chen2012asymptotic-Pod-Heterogeneous-Proportional}, and Balanced RR \cite{ye2025optimal-optimalRRpolicy} select servers proportional to their service rates. JSED \cite{adan1996shortest-JSED-earlang} and Speed-Aware JSQ (SA-JSQ) \cite{bhambay2022asymptotic-speedawareJSQ} leverage service rates in the assignment process, with the aim of minimizing the expected delay (waiting time in queue). Along the same lines, power-of-$d$ variants have been proposed. For example, the Po$d$-Expected Delay (Po$d$-ED) policy \cite{gardner2021scalable-podJIQgeneral-fastslow, abdul2022general-Pod-GeneralmodifiedPod} uses service-rate information, whereas the Po$d$-Expected Sojourn Time (Po$d$-ET) policy \cite{abdul2022general-Pod-GeneralmodifiedPod, selen2016steady-JSED} uses both service-rate and job-age information in the assignment decision. Furthermore, as shown in the general frameworks of Po$d$ and JIQ \cite{gardner2021scalable-podJIQgeneral-fastslow, abdul2022general-Pod-GeneralmodifiedPod}, service-rate information is used during both server selection and job assignment. On the other hand, memory-enhanced policies have also been developed, such as JIQ \cite{stolyar2015pull-JIQ-manyserver}, Join the stored ID (JID) policy \cite{gamarnik2022stability-memoryenhanced-JoinIDQ}, Join Below Threshold (JBT-$r$) policy \cite{zhou2018heavy-JBT-r}, Idle-One-First (I1F) policy \cite{gupta2019load-JSQ-NSQ-IDle-1-First}, Persistent Idle (PI) policy \cite{atar2020persistent-JIQ-memoryenhanced-Drift}, Local Shortest Queue (LSQ) policy \cite{vargaftik2020lsq-LSQ}, Local Estimation Driven (LED) policy \cite{zhou2021asymptotically-LED-meomory}, $k$-Skip-the-$d$-Longest-Queues ($k$-SLQ-$d$) policy \cite{Yishun2025k-slq-d} and RR variants that use arrival and service history (RR-AC, RR-SC) \cite{ye2025optimal-optimalRRpolicy}. Moreover, when the dispatcher and the servers can approximate system state from observations (e.g., queue lengths, service rates, or workload) and use these approximations to make assignment decisions and report status, approximation-enhanced policies such as Synchronized-Updates or Asynchronized-Updates Join the Shortest Queue (SUJSQ and AUJSQ) policy \cite{van2019hyper-Estimation-JSQ}, CARE \cite{mendelson2026load-Estimationtriggered-general} can be employed.

Among all load balancing policies, we focus on a family of policies that make dispatching decisions based on the order of (possibly scaled) queue lengths, including but not limited to, JSQ, JSED, Po$d$, Balanced Po$d$ and $k$-SLQ-$d$. Our general framework also accommodates RAND, RR, WRAND and Balanced RR, as special cases. To the best of our knowledge, we provide the first unified analytical characterization of this family of policies in heterogeneous systems.

\subsection{Stability of Load Balancing Policies}\label{lr:stability}
In queueing theory, a system becomes unstable when the arrival rate to any server exceeds its service rate over time, leading to unbounded growth of the queue. Policies such as WRAND aim to achieve throughput optimality by controlling the load directed to each server. Meanwhile, JSQ is known to be throughput optimal in general settings \cite{bramson2011stability-JSQoptimal}, indicating that systems are stable if jobs are sent to the shortest queue. Other policies like Po$d$ and JIQ combine these two ideas in such a way that they tend to send more jobs to shorter queues while sending fewer jobs to longer queues than RAND. 

In homogeneous systems, \cite{moyal2022stability-generalhomogeneous} generalizes RAND by introducing $\mathbf{p}$-allocation policies, where the $l$th element specifies the probability of assigning an arrival to the $l$th shortest queue. They show that policies such as JSQ, Po$d$, and other order-based policies are throughput optimal if their corresponding $\mathbf{p}$-allocation vector is dominated by the uniform vector in generalized Schur-convex order. In addition, the authors show that policies that assign more jobs even to the second shortest queue may not be throughput optimal.

In heterogeneous systems, \cite{zhou2017designing-JBT} characterizes the throughput optimality of a policy by its tendency to assign more jobs to shorter queues than to longer ones, relative to WRAND. This condition requires a specifically tailored policy design. \cite{zhou2018degree-generalPpolicy} shows that a policy is throughput-optimal if, in steady-state, it outperforms WRAND by exhibiting a strong preference for shorter queues. However, verifying this condition requires evaluating, under the stationary distribution, the expected service rate of the server occupying the $l$th position in the queue length ordering, which is not available in closed form and can be difficult to check. \cite{gamarnik2022stability-memoryenhanced-JoinIDQ} establishes a relationship between memory requirements, message exchange rate, and throughput optimality. For Po$d$, \cite{hurtado2021throughput-Pod-stability-heterogeneous} provides a necessary and sufficient condition for stability in heterogeneous systems. 

In summary, for heterogeneous systems, it remains unclear whether a general load balancing policy is throughput optimal, and how to characterize its stability region when it is not. We derive a sufficient condition for stability and show that it becomes necessary under a drift-dominance assumption. We further provide a majorization-type partial-sum condition for throughput optimality.
 
\subsection{Heavy-traffic Delay Optimality}\label{lr:delay}
One of the most common asymptotic regimes in the literature is the classical heavy-traffic regime \cite{kingman1962queues-classic-heavytraffic, halfin1981heavy-heavytraffic-region}. We focus on this regime, where the number of servers is fixed and the system load approaches one.

The key behavior required for a policy to achieve heavy-traffic delay optimality is known as state-space collapse, where the multidimensional queue-length vector concentrates on a single dimension or a lower-dimensional subspace. This property has been used to analyze the heavy-traffic delay optimality of JSQ \cite{eryilmaz2012asymptotically-driftmethod, Foschini1978-SSC-JSQ} and Po$d$ \cite{hurtado2020transform-Transformmethod}. \cite{zhou2019flexible-MultiDimensionSSC} also shows that a policy can achieve heavy-traffic delay optimality under a multidimensional state-space collapse.

Among all policies, JSQ has been established to be delay optimal in the classical heavy-traffic regime, under both continuous-time \cite{winston1977optimality-JSQoptimal, weber1978optimal-JSQoptimal} and discrete-time settings \cite{eryilmaz2012asymptotically-driftmethod, hurtado2022heavy-JSQoptimalEquation, hurtado2020transform-Transformmethod}. Furthermore, under JSQ, the queue-length vector is shown to converge to a multidimensional exponential random vector with identical elements \cite{hurtado2020transform-Transformmethod}. For other policies, Po$d$ has been shown to be delay optimal in heavy traffic as long as it is throughput optimal and its queue length converges to the same distribution as that of JSQ \cite{ hurtado2021throughput-Pod-stability-heterogeneous, hurtado2020transform-Transformmethod, maguluri2014heavy-Pod-Po2-switch}. JIQ is not delay optimal in heavy traffic \cite{zhou2017designing-JBT}, whereas JBT-$r$ has been shown to achieve delay optimality in this regime \cite{zhou2018heavy-JBT-r}. $k$-SLQ-$d$ achieves delay optimality by choosing an appropriate $d$ \cite{Yishun2025k-slq-d}. RR-AC and RR-SC have been shown to be delay optimal in heavy traffic with appropriate parameters \cite{ye2025optimal-optimalRRpolicy}. In addition, two conditions have been provided for a policy to be delay optimal in heavy traffic, one is that the policy satisfies a "Long-term Dispatching Preference" condition \cite{zhou2018degree-generalPpolicy}, and the other is that the policy strictly prefers sending more jobs to shorter queues than WRAND \cite{zhou2019flexible-MultiDimensionSSC}. However, only policies with a specific design satisfy these two conditions. For policies using memory, such as LED, it is delay optimal in heavy traffic if the local estimation of the queue length is close to its true value \cite{zhou2021asymptotically-LED-meomory}.

Prior work has established heavy-traffic delay optimality for specific policies or carefully designed families of policies. How to verify heavy-traffic delay optimality for a general policy remains open. We provide an easily verifiable sufficient condition that guarantees heavy-traffic delay optimality and derive the asymptotic queue-length distribution.

\subsection{Policy Performance with Stale Information}\label{lr:stale_information}
Policies such as JSQ and Po$d$ rely on timely state information from the dispatcher. When queue-length information is available at each dispatching epoch, \cite{zhou2018degree-generalPpolicy} proposes a measure termed the degree of dispatching preference and shows via simulation that policies with a stronger preference for shorter queues tend to yield better delay performance. However, when queue-length information is stale, a stronger preference based on the available queue-length estimates need not translate into better delay performance.

In practice, queue-length information may be stale due to periodic updates, communication latency or physical distance between dispatcher and servers. Such staleness can induce herding effects: many arrivals make correlated decisions based on the same outdated snapshot and temporarily overload a subset of servers. \cite{mitzenmacher1997useful-T-JSQ-herd-simulation} documents this effect for periodically updated JSQ and Po$d$, and \cite{zhu2020racksched-T-JSQ-herd-implentation-serverlevel} reports similar behavior in a system with delayed updates. For settings with delayed queue-length information, \cite{atar2021heavy-JSEstimatedQ} proposes the Join the Shortest Estimated Queue (JSEstQ) policy and derives its heavy-traffic diffusion limit. \cite{He2025-RJSQ-physicaldistance} proposes the Randomized Join the Shortest Queue (RJSQ) policy for settings with physical distance and shows that RJSQ is asymptotically delay optimal in heavy traffic.

Our framework analyzes policies in settings where queue-length information is updated periodically. We establish the existence of herding effects through simulation and design a new policy that achieves better delay performance than JSQ and Po$d$ under the same communication budget.

\subsection{Methodology}\label{lr:methodology}
Several approaches have been used to establish stability and delay optimality in heavy traffic, both in the continuous-time and discrete-time settings. In the continuous-time setting, \cite{ye2025optimal-optimalRRpolicy} uses the diffusion limit method to study RR-AC and RR-SC. In the discrete-time setting, the drift method, based on Foster-Lyapunov theory, is widely used \cite{hajek1982hitting-lyapunovupperbound, dai2020processing-ForstLyapunovThm}, including in single-period models \cite{hurtado2021throughput-Pod-stability-heterogeneous, zhou2017designing-JBT, zhou2019flexible-MultiDimensionSSC, eryilmaz2012asymptotically-driftmethod, zhou2018heavy-JBT-r, hurtado2020transform-Transformmethod} and in multi-period models \cite{zhou2018degree-generalPpolicy,zhou2021asymptotically-LED-meomory, Yishun2025k-slq-d}. Moreover, the transform method \cite{hurtado2021throughput-Pod-stability-heterogeneous, hurtado2022load-manyserver-JSQ, hurtado2020transform-Transformmethod}, Stein's method \cite{hurtado2022load-manyserver-JSQ, gaunt2020stein-Stein-single-server, zhou2020note-steinsmethod}, and the basic adjoint relationship (BAR) \cite{guang2025steady-BAR-continuous-distribution} method have been used to characterize the queue-length distribution in heavy traffic.

In this paper, we extend the multi-period drift method in \cite{Yishun2025k-slq-d} to general policies with periodic queue-length updates. To analyze load-dependent stability behavior, we combine moving-barrier supermartingale arguments with fluid limit Lyapunov arguments, which are used to establish the persistent bottleneck dominance and sequential catch-up drainage certificates. We also extend the transform method in \cite{hurtado2020transform-Transformmethod} to the multi-period setting.

\section{Queueing Model and General Dispatching Framework }
In this section, we present the discrete-time queueing model (Section~\ref{subsec:DTQS}), introduce a general dispatching framework (Section~\ref{subsec:general_framework}), and establish key notation for this framework (Section~\ref{subsec:NotationfromFramework}), which will be used throughout the analysis in sections~\ref{sec:stability_throughputoptimality} and~\ref{sec:heavytraffic_queuelengthdistribution}.

\subsection{Discrete-time Queueing Model} \label{subsec:DTQS}

We consider a discrete-time queueing system consisting of $n$ single-server FIFO queues with infinite buffers. Let $\q(t)=(Q_1(t),\dots,Q_n(t))$ be the queue-length vector at the beginning of time slot $t$, where $Q_l(t)$ is the length of the $l$th queue. 

Jobs arrive in the system in each time slot according to a process $ \{A(t) \}_{t\geq0}$, where $A(t)$ denotes the number of arrivals in time slot $t$. Upon the arrival of $A(t)$ jobs at time $t$, a dispatching policy (specified later) assigns all of them to a single queue with index $I^*(t) \in [n]$. After this, server $l$ can serve up to $S_l(t)$ jobs in slot $t$; let $\mathbf S(t)=(S_1(t),\ldots,S_n(t))$ denote the vector of potential services. We impose the following assumptions on the arrival and potential service processes.

\begin{assumption}[Arrivals]\label{ass:arrivals}
The arrival process $\{A(t)\}_{t\ge 0}$ is i.i.d.\ across time slots and takes values in $\mathbb{N}$, with $\mathbb{E}[A(1)]=n\lambda$, $\text{Var}(A(1))=n\sigma_\lambda^2$, and $A(1)\le nA_{\max}$ a.s.
\end{assumption}

\begin{assumption}[Potential Services]\label{ass:service}
For each server $l\in[n]$, the potential service process $\{S_l(t)\}_{t\ge 0}$ is i.i.d.\ across time slots and takes values in $\mathbb{N}$, with $\mathbb{E}[S_l(1)]=\mu_l$, $\text{Var}(S_l(1))=\sigma_l^2$, and $S_l(1)\le S_{\max}$ a.s. Moreover, the processes $\{S_l(t)\}_{t\ge 0}$ are independent across $l\in[n]$.
\end{assumption}
Without loss of generality, we order the service rates so that $\mu_1 \leq ... \leq \mu_n$. We let $\mu_{\min}:=\mu_1$ and $\mu_{\max}:=\mu_n$.

\begin{assumption}[Independence]\label{ass:indep}
The arrival process $\{A(t)\}_{t\geq0}$ is independent of $\{S_l(t)\}_{t\geq0}$ for all $l\in[n]$.
\end{assumption}
These assumptions are commonly adopted in the literature on discrete-time settings \cite{zhou2017designing-JBT, zhou2019flexible-MultiDimensionSSC, zhou2018degree-generalPpolicy, eryilmaz2012asymptotically-driftmethod, hurtado2022heavy-JSQoptimalEquation, zhou2018heavy-JBT-r, hurtado2020transform-Transformmethod}. In this discrete-time formulation, the dispatcher makes one assignment decision per time slot; hence, all jobs arriving within the same slot are treated as a batch and assigned to a single queue. This model can be interpreted as a discretization of a continuous-time system in which assignment decisions are updated only at discrete decision epochs. Alternatively, one may view each slot as the arrival of a single job that consists of multiple subtasks, all of which must be assigned to the same server.


To describe the queue dynamics, we denote the dispatching action chosen by the dispatcher by ${\bf Z}(t) \in \{0,1\}^n$ such that $Z_{j}(t) = 1$ for $j = I^*(t) $ and $Z_{j}(t) = 0$ otherwise. For each $l \in[n]$, let
\[ A_l(t):=A(t)Z_l(t) \]
denote the number of arrivals routed to queue $l$ at time $t$. Then, for each $l\in[n]$,
\begin{equation*}
    Q_l(t+1)=\left[ Q_l(t)+A_l(t)-S_l(t) \right ]^+.
\end{equation*} 
Equivalently, the queue length process is given by
\begin{equation}
\label{eq: jsq_lindley}
    \q(t+1) = [\q(t) +A(t){\bf Z}(t) -{\bf S}(t) ]^+ = \q(t) +A(t){\bf Z}(t) -{\bf S}(t) + {\bf U}(t) ,
\end{equation}
where ${\bf U}(t)$ represents the unused services that arise because there might not be enough jobs to serve. Note that the unused-service term $U_i\ti $ is positive only if $Q_i(t+1) =0$, which implies $Q_i(t+1)U_i(t) =0$ for all $i$, or simply $\langle \q(t+1),{\bf U}(t) \rangle =0$ for any $t\geq0$. Also, the unused services cannot be larger than the service itself, so we have $0\leq U_i(t)\leq S_i(t) \leq S_{\max}$. We drop the dependence on $t$ to denote the variables in steady state, i.e., $\q$ follows the steady-state distribution of the queue length process $\{\q(t)\}_{t=0}^\infty$.

We define the capacity slack as 
\begin{align*}
    \epsilon := \sum_{l=1}^{n} \mu_l - n\lambda,
\end{align*}
and we study the asymptotic behavior of the steady-state queue length in the heavy-traffic regime as $\epsilon \downarrow 0 $.

\subsection{General Dispatching Policy Framework} \label{subsec:general_framework} 
We construct a general framework $\Pi$ that includes a broad family of policies such as RAND, RR, JSQ, Po$d$, JSED, and others (see Example~\ref{ex:framework_policy}). 
\begin{definition}[Dispatching policy framework $\Pi$]\label{def:policy_framework}
    A policy $\pi\in\Pi$ is characterized by a tuple
    \[
        (T,\boldsymbol{\gamma},L(\cdot),Y(\cdot)),
    \]
    which specifies the dispatching cycle, the sorting function, and the decision function as follows.
    \begin{enumerate}
        \item Dispatching cycle.
        \begin{enumerate}
            \item A positive integer $T$, which specifies the dispatching cycle. \\ 
            The parameter $T$ means that queue-length information is sampled once per cycle, and the decision sequence generated from that sample is used for the next $T$ time slots.
        \end{enumerate}
        \item Sorting function.
        \begin{enumerate}
            \item A vector with all positive components $\boldsymbol{\gamma} \in \mathbb{R}_+^n$. Define the $\boldsymbol{\gamma}$-scaled queue-length vector at time $r T$ for all $r \in \mathbb{N}$ by
            \[ \mathbf{Q}^{(\boldsymbol{\gamma})}(rT):= \left( \frac{Q_l(rT)}{\gamma_l} \right)_{l=1}^n. \]
            Existing policies commonly adopt $\boldsymbol{\gamma}=\mathbf{1}$ and $\boldsymbol{\gamma}=\boldsymbol{\mu}$. The $\mathbf{1}$-scaled queue length coincides with the raw queue length, while the $\boldsymbol{\mu}$-scaled queue length $Q_l/\mu_l$ is often interpreted as the expected waiting time at queue $l$ in heterogeneous systems. In this paper, delay refers to the waiting time in queue.
    
            \item Randomization variables $\{V_r\}_{r\geq 0}$, which are i.i.d., Unif$[0,1]$.
            \item A permutation
            \begin{align*} \label{eq:eta_k}
                \eta_r := L\left (\mathbf{Q}^{(\boldsymbol{\gamma})}(rT),V_r \right)
            \end{align*} 
            where $L(\mathbf{Q}^{(\boldsymbol{\gamma})}(rT),V_r) :\mathbb{R}_+^n \times [0,1] \rightarrow \mathcal{S}_n$ is such that
            \[ Q^{(\boldsymbol{\gamma})}_{\eta_r(1)}(rT) \geq ... \geq Q^{(\boldsymbol{\gamma})}_{\eta_r(n)}(rT), \]
            Thus, $\eta_r$ records the ranking of queues based on the sampled $\boldsymbol{\gamma}$-scaled queue lengths, from the largest to the smallest.
        \end{enumerate}
        \item Decision function.
        \begin{enumerate}
            \item Randomization variables $\{W_r\}_{r\geq 0}$, which are i.i.d., Unif$[0,1]$.
            \item Decision vectors $\phi(r) := Y(\eta_r,\boldsymbol{\mu},W_r)$ where 
            \[
                Y(\eta_r,\boldsymbol{\mu},W_r):\mathcal{S}_n \times (0,\infty)^n \times [0,1]\rightarrow \left\{ (e_1,\ldots,e_n)\in \{0,1\}^n: \sum_{i=1}^n e_i=1 \right\}^{T}
            \]
            The decision function $Y$ maps the permutation $\eta_r$, service rates $\boldsymbol{\mu}$, and randomization $W_r$ to a sequence of dispatching decisions used during cycle $r$. Specifically, $\phi(r)=(\phi_1(r),\ldots,\phi_T(r))$, where each $\phi_j(r)$ is an $n$-dimensional $0$-$1$ vector with exactly one component equal to one. The component equal to one indicates the queue selected in the $j$-th time slot of cycle $r$. Thus, for all $t$, $ \left \| \phi_{t-\lfloor t/T \rfloor T+1}(\lfloor t/T \rfloor) \right \|_1=1 $.  \hfill $\square$
        \end{enumerate} 
    \end{enumerate} 
\end{definition}
\begin{remark}[Normalization of the scaling vector] \label{rmk_our:normalized_scaled_vector}
    The scaling vector $\boldsymbol{\gamma}$ is defined only up to a positive multiplicative constant. Indeed, for any constant $c>0$, replacing $\boldsymbol{\gamma}$ by $c\boldsymbol{\gamma}$ gives
    \[
        \frac{Q_l(t)}{c\gamma_l} = \frac{1}{c}\frac{Q_l(t)}{\gamma_l},
        \qquad \forall \, l\in[n],
    \]
    and therefore does not change the ordering of the scaled queue lengths. Hence, the induced permutation $\eta_r$ remains unchanged. Without loss of generality, throughout the paper we normalize the scaling vector so that $\|\boldsymbol{\gamma}\|_1=1$. \hfill \(\square\)
\end{remark}
Within this framework, at time $t$, incoming jobs are sent to queue
\[
    I^*(t) := \arg\max_{i\in[n]} \left\{ \left[ \phi_{t-\lfloor t/T\rfloor T+1}(\lfloor t/T\rfloor) \right]_i \right\}.
\]
Then, the corresponding decision vector $\mathbf{Z}(t)$ is
\[ \mathbf{Z}(t)=\phi_{t- \lfloor t/T \rfloor T +1}(\lfloor t/T \rfloor). \]
At each sampling epoch $t=rT$, $r\in\mathbb{N}$, the dispatcher observes the queue length information and applies the sorting function $L(\cdot)$ to the $\boldsymbol{\gamma}$-scaled queues. Based on the resulting permutation $\eta_r$, the decision function $Y(\cdot)$ generates the $T$-slot decision sequence $\{ I^*(rT),\ldots,I^*((r+1)T-1) \}$, which is applied over slots $t\in\{rT,\ldots,(r+1)T-1\}$. 

This framework also allows us to quantify the communication overhead induced by a policy $\pi$. Let $m(\pi)$ denote the total number of messages exchanged between the dispatcher and servers at each sampling epoch. We define the average number of messages per time slot and per job as
\begin{align*}
    M^{\pi}_{\text{slot}} := \frac{m(\pi)}{T}, \quad M^{\pi}_{\text{job}} := \frac{m(\pi)}{n\lambda T},
\end{align*}
where $n\lambda$ is the expected number of arrivals per time slot. For instance, under JSQ the dispatcher communicates with all $n$ servers at each sampling epoch, so $m(\mathrm{JSQ}) = 2n$. Under Po$d$, the dispatcher queries $d$ servers and receives $d$ responses, so $m(\mathrm{Po}d) = 2d$.

\begin{example}[Representation of Common Policies within the Framework]\label{ex:framework_policy}
    Table \ref{tab:policies} illustrates how this framework can describe several common policies. JSQ, JSED and $\mathrm{Po}d$ typically set $T=1$ to update information every time slot while $k$-SLQ-$d$ policy sets $T=k(n-d)\geq 1$ to update information periodically. JSQ, Po$d$ and $k$-SLQ-$d$ use $\boldsymbol{\gamma}=(1,...,1)$ and JSED uses $\boldsymbol{\gamma} = (\mu_1,...,\mu_n)$. Although Po$d$ observes only $d$ servers at each decision epoch, it is naturally interpreted as operating with the global ordering of all servers induced by their queue lengths: it samples $d$ servers and selects the one that is smallest with respect to this global order. 
    
    It is also important to note that dispatching decisions differ across policies. For RAND, WRAND and RR, the assignment decision is independent of the service rate and the scaled queue length, while WRAND incorporates the service rate into the decision function. JSQ, Po$d$, $k$-SLQ-$d$ and JSED make their decisions based on the (scaled) queue lengths. \hfill $\square$
    
    \begin{table}[h!]
        \centering
        \footnotesize
        \caption{Components of common policies under the general framework}
        \label{tab:policies}
        \begin{tabular}{lllll}
            \hline
            \textbf{Policy} & \textbf{$T$} & \textbf{$\boldsymbol{\gamma}$} & \textbf{Sorting Function} & \textbf{Decision Function} \\
            \hline
            RAND & 1 & $(1,\ldots,1)$ & -- & assign uniformly \\
            WRAND & 1 & $(1,\ldots,1)$ & -- & assign proportionally to service rates \\
            RR & $n$ & $(1,\ldots,1)$ & -- & assign in round-robin \\
            JSQ & 1 & $(1,\ldots,1)$ & sorted by queue length & assign to shortest queue \\
            Po$d$ & 1 & $(1,\ldots,1)$ & sorted by queue length & assign to shortest queue among $d$ choices \\
            JSED & 1 & $(\mu_1,\ldots,\mu_n)$ & sorted by scaled queue length & assign to shortest scaled queue \\
            $k$-SLQ-$d$ & $k(n-d)$ & $(1,\ldots,1)$ & sorted by queue length & assign in round-robin except the $d$ longest queues \\
            \hline
        \end{tabular}    
        
        \vspace{0.5em}
        \begin{minipage}{0.95\textwidth}
            \footnotesize
            \textit{Note.} RAND, WRAND, and RR do not require queue-length information. They are included in the framework by viewing their decision rules as generated over the specified cycle: RAND and WRAND make one-slot randomized decisions, while RR follows a deterministic $n$-slot round-robin cycle.
        \end{minipage}
    \end{table}
\end{example}

\subsection{Markov Chain Representation under the Framework $\Pi$}
After representing a policy $\pi$ within this framework, we model the resulting system as a discrete-time Markov chain $\{\mathbf{X}(t)\}_{t\geq 0}$ such that
\begin{equation}
    \mathbf{X}(t) := \Big(\mathbf{Q}(t),\,\, \eta_{\lfloor t/T \rfloor}, \,\,\phi( \lfloor t / T \rfloor ),\,\, t - \lfloor t / T \rfloor T \Big).  \label{def:Mkchain}
\end{equation}
Here, $\mathbf{Q}(t)\in\mathbb{N}^n$ is the queue-length vector at time $t$, $\eta_{\lfloor t/T \rfloor}\in\mathcal{S}_n$ is the permutation generated from the queue-length information sampled at the most recent observation epoch $\lfloor t/T \rfloor T$, and $\phi(\lfloor t/T \rfloor)$ is the sequence of decision vectors used during the current dispatching cycle. The last component $t-\lfloor t/T \rfloor T\in\{0,1,\ldots,T-1\}$ records the position of time slot $t$ within the current cycle. In particular, the decision vector used at time $t$ is $ \phi_{t-\lfloor t/T \rfloor T+1} (\lfloor t/T \rfloor )$. We also consider the embedded queue-length process
\begin{align*}
    \tilde{\mathbf{Q}}(t) := \mathbf{Q}\big( Tt\big), \qquad t\ge 0.    
\end{align*}
Since $\eta_{\lfloor t/T \rfloor}$, $\phi(\lfloor t/T \rfloor)$, and $t-\lfloor t/T \rfloor T$ keep track of the intermediate ranking, decision sequence, and position within the cycle, and since a new ranking and decision sequence are generated from the updated queue-length vector at the beginning of each cycle, $\{\tilde{\mathbf{Q}}(t)\}_{t\ge 0}$ is also a Markov chain.


\subsection{Policy-Induced Dispatch Fractions} \label{subsec:NotationfromFramework}
Building on the unified framework in Section \ref{subsec:general_framework}, we introduce several additional pieces of notation that will be used in our main theorem. Consider a load balancing policy $\pi \in \Pi$. After sampling the queue length at times $rT$ for all $r\in \mathbb{N}$, let $\eta_r \in \mathcal{S}_n $ be the permutation specified in step 2(c) of Section \ref{subsec:general_framework}. Define the random vector $ \mathbf{N}_{r,\eta_r}$ by
\begin{align*}
    N_{r, \eta_r}(l) := \sum_{t=rT}^{(r+1)T-1} Z_{\eta_r(l)}(t), \, \forall \, l \in [n].
\end{align*}
Here, $ N_{r, \eta_r}(l) $ denotes the total number of time slots during $t\in \{rT,...,(r+1)T-1 \}$ in which arrivals are sent to the $l$th longest queue, as determined by the $\boldsymbol{\gamma}$-scaled queue-length ordering at time $rT$. Note that, if $\eta_r = \eta_{r^{\prime}}$, then $ \mathbf{N}_{r,\eta_r} $ and $ \mathbf{N}_{r^{\prime},\eta_{r^{\prime}}} $ have the same distribution.
Consequently, conditional on a fixed permutation $\eta$, the policy applies the same dispatching rule to assign jobs. 

With this in mind, we define the conditional mean $f_{l,\eta}$ and variance $\tau_{l,\eta}$ of dispatch fractions as follows:
\begin{definition}[Conditional mean and variance of dispatch fractions] \label{def:ftau} 
Fix a load balancing policy $\pi\in\Pi$. For each permutation $\eta\in\mathcal S_n$ and for all $l\in[n]$, define
\begin{align*}
    f_{l,\eta} := \mathbb E\!\left[\left.\frac{N_{r,\eta_r}(l)}{T}\right|\,\eta_r=\eta\right], \qquad 
    \tau^2_{l,\eta} := \mathrm{Var}\!\left(\left.\frac{N_{r,\eta_r}(l)}{T}\right|\,\eta_r=\eta\right),
\end{align*}
where $ f_{l,\eta} $ and $ \tau^2_{l,\eta} $ represent the expected value and variance of the fraction of jobs sent to the $l$th longest $\boldsymbol{\gamma}$-scaled queue when the permutation is $\eta$. Let $\mathbf{f}_{\eta}:=\{ f_{1,\eta},...,f_{n,\eta} \}$ and $ \boldsymbol{\tau}_{\eta}:= \{\tau_{1,\eta},...,\tau_{n,\eta}\}$ denote the corresponding vectors. \hfill $\square$
\end{definition}


\begin{example}[Characterization of the Vector \(\mathbf{f}_{\eta}\)]
    Table \ref{tab:policies_fvector} shows the vector \(\mathbf{f}_{\eta}\) for several policies. RAND, RR, JSQ, Po$d$ and JSED have the same $\mathbf{f}_{\eta}$ for all permutations $\eta \in \mathcal{S}_n$. Also note that RAND and RR have the same vector $\mathbf{f}_{\eta}$, but their variance vectors $ \boldsymbol{\tau}_{\eta} $ differ. Under $k$-SLQ-$d$, arriving jobs are dispatched in a round-robin fashion among the $n-d$ shortest queues. 

    For WRAND, the probability vector $\mathbf{f}_{\eta}$ varies with $\eta$. This is because WRAND assigns jobs to servers in proportion to their service rates. Specifically, if, under permutation $\eta$, server $l$ is the $j$th longest queue, then the $j$th longest queue receives an arrival with probability $ f_{\eta,j} = \mu_l / (\sum_{r=1}^n \mu_r) $. \hfill $\square$

    \begin{table}[h!]
        \centering
        \footnotesize
        \caption{Vector $\mathbf{f}_{\eta}$ of policies}
        \label{tab:policies_fvector}
        \begin{tabular}{l p{11.5cm}}
            \hline
            \textbf{Policy} & \textbf{$\mathbf{f}_{\eta}$} \\
            \hline
            RAND & $\mathbf{f}_{\eta} = (1/n,\ldots,1/n), \ \forall \eta \in \mathcal{S}_n$ \\
            WRAND & $\mathbf{f}_{\eta} = \left(\mu_{\eta(1)}/\sum_{l=1}^n \mu_l,\ldots,\mu_{\eta(n)}/\sum_{l=1}^n \mu_l\right), \ \forall \eta \in \mathcal{S}_n$ \\
            RR & $\mathbf{f}_{\eta} = (1/n,\ldots,1/n), \ \forall \eta \in \mathcal{S}_n$ \\
            JSQ & $\mathbf{f}_{\eta} = (0,\ldots,1), \ \forall \eta \in \mathcal{S}_n$ \\
            Po$d$ & $\mathbf{f}_{\eta} = \left(0,\ldots,\binom{d-1}{d-1}/\binom{n}{d},\ldots,\binom{n-1}{d-1}/\binom{n}{d}\right), \ \forall \eta \in \mathcal{S}_n$ \\
            JSED & $\mathbf{f}_{\eta} = (0,\ldots,1), \ \forall \eta \in \mathcal{S}_n$ \\
            $k$-SLQ-$d$ & $\mathbf{f}_{\eta} = (0,\ldots,0,1/(n-d),\ldots,1/(n-d)), \ \forall \eta \in \mathcal{S}_n$ \\
            \hline
        \end{tabular}
    \end{table}
\end{example}

\section{Stability Region and Throughput Optimality}
\label{sec:stability_throughputoptimality}

In this section, we study stability and throughput optimality for policies in the framework \(\Pi\). We first derive a partial-sum stability condition that yields a sufficient threshold \(h^*\) for the arrival rate (Section~\ref{subsec:stability_sufficient}), and then use the same structure to obtain a majorization-based criterion for throughput optimality (Section~\ref{subsec:throughput_optimality}). We next show through an example that \(h^*\) need not be a necessary threshold for general policies: stability above \(h^*\) may be load-dependent and non-monotone. To explain this behavior, we introduce two fixed-load certificates, persistent bottleneck dominance (PBD) and sequential catch-up drainage (SCD), which certify transient and stable load windows, respectively (Section~\ref{subsec:load_dependent_certificates}). Finally, we identify a global PBD condition under which \(h^*\) becomes the exact stability threshold 
(Section~\ref{subsec:stability_necessary}).

\subsection{A Sufficient Condition for Stability} \label{subsec:stability_sufficient}
For a general load balancing policy \(\pi\in\Pi\), the following theorem provides a sufficient condition for stability. Throughout, we adopt the convention that \(x/0:=\infty\) for all \(x>0\).

\begin{theorem}[A Sufficient Condition for Stability]
\label{thm_our:stable_withrate_withratio}
    For any load balancing policy \(\pi\in\Pi\), if
    \begin{align}
        n\lambda < \min_{\eta\in\mathcal S_n} \left\{ \min_{m\in[n]} \left\{ \frac{ \sum_{l=1}^m\mu_{\eta(l)} }{ \sum_{l=1}^m f_{l,\eta} } \right\} \right\},
        \label{eq:feasibleregion_withrate}
    \end{align}
    then the Markov chain \(\{\mathbf X(t)\}_{t\ge0}\) is positive recurrent.
\end{theorem}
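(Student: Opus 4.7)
The plan is to apply the Foster-Lyapunov drift criterion to the $T$-step skeleton of $\{\mathbf{X}(t)\}_{t\ge 0}$, exploiting the fact that within each dispatching cycle $[kT,(k+1)T)$ the permutation $\eta_k$ is frozen, so the conditional distribution of the $T$ within-cycle dispatching vectors is determined entirely by $\eta_k$ together with the independent randomization $W_k$. Since the phase and decision coordinates of $\mathbf{X}(t)$ live in a bounded set, it will suffice to exhibit a Lyapunov function of the queue state alone whose $T$-step drift is strictly negative outside a compact set. I would take the $\gamma$-weighted quadratic
\[
V(\mathbf{q}) \,:=\, \tfrac12\sum_{l=1}^n \frac{q_l^2}{\gamma_l},
\]
whose inverse weights $1/\gamma_l$ are tuned to the sorting function $L$ in exactly the way required for the coordinates of $\mathbf{q}$ to emerge already sorted after a natural change of summation index.

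Starting from the bound $Q_l(t+1)^2 \le (Q_l(t)+A(t)Z_l(t)-S_l(t))^2$ and telescoping over $t=kT,\dots,(k+1)T-1$, I would obtain
\[
V(\mathbf{Q}((k+1)T))-V(\mathbf{Q}(kT)) \,\le\, \sum_{t=kT}^{(k+1)T-1}\sum_{l=1}^n \frac{Q_l(t)}{\gamma_l}\bigl(A(t)Z_l(t)-S_l(t)\bigr) + O(T),
\]
and then use the pathwise bound $|Q_l(t)-Q_l(kT)|\le T(nA_{\max}+S_{\max})$ to replace $Q_l(t)$ by $Q_l(kT)$ at an additional cost of $O(T^2)$. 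Taking conditional expectation given $\mathbf{Q}(kT)=\mathbf{q}$ and $\eta_k=\eta$, and invoking the independence of $A(t)$ from the within-cycle dispatching vector $\mathbf{Z}(t)$ (which depends only on $\eta_k$ and $W_k$), the identity $\EE\bigl[\sum_{t=kT}^{(k+1)T-1} Z_l(t)\,\big|\,\eta_k=\eta\bigr]=T\, f_{\eta^{-1}(l),\eta}$ follows directly from the definition of $f_{l,\eta}$ and yields
\[
\EE[\Delta V \mid \mathbf{q},\eta_k=\eta] \,\le\, T\sum_{l'=1}^n \frac{q_{\eta(l')}}{\gamma_{\eta(l')}}\bigl(n\lambda f_{l',\eta}-\mu_{\eta(l')}\bigr) + O(T^2).
\]

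The decisive step, where the hypothesis (\ref{eq:feasibleregion_withrate}) enters, is summation by parts. Setting $b_{l'}:=q_{\eta(l')}/\gamma_{\eta(l')}$, which by definition of $\eta$ is non-increasing in $l'$, and $a_{l'}:=n\lambda f_{l',\eta}-\mu_{\eta(l')}$, the hypothesis together with the finiteness of $\mathcal{S}_n\times[n]$ yields a uniform constant $\epsilon>0$ such that every partial sum $A_m=\sum_{l=1}^m a_l$ satisfies $A_m\le -\epsilon$. Abel summation then gives
\[
\sum_{l'=1}^n b_{l'} a_{l'} \,=\, \sum_{m=1}^{n-1}(b_m-b_{m+1})A_m + b_n A_n \,\le\, -\epsilon\, b_1 \,=\, -\epsilon\,\max_l \frac{q_l}{\gamma_l},
\]
so the conditional $T$-step drift is at most $-T\epsilon\max_l q_l/\gamma_l + O(T^2)$, which is strictly negative as soon as $\|\mathbf{q}\|$ exceeds a threshold depending only on $T,n,\epsilon,A_{\max},S_{\max}$ and $\boldsymbol{\gamma}$. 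Averaging over $\eta_k$ and applying the multi-step Foster-Lyapunov criterion then yields positive recurrence of $\{\mathbf{X}(t)\}$. The main obstacle I anticipate is the careful bookkeeping of the $O(T^2)$ remainders across the cycle, together with the conditional independence argument that produces the clean mean $T\, f_{\eta^{-1}(l),\eta}$; the Abel-summation step itself is the elegant core of the argument, and the choice of $V$ weighted by $1/\gamma_l$ is precisely what turns the ratio condition (\ref{eq:feasibleregion_withrate}) into a clean negative drift via sorted coordinates $b_{l'}$.
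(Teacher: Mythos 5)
Your proposal is correct and follows essentially the same route as the paper: analyze the $T$-step skeleton chain with the $\gamma$-weighted quadratic Lyapunov function $\sum_l q_l^2/\gamma_l$, use independence of arrivals from the within-cycle decisions to bring in $T f_{l,\eta}$ and $T\mu_{\eta(l)}$, exploit the sorted order of $q_{\eta(l)}/\gamma_{\eta(l)}$ together with the uniform gap $\epsilon' = \min_{\eta,m}\{\sum_{l\le m}\mu_{\eta(l)} - n\lambda\sum_{l\le m} f_{l,\eta}\}>0$ to get a drift of order $-T\epsilon'\max_l q_l/\gamma_l$ plus bounded terms, and conclude by Foster--Lyapunov. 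The only differences are cosmetic: you control unused service via the one-step bound $([x]^+)^2\le x^2$ and telescoping, and extract the negative drift by explicit Abel summation, whereas the paper handles unused service through the $I_1,I_2,I_3$ partition (queues above/below $TS_{\max}$) and obtains the same cascading bound through a two-case analysis on $|I_2|$.
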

Here, \(f_{l,\eta}\) is the expected fraction of jobs sent to the \(l\)th longest \(\boldsymbol\gamma\)-scaled queue when the sampled permutation is \(\eta\), as defined in Definition~\ref{def:ftau}. The proof uses a multi-period Lyapunov drift argument, adapting the drift-based approach of \cite{eryilmaz2012asymptotically-driftmethod} and the multi-step Lyapunov framework of \cite{Yishun2025k-slq-d} to the general policy class \(\Pi\) through the policy-induced dispatch fractions \(f_{l,\eta}\). The detailed proof is given in Appendix~\ref{prf:thm_stable_withrate_withratio}.

Equivalently, the condition in Equation~\eqref{eq:feasibleregion_withrate} can be written as
\[
    n\lambda\sum_{l=1}^m f_{l,\eta} < \sum_{l=1}^m\mu_{\eta(l)}, \qquad \forall m\in[n],\ \eta\in\mathcal S_n.
\]
Theorem~\ref{thm_our:stable_withrate_withratio} can be interpreted as a collection of partial-sum load conditions. Conditional on a permutation \(\eta\), the first \(m\) positions correspond to the \(m\) longest \(\boldsymbol\gamma\)-scaled queues. The term \(\sum_{l=1}^m f_{l,\eta}\) is the fraction of arrivals assigned to this subset, so the aggregate arrival rate to this potential bottleneck subset is \(n\lambda\sum_{l=1}^m f_{l,\eta}\). The corresponding aggregate service capacity is \(\sum_{l=1}^m\mu_{\eta(l)}\). Thus, the condition in Equation~\eqref{eq:feasibleregion_withrate} requires that no such potential bottleneck subset is overloaded under any permutation.

The sufficient condition naturally induces a static threshold. For any \(m\in[n]\) and \(\eta\in\mathcal S_n\), define
\[
    h(\eta,m) := \frac{\sum_{l=1}^m\mu_{\eta(l)}}{\sum_{l=1}^m f_{l,\eta}} .
\]
This is the function minimized in the condition in Equation~\eqref{eq:feasibleregion_withrate}. Let
\[
    h^* := \min_{\eta\in\mathcal S_n}\min_{m\in[n]} h(\eta,m).
\]
Then Theorem~\ref{thm_our:stable_withrate_withratio} implies positive recurrence whenever \(n\lambda<h^*\). The threshold \(h^*\) is obtained by fixing a permutation \(\eta\) and a prefix length \(m\), and comparing the arrival fraction assigned to the first \(m\) positions with their aggregate service capacity. Thus \(h^*\) is a static sufficient-stability threshold. Before examining whether this threshold is necessary for general policies, we first record a direct throughput-optimality implication of the same partial-sum condition.

\subsection{Throughput Optimality}
\label{subsec:throughput_optimality}

We first give a formal definition of throughput optimality.

\begin{definition}[Throughput Optimal]
\label{def:throughput_optimal}
    A load balancing policy \(\pi\) is throughput optimal if the Markov chain \(\{\mathbf X(t)\}_{t\ge0}\) is positive recurrent for all \(n\lambda\in C\), where
    \[
        C := \left\{ \lambda\in\mathbb R_+: n\lambda<\sum_{l=1}^n\mu_l \right\}.
    \]
\end{definition}
Based on Theorem~\ref{thm_our:stable_withrate_withratio}, the following corollary provides a majorization-based condition for a policy to be throughput optimal.

\begin{corollary}[A Sufficient Condition for Throughput Optimality]
\label{co_our:condition_withrate_withratio}
    For any load balancing policy \(\pi\in\Pi\), if
    \[
        \sum_{l=1}^{m} f_{l,\eta} \le \frac{ \sum_{l=1}^{m}\mu_{\eta(l)} }{ \sum_{l=1}^{n}\mu_{\eta(l)} },
        \qquad \forall m\in[n],\ \eta\in\mathcal S_n,
    \]
    then policy \(\pi\) is throughput optimal.
\end{corollary}
The proof is given in Appendix~\ref{prf:condition_withrate_withratio}.

An interpretation of Corollary \ref{co_our:condition_withrate_withratio} is that the cumulative fraction $\sum_{l=1}^m f_{l,\eta}$ is required to be no larger than the corresponding fraction of aggregate service capacity $\big(\sum_{l=1}^m \mu_{\eta(l)}\big)\big/\big(\sum_{l=1}^n \mu_{\eta(l)}\big)$ for all $m$ and $\eta$. Equivalently, the partial sums of $\mathbf{f}_{\eta}$ are dominated by the corresponding partial sums of normalized service-rate vector, which prevents bottlenecks among the longest queues and implies throughput optimality.
\begin{remark}[Comparison with WRAND]\label{rmk:weightedrand_boundary}
    Under WRAND, the partial sums satisfy 
    \begin{align*}
        \sum_{l=1}^{m} f^{\text{WRAND}}_{l,\eta} = \frac{\sum\limits_{l=1}^{m}\mu_{\eta(l)}}{\sum\limits_{l=1}^{n}\mu_{\eta(l)}}, \quad \forall\, m\in[n], \eta\in S_n.
    \end{align*}
\end{remark}
In view of Remark \ref{rmk:weightedrand_boundary}, another interpretation is that, if a policy sends (weakly) fewer jobs to the longer $\boldsymbol{\gamma}$-scaled queues than WRAND, that is, $\sum_{l=1}^m f_{l,\eta} \leq \sum_{l=1}^m f^{\text{WRAND}}_{l,\eta}$ for all $m$ and $\eta$, then the policy satisfies the condition of Corollary \ref{co_our:condition_withrate_withratio} and is therefore throughput optimal. 

This condition generalizes the results of \cite{moyal2022stability-generalhomogeneous} from homogeneous to heterogeneous systems, without requiring a specific policy design as in \cite{zhou2017designing-JBT, zhou2018degree-generalPpolicy}. Here $f_{l,\eta}$ captures the policy-induced expected fraction of arrivals routed to the $l$th longest $\boldsymbol{\gamma}$-scaled queue under permutation $\eta$, and thus need not coincide with a fixed routing probability.

\begin{example}[Visualization of Corollary \ref{co_our:condition_withrate_withratio}]
    We illustrate Corollary \ref{co_our:condition_withrate_withratio} for a system with $n=3$ servers. In the homogeneous case, $\boldsymbol{\mu}=(1,1,1)$; in the heterogeneous case, $\boldsymbol{\mu}=(1,3,6)$. We set $\boldsymbol{\gamma}=\mathbf 1$. Let $\eta\in\mathcal S_3$ denote the permutation that orders the queues by (scaled) queue length. For each $\eta$, define $\mathbf v_\eta=(v_{1,\eta},v_{2,\eta},v_{3,\eta})$ by \( v_{m,\eta} := \sum_{l=1}^{m}\mu_{\eta(l)} / \sum_{l=1}^{n}\mu_{\eta(l)} \) for all $m \in [n]$.
    Here, $v_{m,\eta}$ is the normalized partial sum of service rates under ordering $\eta$. In the homogeneous case, $\mathbf v_\eta=(1/3,\,2/3,\,1)$ for all $\eta$, whereas in the heterogeneous case $\mathbf v_\eta$ depends on $\eta$. We consider JSQ and RAND (WRAND in the heterogeneous case). Let $F_{m,\eta}:=\sum_{l=1}^m f_{l,\eta}$.
    
    \noindent\textbf{Homogeneous case}. Under JSQ, $\mathbf{F}^{\text{JSQ}}_{\eta}=(F^{\text{JSQ}}_{1,\eta},F^{\text{JSQ}}_{2,\eta},F^{\text{JSQ}}_{3,\eta})=(0,0,1)$ for all $\eta$. Under RAND, $\mathbf{F}^{\text{RAND}}_\eta=(1/3,2/3,1)$ for all $\eta$. Figure \ref{fig:homogeneous} depicts the throughput-optimality region in Corollary \ref{co_our:condition_withrate_withratio}: a policy $\pi$ is throughput optimal if, for every $\eta$, the vector $(F_{1,\eta},F_{2,\eta},F_{3,\eta})$ lies in the shaded set (including its boundary).

    \noindent\textbf{Heterogeneous case}. JSQ still yields $\mathbf{F}^{\text{JSQ}}_\eta=(0,0,1)$ for all $\eta$. For WRAND, the cumulative fractions satisfy $\mathbf{F}^{\text{WRAND}}_\eta=\mathbf v^{\text{WRAND}}_\eta$ for each $\eta\in\mathcal S_3$. Figure \ref{fig:heterogeneous} plots JSQ and WRAND with $\{\mathbf v_\eta:\eta\in\mathcal S_3\}$ across all permutations. Corollary \ref{co_our:condition_withrate_withratio} then requires that, for each $\eta$, the cumulative-fraction vector $(F_{1,\eta},F_{2,\eta},F_{3,\eta})$ lies in the region bounded by the JSQ curve and the benchmark vector $\mathbf v_\eta$. \hfill $\square$

    \begin{figure}[h!]
        \centering
    
        \begin{subfigure}{0.49\textwidth}
            \centering
            \includegraphics[width=\textwidth]{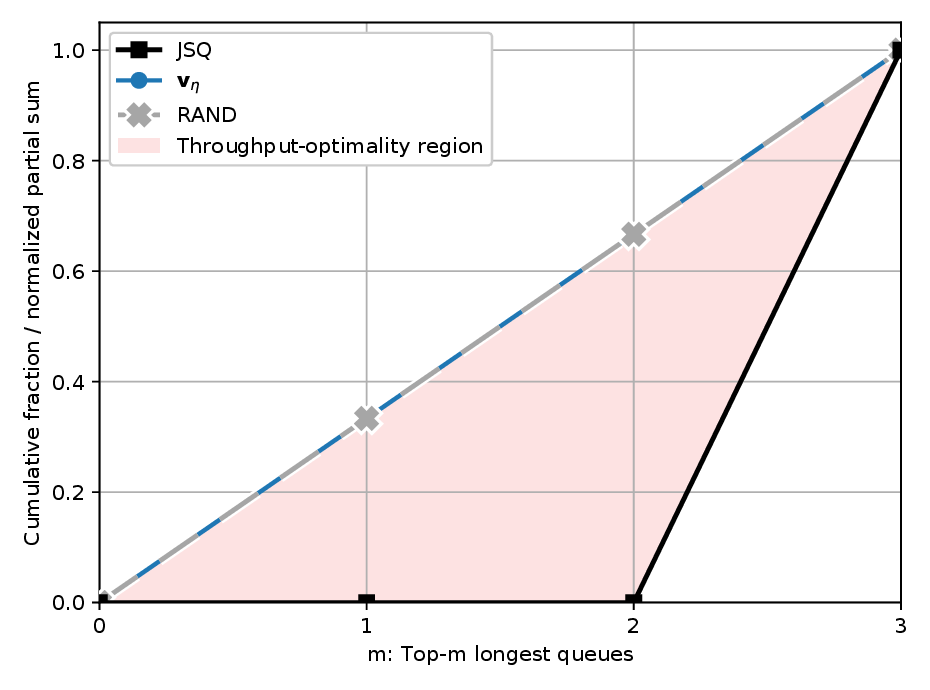}
            \caption{Homogeneous system: $\boldsymbol{\mu} = (1,1,1)$.}
            \label{fig:homogeneous}
        \end{subfigure}
        \hfill
        \begin{subfigure}{0.49\textwidth}
            \centering
            \includegraphics[width=\textwidth]{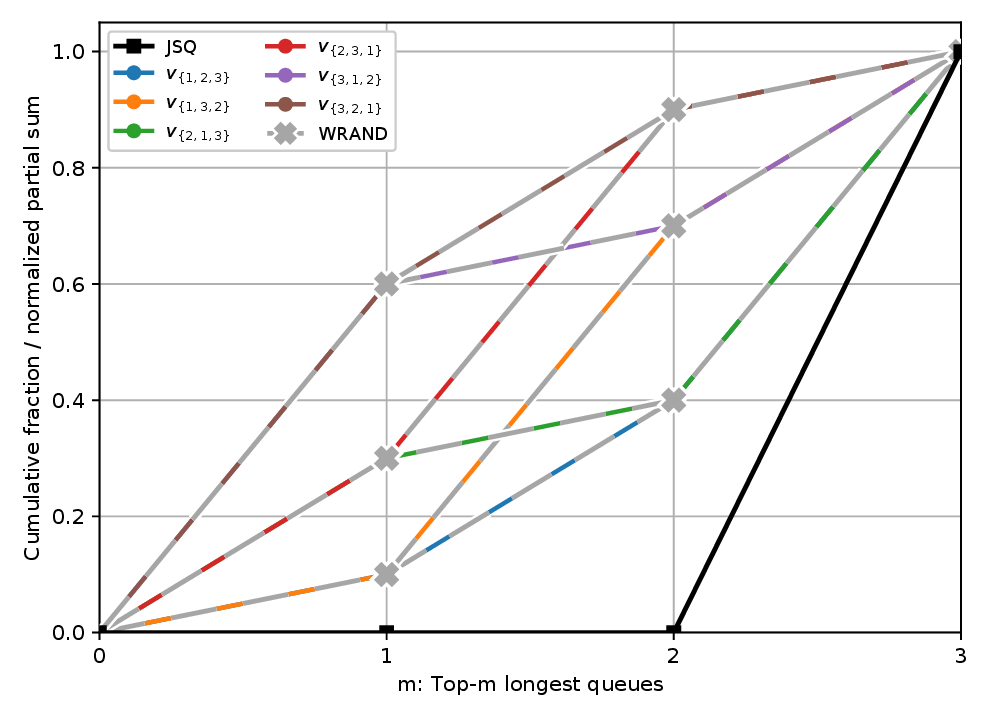}
            \caption{Heterogeneous system: $\boldsymbol{\mu} = (1,3,6)$.}
            \label{fig:heterogeneous}
        \end{subfigure}
    
        \caption{Throughput optimality region in Corollary \ref{co_our:condition_withrate_withratio}: homogeneous and heterogeneous systems.}
        \label{fig:homogeneous_heterogeneous}
    
        \vspace{0.5em}
        \begin{minipage}{0.95\textwidth}
            \footnotesize
            \textit{Note.} The $x$-axis is $m\in\{0,1,2,3\}$, the number of longest queues considered among $n=3$ servers. The $y$-axis plots cumulative fractions/normalized partial sums in $[0,1]$. Without loss of generality, set $v_{0,\eta}=F_{0,\eta}=0$ for all $\eta$. For $\boldsymbol{\mu}=(1,1,1)$, $\mathbf v_\eta=(1/3,2/3,1)$ for all $\eta\in\mathcal S_3$. For $\boldsymbol{\mu}=(1,3,6)$, the pairs $(\mathbf v_\eta,\eta)$ are $((0.1,0.4,1),(1,2,3))$, $((0.1,0.7,1),(1,3,2))$, $((0.3,0.4,1),(2,1,3))$, $((0.3,0.9,1),(2,3,1))$, $((0.6,0.7,1),(3,1,2))$, and $((0.6,0.9,1),(3,2,1))$. Under RAND in the homogeneous case and WRAND in the heterogeneous case, $F_{\eta}=\mathbf v_\eta$ for all $\eta\in\mathcal S_3$.
        \end{minipage}
    \end{figure}
\end{example}

Theorem~\ref{thm_our:stable_withrate_withratio} and  Corollary~\ref{co_our:condition_withrate_withratio} show that the sufficient stability and throughput-optimality certificates are determined by the policy-induced dispatch fractions \(f_{l,\eta}\). For a fixed sampling interval \(T\), these certificates have no additional explicit dependence on \(T\) beyond its effect on \(f_{l,\eta}\). Thus, reusing sampled queue-length information over multiple slots can preserve the same static certificate, provided that the induced dispatch fractions satisfy the required partial-sum conditions. However, for general policies, the threshold \(h^*\) from Theorem~\ref{thm_our:stable_withrate_withratio} is only sufficient. We next show that stability above \(h^*\) can be load-dependent and non-monotone.

\subsection{Non-monotonicity of the Stability Region}

The next example shows that, for general policies in \(\Pi\), the static threshold \(h^*\) need not fully characterize stability. Two policies can have the same value of \(h^*\), yet behave differently above \(h^*\). Stability may alternate across load intervals, and a policy that sends more arrivals to the shortest scaled queue may have a smaller stability region.

\begin{example}[Non-monotone Stability Region]
\label{ex:nonmonotone_stability_region}
Consider a four-server system with \(\boldsymbol{\mu}=(1,2,3,20)\), \(T=1\), and \(\gamma_i=1\) for all \(i\in[4]\). The total service capacity is \(\sum_{i=1}^4\mu_i=26\). We compare two policies with the same static threshold \(h^*=5\). Policy A is given in Table~\ref{tab:nonmonotone_policy_A}: if server \(i\) is the longest scaled queue, row \(i\) gives the dispatching probabilities to actual servers \(1,2,3,4\).
\begin{table}[h!]
    \centering
    \footnotesize
    \caption{Dispatching probabilities for Policy A}
    \label{tab:nonmonotone_policy_A}
    \begin{tabular}{lllll}
        \hline
        \textbf{Longest server $i$} & \textbf{Server 1} & \textbf{Server 2} & \textbf{Server 3} & \textbf{Server 4} \\
        \hline
        1 & 0.20 & 0.35 & 0    & 0.45 \\
        2 & 0    & 0.25 & 0.35 & 0.40 \\
        3 & 0    & 0    & 0.25 & 0.75 \\
        4 & 0    & 0    & 0    & 1    \\
        \hline
    \end{tabular}
\end{table}
Policy B is defined as follows. Let \(\eta(1)\) and \(\eta(4)\) denote the longest and shortest scaled queues, respectively. If \(\eta(1)=1\), Policy B dispatches arrivals to server 1 with probability \(0.2\) and to \(\eta(4)\) with probability \(0.8\). If \(\eta(1)\neq1\), Policy B dispatches arrivals to \(\eta(4)\) with probability \(1\). Compared with Policy A, Policy B sends more arrivals to the current shortest scaled queue under every permutation.


Figure~\ref{fig:policy_nonmonotone} reports the time-average queue length under the two policies for $n\lambda \in \{4,5.5,7,9,11,14\}$, where $n\lambda=4$ is included as a below-threshold baseline and the remaining loads exceed the common threshold $h^*=5$. For Policy A, the curves with \(n\lambda=5.5,9,14\) grow rapidly, while the curves with \(n\lambda=7,11\) remain bounded over the simulation horizon. For Policy B, all displayed loads above \(h^*\) exhibit unstable behavior. \hfill \(\square\)

\begin{figure}[h!]
    \centering

    \begin{subfigure}{0.49\textwidth}
        \centering
        \includegraphics[width=\textwidth]{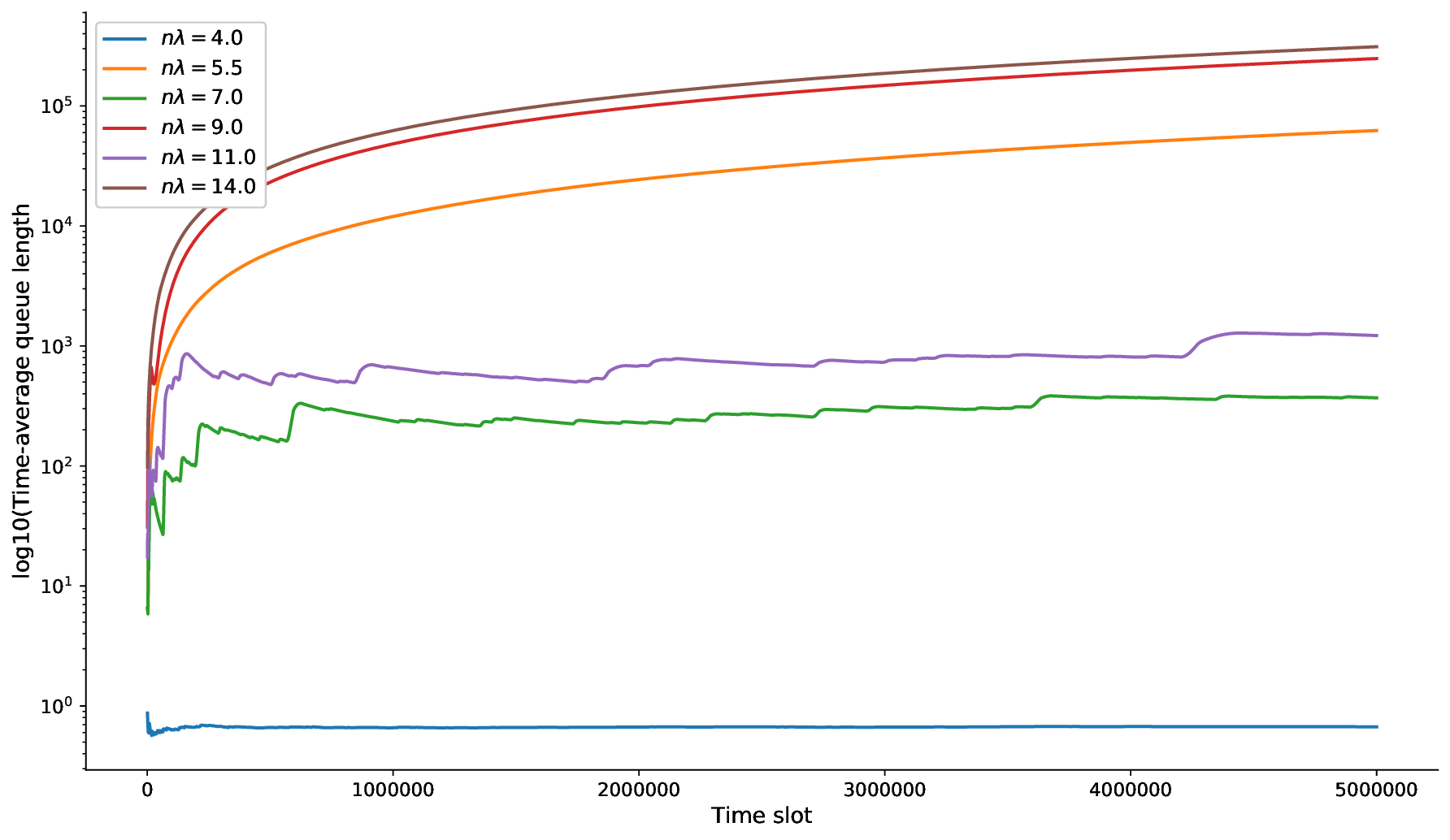}
        \caption{Policy A.}
        \label{fig:policy_A}
    \end{subfigure}
    \hfill
    \begin{subfigure}{0.49\textwidth}
        \centering
        \includegraphics[width=\textwidth]{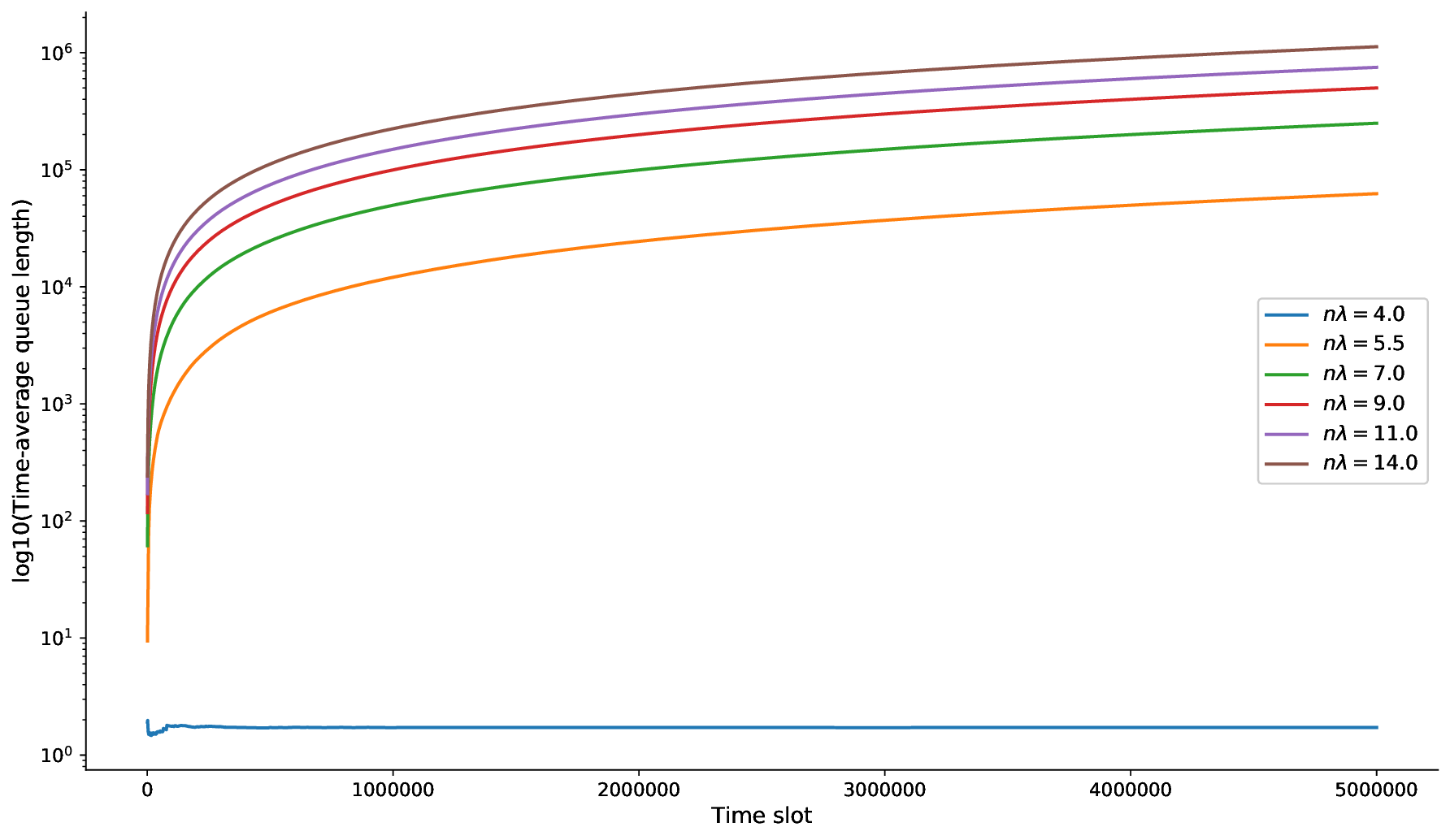}
        \caption{Policy B.}
        \label{fig:policy_B}
    \end{subfigure}

    \caption{Time-average queue length under Policies A and B}
    \label{fig:policy_nonmonotone}

    \vspace{0.5em}
    \begin{minipage}{0.95\textwidth}
        \footnotesize
        \textit{Note.} The \(x\)-axis is the simulation time slot, and the \(y\)-axis is the time-average scaled queue length on a base-10 logarithmic scale. Each curve corresponds to a different load \(n\lambda\). For Policy A, the curves with \(n\lambda=5.5,9,14\) grow rapidly, while the curves with \(n\lambda=7,11\) remain bounded. Policy B exhibits unstable behavior for all simulated loads above \(h^*\).
    \end{minipage}
\end{figure}
\end{example}
This example illustrates two points. First, the static threshold \(h^*\) is not a necessary stability threshold for general policies in \(\Pi\). Second, stability above \(h^*\) need not be monotone in the load, as Policy A appears unstable at some lower and higher loads but stable at intermediate loads. The contrast with Policy B indicates that shortest-queue preference alone is not enough to explain stability behavior above \(h^*\). The next subsection introduces fixed-load certificates that provide a theoretical classification of these load windows.

\subsection{Load-Dependent Stability Properties}
\label{subsec:load_dependent_certificates}

Motivated by Example~\ref{ex:nonmonotone_stability_region}, we next introduce two fixed-load certificates for stability behavior above the static threshold \(h^*\). Persistent bottleneck dominance \(\mathrm{PBD}(n\lambda)\) captures loads at which a locally overloaded subset remains separated under the sampled permutations and certifies transience. Sequential catch-up drainage \(\mathrm{SCD}(n\lambda)\) captures loads at which changes in the sampled permutation eventually lead the fluid trajectory to a draining phase and certifies positive recurrence.

\paragraph{Persistent Bottleneck Dominance.} For a fixed load \(n\lambda\in(h^*,\sum_{i=1}^n\mu_i)\) and a nonempty proper set \(B\subset[n]\), let \(m:=|B|\), \(\bar B:=[n]\setminus B\), and
\[
    \mathcal P(B) := \left\{ \eta\in\mathcal S_n: \{\eta(1),...,\eta(m)\}=B \right\}.
\]
Define
\[
    \underline{\delta}(B,n\lambda) := \min_{\eta\in\mathcal P(B)} \min_{a\in\{1,...,m\}} \left\{ \frac{ T\left( n\lambda\sum_{l=m-a+1}^{m}f_{l,\eta} - \sum_{l=m-a+1}^{m}\mu_{\eta(l)} \right) }{ \sum_{l=m-a+1}^{m}\gamma_{\eta(l)} } \right\},
\]
and
\[
    \overline{\delta}(\bar B,n\lambda) := \max_{\eta\in\mathcal P(B)} \max_{b\in\{1,...,n-m\}} \left\{ \frac{ T\left( n\lambda\sum_{l=m+1}^{m+b}f_{l,\eta} - \sum_{l=m+1}^{m+b}\mu_{\eta(l)} \right) }{ \sum_{l=m+1}^{m+b}\gamma_{\eta(l)} } \right\}.
\]
The quantity \(\underline{\delta}(B,n\lambda)\) is a uniform lower bound on the normalized drift of any suffix of the candidate bottleneck set \(B\), over all sampled permutations in \(\mathcal P(B)\). The quantity \(\overline{\delta}(\bar B,n\lambda)\) is a uniform upper bound on the normalized drift of any prefix of the complement \(\bar B\), over the same set of sampled permutations.

The following definition formalizes the bottleneck mechanism. It requires a candidate subset \(B\) to have positive normalized drift and to grow faster than any leading part of its complement.
\begin{definition}[Persistent Bottleneck Dominance]
\label{def:fixed_load_pbd}
    A policy \(\pi\in\Pi\) satisfies persistent bottleneck dominance at load \(n\lambda\), denoted by \(\mathrm{PBD}(n\lambda)\), if there exists a nonempty proper set \(B\subset[n]\) such that
    \begin{align}
        \underline{\delta}(B,n\lambda) > \max\left\{ \overline{\delta}(\bar B,n\lambda),0 \right\}.
        \label{eq:bottleneck_condition}
    \end{align}
\end{definition}
Definition~\ref{def:fixed_load_pbd} is a load-specific drift-dominance property. It requires the bottleneck side to have strictly positive normalized drift and to grow faster than the fastest-growing part of its complement.

\paragraph{Sequential catch-up drainage.}
We next define the stability-side property. Unlike persistent bottleneck dominance, which describes a separated set that continues to grow, this property captures the opposite mechanism: a leading set may initially have positive drift, but another set catches it in finite fluid time, after which the sampled permutation changes and the trajectory eventually enters a draining regime.

Fix a load \(n\lambda<\sum_{i=1}^n\mu_i\). For \(i\in[n]\) and \(\eta\in\mathcal S_n\), let \(\ell_\eta(i)\) be the unique position such that \(\eta(\ell_\eta(i))=i\). For a state \(\mathbf q\in\mathbb R_+^n\), let \(J(\mathbf q):=\{i\in[n]:q_i>0\}\). Let \(\mathfrak W(\mathbf q)\) be the set of probability vectors \(\mathbf w=(w_\eta)_{\eta\in\mathcal S_n}\) such that \(w_\eta\ge0\), \(\sum_{\eta\in\mathcal S_n}w_\eta=1\), and \(w_\eta=0\) whenever \(q_{\eta(1)}^{(\gamma)}\ge\cdots\ge q_{\eta(n)}^{(\gamma)}\) does not hold. Thus \(\mathfrak W(\mathbf q)\) represents possible fluid-scale mixtures of sampled permutations consistent with the weak permutation of \(\mathbf q^{(\gamma)}\).

For \(\mathbf w\in\mathfrak W(\mathbf q)\), define the reflected fluid drift of the \(\boldsymbol{\gamma}\)-scaled queue of server \(i\) by
\[
    b_i^{J(\mathbf q)}(\mathbf w,n\lambda) :=
    \begin{cases}
        \dfrac{ T\left( n\lambda \sum_{\eta\in\mathcal S_n} w_\eta f_{\ell_\eta(i),\eta} - \mu_i \right) }{\gamma_i}, & i\in J(\mathbf q),\\
        \max\left\{ \dfrac{ T\left( n\lambda \sum_{\eta\in\mathcal S_n} w_\eta f_{\ell_\eta(i),\eta} - \mu_i \right) }{\gamma_i}, 0 \right\}, & i\notin J(\mathbf q).
    \end{cases}
\]
The first case applies when queue \(i\) is positive. The second case captures zero-boundary reflection, where negative averaged nominal drift is absorbed by unused service and positive drift makes the queue leave the boundary.

Let
\[ \mathfrak D_{n\lambda}(\mathbf q):= \{b^{J(\mathbf q)}(\mathbf w,n\lambda):\mathbf w\in\mathfrak W(\mathbf q)\} \]
be the set of admissible reflected fluid drifts of the \(\boldsymbol{\gamma}\)-scaled state at \(\mathbf q\). For a local Lipschitz function \(H:\mathbb R_+^n\to\mathbb R\) and a vector \(\mathbf v\in\mathbb R^n\), define \(D^+H(\mathbf x,\mathbf v):=\limsup_{h\downarrow0}[H(\mathbf x+h\mathbf v)-H(\mathbf x)]/h\).

The following definition formalizes the stability-side mechanism. Each nonterminal phase closes a catch-up gap at a uniform rate, and the final phase drains the fluid state.

\begin{definition}[Sequential Catch-up Drainage]
\label{def:fixed_load_scd}
    A policy \(\pi\in\Pi\) satisfies sequential catch-up drainage at load \(n\lambda\), denoted by \(\mathrm{SCD}(n\lambda)\), if there exist an integer \(R\ge0\), nonnegative locally Lipschitz functions \(G_1,\ldots,G_R,V_D:\mathbb R_+^n\to\mathbb R_+\), constants \(\delta_1,\ldots,\delta_R,\delta_D>0\), and constants \(\kappa_1,\ldots,\kappa_R,\kappa_D<\infty\), such that the following property holds.

    Consider any absolutely continuous path \(\mathbf q(\cdot)\) whose \(\boldsymbol{\gamma}\)-scaled version satisfies \(\frac{d}{dt}\mathbf q^{(\gamma)}(t)\in \mathfrak D_{n\lambda}(\mathbf q(t))\) for almost every \(t\) before it reaches the origin. Let \(\tau_{\mathrm{hit}}:=\inf\{t\ge0:\mathbf q(t)=\mathbf 0\}\), with \(\tau_{\mathrm{hit}}=\infty\) if the origin is never reached. Then there exist switching times \(0=s_0\le s_1\le\cdots\le s_R\le \tau_{\mathrm{hit}}<\infty\) such that:
    \begin{enumerate}
        \item For each \(r=1,\ldots,R\), either the \(r\)-th catch-up phase is skipped, \(s_r=s_{r-1}\), or on \([s_{r-1},s_r)\) we have \(G_r(\mathbf q^{(\gamma)}(t))>0\) and \(D^+G_r \left (\mathbf q^{(\gamma)}(t), \frac{d}{dt}\mathbf q^{(\gamma)}(t) \right )\le-\delta_r\) for almost every \(t\). In addition, \(G_r(\mathbf q^{(\gamma)}(s_{r-1}))\le \kappa_r\|\mathbf q^{(\gamma)}(s_{r-1})\|_1\).

        \item If \(s_R<\tau_{\mathrm{hit}}\), then on \([s_R,\tau_{\mathrm{hit}})\) we have \(V_D(\mathbf q^{(\gamma)}(t))>0\) and \(D^+V_D \left (\mathbf q^{(\gamma)}(t), \frac{d}{dt}\mathbf q^{(\gamma)}(t) \right) \le-\delta_D\) for almost every \(t\). Moreover, \(V_D(\mathbf q^{(\gamma)}(s_R))\le \kappa_D\|\mathbf q^{(\gamma)}(s_R)\|_1\), and \(V_D(\mathbf x)=0\) if and only if \(\mathbf x=\mathbf 0\). \hfill $\square$
    \end{enumerate}
\end{definition}
Definition~\ref{def:fixed_load_scd} is a stability-side drift-structure property. Each nonterminal phase has a catch-up gap that closes at a uniform rate under the admissible fluid drift. After finitely many catch-up phases, the trajectory enters a terminal phase in which a draining workload has uniformly negative drift.

The next theorem connects these two fixed-load properties to the stochastic behavior of the queueing system.
\begin{theorem}[Fixed-load Stability Certificates]
\label{thm_our:fixed_load_stability_certificates}
    Fix a load \(n\lambda\).
    \begin{enumerate}
        \item Suppose that \(h^*<n\lambda<\sum_{i=1}^n\mu_i\) and \(\mathrm{PBD}(n\lambda)\) holds. Then the Markov chain \(\{\mathbf X(t)\}_{t\ge0}\) is transient from some initial state.

        \item Suppose that \(n\lambda<\sum_{i=1}^n\mu_i\) and \(\mathrm{SCD}(n\lambda)\) holds. Then the Markov chain \(\{\mathbf X(t)\}_{t\ge0}\) is positive recurrent.
    \end{enumerate}
\end{theorem}

Theorem~\ref{thm_our:fixed_load_stability_certificates} translates the two fixed-load certificates into stochastic stability conclusions. Under \(\mathrm{PBD}(n\lambda)\), a persistently separated bottleneck yields a transient communicating class. Under \(\mathrm{SCD}(n\lambda)\), finite-time fluid drainage yields positive recurrence. The proof is given in Appendix~\ref{prf:fixed_load_stability_certificates}. The transience part uses a moving-barrier construction, and the positive-recurrence part uses a fluid-stability argument.

\subsection{Load-Window Stability Classification}
\label{subsec:load_window_classification}

Having established the fixed-load certificates in Theorem~\ref{thm_our:fixed_load_stability_certificates}, we now apply them to the load range above the static threshold \(h^*\). Specifically, we classify \((h^*,\sum_{i=1}^n\mu_i)\) into load windows on which either persistent bottleneck dominance certifies transience or sequential catch-up drainage certifies positive recurrence. To capture the possibility that stability may switch across load windows, we assume that the load range above \(h^*\) can be partitioned into intervals on which one of the two fixed-load certificates holds pointwise for every \(n\lambda\in I_k\).

\begin{assumption}[Certified Load-Window Partition]
\label{assump:load_window_certificates}
    There exist constants
    \[
        h^*=\xi_0<\xi_1<\cdots<\xi_K<\xi_{K+1}:=\sum_{i=1}^n\mu_i
    \]
    and, for each interval \(I_k:=(\xi_k,\xi_{k+1})\), a label \(\chi_k\in\{\mathrm T,\mathrm S\}\). If \(\chi_k=\mathrm T\), then \(\mathrm{PBD}(n\lambda)\) holds for every \(n\lambda\in I_k\). If \(\chi_k=\mathrm S\), then \(\mathrm{SCD}(n\lambda)\) holds for every \(n\lambda\in I_k\).
\end{assumption}
The label \(\mathrm T\) denotes a window certified by persistent bottleneck dominance and hence associated with transience. The label \(\mathrm S\) denotes a window certified by sequential catch-up drainage and hence associated with positive recurrence.

The following corollary is an immediate consequence of Theorem~\ref{thm_our:fixed_load_stability_certificates} applied on each certified interval.

\begin{corollary}[Stability Classification over Load Windows]
\label{cor_our:load_window_certificate_classification}
    Suppose that policy \(\pi\in\Pi\) satisfies Assumption~\ref{assump:load_window_certificates}. Then, for each interval \(I_k=(\xi_k,\xi_{k+1})\), the following statements hold.
    
    \begin{enumerate}
        \item If \(\chi_k=\mathrm T\), then for every \(n\lambda\in I_k\), the Markov chain \(\{\mathbf X(t)\}_{t\ge0}\) is transient from some initial state.
    
        \item If \(\chi_k=\mathrm S\), then for every \(n\lambda\in I_k\), the Markov chain \(\{\mathbf X(t)\}_{t\ge0}\) is positive recurrent.
    \end{enumerate}
\end{corollary}

The proof is given in Appendix~\ref{prf:load_window_certificate_classification}. On a T-window, PBD\((n\lambda)\) holds at every load and gives transience from some initial state. On an S-window, SCD\((n\lambda)\) holds at every load and gives positive recurrence. Thus, once different windows carry different labels, the stability behavior of a general policy in \(\Pi\) need not be summarized by a single load threshold.

\paragraph{Example~\ref{ex:nonmonotone_stability_region} revisited.}
For Policy A, the transient and stable windows above \(h^*\) are \(\mathcal I_{\mathrm T}^A=(5,20/3)\cup(8,10)\cup(12,26)\) and \(\mathcal I_{\mathrm S}^A=(20/3,8)\cup(10,12)\). For every \(n\lambda\in\mathcal I_{\mathrm T}^A\), \(\mathrm{PBD}(n\lambda)\) holds. For every \(n\lambda\in\mathcal I_{\mathrm S}^A\), \(\mathrm{SCD}(n\lambda)\) holds. In contrast, Policy B satisfies \(\mathrm{PBD}(n\lambda)\) for every \(n\lambda\in(5,26)\). Hence, the fixed-load certificates provide a theoretical classification of the behavior observed in Figure~\ref{fig:policy_nonmonotone}. The verification of these load-window classifications for Policies A and B is given in Appendix~\ref{app:example_A_theoretical_analysis}.

\subsection{A Necessary Condition under Persistent Bottleneck Dominance}
\label{subsec:stability_necessary}

The preceding subsection shows that a general policy in \(\Pi\) may exhibit non-monotone stability behavior. We now identify an important special case in which the static threshold \(h^*\) becomes a necessary stability threshold: persistent bottleneck dominance holds throughout the entire nontrivial overload region.

\begin{assumption}[Persistent bottleneck dominance]
    \label{assump:persistent_bottleneck_dominance}
    A policy \(\pi\in\Pi\) satisfies persistent bottleneck dominance if \(\mathrm{PBD}(n\lambda)\) holds for every \(n\lambda\in(h^*,\sum_{i=1}^n\mu_i)\).
\end{assumption}
Assumption~\ref{assump:persistent_bottleneck_dominance} is the all-\(\mathrm T\) case of Assumption~\ref{assump:load_window_certificates}. It requires \(\mathrm{PBD}(n\lambda)\) to hold for every \(n\lambda\in(h^*,\sum_{i=1}^n\mu_i)\), thereby ruling out stable windows generated by changes in the sampled permutation. The next corollary gives a simple sufficient structure under which this condition holds.

\begin{corollary}[A Sufficient Structure for Persistent Bottleneck Dominance]
\label{cor_our:rank_invariant_pbd}
    Suppose that the policy admits permutation-invariant dispatch fractions: there exists a vector \(\mathbf p=(p_1,\ldots,p_n)\), with \(p_l\ge0\) and \(\sum_{l=1}^n p_l=1\), such that
    \[
        f_{l,\eta}=p_l, \qquad \forall l\in[n],\ \forall \eta\in\mathcal S_n.
    \]
    Then Assumption~\ref{assump:persistent_bottleneck_dominance} holds.
\end{corollary}
This corollary covers policies such as RR, RAND, Po\(d\), and \(k\)-SLQ-\(d\). For these policies, the dispatch fraction \(f_{l,\eta}\) depends only on the position \(l\) in the sampled permutation and not on the identity of the server occupying that position. In addition, JSQ, JSED, WRAND, and Balanced-Po\(d\) satisfy Assumption~\ref{assump:persistent_bottleneck_dominance} directly. The proof is given in Appendix~\ref{prf:rank_invariant_pbd}.

Under Assumption~\ref{assump:persistent_bottleneck_dominance}, we obtain the following necessary condition for stability.

\begin{theorem}[A Necessary Condition for Stability under Persistent Bottleneck Dominance]
    \label{thm_our:stable_withrate_withratio_necessary_general}
    For any load balancing policy \(\pi\in\Pi\) satisfying Assumption~\ref{assump:persistent_bottleneck_dominance}, if either \(h^*<n\lambda<\sum_{i=1}^n\mu_i\) or \(n\lambda>\sum_{i=1}^n\mu_i\), then the Markov chain \(\{\mathbf X(t)\}_{t\ge0}\) is transient.
\end{theorem}
Theorem~\ref{thm_our:stable_withrate_withratio_necessary_general} shows that, under Assumption~\ref{assump:persistent_bottleneck_dominance}, the static threshold \(h^*\) is also a necessary stability threshold. Indeed, if \(h^*<n\lambda<\sum_i\mu_i\), then Assumption~\ref{assump:persistent_bottleneck_dominance} implies \(\mathrm{PBD}(n\lambda)\), and Part~1 of Theorem~\ref{thm_our:fixed_load_stability_certificates} gives transience. If \(n\lambda>\sum_i\mu_i\), transience follows directly from total-capacity overload. The proof is given in Appendix~\ref{prf:stable_withrate_withratio_necessary_general}.

Together, theorems~\ref{thm_our:stable_withrate_withratio} and \ref{thm_our:stable_withrate_withratio_necessary_general} identify the stability region for any policy within our framework that satisfies Assumption~\ref{assump:persistent_bottleneck_dominance}. In particular, stability is governed by the policy's allocation to the scaled longest queues. This characterization is also straightforward to verify, as it reduces to checking explicit partial-sum conditions.

\begin{example}[Stability regions of common policies]
    Table~\ref{tab:stability_region} summarizes the stability region for common
    policies. The stability region for Po\(d\) is aligned with Remark 4 in
    \cite{hurtado2021throughput-Pod-stability-heterogeneous}. \hfill \(\square\)

    \begin{table}[h!]
        \centering
        \footnotesize
        \caption{Stability regions of common policies}
        \label{tab:stability_region}
        \begin{tabular}{lll}
            \hline
            \textbf{Policy} & \textbf{Assumption~\ref{assump:persistent_bottleneck_dominance}} & \textbf{Stability Region $C_{\pi}$} \\
            \hline
            RAND & \(\checkmark\) & \(\left\{\lambda \in \mathbb{R}_+ : n\lambda < n\mu_1 \right\}\) \\
            WRAND & \(\checkmark\) & \(\left\{\lambda \in \mathbb{R}_+ : n\lambda < \sum_{l=1}^n \mu_l \right\}\) \\
            RR & \(\checkmark\) & \(\left\{\lambda \in \mathbb{R}_+ : n\lambda < n\mu_1 \right\}\) \\
            JSQ & \(\checkmark\) & \(\left\{\lambda \in \mathbb{R}_+ : n\lambda < \sum_{l=1}^n \mu_l \right\}\) \\
            Po\(d\) & \(\checkmark\) & \(\left\{\lambda \in \mathbb{R}_+ : n\lambda < \min_{j \in \{d,\ldots,n\}} \left\{ \left[\binom{n}{d}\sum_{l=1}^j \mu_l \right]/\binom{j}{d} \right\} \right\}\) \\
            JSED & \(\checkmark\) & \(\left\{\lambda \in \mathbb{R}_+ : n\lambda < \sum_{l=1}^n \mu_l \right\}\) \\
            \(k\)-SLQ-\(d\) & \(\checkmark\) & \(\left\{\lambda \in \mathbb{R}_+ : n\lambda < \min_{j \in \{d+1,\ldots,n\}} \left\{ \left[(n-d)\sum_{l=1}^j \mu_l \right]/(j-d) \right\} \right\}\) \\
            \hline
        \end{tabular}
    \end{table}
\end{example}

\section{Heavy-Traffic Optimality and Queue-Length Distribution} \label{sec:heavytraffic_queuelengthdistribution}
In this section, we establish heavy-traffic performance guarantees for policies in $\Pi$. We first give a sufficient condition under which a policy exhibits state-space collapse (Section~\ref{subsec:SSC}). We then use this collapse result to derive a sufficient condition for heavy-traffic delay optimality (Section~\ref{subsec:delay_optimality}) and characterize the steady-state queue-length distribution in heavy traffic (Section~\ref{subsec:qlength_distribution}).

The key object is the $\boldsymbol{\gamma}$-scaled queue-length vector
$\mathbf Q^{(\gamma)}(t)=(Q_l(t)/\gamma_l)_{l=1}^n$. In heavy traffic, the desired collapse direction is the one in which the scaled queue lengths are approximately equal across servers. Equivalently, the original queue-length vector is close to the $1$-dimensional cone
\[
    \mathcal C_\gamma := \{x\boldsymbol{\gamma}:x\geq0\}.
\]
Thus, in the scaled coordinates $\mathbf Q^{(\gamma)}(t)$, the collapse cone corresponds to the one-dimensional subspace spanned by the all-one vector $\mathbf 1$.

Using the $\boldsymbol{\gamma}$-weighted inner product and norm defined in Section~\ref{subsec:basic_notation}, the projection of $\mathbf Q^{(\gamma)}(t)$ onto $\mathrm{span}\{\mathbf 1\}$ is
\[
    \mathbf Q^{(\gamma)}_{\parallel}(t) = \frac{\langle \mathbf Q^{(\gamma)}(t),\mathbf 1\rangle_\gamma}{\langle \mathbf 1,\mathbf 1\rangle_\gamma}\mathbf 1 = \frac{\|\mathbf Q(t)\|_1}{\|\boldsymbol{\gamma}\|_1}\mathbf 1.
\]
The corresponding perpendicular component is
\[
    \mathbf Q^{(\gamma)}_{\perp}(t) := \mathbf Q^{(\gamma)}(t)-\mathbf Q^{(\gamma)}_{\parallel}(t),
\]
and hence
\[
    \|\mathbf Q^{(\gamma)}_{\perp}(t)\|_\gamma^2 = \sum_{l=1}^n \gamma_l \left( \frac{Q_l(t)}{\gamma_l} - \frac{\|\mathbf Q(t)\|_1}{\|\boldsymbol{\gamma}\|_1} \right)^2.
\]
This quantity measures the deviation of the scaled queue-length vector from the collapse subspace. The next subsection formalizes state-space collapse by proving a uniform steady-state bound on this weighted perpendicular component.

\subsection{State-space Collapse} \label{subsec:SSC}
The following theorem gives a sufficient condition under which the weighted perpendicular component of the $\boldsymbol{\gamma}$-scaled queue-length vector is uniformly bounded in steady state.

\begin{theorem}[State-Space Collapse of Scaled Queue Lengths]
    \label{thm_our:SSC_withrate}
    For any load balancing policy $\pi\in\Pi$, suppose that
    \begin{align} \label{eq:strictlyless_withrate}
        \sum_{l=1}^m f_{l,\eta} < \frac{\sum_{l=1}^m \mu_{\eta(l)}}{\sum_{l=1}^n \mu_l},
        \qquad \forall\, m\in[n]\setminus\{n\},\ \eta\in\mathcal S_n.
    \end{align}
    Then, for all sufficiently small $\epsilon$, there exists a function $N_\perp(n,T)$, independent of $\epsilon$, such that
    \[
        \mathbb E\left[ \|\mathbf Q^{(\gamma)}_{\perp}\|_\gamma^2 \right] \leq N_\perp^2(n,T).
    \]
    Furthermore, for all $n,T\geq1$ and $\boldsymbol{\gamma}\in\mathbb R_+^n$ satisfying $\|\boldsymbol{\gamma}\|_1=1$,
    \[
        N_\perp(n,T) \in \Theta\left( \frac{\gamma_{\max}}{\gamma_{\min}\sqrt{\gamma_{\min}}} Tn^{11/2} \right).
    \]
\end{theorem}
Note that the policy $\pi$ is throughput optimal when the inequalities in Equation~\eqref{eq:strictlyless_withrate} hold strictly for all $m\in[n]\setminus\{n\}$ and all $\eta\in\mathcal S_n$. The proof is given in Appendix~\ref{prf:thm_SSC_withrate_withratio}. The argument follows the drift-based steady-state state-space collapse approach of \cite{eryilmaz2012asymptotically-driftmethod}, adapted to our multi-period setting.

The state-space collapse bound in Theorem~\ref{thm_our:SSC_withrate} also implies that the raw queue-length vector becomes close to the collapse cone relative to the total queue length.
\begin{corollary}[Cone Collapse Interpretation]
    \label{co_our:relative_SSC}
    Under the assumptions of Theorem~\ref{thm_our:SSC_withrate},
    \[
        \lim_{\epsilon\downarrow0} \frac{ \mathbb E\left[ \left\| \mathbf Q - \frac{\boldsymbol\gamma}{\|\boldsymbol\gamma\|_1} \|\mathbf Q\|_1 \right\|_1 \right] }{ \mathbb E[\|\mathbf Q\|_1] } = 0.
    \]
    Equivalently, in heavy traffic, the queue-length vector $\mathbf Q$ becomes asymptotically close to the $1$-dimensional cone $\mathcal C_\gamma=\{x\boldsymbol\gamma:x\ge0\}$.
\end{corollary}
Corollary~\ref{co_our:relative_SSC} states that, relative to the total queue length, $\mathbf Q$ is asymptotically close to a vector of the form $x\boldsymbol\gamma$. Hence $Q_l\approx x\gamma_l$ for all $l\in[n]$, or equivalently, the scaled queue lengths $Q_l/\gamma_l$ become approximately equal across servers in heavy traffic. The proof is given in Appendix~\ref{prf:relative_SSC}.

\subsection{Asymptotic Delay Optimality} \label{subsec:delay_optimality}
We now exploit the SSC result above to establish the asymptotic delay optimality of policies. First, we provide an upper bound for the expected average queue length in steady state.

\begin{theorem}[An Upper Bound for Average Queue Length] \label{thm_our:SSCUpper_withrate_withratio}
    For any load balancing policy $\pi \in \Pi$, suppose that Equation~\eqref{eq:strictlyless_withrate} holds for all $\eta \in \mathcal{S}_n$ and $m\in [n]\setminus \{n\}$. For all $\epsilon$ small enough, we have
    \begin{align} \label{eq:thm_upper_withrate}
        \epsilon \mathbb{E} \left[ \frac{1}{n} \sum\limits_{l=1}^n Q_l \right ] \leq \frac{n \sigma_{\lambda}^2 + \sum\limits_{l=1}^{n}\sigma_{l}^2 }{2n} + \epsilon^2 \frac{T}{2n} + \epsilon \frac{T S_{\max} \gamma_{\min} + 2 T \| \boldsymbol{\gamma} \|_1 A_{\max} }{2 \gamma_{\min}} + \sqrt{\epsilon} \frac{ \| \boldsymbol{\gamma} \|_1 N_{\perp}(n,T) \sqrt{ S_{\max}} }{\sqrt{\gamma_{\min}}n},
    \end{align}
    where $ N_{\perp}(n,T) $ is defined in Theorem \ref{thm_our:SSC_withrate}.
\end{theorem}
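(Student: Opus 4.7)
The plan is a Lyapunov-drift analysis over one dispatching cycle, using the state-space collapse bound of Theorem~\ref{thm_our:SSC_withrate} to control the boundary term. As test function I take $V(\mathbf{Q}) = \|\mathbf{Q}\|_1^2$ (a constant multiple of $\|\mathbf{O}_{\parallel}\|_2^2$). Writing
\[
    \Delta q \;:=\; \|\mathbf{Q}(T)\|_1 - \|\mathbf{Q}(0)\|_1 \;=\; \sum_{t=0}^{T-1}\!\Bigl(A(t) - \sum_{l=1}^n S_l(t) + \sum_{l=1}^n U_l(t)\Bigr),
\]
stationarity gives $2\,\mathbb{E}[\|\mathbf{Q}(0)\|_1\,\Delta q] + \mathbb{E}[\Delta q^2] = 0$. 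The crucial structural feature of the framework of Section~\ref{subsec:general_framework} is that the dispatching vectors $\mathbf{Z}(t)$, $t\in[0,T)$, depend on $\mathbf{Q}(0)$ only through $\eta_0$ and $W_0$, while the arrivals and services $\{A(t),\mathbf{S}(t)\}_{t\in[0,T)}$ are independent of $\mathbf{Q}(0)$. This decoupling lets me compute $\mathbb{E}[\|\mathbf{Q}(0)\|_1(\sum_t A(t) - \sum_t\sum_l S_l(t))] = -Tn\epsilon\,\mathbb{E}[\|\mathbf{Q}\|_1]$ using $n\epsilon := \sum_l \mu_l - n\lambda$, so the identity rearranges into the master equation
\[
    Tn\epsilon\,\mathbb{E}[\|\mathbf{Q}\|_1] \;=\; \mathbb{E}\!\left[\|\mathbf{Q}(0)\|_1\,U_{\mathrm{tot},T}\right] + \tfrac{1}{2}\,\mathbb{E}[\Delta q^2],
\]
where $U_{\mathrm{tot},T} := \sum_{t=0}^{T-1}\sum_l U_l(t)$.

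Bounding $\mathbb{E}[\Delta q^2]$ is routine. Using $|\Delta q|\leq |\Delta A -\Delta S| + \Delta U$ (since $\Delta U\geq 0$), expanding the square, and applying independence, the variance of $\Delta A -\Delta S$ over $T$ slots contributes $T(n\sigma_\lambda^2 + \sum_l\sigma_l^2)$, which is the source of the leading constant $\tfrac{1}{2}(n\sigma_\lambda^2 + \sum_l \sigma_l^2)$ on the RHS. The squared mean $(\mathbb{E}[\Delta A -\Delta S])^2 = T^2n^2\epsilon^2$ produces the $\epsilon^2 T/2$ correction. The remaining $\Delta U$-contributions --- $\mathbb{E}[\Delta U^2]$ and the cross term $\mathbb{E}[|\Delta A -\Delta S|\Delta U]$ --- are controlled by the uniform bounds $A(t)\leq nA_{\max}$, $U_l(t)\leq S_{\max}$, together with the steady-state balance $\mathbb{E}[U_{\mathrm{tot}}(t)]=n\epsilon$, producing the $\epsilon$-linear correction with prefactor $(TnS_{\max}\gamma_{\min}+2Tn\|\boldsymbol{\gamma}\|_1 A_{\max})/(2\gamma_{\min})$.

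The main obstacle is the boundary cross term $\mathbb{E}[\|\mathbf{Q}(0)\|_1\,U_{\mathrm{tot},T}]$. The key geometric observation is that $U_l(t)>0$ forces $Q_l(t+1)=0$, so the $l$-th coordinate of $\mathbf{O}(t+1)$ vanishes, whereas the $l$-th coordinate of $\mathbf{O}_{\parallel}(t+1)$ equals $\sqrt{\gamma_l}\|\mathbf{Q}(t+1)\|_1/\|\boldsymbol{\gamma}\|_1>0$. Consequently,
\[
    \|\mathbf{Q}(t+1)\|_1 \;\leq\; \frac{\|\boldsymbol{\gamma}\|_1}{\sqrt{\gamma_{\min}}}\,\|\mathbf{O}_{\perp}(t+1)\|_2 \qquad \text{on the event } \{U_{\mathrm{tot}}(t)>0\}.
\]
Propagating back to time $0$ via $\|\mathbf{Q}(0)\|_1 \leq \|\mathbf{Q}(t+1)\|_1 + TnS_{\max}$ (from the queue recursion and $S_l\leq S_{\max}$), and applying Cauchy--Schwarz
\[
    \mathbb{E}[\|\mathbf{O}_{\perp}\|_2\,U_{\mathrm{tot}}(t)] \leq \sqrt{\mathbb{E}\|\mathbf{O}_{\perp}\|_2^{\,2}}\,\sqrt{\mathbb{E}[U_{\mathrm{tot}}(t)^2]}
\]
together with $\mathbb{E}[U_{\mathrm{tot}}(t)^2] \leq nS_{\max}\,\mathbb{E}[U_{\mathrm{tot}}(t)] = n^2 S_{\max}\epsilon$ (from $U_{\mathrm{tot}}\leq nS_{\max}$ and the steady-state balance) and the SSC bound $\mathbb{E}\|\mathbf{O}_{\perp}\|_2^2 \leq N_{\perp}^2$ from Theorem~\ref{thm_our:SSC_withrate}, this produces the $\sqrt{\epsilon}$ term with coefficient $\|\boldsymbol{\gamma}\|_1 N_{\perp}\sqrt{nS_{\max}}/\sqrt{\gamma_{\min}}$. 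This is the hardest step: Theorem~\ref{thm_our:SSC_withrate} delivers $\|\mathbf{O}_{\perp}\|_2$-control only as a steady-state second moment, whereas the boundary events $\{U_l(t)>0\}$ occur at arbitrary within-cycle slots; the argument must therefore use stationarity to pass to an arbitrary slot and pay the within-cycle propagation cost relating $\mathbf{Q}(0)$ to $\mathbf{Q}(t+1)$, which is exactly what generates the $T$- and $\|\boldsymbol{\gamma}\|_1/\sqrt{\gamma_{\min}}$-dependent prefactors visible in the RHS. Collecting the four contributions in the master identity and dividing through by the appropriate normalization gives \eqref{eq:thm_upper_withrate}.
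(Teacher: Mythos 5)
Your overall strategy is the same as the paper's: set the stationary drift of the $\ell_1$-norm-squared Lyapunov function (equivalently $\|\mathbf{O}_{\parallel}\|_2^2$) to zero over one $T$-cycle, let the variance of arrivals minus services produce the leading constant $\tfrac12(n\sigma_\lambda^2+\sum_l\sigma_l^2)$ and the $\epsilon^2 T/2$ term, and control the unused-service cross term via Cauchy--Schwarz against the state-space-collapse bound of Theorem~\ref{thm_our:SSC_withrate}. Your master identity and the bookkeeping of $E[\Delta q^2]$ are fine (modulo the fact that your $\epsilon$ is the per-server slack, $n\epsilon=\sum_l\mu_l-n\lambda$, whereas the paper's $\epsilon$ is the total slack, so the constants must be reconciled before they can match \eqref{eq:thm_upper_withrate} literally).

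The one step that does not go through as written is your treatment of the boundary term. Your geometric observation ($U_l(t)>0\Rightarrow Q_l(t+1)=0\Rightarrow\|\mathbf{Q}(t+1)\|_1\le\|\boldsymbol{\gamma}\|_1\|\mathbf{O}_{\perp}(t+1)\|_2/\sqrt{\gamma_{\min}}$) is correct, but it places $\mathbf{O}_{\perp}$ at an off-sample slot $t+1$, and Theorem~\ref{thm_our:SSC_withrate} bounds $E[\|\mathbf{O}_{\perp}\|_2^2]$ only in the steady state of the $T$-sampled chain, i.e.\ at sampling epochs; ``using stationarity to pass to an arbitrary slot'' is not available, since the marginal at phase $t+1$ of the cycle differs from the marginal at phase $0$. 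This is fixable: bound $\|\mathbf{O}_{\perp}(t+1)\|_2\le\|\mathbf{O}_{\perp}(t+T)\|_2+2Tn(A_{\max}+S_{\max})/\sqrt{\gamma_{\min}}$ (the same Lipschitz estimate used in the SSC proof) and absorb the correction, multiplied by $E[U_{\mathrm{tot}}]=O(\epsilon)$, into the $\epsilon$-linear term; you then recover a bound of the same form as \eqref{eq:thm_upper_withrate}, though with different prefactors in the $O(\epsilon)$ term. The paper sidesteps this issue differently: it keeps everything at the sampling epoch $t+T$, writing $\|\mathbf{Q}(t+T)\|_1\sum_{j,l}U_l$ as a perpendicular part at $t+T$ (handled by Cauchy--Schwarz with $N_{\perp}$ and $E[(\sum_j U_l)^2]\le TS_{\max}\cdot T\epsilon$) plus the residual $\sum_l Q_l(t+T)\sum_j U_l(t+j)$, which it telescopes using $Q_l(t+j+1)U_l(t+j)=0$ and bounds by $TnA_{\max}\sum_{j,l}U_l(t+j)$; that telescoping is precisely the origin of the $A_{\max}$ coefficient in the statement. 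So: right method, one genuinely unjustified step (the off-sample use of the SSC bound), and exact constants not yet accounted for.
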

The proof is given in Appendix \ref{prf:thm_upperbound_withrate_withratio}.

Also, for any load balancing policy in $\Pi$, the system is lower bounded by a single-server system with service rate $ \sum_{l=1}^n \mu_l $. Then, Lemma 5 of \cite{eryilmaz2012asymptotically-driftmethod} implies that
\begin{align*}
    \epsilon \mathbb{E} \left[ \frac{1}{n} \sum\limits_{l=1}^n Q_l \right ] \geq \frac{n \sigma_{\lambda}^2 + \sum\limits_{l=1}^{n}\sigma_{l}^2 + \epsilon^2 - nS_{\max} \epsilon }{2 n}.
\end{align*}
Combining this with Theorem \ref{thm_our:SSCUpper_withrate_withratio} and Little's Law, we have the following corollary.

\begin{corollary}[A Sufficient Condition for Heavy-traffic Delay Optimality] \label{co_our:delayoptimality}
    For any load balancing policy $\pi \in \Pi$, suppose that Equation~\eqref{eq:strictlyless_withrate} holds for all $\eta \in \mathcal{S}_n$ and $m\in [n]\setminus \{n\}$. Then $\pi$ is asymptotically delay optimal in heavy traffic.
\end{corollary}
By Remark \ref{rmk:weightedrand_boundary}, WRAND attains equality in the partial-sum condition. Therefore, Corollary \ref{co_our:delayoptimality} implies that any policy $\pi$ that strictly sends fewer jobs to the longer $\boldsymbol{\gamma}$-scaled queues than WRAND is heavy-traffic delay optimal.

\subsection{Queue-length Distribution} \label{subsec:qlength_distribution}
Next, we characterize the steady-state queue-length distribution.
\begin{theorem}[The Steady-state Queue-length Distribution]
\label{thm_our:queuelength_distribution}
    For any load balancing policy $\pi \in \Pi$, suppose that Equation~\eqref{eq:strictlyless_withrate} holds for all $\eta \in \mathcal{S}_n$ and $m\in [n]\setminus \{n\}$. Then $ \epsilon \mathbf{Q} $ converges in distribution to $ \Upsilon ( \gamma_l / \| \boldsymbol{\gamma} \|_1 )_{l=1}^n $ where $\Upsilon$ is an exponential random variable with mean $ \left ( n \sigma_{\lambda}^2 + \sum_{l=1}^n \sigma_l^2 \right ) / 2 $. 
\end{theorem}
The proof is given in Appendix \ref{prf:thm_our:queuelength_distribution}, and we extend the transform method of \cite{hurtado2020transform-Transformmethod} to the multi-period setting.

As shown in Table \ref{tab:policies}, JSQ and Po$d$ use $\mathbf{1}$ as the scaling vector $\boldsymbol{\gamma}$, meaning that the expected queue lengths are asymptotically equal in heavy traffic. This coincides with the results in \cite{hurtado2020transform-Transformmethod, hurtado2021throughput-Pod-stability-heterogeneous}. On the other hand, JSED uses the scaling vector $ \boldsymbol{\gamma} = ( \mu_1,...,\mu_n ) $, and thus the expected queue lengths are asymptotically proportional to their service rates in heavy traffic, or equivalently, the scaled backlogs \(Q_l/\mu_l\) are asymptotically equal across servers.

\section{Applications of the Framework}


In this section, we use our framework to quantify communication overhead under sparse sampling (Section \ref{subsec:sparse_overhead}), and to analyze and design load balancing policies under communication constraints (Section \ref{subsec:policy_design}).

\subsection{Communication Savings under Heavy-Traffic Delay Optimality} \label{subsec:sparse_overhead}


Using the communication metrics introduced in Section~\ref{subsec:general_framework}, we next discuss the trade-off between sparse communication and delay performance. Theorem~\ref{thm_our:SSCUpper_withrate_withratio} and Corollary~\ref{co_our:delayoptimality} imply that, if policy $\pi$ satisfies the strict majorization condition in Equation~\eqref{eq:strictlyless_withrate}, then $\pi$ remains heavy-traffic delay optimal as long as the sampling interval satisfies $T=o(1/\sqrt{\epsilon})$.

This condition reflects the balance between the heavy-traffic queue-size scale and the error introduced by stale information. In heavy traffic, the steady-state queue length grows on the order of $1/\epsilon$. Periodic sampling reduces communication, but it also creates a mismatch between the queue-length ordering observed at the beginning of a cycle and the true queue lengths during the cycle. As $T$ increases, this mismatch can accumulate over the cycle. The requirement $T=o(1/\sqrt{\epsilon})$ ensures that the additional terms caused by stale information in the upper bound of Theorem~\ref{thm_our:SSCUpper_withrate_withratio} are asymptotically negligible relative to the dominant $1/\epsilon$ queue-size scale.

Therefore, the sampling interval can grow as the system approaches heavy traffic, but not too quickly. For example, choosing $T=\epsilon^{-1/4}$ yields
\[
    \lim_{\epsilon\downarrow 0} M^\pi_{\mathrm{slot}} = \lim_{\epsilon\downarrow 0} \frac{m(\pi)}{T} = \lim_{\epsilon\downarrow 0} m(\pi)\epsilon^{1/4} =0.
\]
Thus, heavy-traffic delay optimality is compatible with vanishing communication per time slot, and equivalently with vanishing expected communication per arriving job, provided that the sampling interval grows slowly enough so that the stale-information error remains asymptotically negligible.

\subsection{Applications: Policy Analysis and Design} \label{subsec:policy_design}
Our framework provides a unified, tractable analysis of load balancing policies. We analyze a policy from the literature, then propose a new one and derive parameter conditions for stability and heavy-traffic delay optimality. Finally, we present numerical experiments to evaluate the performance of the policies.

\begin{table}[h!]
    \centering
    \footnotesize
    \caption{Description of Policies under the General Framework.}
    \label{tab:policies_new}
    \begin{tabular}{llll}
        \hline
        Policy & $T$ & $\boldsymbol{\gamma}$ & $\mathbf{f}_{\eta}$ \\
        \hline
        Po$d$-ED & 1 & $\boldsymbol{\mu}$ & $\mathbf{f}_{\eta}= \left ( 0,...,0,\binom{d-1}{d-1}/\binom{n}{d},...,\binom{n-1}{d-1}/\binom{n}{d} \right ), \forall \, \eta \in \mathcal{S}_n$  \\
        $k$-WRAND-SLQ-$d$ & $k$ & $\mathbf{1}$ & $ \mathbf{f}_{\eta} = \left (0,...,0, \mu_{\eta(d+1)} / (\sum_{j=d+1}^n \mu_{\eta(j)} ), ..., \mu_{\eta(n)} / (\sum_{j=d+1}^n \mu_{\eta(j)} ) \right ), \forall \, \eta \in \mathcal{S}_n $  \\
        \hline
    \end{tabular}
\end{table}

\subsubsection{Po$d$-Expected Delay (Po$d$-ED).} 
Po\(d\)-ED scales the queue lengths of the \(d\) sampled servers by their service rates and assigns jobs to the shortest scaled queue. Prior simulations can exhibit poor delay performance, as reported by \cite{gardner2021scalable-podJIQgeneral-fastslow} and \cite{abdul2022general-Pod-GeneralmodifiedPod}.

Po$d$-ED fits into the policy class $\Pi$ by choosing $T=1$ and the scaling vector $\boldsymbol{\gamma}=\boldsymbol{\mu}$. Upon each arrival, Po$d$-ED samples $d$ servers uniformly at random and assigns the job to the sampled server with the shortest $\mu$-scaled queue length. Table \ref{tab:policies_new} reports the corresponding vector $\mathbf{f}_{\eta}$, which is the same for all permutations $\eta \in \mathcal{S}_n$. Moreover, this $\mathbf{f}_\eta$ coincides with that induced by Po$d$.

Similar to the analysis of Po$d$ in \cite{hurtado2021throughput-Pod-stability-heterogeneous}, Po$d$-ED satisfies Assumption \ref{assump:persistent_bottleneck_dominance}. By theorems \ref{thm_our:stable_withrate_withratio} and \ref{thm_our:stable_withrate_withratio_necessary_general}, we obtain the following corollary characterizing stability, throughput optimality and heavy-traffic delay optimality.
\begin{corollary}[Properties for Po$d$-ED] \label{co_our:Pod-ED}
    Under Po$d$-ED, define
    \begin{align*}
        h^* := \min_{j\in\{d,\ldots,n\}} \left\{ \frac{\binom{n}{d}}{\binom{j}{d}} \left ( \sum_{l=1}^j \mu_l \right) \right\}.
    \end{align*}
    (a) If $n\lambda < h^*$, then the Markov chain $\{\mathbf{X}(t)\}_{t\ge 0}$ is positive recurrent; if $n\lambda > h^*$, then it is transient. \\
    (b) If, in addition,
    \begin{align*}
        \frac{ \sum\limits_{l=1}^j \mu_l }{ \sum\limits_{l=1}^n \mu_l } \geq \frac{ \binom{j}{d} }{ \binom{n}{d} }, \qquad \forall j\in\{d,\ldots,n-1\},
    \end{align*}
    then Po$d$-ED is throughput optimal. \\
    (c) If 
    \begin{align*}
        \frac{ \sum\limits_{l=1}^j \mu_l }{ \sum\limits_{l=1}^n \mu_l } > \frac{ \binom{j}{d} }{ \binom{n}{d} }, \qquad \forall j\in\{d,\ldots,n-1\},
    \end{align*}
    then Po$d$-ED is asymptotically delay optimal in heavy traffic. 
\end{corollary}
The proof is given in Appendix~\ref{prf:Pod-ED}.

\subsubsection{$k$-Weighted RAND-Skip the $d$ Longest Queues ($k$-WRAND-SLQ-$d$).}


\cite{Yishun2025k-slq-d} proposes $k$-SLQ-$d$ and shows that avoiding assigning jobs to the longest queues can lead to substantial performance improvements. Inspired by this idea, we introduce $k$-WRAND-SLQ-$d$.

Under $k$-WRAND-SLQ-$d$, we set the sampling interval to $T=k$ and use the scaling vector $\mathbf{1}$. At each sampling epoch $rk$, $r\in\mathbb{N}$, the dispatcher collects the queue lengths of all servers and identifies the $d$ longest queues, breaking ties uniformly at random. Denote this set by $\mathcal{I}_d(r)\subseteq[n]$, with $|\mathcal{I}_d(r)|=d$. During the sampling cycle $t\in\{rk,\ldots,(r+1)k-1\}$, all jobs arriving in slot $t$ are assigned to server $I^*(t)$, where
\[
    \mathbb{P}\left \{ I^*(t)=l \mid \mathcal{I}_d(r) \right \} 
    = \frac{\mu_l}{\sum\limits_{j\in[n]\setminus\mathcal{I}_d(r)}\mu_j}, \qquad l\in[n]\setminus\mathcal{I}_d(r).
\]
When $d=0$, $k$-WRAND-SLQ-$d$ reduces to $k$-WRAND. When $d=n-1$, only the shortest queue at the sampling epoch remains eligible, so $k$-WRAND-SLQ-$d$ coincides with $k$-JSQ, which samples queue lengths every $k$ slots and assigns arrivals during each cycle to the queue that was shortest at the sampling epoch. Table~\ref{tab:policies_new} reports the corresponding vector $\mathbf{f}_{\eta}$. The policy satisfies Assumption~\ref{assump:persistent_bottleneck_dominance}, leading to the following corollary.

\begin{corollary}[Properties for $k$-WRAND-SLQ-$d$] \label{co_our:k-WRAND-SLQ-d}
(a) $k$-WRAND-SLQ-$d$ is throughput optimal for all $d \in \{1,...,n-1\}$. \\
(b) $k$-WRAND-SLQ-$d$ is asymptotically delay optimal in heavy traffic for all $d \in \{1,...,n-1\}$.
\end{corollary}
The proof is given in Appendix~\ref{prf:k_WRAND_SLQ_d}. It is known that WRAND is not heavy-traffic delay optimal. Corollary~\ref{co_our:k-WRAND-SLQ-d} shows that \(k\)-WRAND-SLQ-\(d\) achieves heavy-traffic delay optimality even when the policy skips only the longest queue (\(d=1\)), whereas \(k\)-SLQ-\(d\) may need to skip a certain fraction of servers in heterogeneous systems.

We simulate $k$-WRAND-SLQ-$d$ in a heterogeneous system described in Section~\ref{ex:numerical} and consider $\rho\in\{0.9,0.9995\}$. 
Figure \ref{fig:Simulation_heterogeneous} reports $\ln(\text{average queue length})$ across all 50 queues versus $d \in \{0,...,49 \}$ for $k\in\{1,2,10,20,30,50,100,200,250\}$. First, as $\rho$ increases, moving from $d=0$ (i.e., $k$-WRAND) to $d=1$ yields a substantial reduction in the average queue length. Second, for less frequent updates, i.e., $k\geq2$, the curves become U-shaped, and the minimum is achieved at some $d<49$. We also observe a substantial reduction in the average queue length when moving from $d=49$ (i.e., $k$-JSQ) to $d=d^*<49$, especially for large $k$, suggesting that always assigning jobs to the shortest queue may be suboptimal in terms of delay. Another interesting observation is that the optimal $d^*$ decreases as $k$ increases. Intuitively, larger $k$ makes queue length information less timely, so a smaller $d$ (a larger candidate set) helps mitigate over-concentration induced by outdated information.

\begin{figure}[h!]
    \centering
    \begin{subfigure}{0.49\textwidth}
        \centering
        \includegraphics[width=\hsize]{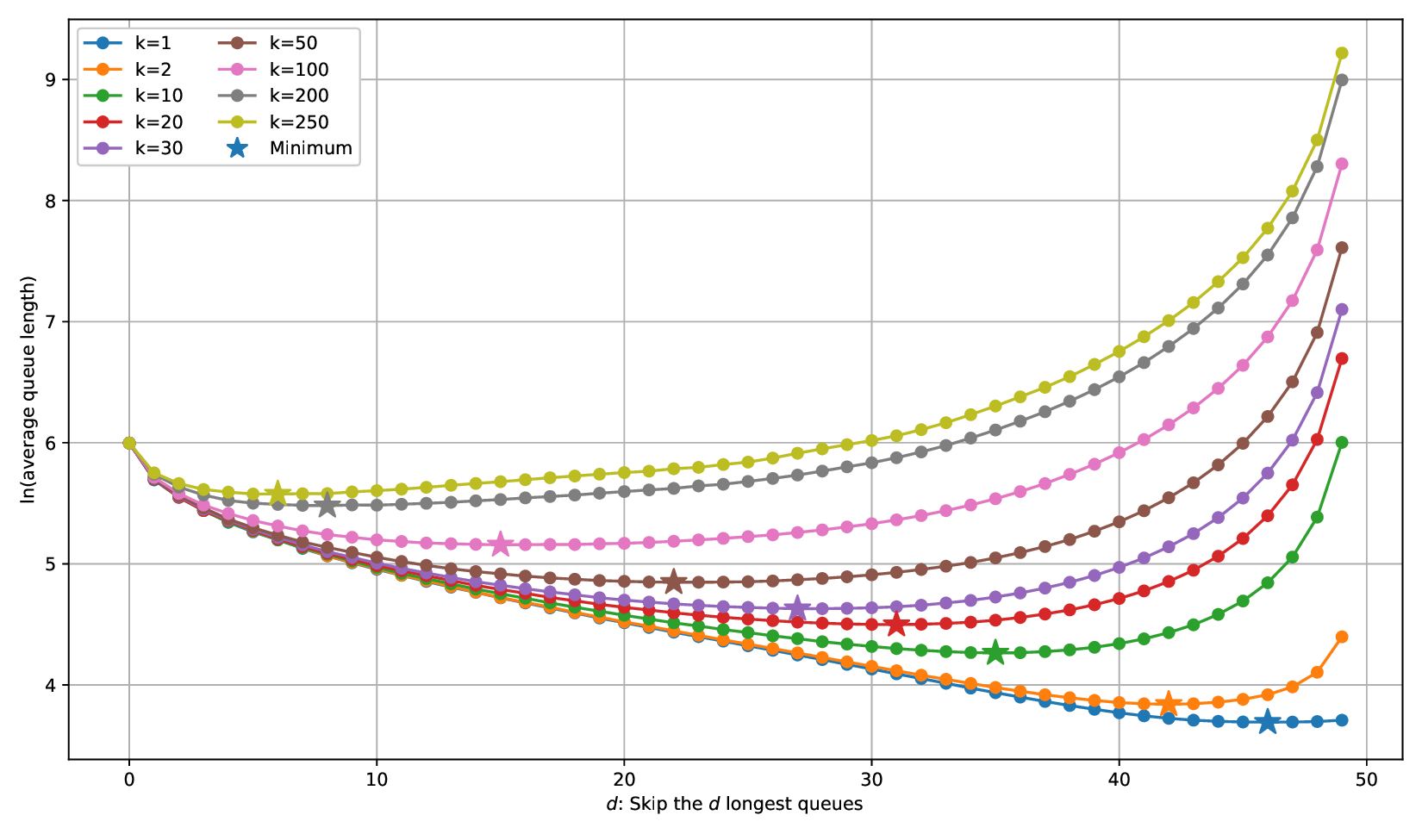}
        \caption{$\rho=0.9$}
        \label{fig:rho-0.9}
    \end{subfigure}
    \hfill
    \begin{subfigure}{0.49\textwidth}
        \centering
        \includegraphics[width=\hsize]{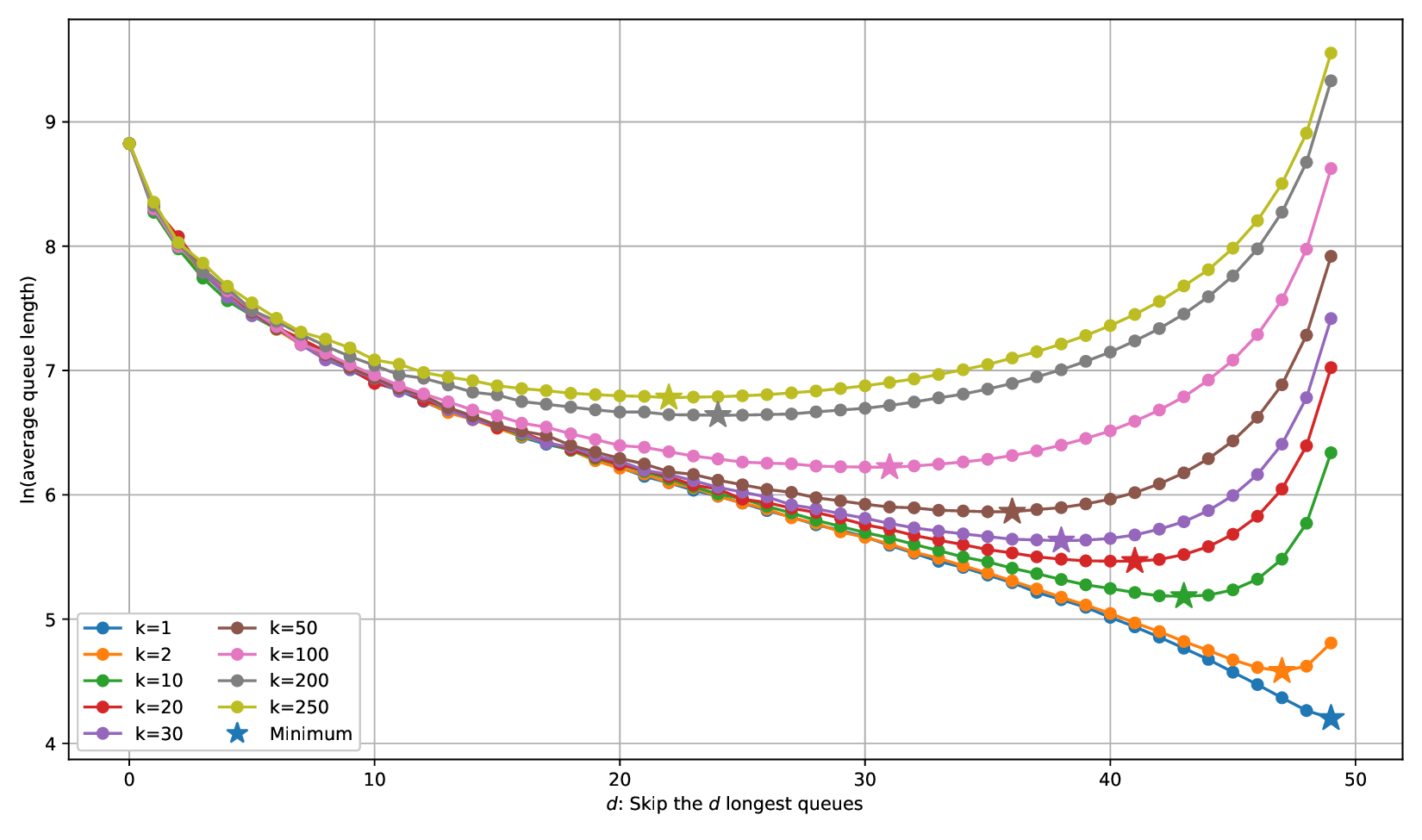}
        \caption{$\rho=0.9995$}
        \label{fig:rho-0.9995}
    \end{subfigure}

    \caption{$\ln(\text{average queue length})$ vs. $d$ under $k$-WRAND-SLQ-$d$.}
    \label{fig:Simulation_heterogeneous}

    \vspace{0.5em}
    \begin{minipage}{0.95\textwidth}
        \footnotesize
        \textit{Note.} The $x$-axis is $d\in\{0,1,\ldots,49\}$, the number of longest queues skipped under $k$-WRAND-SLQ-$d$ (with $n=50$). The $y$-axis reports $\ln(\text{average queue length})$, where the average queue length is computed as the time average over $10^6$ time slots and then averaged across the 50 servers. When $d=0$, the policy becomes $k$-WRAND; when $d=49$, it becomes $k$-JSQ. A $ \star $ marks the minimum average queue length over $d$ for each $k$. For example, the pairs $(k,d^*)$ under $ \rho =0.9995 $ are $(1,49)$, $(2,47)$, $(10,43)$, $(20,41)$, $(30,38)$, $(50,36)$, $(100,31)$, $(200,24)$, $(250,22)$.
    \end{minipage}
\end{figure}

\subsubsection{Numerical Simulation.} \label{ex:numerical}
We consider a system with $n=50$ servers, where $25$ servers have service rate $\mu_l=1$ and the other $25$ servers have $\mu_l=3$ (so $\sum_l \mu_l=100$) with $\sigma_{l}=1$ for all $l \in [n]$. We set $n\lambda = 99.95$ ($\rho=0.9995$) and $ \sqrt{n} \sigma_{\lambda} = 5$. Each simulation starts from empty queues and runs for $10^6$ time slots. To make policies comparable, we select parameters for each candidate policy such that the average number of messages per time slot $M_{\text{slot}}$ is $0.4$. Table \ref{tab:policies_simulations} reports the average queue length under different policies.

Table \ref{tab:policies_simulations} summarizes the simulation results. Among all the policies with the same communication budget ($M_{\text{slot}} = 0.4$), 250-WRAND-SLQ-22 and $10$-SLQ-$25$ achieve the shortest average queue length. Overall, Table~\ref{tab:policies_simulations} demonstrates that our framework enables the design of effective load-balancing policies under communication constraints.

\begin{table}[h!]
    \centering
    \footnotesize
    \caption{Simulation Results: $\rho = 0.9995$.}
    \label{tab:policies_simulations}
    \begin{tabular}{lrrrrrrrr}
        \hline
        Policy & $\mathrm{Mean}(\mathbf{Q})$ & $M_{\text{slot}}$ & $M_{\text{job}}$ \\
        \hline
        WRAND & 6805 & 0.00 & 0.0000 \\
        JSQ & 66 & 100.00 & 1.0004 \\
        Po$2$ & 1191 & 4.00 & 0.0400 \\
        Balanced Po$2$ & 412 & 4.00 & 0.0400 \\ 
        $250$-JSQ & 14108 & 0.40 & 0.0040 \\
        $10$-Po$2$ & 9234 & 0.40 & 0.0040 \\
        $10$-Po$2$-ED & 6428 & 0.40 & 0.0040 \\
        $10$-Balanced Po$2$ & 3688 & 0.40 & 0.0040 \\ 
        $125$-Po$25$ & 8356 & 0.40 & 0.0040 \\
        $10$-SLQ-$25$ & 777 & 0.40 & 0.0040 \\
        $250$-WRAND-SLQ-$1$ & 4248 & 0.40 & 0.0040 \\
        $250$-WRAND-SLQ-$22$ & 880 & 0.40 & 0.0040  \\
        \hline
    \end{tabular}

    \vspace{0.5em}
    \begin{minipage}{0.95\textwidth}
        \footnotesize
        \textit{Note.} $\mathrm{Mean}(\mathbf{Q})$ is computed across all 50 servers using per-server time-averaged queue lengths. Balanced Po$2$ samples $2$ servers with replacement according to probabilities proportional to their service rates \cite{chen2012asymptotic-Pod-Heterogeneous-Proportional}.
    \end{minipage}
\end{table}

\FloatBarrier
\bibliographystyle{elsarticle-num}
\bibliography{references}

\clearpage

\appendix

\section{Auxiliary Notation and Preliminary Theorems}
We begin by introducing additional notation and preliminary theorems that will be used throughout the proofs. We first define several cycle-level quantities induced by a policy in the general framework $\Pi$, and then state the drift-based tools used to establish positive recurrence and moment bounds.

\subsection{Auxiliary Notation} \label{subsec:auxiliary_notation}
\subsubsection{\texorpdfstring{$\sqrt{\boldsymbol{\gamma}}$-scaled Queue Length Vector.}{sqrt-gamma-scaled auxiliary queue length}}
Throughout the proofs, we use the following $\sqrt{\boldsymbol{\gamma}}$-scaled queue-length vector:
\[
    \mathbf O(t) := \left( \frac{Q_l(t)}{\sqrt{\gamma_l}} \right)_{l=1}^n.
\]
This transformation allows us to express the drift of quadratic Lyapunov functions in terms of the $\boldsymbol{\gamma}$-scaled queue lengths \(Q_l(t)/\gamma_l\). Indeed, since \(\|\mathbf O(t)\|_2^2=\sum_{l=1}^n Q_l^2(t)/\gamma_l\), expanding the drift of \(\|\mathbf O(t)\|_2^2\) naturally produces linear terms of the form \(Q_l(t)/\gamma_l\). These are precisely the $\boldsymbol{\gamma}$-scaled queue lengths used to order servers in the policy and to state the partial-sum conditions in the stability and heavy-traffic arguments.

When proving state-space collapse property (Theorem~\ref{thm_our:SSC_withrate}), we additionally use the vector
\[
    \mathbf c := \left( \sqrt{\gamma_l} \right)_{l=1}^n.
\]
Let $\mathbf O_{\parallel}(t)$ be the Euclidean projection of $\mathbf O(t)$ onto
$\operatorname{span}\{\mathbf c\}$, namely
\[
    \mathbf O_{\parallel}(t) := \frac{\langle \mathbf{O}(t), \mathbf{c} \rangle} {\langle \mathbf c,\mathbf c\rangle}\mathbf c = \frac{\|\mathbf Q(t)\|_1}{\|\boldsymbol\gamma\|_1}\mathbf c.
\]
We define the perpendicular component by
\[
    \mathbf O_{\perp}(t) := \mathbf O(t)-\mathbf O_{\parallel}(t).
\]
Equivalently, for each $l\in[n]$,
\[
    O_{\perp, l}(t) = \frac{Q_l(t)}{\sqrt{\gamma_l}} - \frac{\|\mathbf Q(t)\|_1}{\|\boldsymbol\gamma\|_1}\sqrt{\gamma_l} = \sqrt{\gamma_l} \left( \frac{Q_l(t)}{\gamma_l} - \frac{\|\mathbf Q(t)\|_1}{\|\boldsymbol\gamma\|_1} \right).
\]
Thus, we have
\[
    \|\mathbf O_\perp(t)\|_2^2 = \sum_{l=1}^n \gamma_l \left( \frac{Q_l(t)}{\gamma_l} - \frac{\|\mathbf Q(t)\|_1}{\|\boldsymbol\gamma\|_1} \right)^2 = \|\mathbf Q^{(\gamma)}_\perp(t)\|_\gamma^2.
\]
In the proof of Theorem~\ref{thm_our:SSC_withrate} in Appendix~\ref{prf:thm_SSC_withrate_withratio}, controlling \(\mathbf O_\perp(t)\) is equivalent to controlling the deviations of the scaled queue lengths \(Q_l(t)/\gamma_l\) from their common weighted average.

\subsubsection{Conditional Moments of Cumulative Arrivals.}
For a policy $\pi\in \Pi$, over each block of $T$ time slots, the cumulative number of arrivals sent to the $l$th longest server is
\[ \sum\limits_{t=rT}^{(r+1)T-1} A_{\eta_r(l)} = \sum_{t=rT}^{(r+1)T-1} A(t)Z_{\eta_r(l)}(t). \]
Since the dispatching decisions $ \mathbf{Z}(t) $ are independent of the arrival process, we have, for any permutation $\eta \in \mathcal{S}_n$,
\begin{align*}
    \mathbb{E} \left [ \sum\limits_{t=rT}^{(r+1)T-1} A_{\eta_r(l)} \, \middle | \, \eta_r = \eta \right ] & = \mathbb{E} \left [ \sum\limits_{t=rT}^{(r+1)T-1} A(t)Z_{\eta(l)}(t) \, \middle | \, \eta_r = \eta \right ]  \\
    & = \sum\limits_{t=rT}^{(r+1)T-1} \mathbb{E} \left [ A(t) Z_{\eta(l)}(t) \, \middle | \, \eta_r = \eta \right ] \\
    & = \sum\limits_{t=rT}^{(r+1)T-1} \mathbb{E} \left [ \mathbb{E}[ A(t) Z_{\eta(l)}(t) \, | \, \eta_r =\eta, \mathbf{Z}(t) ] \, \middle | \, \eta_r = \eta \right ] \\
    & = \sum\limits_{t=rT}^{(r+1)T-1} \mathbb{E} \left [ \mathbb{E}[ A(t) ] Z_{\eta(l)}(t) \, \middle | \, \eta_r = \eta \right ] \\
    & = \mathbb{E}[ A(1) ] \left ( \sum\limits_{t=rT}^{(r+1)T-1} \mathbb{E} \left [ Z_{\eta(l)}(t) \, \middle | \, \eta_r = \eta \right ] \right ) \\
    & = n\lambda \mathbb{E} \left [ \sum\limits_{t=rT}^{(r+1)T-1}  Z_{\eta(l)}(t) \, \middle | \, \eta_r = \eta \right ] \\
    & \overset{(a)}{=} n\lambda \mathbb{E} \left [ N_{r,\eta}(l) \, \middle | \, \eta_r = \eta \right] \\
    & \overset{(b)}{=} n\lambda T f_{l,\eta}.
\end{align*}
where (a) follows from the definition $ N_{r,\eta}(l) = \sum_{t=rT}^{(r+1)T-1} Z_{\eta(l)}(t) $ and (b) follows from the definition $f_{l,\eta} = \mathbb{E}[ N_{r,\eta_r}(l) / T \, | \, \eta_r=\eta  ]$ in Section 
\ref{subsec:NotationfromFramework}.

The variance is
\begin{align*}
    \mathrm{Var} \left ( \sum\limits_{t=rT}^{(r+1)T-1} A_{\eta_r(l)} \, \middle | \, \eta_r = \eta \right ) &= \mathbb{E}[N_{r,\eta_r}(l) \, | \, \eta_r = \eta] \mathrm{Var}(A(1)) + \mathbb{E}[A(1)]^2 \mathrm{Var}(N_{r,\eta_r}(l)) \\
    &= Tf_{l,\eta}n\sigma_{\lambda}^2 + T^2 n^2\lambda^2 \tau_{l,\eta}^2.
\end{align*} 
The second moment is then
\[ \mathbb{E} \left [ \left ( \sum\limits_{t=rT}^{(r+1)T-1} A_{\eta_r(l)} \right )^2 \, \middle | \, \eta_r = \eta \right ] = Tf_{l,\eta}n\sigma_{\lambda}^2 + T^2 n^2\lambda^2 \left (\tau_{l,\eta}^2 +f_{l,\eta}^2 \right ). \]

\subsection{Preliminary Theorems}
We first introduce the Foster-Lyapunov theorem, which will be used in Appendix~\ref{prf:thm_stable_withrate_withratio} to establish positive recurrence of the discrete-time Markov chain. We restate Theorem C.26 in \cite{dai2020processing-ForstLyapunovThm} in the following form, adapted for our use.
\begin{apptheorem}[Foster-Lyapunov theorem \cite{dai2020processing-ForstLyapunovThm}] \label{thm:Foster-Lyapunov-criteria}
    Let $\{\mathbf{\Xi}(t)\}_{t\ge 0}$ be an irreducible discrete-time Markov chain on a countable state space $\mathcal{S}$. Suppose there exist constants $b>0$ and $\delta_F>0$, a finite set $\mathcal{K}\subset \mathcal{S}$, and a nonnegative function $V:\mathcal{S}\to\mathbb{R}_+$ such that the following conditions hold.
    
    \noindent \textbf{Condition A1}.
    For every $\mathbf{x}\in \mathcal{S}\setminus \mathcal{K}$,
    \begin{align*}
        \mathbb{E} \left[ V(\mathbf{\Xi}(t+1)) - V(\mathbf{\Xi}(t)) \mid \mathbf{\Xi}(t)=\mathbf{x}\right] \leq -\delta_F.    
    \end{align*}
    
    \noindent \textbf{Condition A2}.
    For every $\mathbf{x}\in \mathcal{K}$,
    \begin{align*}
        \mathbb{E}\!\left[V(\mathbf{\Xi}(t+1))-V(\mathbf{\Xi}(t)) \mid \mathbf{\Xi}(t)=\mathbf{x}\right]\le -\delta_F + b.    
    \end{align*}
    Then the Markov chain $\{\mathbf{\Xi}(t)\}_{t\ge 0}$ is positive recurrent.
\end{apptheorem}
We next introduce a Hajek-type exponential drift theorem from Theorem 2.3 of \cite{hajek1982hitting-lyapunovupperbound}. In our analysis, we apply this result in two places. First, in Appendix~\ref{prf:thm_SSC_withrate_withratio}, we apply it with \(H_{t'}=V_{\perp}(\tilde{\mathbf O}(t'))\), where \(V_{\perp}\) is the Lyapunov function used in the state-space-collapse proof, so that \(H_{t'+1}-H_{t'}=\Delta V_{\perp}(\tilde{\mathbf O}(t'))\). This application gives an exponential-moment bound for \(\|\mathbf O_{\perp}\|_2\), which is used to control the perpendicular component in the SSC and transform arguments. Second, in Appendix~\ref{prf:thm_upperbound_withrate_withratio}, we apply the same result with \(H_{t'}=V(\tilde{\mathbf O}(t'))=\|\tilde{\mathbf O}(t')\|_2\), so that \(H_{t'+1}-H_{t'}=\Delta V(\tilde{\mathbf O}(t'))\). This second application establishes exponential integrability of the total \(\sqrt{\boldsymbol\gamma}\)-scaled queue length and, along the heavy-traffic sequence, shows that the moment generating function of \(\epsilon\|\tilde{\mathbf Q}\|_1\) is well defined in a fixed neighborhood of the origin.

\begin{apptheorem}[Hajek-type exponential drift theorem \cite{hajek1982hitting-lyapunovupperbound}] \label{thm:hajek_drift}
    Let $ \{ H_t \}_{t \geq 0}$ be a real-valued stochastic process adapted to $ \{ \mathcal{F}_t \}_{t \geq 0}$. Assume the following two conditions.
    
    \noindent \textbf{Condition B1}.
    There exist constants $a\in\mathbb{R}$ and $\delta_H>0$ such that, for every $t \geq 0$,
    \begin{align*}
        \mathbb{E}\!\left[\,H_{t+1}-H_t+\delta_H;\, H_t > a \,\middle|\, \mathcal{F}_t \right] \leq 0.
    \end{align*}
    
    \noindent
    \textbf{Condition B2}.
    There exists a constant $\theta>0$ and a nonnegative random variable $G$ such that, for every $ t \geq 0$ and every $ u \geq 0$,
    \begin{align*}
        \mathbb{P}\!\left(|H_{t+1}-H_t| > u \,\middle| \, \mathcal{F}_t\right) \leq \mathbb{P}(G > u),
    \end{align*}
    and
    \begin{align*}
        \mathbb{E}\!\left[e^{\theta G}\right]=D<\infty.
    \end{align*}
    Define
    \begin{align*}
        C_H:=\frac{\mathbb{E}[e^{\theta G}]-(1+\theta \mathbb{E}[G])}{\theta^2}.
    \end{align*}
    Choose $\zeta$ such that
    \begin{align*}
        0<\zeta\le \theta, \qquad \zeta<\frac{\delta_H}{C_H},
    \end{align*}
    and define
    \begin{align*}
        \varrho:=1-\delta_H\zeta+ C_H \zeta^2.
    \end{align*}
    Then $0<\varrho<1$, and the following two consequences hold:
    \noindent For every $t \geq 0$,
    \begin{align*}
        \mathbb{E}\!\left[e^{\zeta (H_{t+1}-H_t)};\, H_t > a \,\middle|\, \mathcal{F}_t \right]\leq \varrho.
    \end{align*}
    \noindent For every $t \geq 0$,
    \begin{align*}
        \mathbb{E}\!\left[e^{\zeta (H_{t+1}-a)};\, H_t \le a \,\middle|\, \mathcal{F}_t\right]\le D.
    \end{align*}
    
    \noindent Consequently, for all $t \geq 0$,
    \begin{align*}
        \mathbb{E}\!\left[e^{\zeta H_t}\middle| \mathcal{F}_0\right]
        \leq \varrho^t e^{\zeta H_0} + \frac{1-\varrho^t}{1-\varrho}De^{\zeta a}.
    \end{align*}
\end{apptheorem}

\section{Proof of Theorem \ref{thm_our:stable_withrate_withratio} } \label{prf:thm_stable_withrate_withratio}
Suppose that 
\begin{align*}
    n\lambda < \min\limits_{\eta \in S_n} \left\{ \min\limits_{ m\in [n] } \left \{  \frac{ \sum\limits_{l=1}^m \mu_{\eta(l)} }{ \sum\limits_{l=1}^m f_{l, \eta} } \right \} \right\}.
\end{align*}
We aim to establish that the Markov chain $\{\mathbf X(t)\}_{t\ge 0}$ defined in Equation~\eqref{def:Mkchain} is positive recurrent. To do so, we let $t$ be a multiple of $T$ and define $t' = t/T$. We then consider the chain sampled at the beginning of each dispatching cycle,
\[
    \tilde{\mathbf X}(t') := \mathbf X(Tt'), \qquad t'\in\mathbb{N}.
\]
Its queue-length component is \( \tilde{\mathbf Q}(t') := \mathbf Q(Tt')\) for all $t'\in\mathbb{N}$. We also define the corresponding sampled $\sqrt{\boldsymbol\gamma}$-scaled queue-length vector by
\[
    \tilde{\mathbf O}(t') := \mathbf O(Tt'), \qquad t'\in\mathbb{N},
\]
where $\mathbf O(t)$ is defined in Section~\ref{subsec:auxiliary_notation}.

At the sampling epoch $t=Tt'$, the auxiliary components of $\tilde{\mathbf X}(t')=\mathbf X(Tt')$ take values in finite sets: $\eta_{t'}\in S_n$, $\phi(t')$ belongs to a finite set of $T$-length decision sequences, and the cycle-position component belongs to $\{0,1,\ldots,T-1\}$, taking value zero at cycle boundaries.

This reduction is without loss for the recurrence argument. Since $\boldsymbol\gamma$ has strictly positive components, $\tilde{\mathbf O}(t')$ is a one-to-one deterministic transformation of $\tilde{\mathbf Q}(t')$. Hence, positive recurrence of the sampled $\sqrt{\boldsymbol{\gamma}}$-scaled queue-length process implies positive recurrence of the sampled chain $\{\tilde{\mathbf X}(t')\}_{t'\ge0}$.

Once positive recurrence of the sampled chain is established, positive recurrence of the corresponding recurrent class in the original chain follows from the definition of positive recurrence in terms of finite mean return times. Indeed, the sampled chain is obtained by observing the original chain every $T$ steps, so the return time in the original chain is at most $T$ times the return time in the sampled chain. \\

Define the Lyapunov functions for any $\sqrt{\boldsymbol{\gamma}}$-scaled queue-length vector $\mathbf{o} \in \mathbb R_+^n$. 
\[
    V(\mathbf{o} )=\|\mathbf{o} \|_2, \qquad W(\mathbf{o} )= \|\mathbf{o} \|_2^2.
\]
We next verify Conditions A1 and A2 of Theorem~\ref{thm:Foster-Lyapunov-criteria} for the embedded chain $\{\tilde{\mathbf X}(t')\}_{t'\ge0}$ using a Lyapunov function that depends on $\tilde{\mathbf O}(t')$. \\

\noindent \textbf{Step 1}. We show that the drift of $V(\tilde{\mathbf O}(t'))$ is uniformly absolutely bounded, which verifies Condition A2.

\begin{lemma} \label{lem_our:bd_absolute_delta_vO}
    For any policy $\pi \in \Pi$, we have
    \begin{align}
        | \Delta V(\tilde{\mathbf{O}}(t')) | \leq \frac{1}{\sqrt{\gamma_{\min}}} T ( nA_{\max} + \sqrt{n} S_{\max} ), \qquad \forall \, \tilde{\mathbf{O}}(t^{\prime}) \in \mathbb{R}^n_{+}, t^{\prime} \in \mathbb{N} \notag
    \end{align}
\end{lemma}
The proof is given in Appendix \ref{prf:bd_absolute_delta_vO}.\\

\noindent \textbf{Step 2}. We show that $V(\tilde{\mathbf O}(t'))$ has a negative drift when the queue-length component $\tilde{\mathbf Q}(t')$ is sufficiently large, which verifies Condition A1.

\begin{lemma} \label{lem_our:bd_negative_delta_vO}
    For any policy $\pi \in \Pi$, suppose that the condition in Equation~\eqref{eq:feasibleregion_withrate} holds. Then there exist constants $\epsilon^{\prime} > 0$ and $0< K < \infty$ such that, for any sampled scaled queue-length vector $\mathbf{o}$ with corresponding queue-length vector $\mathbf{q}$, if
    \begin{align*}
        \| \mathbf{q} \|_1  \geq  \frac{TS_{\max}\|\boldsymbol{\gamma}\|_1}{\gamma_{\min}},
    \end{align*}
    then
    \begin{align}
        \mathbb{E} \left[ \Delta V(\tilde{\mathbf{O}}(t')) \,\middle|\, \tilde{\mathbf{O}}(t')=\mathbf{o} \right] \leq -\frac{\epsilon' T \sqrt{\gamma_{\min}}}{\|\boldsymbol{\gamma}\|_1} + \frac{K\sqrt{\gamma_{\max}}}{2\|\mathbf q\|_2}. \notag
    \end{align}
\end{lemma}
The proof is given in Appendix~\ref{prf:bd_negative_delta_vO}.\\

\noindent \textbf{Step 3}. We now establish positive recurrence.

To verify Condition A1 in Theorem~\ref{thm:Foster-Lyapunov-criteria}, define
\[
    \mathcal{K}:= \left\{ \mathbf{q}\in\mathbb{Z}_+^n: \|\mathbf{q}\|_1 < \frac{TS_{\max}\|\boldsymbol{\gamma}\|_1}{\gamma_{\min}} \right\}
    \cup \left\{ \mathbf{q}\in\mathbb{Z}_+^n: \|\mathbf{q}\|_2 \leq \frac{K\sqrt{\gamma_{\max}}\|\boldsymbol{\gamma}\|_1} {\epsilon'T\sqrt{\gamma_{\min}}} \right\}.
\]
Since the state space is countable and both thresholds are finite, the set $\mathcal{K}$ is finite.

Now let $\mathbf{q}\notin\mathcal{K}$, and let $\mathbf{o}$ be the corresponding sampled scaled queue-length vector. Then
\[
    \|\mathbf{q}\|_1 \geq \frac{TS_{\max}\|\boldsymbol{\gamma}\|_1}{\gamma_{\min}}, \qquad 
    \|\mathbf{q}\|_2 > \frac{K\sqrt{\gamma_{\max}}\|\boldsymbol{\gamma}\|_1} {\epsilon'T\sqrt{\gamma_{\min}}}.
\]
Hence,
\[
    \frac{K\sqrt{\gamma_{\max}}}{2\|\mathbf{q}\|_2}
    <
    \frac{\epsilon'T\sqrt{\gamma_{\min}}}{2\|\boldsymbol{\gamma}\|_1}.
\]
By Lemma~\ref{lem_our:bd_negative_delta_vO}, we have
\begin{align*}
    \mathbb{E} \left[ \Delta V(\tilde{\mathbf{O}}(t')) \,\middle|\, \tilde{\mathbf{O}}(t')=\mathbf{o} \right]
    & \leq - \frac{\epsilon'T\sqrt{\gamma_{\min}}}{\|\boldsymbol{\gamma}\|_1} + \frac{\epsilon'T\sqrt{\gamma_{\min}}}{2\|\boldsymbol{\gamma}\|_1} \\
    & = -\frac{\epsilon'T\sqrt{\gamma_{\min}}}{2\|\boldsymbol{\gamma}\|_1}.
\end{align*}
Thus, Condition A1 holds with
\[
    \delta_F := \frac{\epsilon'T\sqrt{\gamma_{\min}}}{2\|\boldsymbol{\gamma}\|_1}.
\]

Condition A2 holds in the finite set $\mathcal{K}$ by the uniform absolute drift bound in Lemma~\ref{lem_our:bd_absolute_delta_vO}. Specifically, choose
\[
    b:= \frac{1}{\sqrt{\gamma_{\min}}} T \left(nA_{\max}+\sqrt{n}S_{\max}\right) +\delta_F
    = \frac{1}{\sqrt{\gamma_{\min}}} T \left(nA_{\max}+\sqrt{n}S_{\max}\right) + \frac{\epsilon'T\sqrt{\gamma_{\min}}}{2\|\boldsymbol{\gamma}\|_1}.
\]
Therefore, Theorem~\ref{thm:Foster-Lyapunov-criteria} applies to the sampled 
\(\sqrt{\boldsymbol{\gamma}}\)-scaled queue-length process on the communicating class under consideration. Hence this sampled process is positive recurrent on that class. Since \(\tilde{\mathbf O}(t')\) is a one-to-one deterministic transformation of \(\tilde{\mathbf Q}(t')\), positive recurrence of \(\{\tilde{\mathbf O}(t')\}_{t'\ge0}\) implies positive recurrence of the embedded queue-length process \(\{\tilde{\mathbf Q}(t')\}_{t'\ge0}\) on the corresponding class. Because the auxiliary components of \(\tilde{\mathbf X}(t')\) take values in finite sets, the embedded chain \(\{\tilde{\mathbf X}(t')\}_{t'\ge0}\) is also positive recurrent. Finally, since the original chain is a finite-cycle extension of the embedded chain, positive recurrence transfers to the corresponding communicating class of \(\{\mathbf X(t)\}_{t\ge0}\).

\subsection{Proof of Lemma~\ref{lem_our:bd_absolute_delta_vO}} \label{prf:bd_absolute_delta_vO}
We establish a uniform absolute bound on the drift of $V(\tilde{\mathbf O}(t'))$:
\begin{align}
    | \Delta V(\tilde{\mathbf{O}}(t')) | & = | \| \tilde{\mathbf{O}}(t'+1) \|_2 - \| \tilde{\mathbf{O}}(t^{\prime}) \|_2 | \notag \\
    & \overset{(a)}{\leq} \| \tilde{\mathbf{O}}(t'+1) - \tilde{\mathbf{O}}(t')  \|_2 \notag \\
    & = \sqrt{ \sum\limits_{l=1}^n \left ( \frac{Q_{\eta_{t'}(l)}(t+T)}{ \sqrt{\gamma_{\eta_{t'}(l)}}}  - \frac{Q_{\eta_{t'}(l)}(t)}{ \sqrt{\gamma_{\eta_{t'}(l)}}} \right )^2 }  \notag \\
    & \leq \frac{1}{\sqrt{\gamma_{\min}}} \sqrt{ \sum\limits_{l=1}^n \left (Q_{\eta_{t'}(l)}(t) + \sum\limits_{j=0}^{T-1}A_{\eta_{t'}(l)}(t+j) - \sum\limits_{j=0}^{T-1}S_{\eta_{t'}(l)}(t+j) + \sum\limits_{j=0}^{T-1}U_{\eta_{t'}(l)}(t+j) - Q_{\eta_{t'}(l)}(t) \right )^2 } \notag \\
    & = \frac{1}{\sqrt{\gamma_{\min}}} \sqrt{ \sum\limits_{l=1}^n \left (\sum\limits_{j=0}^{T-1}A_{\eta_{t'}(l)}(t+j) - \left( \sum\limits_{j=0}^{T-1}S_{\eta_{t'}(l)}(t+j) - \sum\limits_{j=0}^{T-1}U_{\eta_{t'}(l)}(t+j) \right) \right )^2 } \notag \\
    & \overset{(b)}{\leq} \frac{1}{\sqrt{\gamma_{\min}}} \left [ \sqrt{ \sum\limits_{l=1}^n \left (\sum\limits_{j=0}^{T-1}A_{\eta_{t'}(l)}(t+j) \right )^2 } + \sqrt{ \sum\limits_{l=1}^n \left ( \sum\limits_{j=0}^{T-1}S_{\eta_{t'}(l)}(t+j) - \sum\limits_{j=0}^{T-1}U_{\eta_{t'}(l)}(t+j) \right )^2 } \right ] \notag \\
    & \overset{(c)}{\leq} \frac{1}{\sqrt{\gamma_{\min}}} \left [ \sqrt{ \sum\limits_{l=1}^n \left (\sum\limits_{j=0}^{T-1}A_{\eta_{t'}(l)}(t+j) \right )^2 } + \sqrt{ \sum\limits_{l=1}^n \left ( \sum\limits_{j=0}^{T-1}S_{\eta_{t'}(l)}(t+j) \right )^2 } \right ] \notag \\
    & \overset{(d)}{\leq} \frac{1}{\sqrt{\gamma_{\min}}} \left [ \left | \sum\limits_{l=1}^n \left (\sum\limits_{j=0}^{T-1}A_{\eta_{t'}(l)}(t+j) \right ) \right |  + \sqrt{ \sum\limits_{l=1}^n \left ( \sum\limits_{j=0}^{T-1}S_{\eta_{t'}(l)}(t+j) \right )^2 } \right ] \notag \\
    & = \frac{1}{\sqrt{\gamma_{\min}}} \left [ \left | \sum\limits_{j=0}^{T-1} \left (\sum\limits_{l=1}^{n}A_{\eta_{t'}(l)}(t+j) \right ) \right |  + \sqrt{ \sum\limits_{l=1}^n \left ( \sum\limits_{j=0}^{T-1}S_{\eta_{t'}(l)}(t+j) \right )^2 } \right ] \notag \\
    & \overset{(e)}{\leq} \frac{1}{\sqrt{\gamma_{\min}}} \left ( TnA_{\max} + \sqrt{ nT^2S^2_{\max}} \right ) \notag \\
    & = \frac{1}{\sqrt{\gamma_{\min}}} T ( nA_{\max} + \sqrt{n} S_{\max} ), \label{eq:absolutebound_withrate}
\end{align}
where (a) follows from the fact that $\| \mathbf{x} \|_2 - \| \mathbf{y} \|_2 \leq \| \mathbf{x}-\mathbf{y} \|_2$ for any $\mathbf{x},\mathbf{y}\in\mathbb{R}^n $, (b) follows from the triangle inequality, (c) follows from the fact that $ S_l(t) \geq U_l(t)\geq 0,\  \forall \ l,t $, (d) holds because $\| \mathbf{x} \|_2 \leq \| \mathbf{x} \|_1$ and (e) follows from the facts that $ \sum_{l=1}^n A_{\eta(l)}(t) \leq n A_{\max}$ for all $\eta$ and $t$, and $S_l(t)\leq S_{\max}$ for all $l$ and $t$.

\subsection{Proof of Lemma~\ref{lem_our:bd_negative_delta_vO}} \label{prf:bd_negative_delta_vO}
First, $\Delta V(\tilde{\mathbf{O}}(t'))$ can be bounded by $\Delta W(\tilde{\mathbf{O}}(t'))$ as follows:
\begin{align}
    \Delta V(\tilde{\mathbf{O}}(t')) & = \| \tilde{\mathbf{O}}(t'+1) \|_2 - \| \tilde{\mathbf{O}}(t') \|_2  \notag \\
    & = \sqrt{ \| \mathbf{O}(t+T) \|^2_2} - \sqrt{ \| \mathbf{O}(t) \|^2_2} \notag \\
    & \overset{(a)}{\leq} \frac{1}{ 2\| \mathbf{O}(t) \|_2} \Big( \| \mathbf{O}(t+T) \|_2^2 - \| \mathbf{O}(t) \|_2^2 \Big) \notag \\
    & = \frac{ 1 }{2\| \mathbf{O}(t) \|_2}\big[ \Delta W\big(\tilde{\mathbf{O}}(t') \big) \big] \label{eq:delta_V_bound_by_delta_W}
\end{align}
where (a) follows from the fact that $ g(x)=\sqrt{x} $ is concave for $ x \geq 0 $, and thus $ g(y) - g(x)\leq (y-x)g^{\prime}(x) $.

Second, we show that $W(\tilde{\mathbf O}(t'))$ has a negative drift when the total queue length is sufficiently large. Specifically, suppose that
\[
    \|\tilde{\mathbf Q}(t')\|_1  \geq  \frac{TS_{\max}\|\boldsymbol{\gamma}\|_1}{\gamma_{\min}}.
\]
Recall that $\tilde Q_l(t')=Q_l(Tt')=Q_l(t)$. We use $\eta_{t'}$ to denote the ordering of the $\boldsymbol{\gamma}$-scaled queue lengths at time $t$, so that $\eta_{t'}(i)$ is the index of the server with the $i$-th longest $\boldsymbol{\gamma}$-scaled queue length:
\[
    \frac{\tilde Q_{\eta_{t'}(1)}(t')}{\gamma_{\eta_{t'}(1)}} \geq \frac{\tilde Q_{\eta_{t'}(2)}(t')}{\gamma_{\eta_{t'}(2)}} \geq
    ...
    \geq \frac{\tilde Q_{\eta_{t'}(n)}(t')}{\gamma_{\eta_{t'}(n)}} .
\]
The assumption $\|\tilde{\mathbf Q}(t')\|_1  \geq  TS_{\max}\|\boldsymbol{\gamma}\|_1/\gamma_{\min}$ implies that the server with the longest $\boldsymbol{\gamma}$-scaled queue length has at least $TS_{\max}$ jobs. Indeed,
\begin{align}
    \frac{\tilde{Q}_{\eta_{t'}(1)}(t')}{\gamma_{\eta_{t'}(1)}} & = \frac{ \|\boldsymbol{\gamma}\|_1 }{ \|\boldsymbol{\gamma}\|_1 } \frac{\tilde{Q}_{\eta_{t'}(1)}(t')}{\gamma_{\eta_{t'}(1)}} \notag \\
    & = \frac{ 1 }{ \|\boldsymbol{\gamma}\|_1 } \left ( \sum\limits_{l=1}^n\gamma_{\eta_{t'}(l)} \frac{\tilde{Q}_{\eta_{t'}(1)}(t')}{\gamma_{\eta_{t'}(1)}} \right ) \notag \\
    & \geq \frac{ 1 }{ \|\boldsymbol{\gamma}\|_1 } \left ( \sum\limits_{l=1}^n\gamma_{\eta_{t'}(l)} \frac{\tilde{Q}_{\eta_{t'}(l)}(t')}{\gamma_{\eta_{t'}(l)}} \right ) \notag \\
    & \geq \frac{\|\tilde{\mathbf Q}(t')\|_1}{\|\boldsymbol{\gamma}\|_1} \label{eq:scaled_q1_greater_norm} \\
    & \geq \frac{TS_{\max}}{\gamma_{\min}}, \notag
\end{align}
We then define the following partition of rank positions:
\begin{align}
    I_1 &:= \{1\}, \notag \\
    I_2 &:= \left\{
        l\in\{2,\ldots,n\}:
        \frac{\tilde Q_{\eta_{t'}(l)}(t')}{\gamma_{\eta_{t'}(l)}}
        \geq
        \frac{TS_{\max}}{\gamma_{\min}}
    \right\}, \notag \\
    I_3 &:= \left\{
        l\in\{2,\ldots,n\}:
        \frac{\tilde Q_{\eta_{t'}(l)}(t')}{\gamma_{\eta_{t'}(l)}}
        <
        \frac{TS_{\max}}{\gamma_{\min}}
    \right\}. \notag
\end{align}
By construction, for every $l\in I_1\cup I_2$,
\[
    \tilde Q_{\eta_{t'}(l)}(t')
    \geq
    \gamma_{\eta_{t'}(l)}
    \frac{TS_{\max}}{\gamma_{\min}}
    \geq
    TS_{\max}.
\]
Thus, the queues corresponding to rank positions in $I_1\cup I_2$ have at least $TS_{\max}$ jobs. For every $l\in I_3$, we have
\[
    \frac{\tilde Q_{\eta_{t'}(l)}(t')}{\gamma_{\eta_{t'}(l)}} < \frac{TS_{\max}}{\gamma_{\min}},
\]
so the contribution of these $\boldsymbol{\gamma}$-scaled queue lengths can be bounded by a constant.

Using this partition, we bound the drift of $W(\tilde{\mathbf{O}}(t'))$ as follows:
\begin{align}
    & \mathbb{E}[\Delta W(\tilde{\mathbf{O}}(t'))\ | \  \tilde{\mathbf{O}}(t')=\mathbf{o}] \notag \\
    & = \mathbb{E}[ \| \tilde{\mathbf{O}}(t'+1) \|^2_2 - \| \tilde{\mathbf{O}}(t') \|^2_2 \ | \  \tilde{\mathbf{O}}(t')=\mathbf{o}] \notag \\
    & = \mathbb{E}[ \| \mathbf{O}(t+T) \|^2_2 - \| \mathbf{O}(t) \|^2_2 \ | \  \mathbf{O}(t)=\mathbf{o}] \notag \\
    & = \mathbb{E} \left [ \sum\limits_{l=1}^{n} \frac{ \left (Q_{\eta_{t'}(l)}(t) + \sum\limits_{j=0}^{T-1}A_{\eta_{t'}(l)}(t+j) - \sum\limits_{j=0}^{T-1}S_{\eta_{t'}(l)}(t+j) + \sum\limits_{j=0}^{T-1}U_{\eta_{t'}(l)}(t+j) \right )^2 }{\gamma_{\eta_{t'}(l)}} - \sum\limits_{l=1}^{n} \frac{ Q_{\eta_{t'}(l)}(t)^2 }{\gamma_{\eta_{t'}(l)}} \  \middle | \  \mathbf{O}(t)=\mathbf{o} \right ] \notag \\ 
    & = \mathbb{E} \left [ \sum\limits_{l=1}^{n} \left [ \frac{ \left (Q_{\eta_{t'}(l)}(t) + \sum\limits_{j=0}^{T-1}A_{\eta_{t'}(l)}(t+j) - \sum\limits_{j=0}^{T-1}S_{\eta_{t'}(l)}(t+j) + \sum\limits_{j=0}^{T-1}U_{\eta_{t'}(l)}(t+j) \right )^2 - Q_{\eta_{t'}(l)}(t)^2 }{\gamma_{\eta_{t'}(l)}}  \right ] \  \middle | \ \mathbf{O}(t)=\mathbf{o} \right] \notag \\ 
    & \overset{(a)}{=} \mathbb{E} \left[ \frac{ \left (Q_{\eta_{t'}(1)}(t) + \sum\limits_{j=0}^{T-1}A_{\eta_{t'}(1)}(t+j) - \sum\limits_{j=0}^{T-1}S_{\eta_{t'}(1)}(t+j) + \sum\limits_{j=0}^{T-1}U_{\eta_{t'}(1)}(t+j) \right )^2 - Q_{\eta_{t'}(1)}(t)^2 }{\gamma_{\eta_{t'}(1)}}  \  \middle | \ \mathbf{O}(t)=\mathbf{o} \right ] \notag \\
    & \quad + \mathbb{E} \left[ \sum\limits_{ l\in I_2 } \left [ \frac { \left (Q_{\eta_{t'}(l)}(t) + \sum\limits_{j=0}^{T-1}A_{\eta_{t'}(l)}(t+j) - \sum\limits_{j=0}^{T-1}S_{\eta_{t'}(l)}(t+j) + \sum\limits_{j=0}^{T-1}U_{\eta_{t'}(l)}(t+j) \right )^2 - Q_{\eta_{t'}(l)}(t)^2 }{\gamma_{\eta_{t'}(l)}}  \right ] \  \middle | \ \mathbf{O}(t)=\mathbf{o} \right ] \notag \\
    & \quad + \mathbb{E} \left[ \sum\limits_{ l\in I_3 } \left [ \frac{ \left (Q_{\eta_{t'}(l)}(t) + \sum\limits_{j=0}^{T-1}A_{\eta_{t'}(l)}(t+j) - \sum\limits_{j=0}^{T-1}S_{\eta_{t'}(l)}(t+j) + \sum\limits_{j=0}^{T-1}U_{\eta_{t'}(l)}(t+j) \right )^2 - Q_{\eta_{t'}(l)}(t)^2}{\gamma_{\eta_{t'}(l)}} \right ] \  \middle | \ \mathbf{O}(t)=\mathbf{o} \right ] \label{eq:stable_I1I2I3_withrate}
\end{align}
where (a) follows from the definition of the partition. To further bound Equation~\eqref{eq:stable_I1I2I3_withrate}, using the following fact:
\begin{itemize}
    \item[(i)] The queues that have at least $TS_{\max}$ jobs in them at time $t$ will have no unused services for the $T$ time slots, that is, we have $U_{\eta_{t'}(l)}(t+j)=0$ for all $j \in \{ 0,...,T-1 \}$ for all $l\in I_1 \cup I_2$.
    \item[(ii)] The number of unused services is bounded above by the number of potential services, that is, we have $ S_{\eta_{t'}(l)}(t+j) \geq U_{\eta_{t'}(l)}(t+j)$ for all $l\in[n]$ and $j\geq 0$.
    \item[(iii)] For every $l\in I_3$, the queue lengths remain nonnegative, and hence $0\leq Q_{\eta_{t'}(l)}(t+T)\le Q_{\eta_{t'}(l)}(t)+\sum_{j=0}^{T-1}A_{\eta_{t'}(l)}(t+j)$. Therefore, $ Q_{\eta_{t'}(l)}(t+T)^2-Q_{\eta_{t'}(l)}(t)^2 \leq \left(Q_{\eta_{t'}(l)}(t)+\sum_{j=0}^{T-1}A_{\eta_{t'}(l)}(t+j)\right)^2-Q_{\eta_{t'}(l)}(t)^2 $.
\end{itemize}
Then Equation~\eqref{eq:stable_I1I2I3_withrate} can be bounded as follows:
\begin{align}
    & \mathbb{E}[\Delta W(\tilde{\mathbf{O}}(t'))\ | \  \tilde{\mathbf{O}}(t')=\mathbf{o}] \notag \\
    & \leq \mathbb{E} \left[ \frac{ \left (Q_{\eta_{t'}(1)}(t) + \sum\limits_{j=0}^{T-1}A_{\eta_{t'}(1)}(t+j) - \sum\limits_{j=0}^{T-1}S_{\eta_{t'}(1)}(t+j) \right )^2 - Q_{\eta_{t'}(1)}(t)^2 }{\gamma_{\eta_{t'}(1)}} \  \middle | \ \mathbf{O}(t)=\mathbf{o} \right ] \notag \\
    & \quad + \mathbb{E} \left[ \sum\limits_{ l\in I_2 } \left [ \frac{ \left (Q_{\eta_{t'}(l)}(t) + \sum\limits_{j=0}^{T-1}A_{\eta_{t'}(l)}(t+j) - \sum\limits_{j=0}^{T-1}S_{\eta_{t'}(l)}(t+j) \right )^2 - Q_{\eta_{t'}(l)}(t)^2 }{\gamma_{\eta_{t'}(l)}} \right ] \  \middle | \ \mathbf{O}(t)=\mathbf{o} \right ] \notag \\
    & \quad + \mathbb{E} \left[ \sum\limits_{ l\in I_3 } \left [ \frac{ \left (Q_{\eta_{t'}(l)}(t) + \sum\limits_{j=0}^{T-1}A_{\eta_{t'}(l)}(t+j) \right )^2 - Q_{\eta_{t'}(l)}(t)^2 }{\gamma_{\eta_{t'}(l)}} \right ] \  \middle | \ \mathbf{O}(t)=\mathbf{o} \right ] \label{eq:stable_expofwq_withrate}
\end{align}
We now compute the expectation in Equation~\eqref{eq:stable_expofwq_withrate} for each term separately. For the longest queue $ l^{\prime} $ and $l \in I_2$, using the fact that arrivals and potential services are independent from each other and from past queue lengths, we have
\begin{align}
    & \mathbb{E} \left [ \sum\limits_{ l \in \{ 1 \} \cup I_2 } \left [ 
        \frac{ \left (Q_{\eta_{t'}(l)}(t) + \sum\limits_{j=0}^{T-1}A_{\eta_{t'}(l)}(t+j) - \sum\limits_{j=0}^{T-1}S_{\eta_{t'}(l)}(t+j) \right )^2 - (Q_{\eta_{t'}(l)}(t))^2 }{\gamma_{\eta_{t'}(l)}} 
    \right ] \  \middle | \  \mathbf{O}(t)=\mathbf{o} \right ] \notag \\
    & = \mathbb{E} \left [ \sum\limits_{ l \in \{ 1 \} \cup I_2 } \left [ 
        \frac{ 2Q_{\eta_{t'}(l)}(t) \left (\sum\limits_{j=0}^{T-1}A_{\eta_{t'}(l)}(t+j) - \sum\limits_{j=0}^{T-1}S_{\eta_{t'}(l)}(t+j) \right ) }{\gamma_{\eta_{t'}(l)}}
    \right ] \  \middle | \  \mathbf{O}(t)=\mathbf{o} \right ] \notag \\
    & \qquad + \mathbb{E} \left [ \sum\limits_{ l \in \{ 1 \} \cup I_2 } \left [ \frac{ \left (\sum\limits_{j=0}^{T-1}A_{\eta_{t'}(l)}(t+j) - \sum\limits_{j=0}^{T-1}S_{\eta_{t'}(l)}(t+j) \right )^2 }{\gamma_{\eta_{t'}(l)}} 
    \right ] \  \middle | \  \mathbf{O}(t)=\mathbf{o} \right ] \notag \\
    & = \sum\limits_{ l \in \{ 1 \} \cup I_2 } \left [
        2 \frac{q_{\eta_{t'}(l)}}{\gamma_{\eta_{t'}(l)}} T \left ( n \lambda f_{l,\eta_{t'}} - \mu_{\eta_{t'}(l)} \right ) + \frac{T^2 \left ( n \lambda f_{l,\eta_{t'}}-\mu_{\eta_{t'}(l)} \right )^2 + Tf_{l,\eta_{t'}}n\sigma_{\lambda}^2 + T^2 n^2\lambda^2 \tau_{l,\eta_{t'}}^2 + T\sigma_{\eta_{t'}(l)}^2}{\gamma_{\eta_{t'}(l)}}
    \right ]. \label{eq:stable_longest_withrate}
\end{align}
For queues with indices in $ I_3 $, using the fact that their $\boldsymbol{\gamma}$-scaled queue lengths are bounded by $ TS_{\max} / \gamma_{\min} $ by the definition of the partition, we have:
\begin{align}
    & \mathbb{E} \left [ \sum\limits_{ l \in I_3 } \left [ 
        \frac{\left (Q_{\eta_{t'}(l)}(t) + \sum\limits_{j=0}^{T-1}A_{\eta_{t'}(l)}(t+j) \right )^2 - (Q_{\eta_{t'}(l)}(t))^2 }{\gamma_{\eta_{t'}(l)}} 
    \right ] \  \middle | \  \mathbf{O}(t)=\mathbf{o} \right ] \notag \\
    & = \mathbb{E} \left [ \sum\limits_{ l \in I_3 } \left [ 
        \frac{2Q_{\eta_{t'}(l)}(t) \left (\sum\limits_{j=0}^{T-1}A_{\eta_{t'}(l)}(t+j) \right )}{\gamma_{\eta_{t'}(l)}} + \frac{ \left (\sum\limits_{j=0}^{T-1}A_{\eta_{t'}(l)}(t+j) \right )^2 }{\gamma_{\eta_{t'}(l)}} 
    \right ] \  \middle | \  \mathbf{O}(t)=\mathbf{o} \right ] \notag \\
    & = \sum\limits_{ l\in I_3 } \left [
        2\frac{q_{\eta_{t'}(l)}}{\gamma_{\eta_{t'}(l)}} T n\lambda f_{l,\eta_{t'}} + \frac{T^2 n^2\lambda^2 f^2_{l,\eta_{t'}} + Tf_{l,\eta_{t'}}n\sigma_{\lambda}^2 + T^2 n^2\lambda^2 \tau_{l,\eta_{t'}}^2}{\gamma_{\eta_{t'}(l)}}
    \right ] \notag \\
    & \leq \sum\limits_{ l\in I_3 } \left [
        2\frac{TS_{\max}}{\gamma_{\min}} T n\lambda f_{l,\eta_{t'}} + \frac{T^2 n^2\lambda^2 f^2_{l,\eta_{t'}} + Tf_{l,\eta_{t'}}n\sigma_{\lambda}^2 + T^2 n^2\lambda^2 \tau_{l,\eta_{t'}}^2}{\gamma_{\eta_{t'}(l)}}
    \right ] \label{eq:stable_I3_withrate}
\end{align}
Combining equations~\eqref{eq:stable_expofwq_withrate}, \eqref{eq:stable_longest_withrate}, and \eqref{eq:stable_I3_withrate}, we get that
\begin{align}
    & \mathbb{E}[\Delta W(\tilde{\mathbf{O}}(t'))\ | \  \tilde{\mathbf{O}}(t')=\mathbf{o}] \notag \\
    & = 2\frac{q_{\eta_{t'}(1)}}{\gamma_{\eta_{t'}(1)}} T \left ( n \lambda f_{1,\eta_{t'}} - \mu_{\eta_{t'}(1)} \right ) + \frac{ T^2 \left ( n \lambda f_{1,\eta_{t'}}-\mu_{\eta_{t'}(1)} \right )^2 + Tf_{1,\eta_{t'}}n\sigma_{\lambda}^2 + T^2 n^2\lambda^2 \tau_{1,\eta_{t'}}^2 +T\sigma_{\eta_{t'}(1)}^2 }{\gamma_{\eta_{t'}(1)}} \notag \\
    & \quad + \sum\limits_{ l \in I_2 } \left [ 2\frac{q_{\eta_{t'}(l)}}{\gamma_{\eta_{t'}(l)}} T \left ( n \lambda f_{l,\eta_{t'}} - \mu_{\eta_{t'}(l)} \right )+ \frac{ T^2 \left ( n \lambda f_{l,\eta_{t'}}-\mu_{\eta_{t'}(l)} \right )^2 +  Tf_{l,\eta_{t'}}n\sigma_{\lambda}^2 + T^2 n^2\lambda^2 \tau_{l,\eta_{t'}}^2 +T\sigma_{\eta_{t'}(l)}^2 }{\gamma_{\eta_{t'}(l)}} \right ] \notag \\
    & \quad + \sum\limits_{ l\in I_3 } \left [ 2\frac{q_{\eta_{t'}(l)}}{\gamma_{\eta_{t'}(l)}} T n\lambda f_{l,\eta_{t'}} + \frac{ T^2 n^2\lambda^2 f^2_{l,\eta_{t'}} + Tf_{l,\eta_{t'}}n\sigma_{\lambda}^2 + T^2 n^2\lambda^2 \tau_{l,\eta_{t'}}^2 }{\gamma_{\eta_{t'}(l)}} \right ] \notag \\
    & \leq 2 \frac{ q_{\eta_{t'}(1)}}{\gamma_{\eta_{t'}(1)}} T \left ( n \lambda f_{1,\eta_{t'}} - \mu_{\eta_{t'}(1)} \right )+ \frac{ T^2 \left ( n \lambda f_{1,\eta_{t'}}-\mu_{\eta_{t'}(1)} \right )^2 + Tf_{1,\eta_{t'}}n\sigma_{\lambda}^2 + T^2 n^2\lambda^2 \tau_{1,\eta_{t'}}^2 +T\sigma_{\eta_{t'}(1)}^2 }{\gamma_{\eta_{t'}(1)}} \notag \\
    & \quad + \sum\limits_{ l \in I_2 } \left [ 2\frac{q_{\eta_{t'}(l)}}{\gamma_{\eta_{t'}(l)}} T \left ( n \lambda f_{l,\eta_{t'}} - \mu_{\eta_{t'}(l)} \right ) + \frac{ T^2 \left ( n \lambda f_{l,\eta_{t'}} - \mu_{\eta_{t'}(l)} \right )^2 + Tf_{l,\eta_{t'}}n\sigma_{\lambda}^2 + T^2 n^2\lambda^2 \tau_{l,\eta_{t'}}^2 +T\sigma_{\eta_{t'}(l)}^2 }{\gamma_{\eta_{t'}(l)}} \right ] \notag \\
    & \quad + \sum\limits_{ l\in I_3 } \left [ 2\frac{TS_{\max}}{\gamma_{\min}} T n\lambda f_{l,\eta_{t'}} + \frac{ T^2 n^2\lambda^2 f^2_{l,\eta_{t'}} + Tf_{l,\eta_{t'}}n\sigma_{\lambda}^2 + T^2 n^2\lambda^2 \tau_{l,\eta_{t'}}^2 }{\gamma_{\eta_{t'}(l)}} \right ] \notag \\
    & \overset{(a)}{=} 2 \frac{q_{\eta_{t'}(1)}}{\gamma_{\eta_{t'}(1)}} T \left ( n \lambda f_{1,\eta_{t'}} - \mu_{\eta_{t'}(1)} \right ) + \sum\limits_{ l \in I_2 } 2\frac{q_{\eta_{t'}(l)}}{\gamma_{\eta_{t'}(l)}} T \left ( n \lambda f_{l,\eta_{t'}} - \mu_{\eta_{t'}(l)} \right ) \notag \\
    & \quad + \frac{ T^2 \left ( n \lambda f_{1,\eta_{t'}}-\mu_{\eta_{t'}(1)} \right )^2 + Tf_{1,\eta_{t'}}n\sigma_{\lambda}^2 + T^2 n^2\lambda^2 \tau_{1,\eta_{t'}}^2 +T\sigma_{\eta_{t'}(1)}^2}{\gamma_{\eta_{t'}(1)}} \notag \\
    & \quad + \sum\limits_{ l \in I_2 } \left [ \frac{ T^2 \left ( n \lambda f_{l,\eta_{t'}}-\mu_{\eta_{t'}(l)} \right )^2 + Tf_{l,\eta_{t'}}n\sigma_{\lambda}^2 + T^2 n^2\lambda^2 \tau_{l,\eta_{t'}}^2 +T\sigma_{\eta_{t'}(l)}^2 }{\gamma_{\eta_{t'}(l)}} \right ] \notag \\
    & \quad + \sum\limits_{ l\in I_3 } \left [ 2\frac{T^2S_{\max}}{\gamma_{\min}} n\lambda f_{l,\eta_{t'}} + \frac{ T^2 n^2\lambda^2 f^2_{l,\eta_{t'}} + Tf_{l,\eta_{t'}}n\sigma_{\lambda}^2 + T^2 n^2\lambda^2 \tau_{l,\eta_{t'}}^2}{\gamma_{\eta_{t'}(l)}} \right ] \notag \\
    & \overset{(b)}{\leq} 2 T \left ( n \lambda f_{1,\eta_{t'}} - \mu_{\eta_{t'}(1)} \right )\frac{q_{\eta_{t'}(1)}}{\gamma_{\eta_{t'}(1)}} + \sum\limits_{ l \in I_2 } 2 T \left ( n \lambda f_{l,\eta_{t'}} - \mu_{\eta_{t'}(l)} \right )\frac{q_{\eta_{t'}(l)}}{\gamma_{\eta_{t'}(l)}} \notag \\
    & \quad + \frac{ T^2 \left ( n \lambda f_{\max} + \mu_{\max} \right )^2 + Tf_{\max}n\sigma_{\lambda}^2 + T^2 n^2\lambda^2 \tau_{\max}^2 +T\sigma_{\max}^2 }{\gamma_{\min}} \notag \\
    & \quad + \frac{(n-1)}{\gamma_{\min}} \left [ T^2 \left ( n \lambda f_{\max} + \mu_{\max} \right )^2 + Tf_{ \max } n\sigma_{\lambda}^2 + T^2 n^2\lambda^2 \tau_{\max}^2 +T\sigma_{\max}^2 \right ] \notag \\
    & \quad + \frac{(n-1)}{\gamma_{\min}} \left [ 2S_{\max} T^2 n\lambda f_{\max} + T^2 n^2\lambda^2 f^2_{\max} + Tf_{\max}n\sigma_{\lambda}^2 + T^2 n^2\lambda^2 \tau_{\max}^2 \right ] \notag \\
    & = 2 T \left ( n \lambda f_{1,\eta_{t'}} - \mu_{\eta_{t'}(1)} \right )\frac{q_{\eta_{t'}(1)}}{\gamma_{\eta_{t'}(1)}} + \sum\limits_{ l \in I_2 } 2 T \left ( n \lambda f_{l,\eta_{t'}} - \mu_{\eta_{t'}(l)} \right )\frac{q_{\eta_{t'}(l)}}{\gamma_{\eta_{t'}(l)}}  + K \label{eq:stable_aveofwx_withrate}
\end{align}
where (a) follows from rearranging the terms, and (b) follows from the definition of the partition, which yields $ |I_2|\leq n-1$, $|I_3| \leq n-1$ and the definitions of $f_{\max} := \max_{l,\eta} \{ f_{l, \eta} \} $ and $ \tau_{\max} := \max_{l,\eta}\{ \tau_{l,\eta} \} $. 

To further simplify the first term in Equation~\eqref{eq:stable_aveofwx_withrate}, we use the fact that $ |I_2| \in \{ 0, ..., n-1 \} $. Let $ |I_2| = m $, we then have
\begin{align}
    & 2 T \left ( n \lambda f_{1,\eta_{t'}} - \mu_{\eta_{t'}(1)} \right ) \frac{q_{\eta_{t'}(1)}}{\gamma_{\eta_{t'}(1)}} + \sum\limits_{ l \in I_2 } 2 T \left ( n \lambda f_{l,\eta_{t'}} - \mu_{\eta_{t'}(l)} \right ) \frac{q_{\eta_{t'}(l)}}{\gamma_{\eta_{t'}(l)}} \notag \\
    & = -2T \left ( \mu_{\eta_{t'}(1)} - n \lambda f_{1,\eta_{t'}} \right ) \frac{q_{\eta_{t'}(1)}}{\gamma_{\eta_{t'}(1)}} + 2T\sum\limits_{ l =  2}^{m+1} \left ( n \lambda f_{l,\eta_{t'}} - \mu_{\eta_{t'}(l)} \right ) \frac{q_{\eta_{t'}(l)}}{\gamma_{\eta_{t'}(l)}} \label{eq:stable_drift_I4_withrate} 
\end{align}
From our sufficient condition on $n\lambda$ given in Equation~\eqref{eq:feasibleregion_withrate}, we have the following inequality is positive:
\begin{align}
    \sum\limits_{l=1}^{k} \mu_{\eta_{t'}(l)} - n\lambda \sum\limits_{l=1}^{k} f_{l, \eta_{t'}(l)} > 0, \qquad \forall \, k \in [n], \, \eta_{t'} \in S_n. \label{ieq:feasible_region}
\end{align}
Note that the above inequality is also true when $ \sum_{l=1}^k f_{l, \eta} = 0$ for all $k \in [n], \, \eta \in S_n $. Let us define
\begin{align}
    \epsilon^{\prime} & :=  \min\limits_{ \eta \in S_n } \left \{ \min\limits_{1 \leq k \leq n} \left \{ \sum\limits_{l=1}^{k} \mu_{\eta(l)} - n\lambda \sum\limits_{l=1}^{k} f_{l,\eta} \right \} \right \}. \notag
\end{align} 
We claim this $\epsilon^{\prime}$ is positive because of Equation~\eqref{ieq:feasible_region}. We have two different cases.

{\bf Case 1: $| I_2 | = m = 0$}. In this case, Equation~\eqref{eq:stable_drift_I4_withrate} can be bounded as follows
\begin{align}
     -2T \left ( \mu_{\eta_{t'}(1)} - n \lambda f_{1,\eta_{t'}} \right ) \frac{q_{\eta_{t'}(1)}}{\gamma_{\eta_{t'}(1)}} &+ 2T\sum\limits_{ l =  2}^{m+1} \left ( n \lambda f_{l,\eta_{t'}} - \mu_{\eta_{t'}(l)} \right ) \frac{q_{\eta_{t'}(l)}}{\gamma_{\eta_{t'}(l)}} \notag \\
    & = -2T \epsilon^{\prime} \frac{q_{\eta_{t'}(1)}}{\gamma_{\eta_{t'}(1)}} -2T \left ( \mu_{\eta_{t'}(1)} - n \lambda f_{1,\eta_{t'}} - \epsilon^{\prime} \right ) \frac{q_{\eta_{t'}(l)}}{\gamma_{\eta_{t'}(l)}} \notag \\
    & \overset{(a)}{\leq} -2T \epsilon^{\prime} \frac{q_{\eta_{t'}(1)}}{\gamma_{\eta_{t'}(1)}} \notag \\
    & \overset{(b)}{\leq} - \frac{2T \epsilon^{\prime}}{ \|\boldsymbol{\gamma} \|_1 } \| \mathbf{q} \|_1 \notag \\
    & \overset{(c)}{\leq} - \frac{2T \epsilon^{\prime}}{ \|\boldsymbol{\gamma} \|_1 } \| \mathbf{q} \|_2, \label{eq:stable_drift_case1_withrate}
\end{align} 
where (a) follows from thes definition that $ \epsilon^{\prime} \leq \mu_{\eta_{t'}(1)} - n\lambda f_{1,\eta_{t'}} $. (b) follows from Equation~\eqref{eq:scaled_q1_greater_norm} and (c) follows from the inequality $ \| \mathbf{x} \|_1 \geq \| \mathbf{x} \|_2 $ for all $ \mathbf{x} \in \mathbb{R}^n $. 

{\bf Case 2: $| I_2 | =m > 0$}. Equation~\eqref{eq:stable_drift_I4_withrate} can be bounded as follows
\begin{align}
    & -2T \left ( \mu_{\eta_{t'}(1)} - n \lambda f_{1,\eta_{t'}} \right ) \frac{ q_{\eta_{t'}(1)} }{\gamma_{\eta_{t'}(1)}} + 2T \sum\limits_{ l = 2}^{m+1} \left ( n \lambda f_{l,\eta_{t'}} - \mu_{\eta_{t'}(l)} \right ) \frac{q_{\eta_{t'}(l)}}{\gamma_{\eta_{t'}(l)}}  \notag \\
    & = -2T \epsilon^{\prime} \frac{q_{\eta_{t'}(1)}}{\gamma_{\eta_{t'}(1)}} -2T \left ( \mu_{\eta_{t'}(1)} - n \lambda f_{1,\eta_{t'}} -\epsilon^{\prime} \right ) \frac{q_{\eta_{t'}(1)}}{\gamma_{\eta_{t'}(1)}} +2T \left ( n \lambda f_{2, \eta_{t'}} - \mu_{\eta_{t'}(2)} \right ) \frac{q_{\eta_{t'}(2)}}{\gamma_{\eta_{t'}(2)}} \notag \\
    & \quad +  2T \sum\limits_{ l = 3}^{m+1} \left ( n \lambda f_{l,\eta_{t'}} - \mu_{\eta_{t'}(l)} \right ) \frac{q_{\eta_{t'}(l)}}{\gamma_{\eta_{t'}(l)}} \notag \\
    & \overset{(a)}{\leq} -2T \epsilon^{\prime} \frac{q_{\eta_{t'}(1)}}{\gamma_{\eta_{t'}(1)}} -2T \left (\sum\limits_{l=1}^2 \mu_{\eta_{t'}(l)} - n\lambda \sum\limits_{l=1}^2 f_{l,\eta_{t'}} - \epsilon^{\prime} \right )\frac{q_{\eta_{t'}(2)}}{\gamma_{\eta_{t'}(2)}} +  2T \sum\limits_{l = 3}^{m+1} \left ( n \lambda f_{l,\eta_{t'}} - \mu_{\eta_{t'}(l)} \right ) \frac{q_{\eta_{t'}(l)}}{\gamma_{\eta_{t'}(l)}} \notag \\
    & \overset{(b)}{\leq} -2T \epsilon^{\prime} \frac{q_{\eta_{t'}(1)}}{\gamma_{\eta_{t'}(1)}} - 2T \left (\sum\limits_{l=1}^{m+1} \mu_{\eta_{t'}(l)} - n\lambda \sum\limits_{l=1}^{m+1} f_{l,\eta_{t'}} - \epsilon^{\prime} \right )\frac{q_{\eta_{t'}(m+1)}}{\gamma_{\eta_{t'}(m+1)}} \notag \\
    & \leq - \frac{2T \epsilon^{\prime}}{\|\boldsymbol{\gamma} \|_1}\| \mathbf{q} \|_1 \notag \\
    & \leq - \frac{2T \epsilon^{\prime}}{\|\boldsymbol{\gamma} \|_1}\| \mathbf{q} \|_2
    \label{eq:stable_drift_case2_withrate} 
\end{align} 
where (a) and (b) follow from the fact that $ q_{\eta_{t'}(1)} / \gamma_{\eta_{t'}(1)} \geq ... \geq q_{\eta_{t'}(n)} / \gamma_{\eta_{t'}(n)} $ and the definition of $\epsilon^{\prime}$.

Based on equations~\eqref{eq:stable_drift_case1_withrate} and \eqref{eq:stable_drift_case2_withrate}, finally, we have
\begin{align}
    \mathbb{E}[\Delta W(\tilde{\mathbf{O}}(t'))\ | \  \mathbf{O}(t)=\mathbf{o}]  & \leq - \frac{2T \epsilon^{\prime} }{\|\boldsymbol{\gamma} \|_1}\| \mathbf{q} \|_2  + K, \label{eq:stable_wx_withrate}
\end{align}
where
\begin{align*}
    K &= \frac{ T^2 \left ( n \lambda f_{\max} + \mu_{\max} \right )^2 + Tf_{\max}n\sigma_{\lambda}^2 + T^2 n^2\lambda^2 \tau_{\max}^2 +T\sigma_{\max}^2}{\gamma_{\min}} \notag \\
    & \quad + \frac{(n-1)}{\gamma_{\min}} \left [ T^2 \left ( n \lambda f_{\max} + \mu_{\max} \right )^2 + Tf_{\max}n\sigma_{\lambda}^2 + T^2 n^2\lambda^2 \tau_{\max}^2 +T\sigma_{\max}^2 \right ] \notag \\
    & \quad + \frac{(n-1)}{\gamma_{\min}} \left [ 2S_{\max} T^2 n\lambda f_{\max} + T^2 n^2\lambda^2 f^2_{\max} + Tf_{\max}n\sigma_{\lambda}^2 + T^2 n^2\lambda^2 \tau_{\max}^2 \right ] \notag
\end{align*}

Thus, Equation~\eqref{eq:delta_V_bound_by_delta_W} can be combined with Equation~\eqref{eq:stable_wx_withrate} as follows:
\begin{align}
    &\mathbb{E}\!\left[ \Delta V(\tilde{\mathbf O}(t')) \,\middle|\, \tilde{\mathbf O}(t')=\mathbf o \right] \notag \\
    & \leq  \frac{1}{2\|\mathbf o\|_2} \mathbb{E}\!\left[ \Delta W(\tilde{\mathbf O}(t'))  \,\middle|\, \tilde{\mathbf O}(t')=\mathbf o \right] \notag \\
    &\leq  \frac{1}{2\|\mathbf o\|_2} \left( -\frac{2T\epsilon'}{\|\boldsymbol{\gamma}\|_1}\|\mathbf q\|_2 + K \right) \notag \\
    & = - \frac{\epsilon' T}{\|\boldsymbol{\gamma}\|_1} \frac{\|\mathbf q\|_2}{\|\mathbf o\|_2} + \frac{K}{2\|\mathbf o\|_2}. \label{eq:deltaV_intermediate_withrate}
\end{align}
Since $\|\mathbf o\|_2^2 = \sum_{l=1}^n q_l^2/\gamma_l $, we have
\[
    \frac{1}{\sqrt{\gamma_{\max}}}\|\mathbf q\|_2 \leq \|\mathbf o\|_2 \leq \frac{1}{\sqrt{\gamma_{\min}}}\|\mathbf q\|_2.
\]
Therefore,
\[
    \frac{\|\mathbf q\|_2}{\|\mathbf o\|_2} \geq \sqrt{\gamma_{\min}},
    \qquad
    \frac{1}{\|\mathbf o\|_2} \leq \frac{\sqrt{\gamma_{\max}}}{\|\mathbf q\|_2}.
\]
Substituting these bounds into Equation \eqref{eq:deltaV_intermediate_withrate}, we obtain
\begin{align}
    \mathbb{E}\!\left[ \Delta V(\tilde{\mathbf{O}}(t')) \,\middle|\, \tilde{\mathbf{O}}(t')=\mathbf{o} \right]
    & \leq -\frac{ \epsilon' T \sqrt{\gamma_{\min}}}{\|\boldsymbol{\gamma}\|_1} + \frac{K\sqrt{\gamma_{\max}}}{2\|\mathbf q\|_2}. \notag
\end{align}
Thus, the drift of $V(\tilde{\mathbf O}(t'))$ is negative whenever $\|\mathbf q\|_2$ is sufficiently large.

\section{Proof of Corollary \ref{co_our:condition_withrate_withratio}} \label{prf:condition_withrate_withratio} Note that $(\eta,n)$ is a feasible solution for Equation~\eqref{eq:feasibleregion_withrate} for all $\eta \in \mathcal{S}_n$ and that, in order for $(\eta,m)$ to be a minimizer, we must have $ \sum_{l=1}^m f_{l,\eta} >0 $. Therefore, using the condition, we have
\begin{align*}
    \sum\limits_{l=1}^n \mu_l =  \frac{ \sum\limits_{l=1}^n \mu_{\eta(l)} }{ \sum\limits_{l=1}^n f_{l,\eta} } \leq \frac{ \sum\limits_{l=1}^m \mu_{\eta(l)} }{ \sum\limits_{l=1}^m f_{l,\eta} }
\end{align*}
It follows that all optimal solutions $(\eta,m)$ to Equation~\eqref{eq:feasibleregion_withrate} have $m=n$.

\section{Proof of Theorem~\ref{thm_our:fixed_load_stability_certificates}}
\label{prf:fixed_load_stability_certificates}

Fix a load \(n\lambda\). We prove the two statements of the theorem separately. The first part follows from a moving-barrier argument under \(\mathrm{PBD}(n\lambda)\). The second part follows from a fluid-stability argument under \(\mathrm{SCD}(n\lambda)\). \\

\noindent\textbf{Step 1}. We prove the transience statement under \(\mathrm{PBD}(n\lambda)\).
Suppose that
\[
    h^*<n\lambda<\sum_{i=1}^n\mu_i
\]
and \(\mathrm{PBD}(n\lambda)\) holds. By Definition~\ref{def:fixed_load_pbd}, there exists a nonempty proper set \(B\subset[n]\), with \(m:=|B|\), such that
\begin{align*}
    \underline{\delta}(B,n\lambda) > \max\{\overline{\delta}(\bar B,n\lambda),0\},
\end{align*}
where \(\bar B:=[n]\setminus B\). 

Let \( \mathcal P(B) := \left\{ \eta\in\mathcal S_n: \{\eta(1),\ldots,\eta(m)\}=B \right\}\) and choose constants \(0<\beta<\alpha\) such that
\begin{align*}
    \max\{\overline{\delta}(\bar B,n\lambda),0\} < \beta < \alpha < \underline{\delta}(B,n\lambda).
\end{align*}
Since \(\mathcal P(B)\) is finite, the following two constants are strictly positive:
\begin{align}
    \Delta_B^\lambda &:= \min_{\eta\in\mathcal P(B)} \min_{a\in\{1,\ldots,m\}} \left\{ T\left( n\lambda\sum_{l=m-a+1}^{m}f_{l,\eta} - \sum_{l=m-a+1}^{m}\mu_{\eta(l)} \right) - \alpha\sum_{l=m-a+1}^{m}\gamma_{\eta(l)} \right\} >0,
    \label{eq:fixed_load_delta_B}
\end{align}
and
\begin{align}
    \Delta_{\bar B}^\lambda &:= \min_{\eta\in\mathcal P(B)} \min_{b\in\{1,\ldots,n-m\}} \left\{ T\left( \sum_{l=m+1}^{m+b}\mu_{\eta(l)} - n\lambda\sum_{l=m+1}^{m+b}f_{l,\eta} \right) + \beta\sum_{l=m+1}^{m+b}\gamma_{\eta(l)} \right\} >0.
    \label{eq:fixed_load_delta_barB}
\end{align}
Indeed, \(\alpha<\underline{\delta}(B,n\lambda)\) implies Equation~\eqref{eq:fixed_load_delta_B}, while \(\beta>\overline{\delta}(\bar B,n\lambda)\) implies Equation~\eqref{eq:fixed_load_delta_barB}.

\noindent\textbf{Step 1.1}. We construct an alternative system and obtain a positive-probability escape event.

We construct an alternative policy on the same probability space, using the same arrivals, potential services, and randomization variables as the original policy. At each sampling epoch, let \(\eta\) denote the current \(\boldsymbol{\gamma}\)-scaled queue-length ordering. The alternative policy constructs a modified permutation \(\tilde\eta\) as follows: the servers in \(B\) are placed in the first \(m\) positions, preserving their relative order in \(\eta\), and the servers in \(\bar B\) are placed in the remaining \(n-m\) positions, again preserving their relative order in \(\eta\). The original decision function is then applied to this modified ordering \(\tilde\eta\). Thus,
\[
    \{\tilde\eta(1),\ldots,\tilde\eta(m)\}=B
\]
at every sampling epoch.

The following lemma gives a positive-probability divergence event under this alternative policy.
\begin{lemma}
\label{lem:fixed_load_pbd_escape_event}
    Under the alternative policy defined above, there exist an initial state \(\mathbf x\) and an event \(\mathcal E\), with \(\mathbb P_{\mathbf x}(\mathcal E)>0\), such that
    \[
        \min_{i\in B} \frac{Q_i^{\mathrm{alt}}(rT)}{\gamma_i} > \max_{j\in\bar B} \frac{Q_j^{\mathrm{alt}}(rT)}{\gamma_j}, \qquad \forall r\in\mathbb N,
    \]
    and
    \[
        \lim_{r\to\infty}\sum_{i\in B}\tilde Q_i^{\mathrm{alt}}(r)=\infty.
    \]
\end{lemma}
The proof of Lemma~\ref{lem:fixed_load_pbd_escape_event} is given in Appendix~\ref{prf:fixed_load_pbd_escape_event}. By this lemma, under the alternative policy, the set \(B\) remains strictly separated from \(\bar B\) at every sampling epoch and its aggregate queue length diverges with positive probability.

\noindent\textbf{Step 1.2}. We transfer the divergence from the alternative policy to the original policy.

Work on the event \(\mathcal E\) constructed in Lemma~\ref{lem:fixed_load_pbd_escape_event}. Couple the original and alternative systems using the same initial state \(\mathbf x\), the same arrival process, the same potential service processes, and the same randomization variables. Let
\[
    \tilde{\mathbf Q}(r):=\mathbf Q(rT), \qquad \tilde{\mathbf Q}^{\mathrm{alt}}(r):=\mathbf Q^{\mathrm{alt}}(rT)
\]
be the embedded queue-length processes under the original and alternative policies, respectively. We claim that, on \(\mathcal E\),
\[
    \tilde{\mathbf Q}(r)=\tilde{\mathbf Q}^{\mathrm{alt}}(r), \qquad \forall r\in\mathbb N.
\]
We prove this claim by induction over dispatching cycles.

At \(r=0\), the equality holds because the two systems are initialized from the same state \(\mathbf x\). Suppose that, for some \(r\ge0\),
\[
    \tilde{\mathbf Q}(r)=\tilde{\mathbf Q}^{\mathrm{alt}}(r).
\]
Then the two systems have the same \(\boldsymbol{\gamma}\)-scaled queue-length vector at sampling epoch \(r\), and, under the coupled tie-breaking randomization, they generate the same ordering \(\eta_r\). On \(\mathcal E\), Lemma~\ref{lem:fixed_load_pbd_escape_event} implies that, in the alternative system,
\[
    \min_{i\in B} \frac{\tilde Q_i^{\mathrm{alt}}(r)}{\gamma_i} > \max_{j\in\bar B} \frac{\tilde Q_j^{\mathrm{alt}}(r)}{\gamma_j}.
\]
By the induction hypothesis, the same strict separation holds in the original system. Hence the servers in \(B\) already occupy the first \(m\) positions in the original ordering \(\eta_r\). Therefore, the rearrangement used by the alternative policy is inactive during cycle \(r\), so the modified ordering satisfies \(\tilde\eta_r=\eta_r\). Consequently, during cycle \(r\), the original and alternative policies apply the same decision function to the same ordering and use the same randomization variables. Since the two systems also use the same arrivals and potential services, the queue recursion gives
\[
    \tilde{\mathbf Q}(r+1)=\tilde{\mathbf Q}^{\mathrm{alt}}(r+1).
\]
This completes the induction.

By Lemma~\ref{lem:fixed_load_pbd_escape_event}, on \(\mathcal E\),
\[
    \lim_{r\to\infty}\sum_{i\in B}\tilde Q_i^{\mathrm{alt}}(r)=\infty.
\]
Since \(\tilde{\mathbf Q}(r)=\tilde{\mathbf Q}^{\mathrm{alt}}(r)\) for all \(r\) on \(\mathcal E\), it follows that
\[
    \lim_{r\to\infty}\|\tilde{\mathbf Q}(r)\|_1=\infty \qquad \text{on }\mathcal E.
\]
Equivalently,
\[
    \lim_{r\to\infty}\|\mathbf Q(rT)\|_1=\infty \qquad \text{on }\mathcal E.
\]
Moreover, queue lengths can decrease by at most a bounded amount within one dispatching cycle. Hence, on \(\mathcal E\), the original process is eventually outside every finite set and cannot return to the initial state \(\mathbf x\).
Therefore,
\[
    \mathbb P_{\mathbf x} \left( \text{the chain never returns to }\mathbf x
    \right)
    \ge
    \mathbb P_{\mathbf x}(\mathcal E)>0.
\]
Thus the communicating class containing \(\mathbf x\) is not recurrent. Since recurrence and transience are class properties for countable-state Markov chains, the communicating class containing \(\mathbf x\) is transient. Therefore, the induced Markov chain is transient from the initial state \(\mathbf x\). \\

\noindent\textbf{Step 2}. We prove the positive-recurrence statement under
\(\mathrm{SCD}(n\lambda)\).

Suppose that
\[
    n\lambda<\sum_{i=1}^n\mu_i,
\]
and that \(\mathrm{SCD}(n\lambda)\) holds. The next lemma shows that fluid limits of the stochastic embedded queue-length process are governed by the admissible fluid drifts used in Definition~\ref{def:fixed_load_scd}.

\noindent\textbf{Step 2.1}. We construct fluid limits of the embedded queue-length process.

Consider a sequence of initial states \(\{\tilde{\mathbf Q}_\nu(0)\}_{\nu\ge1}\) such that
\[
    c_\nu:=\|\tilde{\mathbf Q}_\nu(0)\|_1\to\infty.
\]
Let \(\tilde{\mathbf Q}_\nu(\cdot)\) denote the embedded process initialized from \(\tilde{\mathbf Q}_\nu(0)\). Define the fluid-scaled embedded process by linearly interpolating the values
\[
    \widehat{\mathbf Q}_\nu\left(\frac{r}{c_\nu}\right) := \frac{\tilde{\mathbf Q}_\nu(r)}{c_\nu}, \qquad r\in\mathbb N.
\]
Fix any \(H<\infty\). Since the one-cycle queue increments are uniformly bounded, there exists a finite constant \(C_\Delta\) such that the piecewise-linear interpolation satisfies
\[
    \left\|\widehat{\mathbf Q}_\nu(t)-\widehat{\mathbf Q}_\nu(s)\right\|_1 \le C_\Delta |t-s|,
    \qquad 0\le s\le t\le H,\ \nu\ge1.
\]
By the definition \(c_\nu=\|\tilde{\mathbf Q}_\nu(0)\|_1\), we have \(\|\widehat{\mathbf Q}_\nu(0)\|_1=1\). Hence, for every \(t\in[0,H]\),
\[
    \|\widehat{\mathbf Q}_\nu(t)\|_1 \le \|\widehat{\mathbf Q}_\nu(0)\|_1 + \left\|\widehat{\mathbf Q}_\nu(t)-\widehat{\mathbf Q}_\nu(0)\right\|_1 \le C_0+C_\Delta H.
\]
Thus, on \([0,H]\), the family \(\{\widehat{\mathbf Q}_\nu(\cdot)\}_{\nu\ge1}\) is uniformly bounded and equicontinuous. By the Arzel\`a--Ascoli theorem, for every fixed \(H<\infty\), there exists a subsequence that converges uniformly on \([0,H]\). Applying a diagonal subsequence argument over \(H=1,2,\ldots\), there exists a subsequence, still denoted by \(\widehat{\mathbf Q}_\nu(\cdot)\), that converges uniformly on compact sets to a function \(\mathbf q^{\mathrm{fl}}(\cdot)\).

Moreover, the limit is Lipschitz continuous. Indeed, for any \(0\le s\le t\),
\[
    \|\mathbf q^{\mathrm{fl}}(t)-\mathbf q^{\mathrm{fl}}(s)\|_1 = \lim_{\nu\to\infty} \left\|\widehat{\mathbf Q}_\nu(t)-\widehat{\mathbf Q}_\nu(s)\right\|_1 \le C_\Delta |t-s|.
\]
Therefore, \(\mathbf q^{\mathrm{fl}}(\cdot)\) is differentiable almost everywhere. This is the standard compactness step in the fluid limit construction, as in \cite{dai1995positive-fluidmodel} (Theorem~4.1).

\noindent\textbf{Step 2.2}. We identify the admissible fluid drifts and apply the fluid-stability criterion.

The next lemma shows that every fluid limit constructed in Step 2.1 is governed by the admissible reflected fluid drift set used in Definition~\ref{def:fixed_load_scd}.
\begin{lemma}
    \label{lem:fixed_load_fluid_limit_admissible}
    Fix a load \(n\lambda<\sum_{i=1}^n\mu_i\). Let \(\mathbf q^{\mathrm{fl}}(\cdot)\) be any fluid limit of the embedded queue-length process \(\{\tilde{\mathbf Q}(r)\}_{r\ge0}\). Let
    \[
        \tau_{\mathrm{hit}} := \inf\left\{t\ge0:\mathbf q^{\mathrm{fl}}(t)=\mathbf 0\right\}.
    \]
    Then, for almost every \(t<\tau_{\mathrm{hit}}\),
    \[
        \frac{d}{dt}\mathbf q^{\mathrm{fl},(\gamma)}(t) \in \mathfrak D_{n\lambda}(\mathbf q^{\mathrm{fl}}(t)).
    \]
\end{lemma}
The proof of Lemma~\ref{lem:fixed_load_fluid_limit_admissible} is given in Appendix~\ref{prf:fixed_load_fluid_limit_admissible}. Once the fluid limit is identified as an admissible path, \(\mathrm{SCD}(n\lambda)\) implies finite-time draining for all unit-norm fluid limits.

\begin{lemma}
    \label{lem:fixed_load_scd_fluid_draining}
    Fix a load \(n\lambda<\sum_{i=1}^n\mu_i\). If \(\mathrm{SCD}(n\lambda)\) holds, then there exists \(\tau_{\mathrm{scd}}(n\lambda)<\infty\) such that every fluid limit \(\mathbf q^{\mathrm{fl}}(\cdot)\) of \(\{\tilde{\mathbf Q}(r)\}_{r\ge0}\), with
    \[
        \|\mathbf q^{\mathrm{fl}}(0)\|_1=1,
    \]
    satisfies
    \[
        \mathbf q^{\mathrm{fl}}(t)=\mathbf 0, \qquad \forall t\ge\tau_{\mathrm{scd}}(n\lambda).
    \]
\end{lemma}
The proof of Lemma~\ref{lem:fixed_load_scd_fluid_draining} is given in Appendix~\ref{prf:fixed_load_scd_fluid_draining}. We now recall the standard fluid-stability criterion.

\begin{lemma}
    \label{lem:fixed_load_fluid_stability_criterion}
    Fix a policy \(\pi\in\Pi\) and a load \(n\lambda<\sum_{i=1}^n\mu_i\). Recall that \(\{\tilde{\mathbf X}(r)\}_{r\ge0}\) is the embedded Markov chain observed at sampling epochs, and that \(\{\tilde{\mathbf Q}(r)\}_{r\ge0}\) is its queue-length component. Suppose that there exists \(\tau<\infty\) such that every fluid limit \(\mathbf q^{\mathrm{fl}}(\cdot)\) of \(\{\tilde{\mathbf Q}(r)\}_{r\ge0}\), with
    \[
        \|\mathbf q^{\mathrm{fl}}(0)\|_1=1,
    \]
    satisfies
    \[
        \mathbf q^{\mathrm{fl}}(t)=\mathbf 0, \qquad \forall t\ge\tau.
    \]
    Then the embedded Markov chain \(\{\tilde{\mathbf X}(r)\}_{r\ge0}\) is positive recurrent. Consequently, the original Markov chain \(\{\mathbf X(t)\}_{t\ge0}\) is positive recurrent.
\end{lemma}
The proof of Lemma~\ref{lem:fixed_load_fluid_stability_criterion} is given in Appendix~\ref{prf:fixed_load_fluid_stability_criterion}. By Lemma~\ref{lem:fixed_load_scd_fluid_draining}, every unit-norm fluid limit reaches the origin in a uniformly bounded time. Applying Lemma~\ref{lem:fixed_load_fluid_stability_criterion} with \(\tau=\tau_{\mathrm{scd}}(n\lambda)\), we obtain positive recurrence of the embedded chain and hence positive recurrence of the original Markov chain \(\{\mathbf X(t)\}_{t\ge0}\).

\subsection{Proof of Lemma~\ref{lem:fixed_load_pbd_escape_event}}
\label{prf:fixed_load_pbd_escape_event}
Let \(\tilde{\mathbf Q}^{\mathrm{alt}}(r)\) denote the embedded queue-length process under the alternative policy at the sampling epoch \(rT\). Thus, \(\tilde Q_i^{\mathrm{alt}}(r)/\gamma_i\) is the corresponding \(\boldsymbol{\gamma}\)-scaled queue length of server \(i\). Let \(\mathcal F_r\) be the history up to time \(rT\), and let \(\tilde\eta_r\) be the modified permutation used by the alternative policy during cycle \(r\). By construction,
\[
    \{\tilde\eta_r(1),\ldots,\tilde\eta_r(m)\}=B, \qquad r\in\mathbb N.
\]
Since \(\tilde\eta_r\in\mathcal P(B)\), the margins in equations \eqref{eq:fixed_load_delta_B} and \eqref{eq:fixed_load_delta_barB} imply that, for every \(r\), every \(a\in\{1,\ldots,m\}\), and every \(b\in\{1,\ldots,n-m\}\),
\begin{align}
    & T\left( n\lambda\sum_{l=m-a+1}^{m}f_{l,\tilde\eta_r} - \sum_{l=m-a+1}^{m}\mu_{\tilde\eta_r(l)} \right) - \alpha\sum_{l=m-a+1}^{m}\gamma_{\tilde\eta_r(l)} \geq \Delta_B^\lambda,
    \label{eq:pbd_B_margin_cycle}
\end{align}
and
\begin{align}
    & T\left( \sum_{l=m+1}^{m+b}\mu_{\tilde\eta_r(l)} - n\lambda\sum_{l=m+1}^{m+b}f_{l,\tilde\eta_r} \right) + \beta\sum_{l=m+1}^{m+b}\gamma_{\tilde\eta_r(l)} \geq \Delta_{\bar B}^\lambda.
    \label{eq:pbd_barB_margin_cycle}
\end{align}

\noindent\textbf{Step 1.} We construct the barriers and define the escape event.

Let
\[
    \Gamma_B:=\sum_{i\in B}\gamma_i, \qquad \Gamma_{\bar B}:=\sum_{j\in\bar B}\gamma_j.
\]
For constants \(L,M>0\), to be chosen later, define
\[
    z_B(r):=L+M+\alpha r, \qquad z_{\bar B}(r):=L+\frac M2+\beta r.
\]
Since \(0<\beta<\alpha\), we have
\[
    z_B(r)-z_{\bar B}(r) = \frac M2+(\alpha-\beta)r >0, \qquad r\in\mathbb N.
\]
We start from the initial state
\begin{align*}
    \tilde Q_i^{\mathrm{alt}}(0) = \left\lceil \gamma_i(L+2M)\right\rceil,\quad i\in B,  \qquad
    \tilde Q_j^{\mathrm{alt}}(0) = \left\lfloor \gamma_jL\right\rfloor,\quad j\in \bar B.
\end{align*}
Define
\[
    \kappa_B := \inf\left\{ r\ge0: \exists i\in B \text{ such that } \frac{\tilde Q_i^{\mathrm{alt}}(r)}{\gamma_i} \le z_B(r) \right\},
\]
and
\[
    \kappa_{\bar B} := \inf\left\{ r\ge0: \exists j\in\bar B \text{ such that } \frac{\tilde Q_j^{\mathrm{alt}}(r)}{\gamma_j} \ge z_{\bar B}(r) \right\}.
\]
We will prove that
\[
    \mathcal E:=\{\kappa_B=\infty,\ \kappa_{\bar B}=\infty\}
\]
has positive probability. On this event, all queues in \(B\) stay above \(z_B(r)\), while all queues in \(\bar B\) stay below \(z_{\bar B}(r)\). Since \(z_B(r)>z_{\bar B}(r)\) for every \(r\), the two groups remain strictly separated at every sampling epoch. Moreover, since \(z_B(r)\to\infty\), the aggregate queue length in \(B\) diverges on \(\mathcal E\). \\

\noindent\textbf{Step 2.} We bound the probability that \(B\) falls below \(z_B\).

For \(\theta>0\), define
\[
    \Phi_B(r) := \sum_{i\in B} \gamma_i e^{-\theta\left( \frac{\tilde Q_i^{\mathrm{alt}}(r)}{\gamma_i} - z_B(r) \right)}.
\]
The function \(\Phi_B(r)\) is a sum over the fixed set \(B\). Since \(\{\tilde\eta_r(1),\ldots,\tilde\eta_r(m)\}=B\), for \(s=r,r+1\), we can rewrite
\[
    \Phi_B(s) = \sum_{l=1}^{m} \gamma_{\tilde\eta_r(l)} e^{-\theta\left( \frac{\tilde Q_{\tilde\eta_r(l)}^{\mathrm{alt}}(s)}{\gamma_{\tilde\eta_r(l)}} - z_B(s) \right)}.
\]
This is only a reindexing of the sum over \(B\) so that it does not require \(\tilde\eta_r\) to be the queue-length ordering at epoch \(r+1\).

Let
\[
    \Delta\tilde Q_i^{\mathrm{alt}}(r) := \tilde Q_i^{\mathrm{alt}}(r+1)-\tilde Q_i^{\mathrm{alt}}(r).
\]
For \(l\in\{1,\ldots,m\}\), the queue recursion over one cycle gives
\begin{align*}
    \mathbb E\!\left[ \Delta\tilde Q_{\tilde\eta_r(l)}^{\mathrm{alt}}(r) \,\middle|\,\mathcal F_r \right]
    & = \mathbb E\!\left[ \sum_{s=0}^{T-1}A_{\tilde\eta_r(l)}(rT+s) - \sum_{s=0}^{T-1}S_{\tilde\eta_r(l)}(rT+s) + \sum_{s=0}^{T-1}U_{\tilde\eta_r(l)}^{\mathrm{alt}}(rT+s) \,\middle|\,\mathcal F_r \right] \\
    & \overset{(a)}{\geq} \mathbb E\!\left[ \sum_{s=0}^{T-1}A_{\tilde\eta_r(l)}(rT+s) - \sum_{s=0}^{T-1}S_{\tilde\eta_r(l)}(rT+s) \,\middle|\,\mathcal F_r \right] \\
    & = T\left( n\lambda f_{l,\tilde\eta_r} - \mu_{\tilde\eta_r(l)} \right),
\end{align*}
where (a) follows from the fact that the unused service is nonnegative.

Therefore, by Equation~\eqref{eq:pbd_B_margin_cycle}, for every \(a\in\{1,\ldots,m\}\),
\begin{align}
    & \sum_{l=m-a+1}^{m} \mathbb E\!\left[ \Delta\tilde Q_{\tilde\eta_r(l)}^{\mathrm{alt}}(r) - \alpha\gamma_{\tilde\eta_r(l)} \,\middle|\,\mathcal F_r \right]
    \notag\\
    & = \sum_{l=m-a+1}^{m} \mathbb E\!\left[ \Delta\tilde Q_{\tilde\eta_r(l)}^{\mathrm{alt}}(r) \,\middle|\,\mathcal F_r \right] - \alpha\sum_{l=m-a+1}^{m}\gamma_{\tilde\eta_r(l)} \notag\\
    & \ge T\left( n\lambda\sum_{l=m-a+1}^{m}f_{l,\tilde\eta_r} - \sum_{l=m-a+1}^{m}\mu_{\tilde\eta_r(l)} \right) - \alpha\sum_{l=m-a+1}^{m}\gamma_{\tilde\eta_r(l)} \notag\\
    & \ge \Delta_B^\lambda. 
    \label{eq:pbd_B_suffix_drift_escape}
\end{align}

Next, for \(l\in\{1,\ldots,m\}\), define
\[
    w_l := e^{-\theta\left( \frac{\tilde Q_{\tilde\eta_r(l)}^{\mathrm{alt}}(r)}{\gamma_{\tilde\eta_r(l)}} - z_B(r) \right)},
\]
and
\[
    d_l^B := \mathbb E\!\left[ \Delta\tilde Q_{\tilde\eta_r(l)}^{\mathrm{alt}}(r) - \alpha\gamma_{\tilde\eta_r(l)} \,\middle|\,\mathcal F_r \right].
\]
Since \(\tilde\eta_r(1),\ldots,\tilde\eta_r(m)\) are ordered from largest to smallest scaled queue within \(B\), we have \( w_1\le w_2\le\cdots\le w_m \). Using summation by parts,
\begin{align}
    \sum_{l=1}^{m}w_l d_l^B
    & = w_1\sum_{l=1}^{m}d_l^B + (w_2-w_1)\sum_{l=2}^{m}d_l^B + \cdots + (w_m-w_{m-1})d_m^B \notag\\
    & = w_1\sum_{l=1}^{m}d_l^B + \sum_{l=2}^{m}(w_l-w_{l-1})\sum_{q=l}^{m}d_q^B \notag\\
    & = w_1\sum_{l=1}^{m}d_l^B + \sum_{a=1}^{m-1} \left[ \left(w_{m-a+1}-w_{m-a}\right) \sum_{l=m-a+1}^{m}d_l^B \right] \notag\\
    & \overset{(a)}{\ge} \Delta_B^\lambda \left[ w_1 + \sum_{a=1}^{m-1} \left(w_{m-a+1}-w_{m-a}\right) \right] \notag\\
    & = \Delta_B^\lambda w_m \notag\\
    & \overset{(b)}{\ge} \frac{\Delta_B^\lambda}{\Gamma_B}\Phi_B(r).
    \label{eq:pbd_B_weighted_drift_escape}
\end{align}
where (a) follows from Equation~\eqref{eq:pbd_B_suffix_drift_escape} and
\(w_1\le\cdots\le w_m\), and (b) follows from \( \Phi_B(r) = \sum_{l=1}^{m}\gamma_{\tilde\eta_r(l)}w_l \le \sum_{l=1}^{m}\gamma_{\tilde\eta_r(l)}w_m = \Gamma_Bw_m \).

We now derive the drift of \(\Phi_B(r)\). Since \(z_B(r+1)=z_B(r)+\alpha\),
\begin{align}
    \Phi_B(r+1)
    & = \sum_{i\in B}\gamma_i e^{-\theta\left( \frac{\tilde Q_i^{\mathrm{alt}}(r+1)}{\gamma_i} - z_B(r+1) \right)} \notag\\
    & = \sum_{l=1}^{m} \gamma_{\tilde\eta_r(l)} e^{-\theta\left( \frac{\tilde Q_{\tilde\eta_r(l)}^{\mathrm{alt}}(r+1)}{\gamma_{\tilde\eta_r(l)}} - z_B(r+1) \right)} \notag\\
    & = \sum_{l=1}^{m} \gamma_{\tilde\eta_r(l)} e^{-\theta\left( \frac{\tilde Q_{\tilde\eta_r(l)}^{\mathrm{alt}}(r)}{\gamma_{\tilde\eta_r(l)}} - z_B(r) \right)} e^{-\theta\left( \frac{ \tilde Q_{\tilde\eta_r(l)}^{\mathrm{alt}}(r+1) - \tilde Q_{\tilde\eta_r(l)}^{\mathrm{alt}}(r)}{\gamma_{\tilde\eta_r(l)}} - \alpha \right)} \notag\\
    & = \sum_{l=1}^{m} \gamma_{\tilde\eta_r(l)}w_l e^{-\theta\left( \frac{\Delta\tilde Q_{\tilde\eta_r(l)}^{\mathrm{alt}}(r)}{\gamma_{\tilde\eta_r(l)}} - \alpha \right)}.
    \label{eq:pbd_Phi_B_reindex}
\end{align}

By the boundedness of arrivals and potential services, for every \(i\in[n]\), \(\left|\Delta\tilde Q_i^{\mathrm{alt}}(r)\right| \le T(nA_{\max}+S_{\max})\). Therefore, with \(K:=T(nA_{\max}+S_{\max})/\gamma_{\min}+\max\{\alpha,\beta\}\), we have
\[
    \left| \frac{\Delta\tilde Q_i^{\mathrm{alt}}(r)}{\gamma_i}-\alpha \right| \leq K,
    \qquad
    \left| \frac{\Delta\tilde Q_i^{\mathrm{alt}}(r)}{\gamma_i}-\beta \right| \leq K.
\]
For \(|y|\le K\) and \(0<\theta\le1\), Taylor's formula gives
\[
    e^{-\theta y}\le 1-\theta y+C_0\theta^2,
    \qquad
    e^{\theta y}\le 1+\theta y+C_0\theta^2,
\]
where \(C_0:=K^2e^K/2\).

Taking conditional expectation in Equation~\eqref{eq:pbd_Phi_B_reindex}, we obtain
\begin{align}
    \mathbb E[\Phi_B(r+1)\mid\mathcal F_r]
    & = \mathbb E\left[ \sum_{l=1}^{m} \gamma_{\tilde\eta_r(l)}w_l e^{-\theta\left( \frac{\Delta\tilde Q_{\tilde\eta_r(l)}^{\mathrm{alt}}(r)}{\gamma_{\tilde\eta_r(l)}} - \alpha \right)} \,\middle|\,\mathcal F_r \right] \notag\\
    & = \sum_{l=1}^{m} \gamma_{\tilde\eta_r(l)}w_l \mathbb E\left[ e^{-\theta\left( \frac{\Delta\tilde Q_{\tilde\eta_r(l)}^{\mathrm{alt}}(r)}{\gamma_{\tilde\eta_r(l)}} - \alpha \right)} \,\middle|\,\mathcal F_r \right] \notag \\
    & \overset{(a)}{\le} \sum_{l=1}^{m} \gamma_{\tilde\eta_r(l)}w_l \mathbb E\left[ 1 - \theta\left( \frac{\Delta\tilde Q_{\tilde\eta_r(l)}^{\mathrm{alt}}(r)}{\gamma_{\tilde\eta_r(l)}} - \alpha \right) + C_0\theta^2 \,\middle|\,\mathcal F_r \right] \notag\\
    & = \Phi_B(r) - \theta \sum_{l=1}^{m} w_l \mathbb E\left[ \Delta\tilde Q_{\tilde\eta_r(l)}^{\mathrm{alt}}(r) - \alpha\gamma_{\tilde\eta_r(l)} \,\middle|\,\mathcal F_r \right] + C_0\theta^2\Phi_B(r) \notag\\
    & = \Phi_B(r)-\theta\sum_{l=1}^{m}w_l d_l^B+C_0\theta^2\Phi_B(r) \notag\\
    & \overset{(b)}{\le} \left( 1-\theta\frac{\Delta_B^\lambda}{\Gamma_B} + C_0\theta^2 \right)\Phi_B(r). \notag
\end{align}
where (a) follows from the Taylor bound, and (b) follows from Equation~\eqref{eq:pbd_B_weighted_drift_escape}.

Choose
\[
    0<\theta\le\theta_0, \qquad \theta_0:= \min\left\{ 1,\, \frac{\Delta_B^\lambda}{C_0\Gamma_B},\, \frac{\Delta_{\bar B}^\lambda}{C_0\Gamma_{\bar B}} \right\}.
\]
Then \(C_0\theta^2\le \theta\Delta_B^\lambda/\Gamma_B\), and therefore
\[
    \mathbb E[\Phi_B(r+1)\mid\mathcal F_r]\le \Phi_B(r).
\]
Hence \(\{\Phi_B(r)\}_{r\ge0}\) is a nonnegative supermartingale.

We now use optional stopping to convert this supermartingale bound into a crossing-probability bound. Since \(N\wedge\kappa_B\le N\) is a bounded stopping time, the optional stopping theorem gives
\[
    \mathbb E_{\mathbf x} \left[ \Phi_B(N\wedge\kappa_B) \right] \le \Phi_B(0).
\]
On \(\{\kappa_B\le N\}\), by the definition of \(\kappa_B\), there exists \(i\in B\) such that \( \tilde Q_i^{\mathrm{alt}}(\kappa_B) / \gamma_i \le z_B(\kappa_B) \). Therefore,
\begin{align*}
    \Phi_B(\kappa_B)
    & = \sum_{i'\in B} \gamma_{i'} e^{-\theta\left( \frac{\tilde Q_{i'}^{\mathrm{alt}}(\kappa_B)}{\gamma_{i'}} - z_B(\kappa_B) \right)} \\
    & \geq \gamma_i e^{-\theta\left( \frac{\tilde Q_i^{\mathrm{alt}}(\kappa_B)}{\gamma_i} - z_B(\kappa_B) \right)}
    \geq \gamma_i
    \geq \gamma_{\min}.
\end{align*}
Thus,
\begin{align}
    \gamma_{\min}\mathbb P_{\mathbf x}(\kappa_B\le N)
    & = \mathbb E_{\mathbf x} \left[ \gamma_{\min}\mathbf 1_{\{\kappa_B\le N\}} \right] \notag\\
    & \leq \mathbb E_{\mathbf x} \left[ \Phi_B(\kappa_B)\mathbf 1_{\{\kappa_B\le N\}} \right] \notag\\
    & = \mathbb E_{\mathbf x} \left[ \Phi_B(N\wedge\kappa_B)\mathbf 1_{\{\kappa_B\le N\}} \right] \notag\\
    & \leq \mathbb E_{\mathbf x} \left[ \Phi_B(N\wedge\kappa_B) \right] \notag\\
    & \le \Phi_B(0). \notag 
\end{align}
Letting \(N\to\infty\), and using the fact that \(\{\kappa_B\le N\}\) increases to \(\{\kappa_B<\infty\}\), we obtain
\[
    \mathbb P_{\mathbf x}(\kappa_B<\infty) \le \frac{\Phi_B(0)}{\gamma_{\min}}.
\]
By the initial condition, for every \(i\in B\), we have \( \tilde Q_i^{\mathrm{alt}}(0)/ \gamma_i \ge L+2M \) and \( z_B(0)=L+M\). Hence
\[
    \frac{\tilde Q_i^{\mathrm{alt}}(0)}{\gamma_i}-z_B(0)\ge M, \qquad i\in B.
\]
Therefore,
\begin{align*}
    \Phi_B(0) & = \sum_{i\in B} \gamma_i e^{-\theta\left( \frac{\tilde Q_i^{\mathrm{alt}}(0)}{\gamma_i} - z_B(0) \right)} \\
    & \leq \sum_{i\in B}\gamma_i e^{-\theta M} \\
    & = \Gamma_Be^{-\theta M}.
\end{align*}
Thus,
\begin{align}
    \mathbb P_{\mathbf x}(\kappa_B<\infty) \le \frac{\Gamma_B}{\gamma_{\min}}e^{-\theta M}.
    \label{eq:pbd_kappa_B_bound_escape}
\end{align}

\noindent\textbf{Step 3.} We bound the probability that \(\bar B\) rises above \(z_{\bar B}\).

For \(\theta>0\), define
\[
    \Phi_{\bar B}(r) := \sum_{j\in\bar B} \gamma_j e^{\theta\left( \frac{\tilde Q_j^{\mathrm{alt}}(r)}{\gamma_j} - z_{\bar B}(r) \right)}.
\]
Again, this is a sum over the fixed set \(\bar B\). Since \(\{\tilde\eta_r(m+1),\ldots,\tilde\eta_r(n)\}=\bar B\), for \(s=r,r+1\), we can reindex
\[
    \Phi_{\bar B}(s) = \sum_{l=m+1}^{n} \gamma_{\tilde\eta_r(l)} e^{\theta\left( \frac{\tilde Q_{\tilde\eta_r(l)}^{\mathrm{alt}}(s)}{\gamma_{\tilde\eta_r(l)}} - z_{\bar B}(s) \right)}.
\]
For \(l=m+1,\ldots,n\), define
\[
    d_l^{\bar B} := T\left( n\lambda f_{l,\tilde\eta_r} - \mu_{\tilde\eta_r(l)} \right) - \beta\gamma_{\tilde\eta_r(l)}.
\]
By Equation~\eqref{eq:pbd_barB_margin_cycle}, for every \(b\in\{1,\ldots,n-m\}\),
\begin{align}
    \sum_{l=m+1}^{m+b}d_l^{\bar B}
    & = T\left( n\lambda\sum_{l=m+1}^{m+b}f_{l,\tilde\eta_r} - \sum_{l=m+1}^{m+b}\mu_{\tilde\eta_r(l)} \right) - \beta\sum_{l=m+1}^{m+b}\gamma_{\tilde\eta_r(l)} \notag\\
    & = - \left[ T\left( \sum_{l=m+1}^{m+b}\mu_{\tilde\eta_r(l)} - n\lambda\sum_{l=m+1}^{m+b}f_{l,\tilde\eta_r} \right) + \beta\sum_{l=m+1}^{m+b}\gamma_{\tilde\eta_r(l)} \right] \notag \\
    & \leq - \Delta_{\bar B}^\lambda.
    \label{eq:pbd_barB_prefix_drift_escape}
\end{align}
For \(l=m+1,\ldots,n\), define
\[
    v_l := e^{\theta\left( \frac{\tilde Q_{\tilde\eta_r(l)}^{\mathrm{alt}}(r)}{\gamma_{\tilde\eta_r(l)}} - z_{\bar B}(r) \right)}.
\]
Since \(\tilde\eta_r(m+1),\ldots,\tilde\eta_r(n)\) are ordered from largest to smallest scaled queue within \(\bar B\), we have \(v_{m+1}\ge v_{m+2}\ge\cdots\ge v_n\). Using summation by parts,
\begin{align}
    \sum_{l=m+1}^{n}v_l d_l^{\bar B}
    & = v_n\sum_{l=m+1}^{n}d_l^{\bar B} + (v_{n-1}-v_n)\sum_{l=m+1}^{n-1}d_l^{\bar B} + \cdots + (v_{m+1}-v_{m+2})d_{m+1}^{\bar B} \notag\\
    & = v_n\sum_{l=m+1}^{n}d_l^{\bar B} + \sum_{l=m+1}^{n-1} (v_l-v_{l+1}) \sum_{q=m+1}^{l}d_q^{\bar B} \notag\\
    & = v_n\sum_{l=m+1}^{n}d_l^{\bar B} + \sum_{b=1}^{n-m-1} \left[ \left(v_{m+b}-v_{m+b+1}\right) \sum_{l=m+1}^{m+b}d_l^{\bar B} \right] \notag \\
    & \overset{(a)}{\le} - \Delta_{\bar B}^\lambda \left[ v_n + \sum_{b=1}^{n-m-1} \left(v_{m+b}-v_{m+b+1}\right) \right] \notag \\
    & = -\Delta_{\bar B}^\lambda v_{m+1} \notag\\
    &\overset{(b)}{\le} -\frac{\Delta_{\bar B}^\lambda}{\Gamma_{\bar B}} \Phi_{\bar B}(r).
    \label{eq:pbd_barB_weighted_nominal_drift_escape}
\end{align}
where (a) follows from Equation~\eqref{eq:pbd_barB_prefix_drift_escape} and \(v_{m+1}\ge\cdots\ge v_n\), and (b) follows from \( \Phi_{\bar B}(r) = \sum_{l=m+1}^{n}\gamma_{\tilde\eta_r(l)}v_l \le \sum_{l=m+1}^{n}\gamma_{\tilde\eta_r(l)}v_{m+1} = \Gamma_{\bar B}v_{m+1} \).

The only additional term that needs to be controlled for \(\bar B\) is unused service. For each cycle \(r\), define the cumulative unused service of queue \(i\) under the alternative policy by
\[
    \left(\boldsymbol{\Sigma}\mathbf U^{\mathrm{alt}}((r+1)T-1)\right)_i
    :=
    \sum_{s=0}^{T-1}U_i^{\mathrm{alt}}(rT+s).
\]
If \(\left(\boldsymbol{\Sigma}\mathbf U^{\mathrm{alt}}((r+1)T-1)\right)_i>0\), then \(\tilde Q_i^{\mathrm{alt}}(r)<TS_{\max}\). Indeed, suppose instead that \(\tilde Q_i^{\mathrm{alt}}(r)\ge TS_{\max}\). Even if no arrivals occur during cycle \(r\), the initial backlog of queue \(i\) is at least \(TS_{\max}\), while the cumulative potential service during the cycle is at most \(TS_{\max}\). Hence queue \(i\) has enough jobs to cover all potential service in that cycle, so no unused service can occur, contradicting \(\left(\boldsymbol{\Sigma}\mathbf U^{\mathrm{alt}}((r+1)T-1)\right)_i>0\).

Consequently, on the event \(\left\{\left(\boldsymbol{\Sigma}\mathbf U^{\mathrm{alt}}((r+1)T-1)\right)_i>0\right\}\),
\[
    \frac{\tilde Q_i^{\mathrm{alt}}(r)}{\gamma_i} < \frac{TS_{\max}}{\gamma_i} \le \frac{TS_{\max}}{\gamma_{\min}} =:M_0.
\]
Thus, for \(l=m+1,\ldots,n\),
\begin{align}
    & v_l \mathbb E\!\left[ \left(\boldsymbol{\Sigma}\mathbf U^{\mathrm{alt}}((r+1)T-1)\right)_{\tilde\eta_r(l)} \,\middle|\,\mathcal F_r \right] \notag\\
    & = \mathbb E\!\left[ v_l \left(\boldsymbol{\Sigma}\mathbf U^{\mathrm{alt}}((r+1)T-1)\right)_{\tilde\eta_r(l)} \,\middle|\,\mathcal F_r \right] \notag\\
    & = \mathbb E\!\left[ e^{\theta\left( \frac{\tilde Q_{\tilde\eta_r(l)}^{\mathrm{alt}}(r)}{\gamma_{\tilde\eta_r(l)}} - z_{\bar B}(r) \right)} \left(\boldsymbol{\Sigma}\mathbf U^{\mathrm{alt}}((r+1)T-1)\right)_{\tilde\eta_r(l)} \,\middle|\,\mathcal F_r \right] \notag \\
    & = \mathbb E\!\left[ e^{\theta\left( \frac{\tilde Q_{\tilde\eta_r(l)}^{\mathrm{alt}}(r)}{\gamma_{\tilde\eta_r(l)}} - z_{\bar B}(r) \right)} \left(\boldsymbol{\Sigma}\mathbf U^{\mathrm{alt}}((r+1)T-1)\right)_{\tilde\eta_r(l)} \mathbf 1_{\left\{ \left(\boldsymbol{\Sigma}\mathbf U^{\mathrm{alt}}((r+1)T-1)\right)_{\tilde\eta_r(l)}>0 \right\}} \,\middle|\,\mathcal F_r \right] \notag\\
    & \leq \mathbb E\!\left[ e^{\theta(M_0-z_{\bar B}(r))} \left(\boldsymbol{\Sigma}\mathbf U^{\mathrm{alt}}((r+1)T-1)\right)_{\tilde\eta_r(l)} \mathbf 1_{\left\{ \left(\boldsymbol{\Sigma}\mathbf U^{\mathrm{alt}}((r+1)T-1)\right)_{\tilde\eta_r(l)}>0 \right\}} \,\middle|\,\mathcal F_r \right] \notag\\
    & \leq \mathbb E\!\left[ e^{\theta(M_0-z_{\bar B}(r))}TS_{\max} \,\middle|\,\mathcal F_r \right] \notag\\
    & = TS_{\max}e^{\theta(M_0-z_{\bar B}(r))}.
    \label{eq:pbd_unused_service_weighted_bound}
\end{align}
Moreover, the queue recursion over one cycle gives, for \(l=m+1,\ldots,n\),
\begin{align*}
    &\mathbb E\!\left[ \Delta\tilde Q_{\tilde\eta_r(l)}^{\mathrm{alt}}(r) - \beta\gamma_{\tilde\eta_r(l)} \,\middle|\,\mathcal F_r \right] \\
    & = \mathbb E\!\left[ \sum_{s=0}^{T-1}A_{\tilde\eta_r(l)}(rT+s) - \sum_{s=0}^{T-1}S_{\tilde\eta_r(l)}(rT+s) + \left(\boldsymbol{\Sigma}\mathbf U^{\mathrm{alt}}((r+1)T-1)\right)_{\tilde\eta_r(l)} - \beta\gamma_{\tilde\eta_r(l)} \,\middle|\,\mathcal F_r \right] \\
    & = Tn\lambda f_{l,\tilde\eta_r} - T\mu_{\tilde\eta_r(l)} + \mathbb E\!\left[ \left(\boldsymbol{\Sigma}\mathbf U^{\mathrm{alt}}((r+1)T-1)\right)_{\tilde\eta_r(l)} \,\middle|\,\mathcal F_r \right] - \beta\gamma_{\tilde\eta_r(l)} \\
    & = T\left( n\lambda f_{l,\tilde\eta_r} - \mu_{\tilde\eta_r(l)} \right) - \beta\gamma_{\tilde\eta_r(l)} + \mathbb E\!\left[ \left(\boldsymbol{\Sigma}\mathbf U^{\mathrm{alt}}((r+1)T-1)\right)_{\tilde\eta_r(l)} \,\middle|\,\mathcal F_r \right] \\
    & = d_l^{\bar B} + \mathbb E\!\left[ \left(\boldsymbol{\Sigma}\mathbf U^{\mathrm{alt}}((r+1)T-1)\right)_{\tilde\eta_r(l)} \,\middle|\,\mathcal F_r \right].
\end{align*}
Multiplying both sides by \(v_l\), summing over \(l=m+1,\ldots,n\), and using equations~\eqref{eq:pbd_barB_weighted_nominal_drift_escape} and \eqref{eq:pbd_unused_service_weighted_bound}, we obtain
\begin{align}
    &\sum_{l=m+1}^{n} v_l \mathbb E\!\left[ \Delta\tilde Q_{\tilde\eta_r(l)}^{\mathrm{alt}}(r) - \beta\gamma_{\tilde\eta_r(l)} \,\middle|\,\mathcal F_r \right] \notag\\
    & = \sum_{l=m+1}^{n} v_l \left[ d_l^{\bar B} + \mathbb E\!\left[ \left(\boldsymbol{\Sigma}\mathbf U^{\mathrm{alt}}((r+1)T-1)\right)_{\tilde\eta_r(l)} \,\middle|\,\mathcal F_r \right] \right] \notag\\
    & = \sum_{l=m+1}^{n}v_l d_l^{\bar B} + \sum_{l=m+1}^{n} v_l \mathbb E\!\left[ \left(\boldsymbol{\Sigma}\mathbf U^{\mathrm{alt}}((r+1)T-1)\right)_{\tilde\eta_r(l)} \,\middle|\,\mathcal F_r \right] \notag\\
    & \leq -\frac{\Delta_{\bar B}^\lambda}{\Gamma_{\bar B}}\Phi_{\bar B}(r) + \sum_{l=m+1}^{n} TS_{\max}e^{\theta(M_0-z_{\bar B}(r))} \notag\\
    & = -\frac{\Delta_{\bar B}^\lambda}{\Gamma_{\bar B}}\Phi_{\bar B}(r) + (n-m)TS_{\max}e^{\theta(M_0-z_{\bar B}(r))}.
    \label{eq:pbd_barB_actual_weighted_drift_escape}
\end{align}
We now derive the drift of \(\Phi_{\bar B}(r)\). Since \(z_{\bar B}(r+1)=z_{\bar B}(r)+\beta\),
\begin{align}
    \Phi_{\bar B}(r+1) & = \sum_{j\in\bar B} \gamma_j e^{\theta\left( \frac{\tilde Q_j^{\mathrm{alt}}(r+1)}{\gamma_j} - z_{\bar B}(r+1) \right)} \notag\\
    & = \sum_{l=m+1}^{n} \gamma_{\tilde\eta_r(l)} e^{\theta\left( \frac{\tilde Q_{\tilde\eta_r(l)}^{\mathrm{alt}}(r+1)}{\gamma_{\tilde\eta_r(l)}} - z_{\bar B}(r+1) \right)} \notag\\
    & = \sum_{l=m+1}^{n} \gamma_{\tilde\eta_r(l)} e^{\theta\left( \frac{\tilde Q_{\tilde\eta_r(l)}^{\mathrm{alt}}(r)}{\gamma_{\tilde\eta_r(l)}} - z_{\bar B}(r) \right)} e^{\theta\left( \frac{\Delta\tilde Q_{\tilde\eta_r(l)}^{\mathrm{alt}}(r)}{\gamma_{\tilde\eta_r(l)}} - \beta \right)}
    \notag\\
    & = \sum_{l=m+1}^{n} \gamma_{\tilde\eta_r(l)}v_l e^{\theta\left( \frac{\Delta\tilde Q_{\tilde\eta_r(l)}^{\mathrm{alt}}(r)}{\gamma_{\tilde\eta_r(l)}} - \beta \right)}.
    \label{eq:pbd_Phi_barB_reindex}
\end{align}
Taking conditional expectation in  Equation~\eqref{eq:pbd_Phi_barB_reindex}, we get
\begin{align}
    \mathbb E[\Phi_{\bar B}(r+1)\mid\mathcal F_r]
    & = \sum_{l=m+1}^{n} \gamma_{\tilde\eta_r(l)}v_l \mathbb E\left[ e^{\theta\left( \frac{\Delta\tilde Q_{\tilde\eta_r(l)}^{\mathrm{alt}}(r)}{\gamma_{\tilde\eta_r(l)}} - \beta \right)} \,\middle|\,\mathcal F_r \right] \notag\\
    & \overset{(a)}{\le} \sum_{l=m+1}^{n} \gamma_{\tilde\eta_r(l)}v_l \mathbb E\left[ 1 + \theta\left( \frac{\Delta\tilde Q_{\tilde\eta_r(l)}^{\mathrm{alt}}(r)}{\gamma_{\tilde\eta_r(l)}} - \beta \right) + C_0\theta^2 \,\middle|\,\mathcal F_r \right] \notag\\
    & = \Phi_{\bar B}(r) + \theta \sum_{l=m+1}^{n} v_l \mathbb E\!\left[ \Delta\tilde Q_{\tilde\eta_r(l)}^{\mathrm{alt}}(r) - \beta\gamma_{\tilde\eta_r(l)} \,\middle|\,\mathcal F_r \right] + C_0\theta^2\Phi_{\bar B}(r) \notag\\
    & \overset{(b)}{\le} \left( 1-\theta\frac{\Delta_{\bar B}^\lambda}{\Gamma_{\bar B}} + C_0\theta^2 \right)\Phi_{\bar B}(r) + \theta(n-m)TS_{\max}e^{\theta(M_0-z_{\bar B}(r))} \notag \\
    & \overset{(c)}{\le} \Phi_{\bar B}(r)+C_{\bar B}e^{-\theta z_{\bar B}(r)}. \notag 
\end{align}
where (a) follows from the Taylor bound, (b) follows from Equation~\eqref{eq:pbd_barB_actual_weighted_drift_escape}, and (c) follows from the choice of \(\theta\) and by taking \( C_{\bar B}:=\theta(n-m)TS_{\max}e^{\theta M_0} \).

Because of the unused-service remainder, \(\Phi_{\bar B}(r)\) is not necessarily a supermartingale by itself. We absorb this summable remainder by defining
\[
    R_{\bar B}(r) := C_{\bar B}\sum_{s=r}^{\infty}e^{-\theta z_{\bar B}(s)}.
\]
Since \(z_{\bar B}(s)=L+M/2+\beta s\) and \(\beta>0\), the series is finite. Moreover,
\begin{align*}
    R_{\bar B}(r)
    & = C_{\bar B}\sum_{s=r}^{\infty}e^{-\theta z_{\bar B}(s)} \\
    & = C_{\bar B}e^{-\theta z_{\bar B}(r)} + C_{\bar B}\sum_{s=r+1}^{\infty}e^{-\theta z_{\bar B}(s)} \\
    & = C_{\bar B}e^{-\theta z_{\bar B}(r)} + R_{\bar B}(r+1).
\end{align*}
Therefore,
\begin{align}
    \mathbb E[\Phi_{\bar B}(r+1)+R_{\bar B}(r+1)\mid\mathcal F_r]
    & = \mathbb E[\Phi_{\bar B}(r+1)\mid\mathcal F_r] + R_{\bar B}(r+1) \notag\\
    & \le \Phi_{\bar B}(r)+C_{\bar B}e^{-\theta z_{\bar B}(r)} + R_{\bar B}(r+1) \notag\\
    & = \Phi_{\bar B}(r) + C_{\bar B}e^{-\theta z_{\bar B}(r)} + C_{\bar B}\sum_{s=r+1}^{\infty}e^{-\theta z_{\bar B}(s)} \notag\\
    & = \Phi_{\bar B}(r) + C_{\bar B}\sum_{s=r}^{\infty}e^{-\theta z_{\bar B}(s)} \notag\\
    & = \Phi_{\bar B}(r)+ R_{\bar B}(r). \notag 
\end{align}
Since \(\Phi_{\bar B}(r)\ge0\) and \(R_{\bar B}(r)\ge0\), it follows that \( \{\Phi_{\bar B}(r)+R_{\bar B}(r)\}_{r\ge0} \) is a nonnegative supermartingale.

For any \(N\ge1\), let
\[
    \tau_N:=N\wedge\kappa_{\bar B}.
\]
Since \(\tau_N\le N\) is a bounded stopping time, the optional stopping theorem gives
\[
    \mathbb E_{\mathbf x} \left[ \Phi_{\bar B}(N\wedge\kappa_{\bar B}) + R_{\bar B}(N\wedge\kappa_{\bar B}) \right] \le \Phi_{\bar B}(0) + R_{\bar B}(0).
\]
On \(\{\kappa_{\bar B}\le N\}\), by the definition of \(\kappa_{\bar B}\), there exists \(j\in\bar B\) such that
\[
    \frac{\tilde Q_j^{\mathrm{alt}}(\kappa_{\bar B})}{\gamma_j}
    \ge
    z_{\bar B}(\kappa_{\bar B}).
\]
Therefore,
\begin{align*}
    \Phi_{\bar B}(\kappa_{\bar B})
    & = \sum_{j'\in\bar B} \gamma_{j'} e^{\theta\left( \frac{\tilde Q_{j'}^{\mathrm{alt}}(\kappa_{\bar B})}{\gamma_{j'}} - z_{\bar B}(\kappa_{\bar B}) \right)} \\
    & \ge \gamma_j e^{\theta\left( \frac{\tilde Q_j^{\mathrm{alt}}(\kappa_{\bar B})}{\gamma_j} - z_{\bar B}(\kappa_{\bar B}) \right)}
    \ge \gamma_j
    \ge \gamma_{\min}.
\end{align*}
Since \(R_{\bar B}(\kappa_{\bar B})\ge0\), we have
\begin{align}
    \gamma_{\min}\mathbb P_{\mathbf x}(\kappa_{\bar B}\le N)
    & = \mathbb E_{\mathbf x} \left[ \gamma_{\min}\mathbf 1_{\{\kappa_{\bar B}\le N\}} \right] \notag\\
    & \le \mathbb E_{\mathbf x} \left[ \left( \Phi_{\bar B}(\kappa_{\bar B}) + R_{\bar B}(\kappa_{\bar B}) \right) \mathbf 1_{\{\kappa_{\bar B}\le N\}} \right] \notag\\
    & = \mathbb E_{\mathbf x} \left[ \left( \Phi_{\bar B}(N\wedge\kappa_{\bar B}) + R_{\bar B}(N\wedge\kappa_{\bar B}) \right) \mathbf 1_{\{\kappa_{\bar B}\le N\}} \right] \notag\\
    & \le \mathbb E_{\mathbf x} \left[ \Phi_{\bar B}(N\wedge\kappa_{\bar B}) + R_{\bar B}(N\wedge\kappa_{\bar B}) \right] \notag\\
    & \le \Phi_{\bar B}(0)+R_{\bar B}(0). \notag 
\end{align}
Letting \(N\to\infty\), and using the fact that \(\{\kappa_{\bar B}\le N\}\) increases to
\(\{\kappa_{\bar B}<\infty\}\), we obtain
\[
    \mathbb P_{\mathbf x}(\kappa_{\bar B}<\infty) \le \frac{\Phi_{\bar B}(0)+R_{\bar B}(0)}{\gamma_{\min}}.
\]
By the initial condition, for every \(j\in\bar B\), we have
\[
    \frac{\tilde Q_j^{\mathrm{alt}}(0)}{\gamma_j} \le L, \qquad z_{\bar B}(0)=L+\frac M2.
\]
Hence
\[
    \frac{\tilde Q_j^{\mathrm{alt}}(0)}{\gamma_j}-z_{\bar B}(0) \le -\frac M2, \qquad j\in\bar B.
\]
Therefore,
\begin{align*}
    \Phi_{\bar B}(0)
    & = \sum_{j\in\bar B} \gamma_j e^{\theta\left( \frac{\tilde Q_j^{\mathrm{alt}}(0)}{\gamma_j} - z_{\bar B}(0) \right)} \\
    & \le \sum_{j\in\bar B}\gamma_j e^{-\theta M/2} = \Gamma_{\bar B}e^{-\theta M/2}.
\end{align*}
Also, since \(z_{\bar B}(s)=L+M/2+\beta s\),
\begin{align*}
    R_{\bar B}(0)
    & = C_{\bar B}\sum_{s=0}^{\infty}e^{-\theta z_{\bar B}(s)} \\
    & = C_{\bar B}\sum_{s=0}^{\infty} e^{-\theta(L+M/2+\beta s)} \\
    & = C_{\bar B}e^{-\theta(L+M/2)} \sum_{s=0}^{\infty}e^{-\theta\beta s} \\
    & = \frac{ C_{\bar B}e^{-\theta(L+M/2)} }{ 1-e^{-\theta\beta} }.
\end{align*}
Thus,
\begin{align}
    \mathbb P_{\mathbf x}(\kappa_{\bar B}<\infty) \le \frac{ \Gamma_{\bar B}e^{-\theta M/2} + \frac{C_{\bar B}e^{-\theta(L+M/2)}}{1-e^{-\theta\beta}} }{\gamma_{\min}}.
    \label{eq:pbd_kappa_barB_bound_escape}
\end{align}

\noindent\textbf{Step 4.} We combine the two bounds and conclude.

Combining equations~\eqref{eq:pbd_kappa_B_bound_escape} and \eqref{eq:pbd_kappa_barB_bound_escape}, we have
\begin{align}
    & \mathbb P_{\mathbf x}(\kappa_B<\infty) + \mathbb P_{\mathbf x}(\kappa_{\bar B}<\infty)
    \notag\\
    & \quad\le \frac{\Gamma_B}{\gamma_{\min}}e^{-\theta M} + \frac{\Gamma_{\bar B}}{\gamma_{\min}}e^{-\theta M/2} + \frac{C_{\bar B}}{\gamma_{\min}(1-e^{-\theta\beta})} e^{-\theta(L+M/2)}. \notag 
\end{align}
Since \(\theta>0\), the first two terms vanish as \(M\to\infty\). Choose \(M\) large enough so that
\[
    \frac{\Gamma_B}{\gamma_{\min}}e^{-\theta M} + \frac{\Gamma_{\bar B}}{\gamma_{\min}}e^{-\theta M/2} < \frac12.
\]
After fixing this \(M\), the third term vanishes as \(L\to\infty\). Choose \(L\) large enough so that
\[
    \frac{C_{\bar B}}{\gamma_{\min}(1-e^{-\theta\beta})} e^{-\theta(L+M/2)} < \frac12.
\]
Therefore, for these choices of \(M\) and \(L\),
\[
    \mathbb P_{\mathbf x}(\kappa_B<\infty) + \mathbb P_{\mathbf x}(\kappa_{\bar B}<\infty) <1.
\]
By the union bound,
\begin{align*}
    \mathbb P_{\mathbf x}(\mathcal E) & = 1-\mathbb P_{\mathbf x}\left( \kappa_B<\infty \text{ or } \kappa_{\bar B} < \infty \right) \\
    & \ge 1-\mathbb P_{\mathbf x}(\kappa_B<\infty) - \mathbb P_{\mathbf x}(\kappa_{\bar B}<\infty) >0.
\end{align*}
On \(\mathcal E\), for every \(r\in\mathbb N\),
\[
    \frac{\tilde Q_i^{\mathrm{alt}}(r)}{\gamma_i}>z_B(r), \qquad i\in B,
\]
and
\[
    \frac{\tilde Q_j^{\mathrm{alt}}(r)}{\gamma_j}<z_{\bar B}(r), \qquad j\in\bar B.
\]
Since
\[
    z_B(r)-z_{\bar B}(r)=\frac M2+(\alpha-\beta)r>0,
\]
we obtain
\[
    \min_{i\in B} \left\{ \frac{\tilde Q_i^{\mathrm{alt}}(r)}{\gamma_i} \right\} > \max_{j\in\bar B} \left\{ \frac{\tilde Q_j^{\mathrm{alt}}(r)}{\gamma_j} \right\}, \qquad \forall r\in\mathbb N.
\]
Moreover, on \(\mathcal E\), for every \(r\in\mathbb N\) and \(a>0\), it follows that
\[
    \lim_{r\to\infty} \sum_{i\in B}\tilde Q_i^{\mathrm{alt}}(r) \ge \lim_{r\to\infty} \sum_{i\in B}\gamma_i z_B(r) = \lim_{r\to\infty} \Gamma_B(L+M+\alpha r) = \infty.
\]

\subsection{Proof of Lemma~\ref{lem:fixed_load_fluid_limit_admissible}}
\label{prf:fixed_load_fluid_limit_admissible}

Let \(\mathbf q^{\mathrm{fl}}(\cdot)\) be any fluid limit of the embedded queue-length process obtained from the fluid scaling in Step~2.1, and write \(\mathbf q^{\mathrm{fl},(\gamma)}(t):=(q_i^{\mathrm{fl}}(t)/\gamma_i)_{i=1}^n\). \\

\noindent\textbf{Step 1}. We identify the local mixture of sampled permutations.

Fix a differentiability time \(t<\tau_{\mathrm{hit}}\). Then \(\mathbf q^{\mathrm{fl}}(t)\neq\mathbf 0\). At every prelimit sampling epoch, the policy realizes a single sampled permutation, after applying its tie-breaking rule if necessary. If the limiting scaled state \(\mathbf q^{\mathrm{fl},(\gamma)}(t)\) has ties, then the prelimit sampled permutations in a vanishing neighborhood of \(t\) may vary. However, any subsequential limiting empirical frequency can put positive mass only on permutations \(\eta\in\mathcal S_n\) that are consistent with the weak ordering of \(\mathbf q^{\mathrm{fl},(\gamma)}(t)\), that is, permutations satisfying \(q_{\eta(1)}^{\mathrm{fl},(\gamma)}(t)\ge\cdots\ge q_{\eta(n)}^{\mathrm{fl},(\gamma)}(t)\). 

For each prelimit system, consider the empirical frequency vector of the sampled permutations in a vanishing fluid-scale neighborhood of \(t\). This vector belongs to the simplex on the finite set \(\mathcal S_n\). Since this simplex is compact, along a subsequence these empirical frequency vectors converge to some probability vector \(\mathbf w(t)\). By the preceding observation, \(\mathbf w(t)\) assigns positive mass only to permutations consistent with the weak ordering of \(\mathbf q^{\mathrm{fl},(\gamma)}(t)\), and hence \(\mathbf w(t)\in\mathfrak W(\mathbf q^{\mathrm{fl}}(t))\). Therefore, the fluid-scale averaged dispatch fraction assigned to server \(i\) is
\[
    \sum_{\eta\in\mathcal S_n} w_\eta(t)f_{\ell_\eta(i),\eta}.
\]

\noindent\textbf{Step 2}. We identify the reflected fluid drift.

By the functional strong law of large numbers for arrivals and potential services, stochastic fluctuations vanish under the fluid scaling. Therefore, the averaged nominal drift of the \(\boldsymbol{\gamma}\)-scaled queue of server \(i\) at time \(t\) is
\[
    \frac{ T\left( n\lambda \sum_{\eta\in\mathcal S_n} w_\eta(t)f_{\ell_\eta(i),\eta} - \mu_i \right) }{\gamma_i}.
\]
Let \( J(t):=\{i\in[n]:q_i^{\mathrm{fl}}(t)>0\} \). If \(i\in J(t)\), then queue \(i\) is positive in the fluid state, and its \(\boldsymbol{\gamma}\)-scaled fluid drift equals the averaged nominal drift. If \(i\notin J(t)\), then queue \(i\) is at the zero boundary. In that case, negative averaged nominal drift is absorbed by unused service, whereas positive averaged nominal drift makes the queue leave the boundary. Hence, for every \(i\in[n]\),
\[
    \frac{d}{dt}q_i^{\mathrm{fl},(\gamma)}(t) = b_i^{J(\mathbf q^{\mathrm{fl}}(t))} \left( \mathbf w(t),n\lambda \right).
\]
Equivalently,
\[
    \frac{d}{dt}\mathbf q^{\mathrm{fl},(\gamma)}(t) = b^{J(\mathbf q^{\mathrm{fl}}(t))}
    \left( \mathbf w(t),n\lambda \right) \in \mathfrak D_{n\lambda}(\mathbf q^{\mathrm{fl}}(t)).
\]

\subsection{Proof of Lemma~\ref{lem:fixed_load_scd_fluid_draining}}
\label{prf:fixed_load_scd_fluid_draining}

Let \(\mathbf q^{\mathrm{fl}}(\cdot)\) be any fluid limit with
\[
    \|\mathbf q^{\mathrm{fl}}(0)\|_1=1.
\]
By Lemma~\ref{lem:fixed_load_fluid_limit_admissible}, before the trajectory reaches the origin,
\[
    \frac{d}{dt}\mathbf q^{\mathrm{fl},(\gamma)}(t) \in \mathfrak D_{n\lambda}(\mathbf q^{\mathrm{fl}}(t))
\]
for almost every \(t\). \\

\noindent\textbf{Step 1}. We obtain a uniform bound on admissible drift speeds.

For every \(\mathbf q\in\mathbb R_+^n\), \(\mathbf w\in\mathfrak W(\mathbf q)\), and \(i\in[n]\), since \(\mathbf w\) is a probability vector and \(0\le f_{\ell_\eta(i),\eta}\le1\), we have \(0\le \sum_{\eta\in\mathcal S_n}w_\eta f_{\ell_\eta(i),\eta}\le1\). Therefore,
\begin{align*}
    \left|b_i^{J(\mathbf q)}(\mathbf w,n\lambda)\right|
    & \le \frac{T\left(n\lambda\sum_{\eta\in\mathcal S_n}w_\eta f_{\ell_\eta(i),\eta}+\mu_i\right)}{\gamma_i} \\
    & \le \frac{T(n\lambda+\mu_i)}{\gamma_i}.
\end{align*}
Define
\begin{align*}
    M(n\lambda) := T\sum_{i=1}^n \frac{n\lambda+\mu_i}{\gamma_i}.
\end{align*}
Then \(M(n\lambda)<\infty\), and for all \(\mathbf q\in\mathbb R_+^n\) and \(\mathbf w\in\mathfrak W(\mathbf q)\),
\begin{align*}
    \left\| b^{J(\mathbf q)}(\mathbf w,n\lambda) \right\|_1 = \sum_{i=1}^n \left|b_i^{J(\mathbf q)}(\mathbf w,n\lambda)\right| \le M(n\lambda).
\end{align*}
Therefore, for almost every \(t\) such that \(\frac{d}{dt}\mathbf q^{\mathrm{fl},(\gamma)}(t)\in \mathfrak D_{n\lambda}(\mathbf q^{\mathrm{fl}}(t))\), there exists some \(\mathbf w(t)\in\mathfrak W(\mathbf q^{\mathrm{fl}}(t))\) such that
\[
    \frac{d}{dt}\mathbf q^{\mathrm{fl},(\gamma)}(t) = b^{J(\mathbf q^{\mathrm{fl}}(t))}(\mathbf w(t),n\lambda).
\]
By the definition of \(M(n\lambda)\), this implies
\begin{align*}
    \left\| \frac{d}{dt}\mathbf q^{\mathrm{fl},(\gamma)}(t) \right\|_1 = \left\| b^{J(\mathbf q^{\mathrm{fl}}(t))}(\mathbf w(t),n\lambda) \right\|_1
    \le M(n\lambda).
\end{align*} 

\noindent\textbf{Step 2}. We bound the time spent in the catch-up phases.

By Definition~\ref{def:fixed_load_scd}, applied to the fluid limit
\(\mathbf q^{\mathrm{fl}}(\cdot)\), there exist switching times
\[
    0=s_0\le s_1\le\cdots\le s_R\le\tau_{\mathrm{hit}} < \infty,
\]
where \( \tau_{\mathrm{hit}}:=\inf\{t\ge0:\mathbf q^{\mathrm{fl}}(t)=\mathbf 0\} \). We first record a uniform bound on admissible scaled drifts. By Lemma~\ref{lem:fixed_load_fluid_limit_admissible}, for almost every \(t<\tau_{\mathrm{hit}}\),
\[
    \frac{d}{dt}\mathbf q^{\mathrm{fl},(\gamma)}(t) \in \mathfrak D_{n\lambda}(\mathbf q^{\mathrm{fl}}(t)).
\]
Since \(n\lambda\) is fixed and \(\mathcal S_n\) is finite, all possible reflected fluid drifts are uniformly bounded. Define
\[
    M(n\lambda) := \sup_{\mathbf x\in\mathbb R_+^n} \sup_{\mathbf w\in\mathfrak W(\mathbf x)} \left\|b^{J(\mathbf x)}(\mathbf w,n\lambda)\right\|_1.
\]
This is a uniform upper bound on the possible values of \(\left\|\frac{d}{dt}\mathbf q^{\mathrm{fl},(\gamma)}(t)\right\|_1\), because Lemma~\ref{lem:fixed_load_fluid_limit_admissible} implies that, for almost every \(t<\tau_{\mathrm{hit}}\),
\[
    \frac{d}{dt}\mathbf q^{\mathrm{fl},(\gamma)}(t) = b^{J(\mathbf q^{\mathrm{fl}}(t))}(\mathbf w(t),n\lambda)
\]
for some \(\mathbf w(t)\in\mathfrak W(\mathbf q^{\mathrm{fl}}(t))\). The constant \(M(n\lambda)\) is finite because each component of \(b^{J(\mathbf x)}(\mathbf w,n\lambda)\) is bounded in absolute value by \(T(n\lambda+\mu_i)/\gamma_i\).

Let \(B_0:=1/\gamma_{\min}\). Since \(\|\mathbf q^{\mathrm{fl}}(0)\|_1=1\) and all \(\gamma_i>0\), we have
\[
    \|\mathbf q^{\mathrm{fl},(\gamma)}(0)\|_1 = \sum_{i=1}^n\frac{q_i^{\mathrm{fl}}(0)}{\gamma_i}
    \le \frac{1}{\gamma_{\min}} = B_0.
\]
We prove inductively that, for each \(r=0,1,\ldots,R\), there exists a finite constant \(B_r\), depending only on \(n\lambda\) and the constants in Definition~\ref{def:fixed_load_scd}, such that
\[
    \|\mathbf q^{\mathrm{fl},(\gamma)}(s_r)\|_1\le B_r.
\]
The claim holds for \(r=0\). Suppose it holds for \(r-1\), that is, \(\|\mathbf q^{\mathrm{fl},(\gamma)}(s_{r-1})\|_1\le B_{r-1}\). Define
\begin{align*}
    B_r := \left(1+\frac{M(n\lambda)\kappa_r}{\delta_r}\right)B_{r-1}.
\end{align*}
If the \(r\)-th catch-up phase is skipped, then \(s_r=s_{r-1}\). Hence \(\|\mathbf q^{\mathrm{fl},(\gamma)}(s_r)\|_1=\|\mathbf q^{\mathrm{fl},(\gamma)}(s_{r-1})\|_1\le B_{r-1}\le B_r\), and the induction step holds. We now consider the case in which the \(r\)-th catch-up phase is not skipped.

By Definition~\ref{def:fixed_load_scd}, on \([s_{r-1},s_r)\) we have \(G_r(\mathbf q^{\mathrm{fl},(\gamma)}(t))>0\) and
\[
    D^+G_r\left( \mathbf q^{\mathrm{fl},(\gamma)}(t), \frac{d}{dt}\mathbf q^{\mathrm{fl},(\gamma)}(t) \right) \le -\delta_r
\]
for almost every \(t\). Also, Definition~\ref{def:fixed_load_scd} and the induction hypothesis give
\begin{align*}
    G_r(\mathbf q^{\mathrm{fl},(\gamma)}(s_{r-1})) \le \kappa_r\|\mathbf q^{\mathrm{fl},(\gamma)}(s_{r-1})\|_1 \le \kappa_r B_{r-1}.
\end{align*}

Since \(G_r\) is locally Lipschitz and \(\mathbf q^{\mathrm{fl},(\gamma)}(\cdot)\) is absolutely continuous, \(G_r(\mathbf q^{\mathrm{fl},(\gamma)}(\cdot))\) is absolutely continuous on every interval \([s_{r-1},s]\) with \(s<s_r\). For almost every \(t\in[s_{r-1},s_r)\), its derivative along the path is bounded above by the corresponding Dini derivative. Therefore,
\[
    \frac{d}{dt}G_r(\mathbf q^{\mathrm{fl},(\gamma)}(t)) \le D^+G_r\left( \mathbf q^{\mathrm{fl},(\gamma)}(t), \frac{d}{dt}\mathbf q^{\mathrm{fl},(\gamma)}(t) \right) \le -\delta_r
\]
for almost every \(t\in[s_{r-1},s_r)\).

Fix any \(s<s_r\). Integrating the preceding inequality over \([s_{r-1},s]\) yields
\begin{align*}
    G_r(\mathbf q^{\mathrm{fl},(\gamma)}(s)) - G_r(\mathbf q^{\mathrm{fl},(\gamma)}(s_{r-1}))
    & = \int_{s_{r-1}}^s \frac{d}{dt}G_r(\mathbf q^{\mathrm{fl},(\gamma)}(t))\,dt \\
    & \le -\delta_r(s-s_{r-1}).
\end{align*}
Rearranging gives \(\delta_r(s-s_{r-1})\le G_r(\mathbf q^{\mathrm{fl},(\gamma)}(s_{r-1}))-G_r(\mathbf q^{\mathrm{fl},(\gamma)}(s))\). Since \(G_r\) is nonnegative, this implies \(\delta_r(s-s_{r-1})\le G_r(\mathbf q^{\mathrm{fl},(\gamma)}(s_{r-1}))\). Taking the supremum over all \(s<s_r\), we obtain
\begin{align*}
    \delta_r(s_r-s_{r-1}) \le G_r(\mathbf q^{\mathrm{fl},(\gamma)}(s_{r-1})).
\end{align*}
Combining this bound with \(G_r(\mathbf q^{\mathrm{fl},(\gamma)}(s_{r-1}))\le \kappa_r B_{r-1}\), we get
\begin{align*}
    s_r-s_{r-1} \le \frac{G_r(\mathbf q^{\mathrm{fl},(\gamma)}(s_{r-1}))}{\delta_r} \le \frac{\kappa_r B_{r-1}}{\delta_r}.
\end{align*}

During this interval, the admissible drift is bounded by \(M(n\lambda)\) for almost every $t$. Hence,
\begin{align*}
    \|\mathbf q^{\mathrm{fl},(\gamma)}(s_r)\|_1
    & \overset{(a)}{=} \left\| \mathbf q^{\mathrm{fl},(\gamma)}(s_{r-1}) + \int_{s_{r-1}}^{s_r} \frac{d}{dt}\mathbf q^{\mathrm{fl},(\gamma)}(t)\,dt \right\|_1 \\
    & \overset{(b)}{\le} \|\mathbf q^{\mathrm{fl},(\gamma)}(s_{r-1})\|_1 + \left\| \int_{s_{r-1}}^{s_r} \frac{d}{dt}\mathbf q^{\mathrm{fl},(\gamma)}(t)\,dt \right\|_1 \\
    & \overset{(c)}{\le} \|\mathbf q^{\mathrm{fl},(\gamma)}(s_{r-1})\|_1 + \int_{s_{r-1}}^{s_r} \left\| \frac{d}{dt}\mathbf q^{\mathrm{fl},(\gamma)}(t) \right\|_1dt \\
    & \overset{(d)}{\le} B_{r-1} + M(n\lambda)(s_r-s_{r-1}) \\
    & \overset{(e)}{\le} B_{r-1} + M(n\lambda)\frac{\kappa_r B_{r-1}}{\delta_r} \\
    & = \left(1+\frac{M(n\lambda)\kappa_r}{\delta_r}\right)B_{r-1} \\
    & = B_r,
\end{align*}
where (a) follows from the componentwise integral representation of absolutely continuous vector-valued functions, (b) follows from the triangle inequality, (c) follows from the integral form of the triangle inequality, (d) follows from the induction hypothesis \(\|\mathbf q^{\mathrm{fl},(\gamma)}(s_{r-1})\|_1\le B_{r-1}\) and the a.e. drift bound \(\|\frac{d}{dt}\mathbf q^{\mathrm{fl},(\gamma)}(t)\|_1\le M(n\lambda)\), and (e) follows from the catch-up phase length bound.

This proves the induction step. Therefore, \(\|\mathbf q^{\mathrm{fl},(\gamma)}(s_r)\|_1\le B_r\) for all \(r=0,1,\ldots,R\). Moreover, since \(s_r-s_{r-1}\le \kappa_r B_{r-1}/\delta_r\) for each \(r=1,\ldots,R\), the total time spent in the catch-up phases is bounded by
\begin{align*}
    \sum_{r=1}^{R}(s_r-s_{r-1})
    & \le \sum_{r=1}^{R}\frac{\kappa_r B_{r-1}}{\delta_r} \\
    & = B_0\sum_{r=1}^{R} \frac{\kappa_r}{\delta_r} \prod_{u=1}^{r-1} \left(1+\frac{M(n\lambda)\kappa_u}{\delta_u}\right)
    <\infty,
\end{align*}
where the empty product is interpreted as \(1\). \\

\noindent\textbf{Step 3}. We bound the terminal draining time.

If \(s_R=\tau_{\mathrm{hit}}\), then the trajectory has already reached the origin. Otherwise, Definition~\ref{def:fixed_load_scd} gives
\[
    V_D(\mathbf q^{\mathrm{fl},(\gamma)}(s_R)) \le \kappa_D \|\mathbf q^{\mathrm{fl},(\gamma)}(s_R)\|_1 \le \kappa_D B_R.
\]
On the terminal interval \([s_R,\tau_{\mathrm{hit}})\), for almost every \(t\),
\[
    D^+V_D \left( \mathbf q^{\mathrm{fl},(\gamma)}(t), \frac{d}{dt}\mathbf q^{\mathrm{fl},(\gamma)}(t) \right) \le -\delta_D.
\]
Since \(V_D(\mathbf q^{\mathrm{fl},(\gamma)}(\cdot))\) is absolutely continuous, the terminal draining time is bounded by
\[
    \tau_{\mathrm{hit}}-s_R \le \frac{\kappa_D B_R}{\delta_D}.
\]
Combining the catch-up phase bounds with the terminal draining bound, every unit-norm fluid limit satisfies
\begin{align*}
    \tau_{\mathrm{hit}}
    & = s_R+(\tau_{\mathrm{hit}}-s_R) \\
    & \le \sum_{r=1}^{R}(s_r-s_{r-1}) + \frac{\kappa_D B_R}{\delta_D} \\
    & \le \sum_{r=1}^{R}\frac{\kappa_r B_{r-1}}{\delta_r} + \frac{\kappa_D B_R}{\delta_D} \\
    & =: \tau_{\mathrm{scd}}(n\lambda) <\infty.
\end{align*}
Hence every unit-norm fluid limit reaches the origin by time \(\tau_{\mathrm{scd}}(n\lambda)\). Because all \(\gamma_i>0\), \(\mathbf q^{\mathrm{fl},(\gamma)}(t)=\mathbf 0\) is equivalent to \(\mathbf q^{\mathrm{fl}}(t)=\mathbf 0\). We continue the fluid limit at the origin after its first hitting time. Therefore,
\[
    \mathbf q^{\mathrm{fl}}(t)=\mathbf 0, \qquad \forall t\ge\tau_{\mathrm{scd}}(n\lambda).
\]

\subsection{Proof of Lemma~\ref{lem:fixed_load_fluid_stability_criterion}}
\label{prf:fixed_load_fluid_stability_criterion}

Recall that \(\{\tilde{\mathbf X}(r)\}_{r\ge0}\) denotes the embedded Markov chain observed at sampling epochs, and that \(\{\tilde{\mathbf Q}(r)\}_{r\ge0}\) is its queue-length component. The purpose of this lemma is to transfer a uniform finite-time draining property of the fluid limits of \(\{\tilde{\mathbf Q}(r)\}_{r\ge0}\) to positive recurrence of the stochastic Markov chain.

By assumption, there exists a finite constant \(\tau<\infty\) such that every fluid limit \(\mathbf q^{\mathrm{fl}}(\cdot)\) of the embedded queue-length process, with
\[
    \|\mathbf q^{\mathrm{fl}}(0)\|_1=1,
\]
satisfies
\[
    \mathbf q^{\mathrm{fl}}(t)=\mathbf 0, \qquad \forall t\ge \tau.
\]
Thus all unit-norm fluid limits drain to the origin in a uniformly bounded fluid time.

We now explain why the standard fluid-stability criterion applies to the embedded process in our setting. First, the embedded full-state process \(\{\tilde{\mathbf X}(r)\}_{r\ge0}\) is Markov because it is obtained by sampling the full Markov state \(\{\mathbf X(t)\}_{t\ge0}\) at the deterministic times \(0,T,2T,\ldots\). Second, the queue-length component has uniformly bounded one-cycle increments, because arrivals and potential services are bounded over each dispatching cycle. Hence the fluid-scaled queue-length paths are uniformly Lipschitz on compact time intervals, and subsequential fluid limits are well defined. These are the standard compactness conditions needed for the fluid limit stability criterion.

Therefore, the assumption above establishes stability of the fluid limit model for the embedded chain. By the fluid-stability theorem from~\cite{dai1995positive-fluidmodel} (Definition~4.1 and Theorem~4.2), stability of the fluid model implies positive Harris recurrence of the underlying queueing Markov process. Applied to the embedded chain \(\{\tilde{\mathbf X}(r)\}_{r\ge0}\), this gives positive Harris recurrence of the embedded chain. In the countable-state formulation used here, this is positive recurrence.

It remains to relate the embedded chain to the original chain \(\{\mathbf X(t)\}_{t\ge0}\). The embedded chain records the original chain every \(T\) slots. Between two consecutive sampling epochs there are only \(T\) slots, and the within-cycle evolution is governed by the same bounded arrival, service, and dispatching primitives. Hence positive recurrence of the embedded chain transfers to the corresponding communicating class of the original chain. Therefore, the induced Markov chain is positive recurrent.

\section{Proof of Corollary~\ref{cor_our:load_window_certificate_classification}}
\label{prf:load_window_certificate_classification}

Fix an interval \(I_k=(\xi_k,\xi_{k+1})\). Since
\[
    h^*=\xi_0<\xi_1<\cdots<\xi_K<\xi_{K+1} = \sum_{i=1}^n\mu_i,
\]
every load \(n\lambda\in I_k\) satisfies
\[
    h^*<n\lambda<\sum_{i=1}^n\mu_i.
\]
We prove the two cases separately.\\

\noindent\textbf{Step 1}. Suppose that \(\chi_k=\mathrm T\).

By Assumption~\ref{assump:load_window_certificates}, \(\mathrm{PBD}(n\lambda)\) holds for every \(n\lambda\in I_k\). Fix any \(n\lambda\in I_k\). Since \( h^*<n\lambda<\sum_{i=1}^n\mu_i \), Part 1 of Theorem~\ref{thm_our:fixed_load_stability_certificates} implies that the induced Markov chain is transient from some initial state. If the Markov chain is irreducible on the state space under consideration, then \(\{\mathbf X(t)\}_{t\ge0}\) is transient. Since \(n\lambda\in I_k\) was arbitrary, the statement holds for every load in \(I_k\).\\

\noindent\textbf{Step 2}. Suppose that \(\chi_k=\mathrm S\).

By Assumption~\ref{assump:load_window_certificates}, \(\mathrm{SCD}(n\lambda)\) holds for every \(n\lambda\in I_k\). Fix any \(n\lambda\in I_k\). Since \( n\lambda<\sum_{i=1}^n\mu_i \) and \(\mathrm{SCD}(n\lambda)\) holds, Part~2 of Theorem~\ref{thm_our:fixed_load_stability_certificates} implies that the induced Markov Chain \(\{\mathbf X(t)\}_{t\ge0}\) is positive recurrent. Since \(n\lambda\in I_k\) was arbitrary, the statement holds for every load in \(I_k\).\\

\noindent\textbf{Step 3}. Since the interval \(I_k\) was arbitrary, the two claims hold for every interval in the certified load-window partition.

\section{Theoretical Analysis for Example~\ref{ex:nonmonotone_stability_region}}
\label{app:example_A_theoretical_analysis}

\noindent\textbf{Step 1.} We study policy A.

In Example~\ref{ex:nonmonotone_stability_region}, we have \(\boldsymbol{\mu}=(1,2,3,20)\), \(T=1\), and \(\gamma_i=1\) for all \(i\in[4]\). Hence the scaled queue lengths coincide with the raw queue lengths. The dispatching probabilities for Policy A are given in Table~\ref{tab:nonmonotone_policy_A}. The total service capacity is \(26\). Since server 1 receives fraction \(0.20\) when it is the longest queue, the static threshold is \(h^*=1/0.20=5\).

\noindent\textbf{Step 1.1.} The interval \((5,20/3)\) is a bottleneck window.

Take \(B=\{1\}\). When server 1 is the longest queue, its drift is \(0.20\,n\lambda-1\), which is strictly positive for \(n\lambda>5\). Thus \(\underline{\delta}(\{1\},n\lambda)=0.20\,n\lambda-1\).

It remains to compare this drift with the normalized cumulative drifts of prefixes of the complement \(\bar B=\{2,3,4\}\). Under the same longest-server-1 regime, the individual drifts of servers \(2,3,4\) are \(0.35\,n\lambda-2\), \(-3\), and \(0.45\,n\lambda-20\), respectively. Because \(\gamma_i=1\), the normalized cumulative drift of any nonempty prefix of \(\bar B\) is the average of the individual drifts in that prefix, and hence is no larger than the largest individual drift. For \(n\lambda\in(5,20/3)\), the largest individual drift in the complement is \(0.35\,n\lambda-2\). Therefore \(\overline{\delta}(\{ 2,3,4\},n\lambda)=0.35\,n\lambda-2\). Since \(0.20\,n\lambda-1>0.35\,n\lambda-2\) is equivalent to \(n\lambda<20/3\), we have
\[
    \underline{\delta}(\{1\},n\lambda) > \max\{\overline{\delta}(\{ 2,3,4\},n\lambda),0\}
\]
for every \(n\lambda\in(5,20/3)\). Hence \(\mathrm{PBD}(n\lambda)\) holds with \(B=\{1\}\) on \((5,20/3)\).

\noindent\textbf{Step 1.2.} The interval \((20/3,8)\) is a sequential catch-up
drainage window.

We verify \(\mathrm{SCD}(n\lambda)\) by constructing the catch-up phases in Definition~\ref{def:fixed_load_scd}. Since \(\gamma_i=1\), the scaled and raw fluid coordinates coincide. We take \(R=2\), \(G_1(\mathbf q):=(\max\{q_3,q_4\}-q_1)^+\), \(G_2(\mathbf q):=(q_1-q_2)^+\), and \(V_D(\mathbf q):=\|\mathbf q\|_{\infty}\).

The first phase makes server 1 catch any leading queue among servers 3 and 4. Starting from any admissible fluid path, let \(s_1\) be the first time at which either \(G_1\) reaches zero, server 2 becomes a longest queue, or the fluid state reaches the origin. If this phase is not skipped, then on \([0,s_1)\) we have \(G_1(\mathbf q(t))>0\), server 2 is not a longest queue, and the longest queue is either server 3 or server 4. If server 3 is longest, then the drift of \(q_3-q_1\) is \((0.25\,n\lambda-3)-(-1)=0.25\,n\lambda-2<0\), since \(n\lambda<8\). If server 4 is longest, then the drift of \(q_4-q_1\) is \((n\lambda-20)-(-1)=n\lambda-19<0\). Hence \(G_1\) decreases at a uniform positive rate. For example, one may take \(\delta_1(n\lambda):=\min\{2-0.25\,n\lambda,19-n\lambda\}>0\).

If server 2 becomes a longest queue during the first phase, the path enters the terminal phase. Otherwise, after \(s_1\), we have \(q_1\ge q_3,q_4\). The second phase makes server 2 catch server 1. Let \(s_2\) be the first time after \(s_1\) at which either \(G_2\) reaches zero, server 2 becomes a longest queue, or the fluid state reaches the origin. If this phase is not skipped, then on \([s_1,s_2)\) we have \(G_2(\mathbf q(t))>0\), and server 1 is the longest queue except possibly at switching instants. Under the longest-server-1 regime, the drift of \(q_1-q_2\) is
\[
    (0.20\,n\lambda-1)-(0.35\,n\lambda-2) = 1-0.15\,n\lambda < 0,
\]
because \(n\lambda>20/3\). Hence \(G_2\) decreases at a uniform positive rate and we can take \(\delta_2(n\lambda):=0.15\,n\lambda-1>0\).

For the terminal phase, we use the leading height \(V_D(\mathbf q)=\|\mathbf q\|_\infty\) as the draining workload. If the path has not already reached the origin, then after the catch-up phases server 2 is a longest queue, except possibly at switching instants. When server 2 is uniquely longest, \(D^+V_D\le a^{(2)}_2(n\lambda)=0.25\,n\lambda-2<0\) on \((20/3,8)\).

It remains to check tied leading faces in the terminal phase. In this phase, \(V_D(q)=\|q\|_\infty\), and after the two catch-up phases server 2 is one of the leading queues. Hence any terminal leading face that can persist must contain server 2. If a tied leading face \(L\) persists over a positive amount of fluid time, then the queues in \(L\) must have the same derivative under some admissible ordering mixture. Otherwise, the tie is broken immediately and the face can occur only at a switching time. Solving these equal-derivative conditions for candidate leading faces containing server 2 on \(n\lambda\in(20/3,8)\) leaves only \(L=\{2\}\), \(L=\{2,3\}\), and \(L=\{1,2,3\}\). The corresponding common leading derivatives are \(0.25\,n\lambda-2\), \(5n\lambda/12-11/3\), and \(5n\lambda/51-83/51\), respectively, all of which are strictly negative on \((20/3,8)\). All other tied leading faces containing server 2 fail the equal-derivative condition with admissible mixture weights, meaning that the corresponding linear equations require at least one mixture weight outside \([0,1]\). Therefore, these faces can occur only at switching boundaries. Hence the terminal Dini-derivative inequality holds for almost every time in the terminal phase, verifying \(\mathrm{SCD}(n\lambda)\) on \((20/3,8)\).

\noindent\textbf{Step 1.3.} The interval \((8,10)\) is a bottleneck window.

Take \(B=\{2\}\). Then \(m=1\), \(\bar B=\{1,3,4\}\), and every \(\eta\in\mathcal P(B)\) has server 2 as the longest queue. When server 2 is longest, Policy A dispatches arrivals to actual servers \(1,2,3,4\) with probabilities \((0,0.25,0.35,0.40)\). Since \(T=1\) and \(\gamma_i=1\), the normalized drift of the candidate bottleneck set is
\[
    \underline{\delta}(\{2\},n\lambda)=0.25\,n\lambda-2,
\]
which is strictly positive for \(n\lambda>8\).

For the complement, the individual drifts of servers \(1,3,4\) under the same longest-server-2 regime are \(-1\), \(0.35\,n\lambda-3\), and \(0.40\,n\lambda-20\), respectively. As \(\eta\) ranges over \(\mathcal P(B)\), the prefixes of \(\bar B\) range over nonempty subsets of \(\{1,3,4\}\). Because \(\gamma_i=1\), the normalized cumulative drift of any such prefix is the average of the individual drifts in that prefix, and hence is no larger than the largest individual drift. For \(n\lambda\in(8,10)\), the largest individual drift in the complement is \(0.35\,n\lambda-3\). Therefore
\[
    \overline{\delta}(\{1,3,4\},n\lambda)=0.35\,n\lambda-3.
\]
Since \(0.25\,n\lambda-2>0.35\,n\lambda-3\) is equivalent to \(n\lambda<10\), and since \(\underline{\delta}(\{2\},n\lambda)>0\) for \(n\lambda>8\), we have
\[
    \underline{\delta}(\{2\},n\lambda) > \max\{\overline{\delta}(\{1,3,4\},n\lambda),0\}
\]
for every \(n\lambda\in(8,10)\). Hence \(\mathrm{PBD}(n\lambda)\) holds with \(B=\{2\}\) on \((8,10)\).

\noindent\textbf{Step 1.4.} The interval \((10,12)\) is a sequential catch-up
drainage window.

We verify \(\mathrm{SCD}(n\lambda)\) by constructing the catch-up phases in Definition~\ref{def:fixed_load_scd}. Since \(\gamma_i=1\), the scaled and raw fluid coordinates coincide. We take \(R=3\), \(G_1(\mathbf q):=(q_4-\max\{q_1,q_2,q_3\})^+\), \(G_2(\mathbf q):=(q_1-q_2)^+\), \(G_3(\mathbf q):=(\max\{q_1,q_2,q_4\}-q_3)^+\), and \(V_D(\mathbf q):=\|\mathbf q\|_{\infty}\).

The first phase makes the set \(\{1,2,3\}\) catch server 4 if server 4 is ahead. Let \(s_1\) be the first time at which \(G_1\) reaches zero, server 3 catches the current leading queue, or the fluid state reaches the origin. If this phase is not skipped, then on \([0,s_1)\), server 4 is the longest queue. The drift of \(q_4-\max\{q_1,q_2,q_3\}\) is at most \((n\lambda-20)-(-3)=n\lambda-17<0\), so \(G_1\) closes at a uniform positive rate.

The second phase makes server 2 catch server 1. After the first phase, server 4 is no longer strictly above the set \(\{1,2,3\}\). If the trajectory has already entered the server-3-leading regime, the remaining catch-up phases are skipped. Otherwise, if \(G_2(\mathbf q)>0\), then server 1 is the longest queue during this phase. The drift of \(q_1-q_2\) is \((0.20\,n\lambda-1)-(0.35\,n\lambda-2)=1-0.15\,n\lambda<0\), because \(n\lambda>20/3\). Hence \(G_2\) closes at a uniform positive rate.

The third phase makes server 3 catch server 2. If the trajectory has already entered the server-3-leading regime after the second phase, this phase is skipped. Otherwise, if \(G_3(\mathbf q)>0\), then server 2 is the longest queue during this phase. The drift of \(q_2-q_3\) is \((0.25\,n\lambda-2)-(0.35\,n\lambda-3)=1-0.10\,n\lambda<0\), because \(n\lambda>10\). Hence \(G_3\) closes at a uniform positive rate.

For the terminal phase, we use the leading height \(V_D(\mathbf q)=\|\mathbf q\|_\infty\). Once \(G_3=0\), server 3 has caught all remaining leading queues. If server 3 is uniquely longest, then \(D^+V_D\le a^{(3)}_3(n\lambda)=0.25\,n\lambda-3<0\) on \((10,12)\).

An analogous tied-face check applies in the terminal phase. After the catch-up phases, server 3 is one of the leading queues, so any persistent terminal leading face must contain server 3. Solving the equal-derivative conditions for candidate leading faces containing server 3 on \(n\lambda\in(10,12)\) leaves only \(L=\{3\}\), \(L=\{1,3\}\), and \(L=\{1,2,3\}\). The corresponding common leading derivatives are \(0.25\,n\lambda-3\), \(n\lambda/9-17/9\), and \(5n\lambda/51-83/51\), respectively, all of which are strictly negative on \((10,12)\). The remaining tied leading faces fail the equal-derivative condition with admissible mixture weights and can occur only at switching boundaries. Hence the terminal Dini-derivative inequality holds for almost every time in the terminal phase, verifying \(\mathrm{SCD}(n\lambda)\) on \((10,12)\).

\noindent\textbf{Step 1.5.} The interval \((12,26)\) is a bottleneck window.

Take \(B=\{3\}\). Then \(m=1\), \(\bar B=\{1,2,4\}\), and every \(\eta\in\mathcal P(B)\) has server 3 as the longest queue. When server 3 is longest, Policy A dispatches arrivals to actual servers \(1,2,3,4\) with probabilities \((0,0,0.25,0.75)\). Since \(T=1\) and \(\gamma_i=1\), the normalized drift of the candidate bottleneck set is
\[
    \underline{\delta}(\{3\},n\lambda)=0.25\,n\lambda-3,
\]
which is strictly positive for \(n\lambda>12\).

For the complement, the individual drifts of servers \(1,2,4\) under the same longest-server-3 regime are \(-1\), \(-2\), and \(0.75\,n\lambda-20\), respectively. As before, the normalized cumulative drift of any nonempty prefix of \(\bar B\) is the average of the individual drifts in that prefix, and hence is no larger than the largest individual drift. On \((12,26)\), the bottleneck drift \(0.25\,n\lambda-3\) is larger than \(-1\) and \(-2\), and it is larger than \(0.75\,n\lambda-20\) because \(n\lambda<34\). Therefore
\[
    \underline{\delta}(\{3\},n\lambda) > \max\{\overline{\delta}(\{1,2,4\},n\lambda),0\}
\]
for every \(n\lambda\in(12,26)\). Hence \(\mathrm{PBD}(n\lambda)\) holds with \(B=\{3\}\) on \((12,26)\).

Combining Steps 1--5 gives
\[
    \mathcal I_{\mathrm T}^{A} = (5,20/3)\cup(8,10)\cup(12,26),
    \qquad
    \mathcal I_{\mathrm S}^{A} = (20/3,8)\cup(10,12).
\]

\noindent\textbf{Step 2.} Policy B has no stable window above \(h^*\).

Under Policy B, if server 1 is longest, it receives fraction \(0.20\) of the load and the remaining arrivals are dispatched to the shortest queue. Thus server 1 has drift \(0.20\,n\lambda-1\), which is positive for \(n\lambda>5\). The first two prefixes of the complement receive no arrivals, and the full complement has normalized drift \((0.80\,n\lambda-25)/3<0\) for all \(n\lambda<26\). Therefore the complement cannot catch server 1 before the total-capacity boundary. Hence server 1 forms a persistent bottleneck throughout \((5,26)\), so \(\mathcal I_{\mathrm T}^{B}=(5,26)\) and \(\mathcal I_{\mathrm S}^{B}=\varnothing\).

\section{Proof of Corollary~\ref{cor_our:rank_invariant_pbd}}
\label{prf:rank_invariant_pbd}

If \(h^*=\sum_{i=1}^n\mu_i\), then the interval \((h^*,\sum_{i=1}^n\mu_i)\) is empty. Hence Assumption~\ref{assump:persistent_bottleneck_dominance} holds trivially. Therefore, we consider the case \(h^*<\sum_{i=1}^n\mu_i\).

Suppose that the policy admits order-based dispatch fractions, i.e., there exists a vector \(\mathbf p=(p_1,\ldots,p_n)\), with \(p_l\ge0\) and \(\sum_{l=1}^n p_l=1\), such that \(f_{l,\eta}=p_l\) for all \(l\in[n]\) and \(\eta\in\mathcal S_n\). Recall that, by the model convention in Section~\ref{subsec:DTQS}, the service rates are ordered as \(\mu_1\le\mu_2\le\cdots\le\mu_n\).

For \(m \in \{ 0,1,...,n\} \), define \(M_m:=\sum_{q=1}^{m}\mu_q\) and \(F_m:=\sum_{l=1}^{m}p_l\), with \(M_0=F_0=0\). Since \(\sum_{l=1}^m f_{l,\eta}=F_m\) does not depend on \(\eta\), and since the minimum service sum over any \(m\)-server subset is achieved by the \(m\) slowest servers \(\{1,\ldots,m\}\), we have
\[
    h^* = \min_{m\in\{1,...,n\}:F_m>0} \frac{M_m}{F_m},
\]
where the convention \(x/0=\infty\) is used.

Fix any \(n\lambda\in(h^*,\sum_{i=1}^n\mu_i)\), and define \(D_m:=M_m-n\lambda F_m\) for \(m \in \{ 0,1,...,n\} \). Since \(n\lambda>h^*\), there exists some \(m\) such that \(D_m<0\). Also, \(D_0=0\) and \(D_n=\sum_{i=1}^n\mu_i-n\lambda>0\). Let \(m^\lambda\) be the smallest minimizer of \(D_m\) over \(m \in \{ 0,1,...,n\} \). Then \(1\le m^\lambda<n\). Set \(B^\lambda:=\{1,\ldots,m^\lambda\}\) and \(\bar B^\lambda:=[n]\setminus B^\lambda\).

We verify Assumption~\ref{assump:persistent_bottleneck_dominance}. First fix \(\eta\in\mathcal P(B^\lambda)\) and \(a\in\{1,...,m^\lambda\}\). The suffix \(\{\eta(m^\lambda-a+1),\ldots,\eta(m^\lambda)\}\) contains \(a\) servers from \(B^\lambda\). Since \(B^\lambda\) consists of the \(m^\lambda\) slowest servers, the largest possible service sum over such an \(a\)-server subset is \(M_{m^\lambda}-M_{m^\lambda-a}\). Therefore,
\begin{align*}
    n\lambda\sum_{l=m^\lambda-a+1}^{m^\lambda}f_{l,\eta} - \sum_{l=m^\lambda-a+1}^{m^\lambda}\mu_{\eta(l)} & = n\lambda(F_{m^\lambda}-F_{m^\lambda-a}) - \sum_{l=m^\lambda-a+1}^{m^\lambda}\mu_{\eta(l)} \\
    & \ge n\lambda(F_{m^\lambda}-F_{m^\lambda-a}) - (M_{m^\lambda}-M_{m^\lambda-a}) \\
    & = D_{m^\lambda-a}-D_{m^\lambda} \\
    & \overset{(a)}{>} 0.
\end{align*}
where (a) follows from the fact that \(m^\lambda\) is the smallest minimizer of \(D_m\) and \(D_{m^\lambda-a}>D_{m^\lambda}\) for every \(a \in \{ 1,...,m^\lambda \} \). Since there are only finitely many choices of \(\eta\) and \(a\), and since all \(\gamma_i>0\), it follows that
\[
    \underline{\delta}(B^\lambda,n\lambda)>0.
\]

Next fix \(\eta\in\mathcal P(B^\lambda)\) and \(b\in\{1,...,n-m^\lambda\}\). The prefix \(\{\eta(m^\lambda+1),\ldots,\eta(m^\lambda+b)\}\) contains \(b\) servers from \(\bar B^\lambda=\{m^\lambda+1,\ldots,n\}\). Since the service rates are ordered in increasing order, the smallest possible service sum over such a \(b\)-server subset is \(M_{m^\lambda+b}-M_{m^\lambda}\). Therefore,
\begin{align*}
    n\lambda\sum_{l=m^\lambda+1}^{m^\lambda+b}f_{l,\eta} -
    \sum_{l=m^\lambda+1}^{m^\lambda+b}\mu_{\eta(l)}
    & = n\lambda(F_{m^\lambda+b}-F_{m^\lambda}) - \sum_{l=m^\lambda+1}^{m^\lambda+b}\mu_{\eta(l)} \\
    & \le n\lambda(F_{m^\lambda+b}-F_{m^\lambda}) - (M_{m^\lambda+b}-M_{m^\lambda}) \\
    & = D_{m^\lambda}-D_{m^\lambda+b} \\
    & \overset{(a)}{\leq} 0,
\end{align*}
where (a) follows because \(D_{m^\lambda}\) is a global minimum. Hence
\[
    \overline{\delta}(\bar B^\lambda,n\lambda)\le0.
\]

Combining the two bounds gives
\[
    \underline{\delta}(B^\lambda,n\lambda)>0 \ge \overline{\delta}(\bar B^\lambda,n\lambda).
\]
Therefore,
\[
    \underline{\delta}(B^\lambda,n\lambda) > \max\left\{ \overline{\delta}(\bar B^\lambda,n\lambda),0 \right\}.
\]

\section{Proof of Theorem~\ref{thm_our:stable_withrate_withratio_necessary_general}}
\label{prf:stable_withrate_withratio_necessary_general}

We prove the theorem by considering the two cases separately.

\noindent\textbf{Step 1.} Consider the case
\[
    h^*<n\lambda<\sum_{i=1}^n\mu_i.
\]
Since policy \(\pi\in\Pi\) satisfies Assumption~\ref{assump:persistent_bottleneck_dominance},
\(\mathrm{PBD}(n\lambda)\) holds at this load. Therefore, by Part~1 of Theorem~\ref{thm_our:fixed_load_stability_certificates}, the induced Markov chain is transient from some initial state. Under the state-space convention, this gives transience on the communicating class under consideration.

\noindent\textbf{Step 2.} Consider the case
\[
    n\lambda>\sum_{i=1}^n\mu_i.
\]
In this case, the total arrival rate exceeds the total service capacity. Therefore, even if all potential service were fully used, the total queue length has positive linear drift by the strong law of large numbers. Hence the total queue length diverges, and the Markov chain is transient.

\section{Proof of Theorem~\ref{thm_our:SSC_withrate}} \label{prf:thm_SSC_withrate_withratio}
The proof of Theorem~\ref{thm_our:SSC_withrate} relies on a multi-step drift analysis based on Theorem~\ref{thm:hajek_drift}. Specifically, we verify the bounded-increment condition and the negative-drift condition required by Theorem~\ref{thm:hajek_drift}. \\

Using the auxiliary notation in Section~\ref{subsec:auxiliary_notation}, for each $l\in[n]$,
\[
    O_{\perp,l}(t) = \sqrt{\gamma_l} \left( \frac{Q_l(t)}{\gamma_l} - \frac{\|\mathbf Q(t)\|_1}{\|\boldsymbol\gamma\|_1} \right).
\]
Consequently, we have
\[
    \|\mathbf O_\perp(t)\|_2^2 = \sum_{l=1}^n \gamma_l \left( \frac{Q_l(t)}{\gamma_l} - \frac{\|\mathbf Q(t)\|_1}{\|\boldsymbol\gamma\|_1} \right)^2 = \|\mathbf Q^{(\gamma)}_\perp(t)\|_\gamma^2.
\]
Thus, it suffices to prove
\[
    \mathbb E[\|\mathbf O_\perp(t)\|_2^2]\le N_\perp^2(n,T).
\]

We define two Lyapunov functions for any $\sqrt{\boldsymbol{\gamma}}$-scaled queue-length vector $\mathbf{o} \in\mathbb R_+^n$.
\begin{align*}
    V_{\perp}(\mathbf{o}) &= \| \mathbf{o}_{\perp} \|_2, \\
    W_{\parallel}(\mathbf{o}) &= \| \mathbf{o}_{\parallel} \|_2^2.
\end{align*}
Let $t$ be a multiple of $T$, and define $t^{\prime} = t/T$. For the embedded chain, define the drift operator by
\begin{align}
    \Delta V_{\perp}(\tilde{\mathbf{O}}(t^{\prime})) & := V_{\perp}(\tilde{\mathbf{O}}(t^{\prime}+1))-V_{\perp}(\tilde{\mathbf{O}}(t^{\prime})) \notag \\
    \Delta W_{\parallel}(\tilde{\mathbf{O}}(t^{\prime})) & := W_{\parallel}(\tilde{\mathbf{O}}(t^{\prime}+1))-W_{\parallel}(\tilde{\mathbf{O}}(t^{\prime})) \notag 
\end{align} 

\noindent \textbf{Step 1}. We show that the drift of $V_{\perp}(\tilde{\mathbf{O}}(t'))$ is uniformly bounded in absolute value, which implies the bounded-increment condition B2 in Theorem~\ref{thm:hajek_drift}.

\begin{lemma} \label{lem_our:bd_absolute_v_perp_O}
    For any policy $\pi \in \Pi$, we have:
    \begin{align}
        | \Delta V_{\perp}(\tilde{\mathbf{O}}(t')) | \leq \frac{2}{\sqrt{\gamma_{\min}}} T ( nA_{\max} + \sqrt{n} S_{\max} ), \qquad \forall \, \tilde{\mathbf{O}}(t^{\prime}) \in \mathbb{R}^n_{+}, t^{\prime} \in \mathbb{N} \notag
    \end{align}
\end{lemma}
The proof is given in Appendix \ref{prf:bd_absolute_v_perp_O}.\\

\noindent \textbf{Step 2}. We show that the drift of $V_{\perp}(\tilde{\mathbf{O}}(t'))$ is negative for all $\| \tilde{\mathbf{O}}_{\perp}(t^{\prime}) \|_2 $ large enough, which verifies the negative drift Condition B1 in Theorem~\ref{thm:hajek_drift}.

\begin{lemma} \label{lem_our:bd_negative_v_perp_O}
    For any policy $\pi \in \Pi$, suppose that the condition in Equation~\eqref{eq:strictlyless_withrate} holds. Then there exist positive constants $\delta^*>0$ and $\xi^*>0$ such that, if
    \[
        \epsilon = \sum\limits_{l=1}^n \mu_l - n\lambda 
        < \frac{\delta^* \| \boldsymbol{\mu} \|_1}{\xi^*},
    \]
    then there exists a positive constant $K < \infty$ such that, for any $\sqrt{\boldsymbol{\gamma}}$-scaled queue-length vector $\mathbf{o}$ with corresponding queue-length vector $\mathbf{q}$, if
    \begin{align*}
        \| \mathbf{q} \|_1 \geq \frac{T S_{\max}\| \boldsymbol{\gamma} \|_1}{\gamma_{\min}},
    \end{align*}
    then
    \begin{align}
        \mathbb{E} \left [ \Delta V_{\perp}(\tilde{\mathbf{O}}(t'))\  |\ \tilde{\mathbf{O}}(t')=\mathbf{o} \right ]
        \leq
        - \frac{ T \xi^* \left ( \frac{\delta^*}{\xi^*}\| \boldsymbol{\mu} \|_1 - \epsilon \right )}{\sqrt{n\gamma_{\max}}}
        + \frac{K}{2 \| \mathbf{o}_{\perp} \|_2 }.
        \notag
    \end{align}
\end{lemma} 
The proof is given in Appendix \ref{prf:bd_negative_delta_v_perp_O}.\\ 

\noindent \textbf{Step 3}. We show that $ \mathbb{E} \big[ \left\| \mathbf{O}_{\perp} \right\|_2^2 \big]$ is bounded by a constant.

\begin{lemma} \label{lem_our:bd_O_perp}
    Under the assumptions of Lemma \ref{lem_our:bd_negative_v_perp_O}, normalize the scaling vector without loss of generality so that \(\|\boldsymbol{\gamma}\|_1=1\). Then there exists a function $N_{\perp}(n,T)$, independent of the slack $\epsilon$, such that
    \[ \mathbb{E} \left[ \left\| \mathbf{O}_{\perp} \right\|_2^2 \right ] \leq N_{\perp}(n,T)^2, \]
    for all $T \ge 1$. For fixed $n$, $N_{\perp}(n,T) \in \Theta(T)$.
\end{lemma}
The proof is given in Appendix \ref{prf:bd_O_perp}, and the order analysis of $N_{\perp}(n,T)$ is given in Appendix \ref{dis:order_N_2}. \\

\noindent \textbf{Step 4}. We translate the bound back to the original state-space collapse statement.

By the equivalence shown at the beginning of the proof, Lemma~\ref{lem_our:bd_O_perp} implies
\[
    \mathbb E\left[ \|\mathbf Q^{(\gamma)}_\perp\|_\gamma^2 \right] = \mathbb E\left[ \|\mathbf O_\perp\|_2^2 \right] \leq N_{\perp}^2(n,T).
\]
The bound \(N_\perp(n,T)\) is independent of the slack \(\epsilon\), and for fixed \(n\), \(N_\perp(n,T)\in\Theta(T)\). This proves Theorem~\ref{thm_our:SSC_withrate}.

\subsection{Proof of Lemma~\ref{lem_our:bd_absolute_v_perp_O}} \label{prf:bd_absolute_v_perp_O}

We show that $\Delta V_{\perp}(\tilde{\mathbf{O}}(t'))$ is uniformly absolutely bounded as follows.
\begin{align}
    | \Delta V_{\perp}(\tilde{\mathbf{O}}(t')) | & = | \| \tilde{\mathbf{O}}_{\perp}(t'+1) \|_2 - \| \tilde{\mathbf{O}}_{\perp}(t') \|_2 | \notag \\
    & \overset{(a)}{\leq} \| \tilde{\mathbf{O}}_{\perp}(t'+1) - \tilde{\mathbf{O}}_{\perp}(t') \|_2 \notag \\
    & \overset{(b)}{=} \| \tilde{\mathbf{O}}(t'+1) - \tilde{\mathbf{O}}_{\parallel}(t'+1) - \tilde{\mathbf{O}}(t') + \tilde{\mathbf{O}}_{\parallel}(t')  \|_2 \notag \\
    & \overset{(c)}{\leq} \| \tilde{\mathbf{O}}(t'+1) - \tilde{\mathbf{O}}(t') \|_2 + \| \tilde{\mathbf{O}}_{\parallel}(t'+1) - \tilde{\mathbf{O}}_{\parallel}(t')  \|_2 \notag \\
    & \overset{(d)}{\leq} 2 \| \tilde{\mathbf{O}}(t'+1) - \tilde{\mathbf{O}}(t') \|_2 \notag \\
    & \overset{(e)}{\leq} \frac{2}{\sqrt{\gamma_{\min}}} T(nA_{\max}+\sqrt{n}S_{\max}), \notag
\end{align}
where (a) follows from the fact that $ | \| x \|_2 - \| y \|_2 | \leq \| x-y \|_2$ for $ x,y\in\mathbb{R}^n $, (b) follows from the fact that $ \mathbf{O}_{\perp}(t) + \mathbf{O}_{\parallel}(t) = \mathbf{O}(t) $, (c) follows from the triangle inequality, (d) follows from the fact that $\mathbf{O}_{\parallel}(t)$ is the projection of $\mathbf{O}(t)$ onto $c$, which implies that $ \| \mathbf{O}_{\parallel}(t'+1) - \mathbf{O}_{\parallel}(t') \|_2 \leq \| \mathbf{O}(t'+1) - \mathbf{O}(t') \|_2 $, and (e) follows in the same way as in the derivation of Equation~\eqref{eq:absolutebound_withrate}.

\subsection{Proof of Lemma~\ref{lem_our:bd_negative_v_perp_O}} \label{prf:bd_negative_delta_v_perp_O}
We first show that $ \Delta V_{\perp}(\tilde{\mathbf{O}}(t')) $ can be bounded by a function of $ \Delta W(\tilde{\mathbf{O}}(t')) $ and $ \Delta W_{\parallel}(\tilde{\mathbf{O}}(t')) $.
\begin{align}
    \Delta V_{\perp}(\tilde{\mathbf{O}}(t')) & = \| \tilde{\mathbf{O}}_{\perp}(t'+1) \|_2 - \| \tilde{\mathbf{O}}_{\perp}(t') \|_2 \notag \\
    & = \sqrt{\| \tilde{\mathbf{O}}_{\perp}(t'+1) \|_2^2} - \sqrt{\| \tilde{\mathbf{O}}_{\perp}(t') \|_2^2} \notag \\
    & \overset{(a)}{\leq} \frac{1}{2\| \tilde{\mathbf{O}}_{\perp}(t') \|_2}\left ( \| \tilde{\mathbf{O}}_{\perp}(t'+1) \|_2^2 - \| \tilde{\mathbf{O}}_{\perp}(t') \|_2^2 \right ) \notag \\
    & \overset{(b)}{=} \frac{1}{2\| \tilde{\mathbf{O}}_{\perp}(t') \|_2}\left ( \| \tilde{\mathbf{O}}(t'+1) \|_2^2 - \| \tilde{\mathbf{O}}_{\parallel}(t'+1) \|_2^2 - \| \tilde{\mathbf{O}}(t') \|_2^2 + \| \tilde{\mathbf{O}}_{\parallel}(t') \|_2^2 \right ) \notag \\
    & = \frac{1}{2\| \tilde{\mathbf{O}}_{\perp}(t') \|_2} \Big[\Delta W(\tilde{\mathbf{O}}(t')) -\Delta W_{\parallel}(\tilde{\mathbf{O}}(t')) \Big], \label{eq:difference_drifts_withrate}
\end{align}
where (a) follows from the fact that $ g(x)=\sqrt{x} $ is concave for $ x \geq 0 $, and thus $ g(y) - g(x)\leq (y-x)g^{\prime}(x) $, and (b) follows from the Pythagorean theorem.

To bound $\mathbb{E}[\Delta V_{\perp}(\tilde{\mathbf{Q}}(t))]$, we show that it is negative whenever $\|\mathbf{o}_{\perp}\|_2$ is sufficiently large. To this end, we express it in terms of the drifts of $W$ and $W_{\perp}$ and bound these two terms separately. This yields the desired negative drift.

First, as in the proof of Lemma~\ref{lem_our:bd_negative_delta_vO} (cf. Equation~\eqref{eq:stable_aveofwx_withrate}), we assume that $\| \tilde{\mathbf{Q}}(t') \|_1 \geq TS_{\max}\| \boldsymbol{\gamma}\|_1/\gamma_{\min}$. Then we have
\begin{align}
    & E\left[\Delta W\left(\tilde{\mathbf{O}}(t')\right)\ | \ \tilde{\mathbf{O}}(t')= \mathbf{o} \right] \notag \\ 
    & \leq 2 \frac{q_{\eta_{t'}(1)}}{\gamma_{\eta_{t'}(1)}} T \left ( n\lambda f_{1,\eta_{t'}} - \mu_{\eta_{t'}(1)} \right )+ \frac{ T^2 \left ( n\lambda f_{1,\eta_{t'}}-\mu_{\eta_{t'}(1)} \right )^2 + Tf_{1,\eta_{t'}}n\sigma_{\lambda}^2 + T^2 n^2\lambda^2 \tau_{1,\eta_{t'}}^2 +T\sigma_{\eta_{t'}(1)}^2 }{\gamma_{\eta_{t'}(1)}} \notag \\
    & \quad + \sum\limits_{ l \in I_2 } \left [ 2\frac{q_{\eta_{t'}(l)}}{\gamma_{\eta_{t'}(l)}} T \left ( n\lambda f_{l,\eta_{t'}} - \mu_{\eta_{t'}(l)} \right )+ \frac{T^2 \left ( n\lambda f_{l,\eta_{t'}}-\mu_{\eta_{t'}(l)} \right )^2 + Tf_{l,\eta_{t'}}n\sigma_{\lambda}^2 + T^2 n^2\lambda^2 \tau_{l,\eta_{t'}}^2 +T\sigma_{\eta_{t'}(l)}^2 }{\gamma_{\eta_{t'}(l)}} \right ] \notag \\
    & \quad + \sum\limits_{ l\in I_3 } \left [ 2\frac{q_{\eta_{t'}(l)}}{\gamma_{\eta_{t'}(l)}} T n\lambda f_{l,\eta_{t'}} + \frac{T^2 n^2\lambda^2 f^2_{l,\eta_{t'}} + Tf_{l,\eta_{t'}}n\sigma_{\lambda}^2 + T^2 n^2\lambda^2 \tau_{l,\eta_{t'}}^2}{\gamma_{\eta_{t'}(l)}} \right ] \notag \\
    & = 2T \left ( n\lambda f_{1,\eta_{t'}} - \mu_{\eta_{t'}(1)} \right )\frac{q_{\eta_{t'}(1)}}{\gamma_{\eta_{t'}(1)}} + \sum\limits_{ l \in I_2 } 2T \left ( n\lambda f_{l,\eta_{t'}} - \mu_{\eta_{t'}(l)} \right )\frac{q_{\eta_{t'}(l)}}{\gamma_{\eta_{t'}(l)}} + \sum\limits_{ l \in I_3 } 2T n\lambda f_{l,\eta_{t'}} \frac{q_{\eta_{t'}(l)}}{\gamma_{\eta_{t'}(l)}} \notag \\
    & \quad + \frac{T^2 \left ( n\lambda f_{1,\eta_{t'}}-\mu_{\eta_{t'}(1)} \right )^2 + Tf_{1,\eta_{t'}}n\sigma_{\lambda}^2 + T^2 n^2\lambda^2 \tau_{1,\eta_{t'}}^2 +T\sigma_{\eta_{t'}(1)}^2}{\gamma_{\eta_{t'}(1)}} \notag \\
    & \quad + \sum\limits_{ l \in I_2 } \left [ \frac{T^2 \left ( n\lambda f_{l,\eta_{t'}}-\mu_{\eta_{t'}(l)} \right )^2 + Tf_{l,\eta_{t'}}n\sigma_{\lambda}^2 + T^2 n^2\lambda^2 \tau_{l,\eta_{t'}}^2 +T\sigma_{\eta_{t'}(l)}^2}{\gamma_{\eta_{t'}(l)}} \right ] \notag \\
    & \quad + \sum\limits_{ l\in I_3 } \left [ \frac{ T^2 n^2\lambda^2 f^2_{l,\eta_{t'}} + Tf_{l,\eta_{t'}}n\sigma_{\lambda}^2 + T^2 n^2\lambda^2 \tau_{l,\eta_{t'}}^2 }{\gamma_{\eta_{t'}(l)}} \right ] \notag \\
    & \leq 2T \left ( n\lambda f_{1,\eta_{t'}} - \mu_{\eta_{t'}(1)} \right )\frac{q_{\eta_{t'}(1)}}{\gamma_{\eta_{t'}(1)}} + \sum\limits_{ l \in I_2 } 2T \left ( n\lambda f_{l,\eta_{t'}} - \mu_{\eta_{t'}(l)} \right )\frac{q_{\eta_{t'}(l)}}{\gamma_{\eta_{t'}(l)}} + \sum\limits_{ l \in I_3 } 2T n\lambda f_{l,\eta_{t'}} \frac{q_{\eta_{t'}(l)}}{\gamma_{\eta_{t'}(l)}} \notag \\    
    & \quad + \frac{T^2 \left ( n\lambda f_{ \max } + \mu_{\max} \right )^2 + Tf_{ \max }n\sigma_{\lambda}^2 + T^2 n^2\lambda^2 \tau_{ \max }^2 + T\sigma_{\max}^2}{\gamma_{\min}} \notag \\
    & \quad + \frac{(n-1)}{\gamma_{\min}} \left [ T^2 \left ( n\lambda f_{\max} + \mu_{\max} \right )^2 + Tf_{ \max } n\sigma_{\lambda}^2 + T^2 n^2\lambda^2 \tau_{ \max }^2 +T\sigma_{\max}^2 \right ] \notag \\
    & \quad + \frac{(n-1)}{\gamma_{\min}} \left [ T^2 n^2\lambda^2 f^2_{\max} + Tf_{ \max }n\sigma_{\lambda}^2 + T^2 n^2\lambda^2 \tau_{ \max }^2 \right ]
    \label{eq:ssc_aveofssc_withrate}
\end{align}
The queue-length dependent terms in Equation~\eqref{eq:ssc_aveofssc_withrate} can be bounded as follows:
\begin{align}
    & 2T \left ( n\lambda f_{1,\eta_{t'}} - \mu_{\eta_{t'}(1)} \right )\frac{q_{\eta_{t'}(1)}}{\gamma_{\eta_{t'}(1)}} + \sum\limits_{ l \in I_2 } 2T \left ( n\lambda f_{l,\eta_{t'}} - \mu_{\eta_{t'}(l)} \right )\frac{q_{\eta_{t'}(l)}}{\gamma_{\eta_{t'}(l)}} + \sum\limits_{ l \in I_3 } 2T n\lambda f_{l,\eta_{t'}} \frac{q_{\eta_{t'}(l)}}{\gamma_{\eta_{t'}(l)}} \notag \\
    & = 2T \left ( n\lambda f_{1,\eta_{t'}} - \mu_{\eta_{t'}(1)} \right )\frac{q_{\eta_{t'}(1)}}{\gamma_{\eta_{t'}(1)}} + 2T\sum\limits_{ l \in I_2 } \left ( n\lambda f_{l,\eta_{t'}}- \mu_{\eta_{t'}(l)} \right )\frac{q_{\eta_{t'}(l)}}{\gamma_{\eta_{t'}(l)}} + 2T\sum\limits_{ l \in I_3 } \left ( n\lambda f_{l,\eta_{t'}} - \mu_{\eta_{t'}(l)} \right )\frac{q_{\eta_{t'}(l)}}{\gamma_{\eta_{t'}(l)}} \notag \\
    & \quad + 2T\sum\limits_{ l \in I_3 } \mu_{\eta_{t'}(l)} \frac{q_{\eta_{t'}(l)}}{\gamma_{\eta_{t'}(l)}} \notag \\
    & \overset{(a)}{=} 2T\sum\limits_{ l = 1}^n \left ( n\lambda f_{l,\eta_{t'}} - \mu_{\eta_{t'}(l)} \right )\frac{q_{\eta_{t'}(l)}}{\gamma_{\eta_{t'}(l)}} + 2T\sum\limits_{ l \in I_3 } \mu_{\eta_{t'}(l)} \frac{q_{\eta_{t'}(l)}}{\gamma_{\eta_{t'}(l)}} \notag \\
    & = -2T \epsilon \frac{q_{\eta_{t'}(1)}}{\gamma_{\eta_{t'}(1)}} + 2T\left ( n\lambda f_{1,\eta_{t'}} - \mu_{\eta_{t'}(1)} + \epsilon \right )\frac{q_{\eta_{t'}(1)}}{\gamma_{\eta_{t'}(1)}} + 2T \sum\limits_{ l = 2}^n \left ( n\lambda f_{l,\eta_{t'}} - \mu_{\eta_{t'}(l)} \right )\frac{q_{\eta_{t'}(l)}}{\gamma_{\eta_{t'}(l)}} + 2T\sum\limits_{ l \in I_3 } \mu_{\eta_{t'}(l)} \frac{q_{\eta_{t'}(l)}}{\gamma_{\eta_{t'}(l)}} \label{eq:SSC_defbeta}
\end{align} 
where (a) follows from the fact that $|I_2 \cup I_3| = n-1 $. Then we denote $\beta_{l,\eta_{t'}}$ as follows:
\begin{align*}
    \beta_{l,\eta_{t'}} = \left \{ \begin{array}{ll}
       n\lambda f_{1, \eta_{t'}} - \mu_{\eta_{t'}(1)} + \epsilon,  & \quad l=1 \\
       n\lambda f_{l, \eta_{t'}} - \mu_{\eta_{t'}(l)},  & \quad l \in [n]\setminus \{ 1\}
    \end{array} \right.
\end{align*}
Equation~\eqref{eq:SSC_defbeta} can be arranged as follows:
\begin{align}
    & -2T \epsilon \frac{q_{\eta_{t'}(1)}}{\gamma_{\eta_{t'}(1)}} + 2T\left ( n\lambda f_{1,\eta_{t'}} - \mu_{\eta_{t'}(1)} + \epsilon \right )\frac{q_{\eta_{t'}(1)}}{\gamma_{\eta_{t'}(1)}} + 2T \sum\limits_{ l = 2}^n \left ( n\lambda f_{l,\eta_{t'}} - \mu_{\eta_{t'}(l)} \right )\frac{q_{\eta_{t'}(l)}}{\gamma_{\eta_{t'}(l)}} + 2T\sum\limits_{ l \in I_3 } \mu_{\eta_{t'}(l)} \frac{q_{\eta_{t'}(l)}}{\gamma_{\eta_{t'}(l)}} \notag \\
    & = -2T \epsilon \frac{q_{\eta_{t'}(1)}}{\gamma_{\eta_{t'}(1)}} + 2T \sum\limits_{l=1}^{n} \beta_{l,\eta_{t'}} \frac{q_{\eta_{t'}(l)}}{\gamma_{\eta_{t'}(l)}} + 2T\sum\limits_{ l \in I_3 } \mu_{\eta_{t'}(l)} \frac{q_{\eta_{t'}(l)}}{\gamma_{\eta_{t'}(l)}} \notag \\
    & = -2T \epsilon \frac{q_{\eta_{t'}(1)}}{\gamma_{\eta_{t'}(1)}} + 2T \beta_{1,\eta_{t'}} \left (\frac{q_{\eta_{t'}(1)}}{\gamma_{\eta_{t'}(1)}} - \frac{q_{\eta_{t'}(2)}}{\gamma_{\eta_{t'}(2)}} \right ) + 2T \left ( \sum\limits_{r=1}^2 \beta_{r,\eta_{t'}} \right ) \left (\frac{q_{\eta_{t'}(2)}}{\gamma_{\eta_{t'}(2)}} - \frac{q_{\eta_{t'}(3)}}{\gamma_{\eta_{t'}(3)}} \right ) \notag \\ 
    & \quad + ... + 2T \left ( \sum\limits_{r=1}^{n-1} \beta_{r,\eta_{t'}} \right ) \left (\frac{q_{\eta_{t'}(n-1)}}{\gamma_{\eta_{t'}(n-1)}} - \frac{q_{\eta_{t'}(n)}}{\gamma_{\eta_{t'}(n)}} \right ) + 2T \sum_{r=1}^n \beta_{r,\eta_{t'}} \frac{q_{\eta_{t'}(n)}}{\gamma_{\eta_{t'}(n)}} + 2T \sum\limits_{l \in I_3} \mu_{\eta_{t'}(l)}\frac{q_{\eta_{t'}(l)}}{\gamma_{\eta_{t'}(l)}} \notag \\
    & = -2T \epsilon \frac{q_{\eta_{t'}(1)}}{\gamma_{\eta_{t'}(1)}} + 2T \sum\limits_{l=1}^{n-1} \left [ \left ( \sum\limits_{r=1}^l \beta_{r,\eta_{t'}} \right ) \left (\frac{q_{\eta_{t'}(l)}}{\gamma_{\eta_{t'}(l)}} - \frac{q_{\eta_{t'}(l+1)}}{\gamma_{\eta_{t'}(l+1)}} \right ) \right ] + 2T \sum_{r=1}^n \beta_{r,\eta_{t'}} \frac{q_{\eta_{t'}(n)}}{\gamma_{\eta_{t'}(n)}} + 2T \sum\limits_{l \in I_3} \mu_{\eta_{t'}(l)}\frac{q_{\eta_{t'}(l)}}{\gamma_{\eta_{t'}(l)}} \notag \\
    & \overset{(a)}{=} -2T \epsilon \frac{q_{\eta_{t'}(1)}}{\gamma_{\eta_{t'}(1)}} + 2T \sum\limits_{l=1}^{n-1} \left [ \left ( \sum\limits_{r=1}^l \beta_{r,\eta_{t'}} \right ) \left (\frac{q_{\eta_{t'}(l)}}{\gamma_{\eta_{t'}(l)}} - \frac{q_{\eta_{t'}(l+1)}}{\gamma_{\eta_{t'}(l+1)}} \right ) \right ] + 2T \sum\limits_{l \in I_3} \mu_{\eta_{t'}(l)}\frac{q_{\eta_{t'}(l)}}{\gamma_{\eta_{t'}(l)}}\label{eq:SSC_sumofbeta}
\end{align}
where (a) follows from the fact that $ \sum_{r=1}^n \beta_{r,\eta_{t'}} = 0 $. 

In Equation~\eqref{eq:SSC_sumofbeta}, $\sum_{r=1}^{l} \beta_{r,\eta_{t'}}$ can be bounded, for all $ l\in [n]\setminus \{ n \}$, as follows:
\begin{align*}
    \sum\limits_{r=1}^l \beta_{r,\eta_{t'}} & = \sum\limits_{r=1}^l \left ( n\lambda f_{r,\eta_{t'}} - \mu_{\eta_{t'}(r)} \right ) + \epsilon \\
    & = \sum\limits_{r=1}^l \left [ \left ( \| \boldsymbol{\mu} \|_1 - \epsilon \right) f_{r,\eta_{t'}} - \mu_{\eta_{t'}(r)} \right ] + \epsilon \\
    &  = \| \boldsymbol{\mu} \|_1 \sum_{r=1}^l f_{r,\eta_{t'}} - \epsilon \sum\limits_{r=1}^l f_{r,\eta_{t'}} - \sum_{r=1}^l \mu_{\eta_{t'}(r)} + \epsilon \\
    & = \| \boldsymbol{\mu} \|_1 \left ( \sum\limits_{r=1}^l f_{r,\eta_{t'}} - \frac{\sum\limits_{r=1}^l \mu_{\eta_{t'}(r)}}{ \| \boldsymbol{\mu} \|_1 } \right ) + \epsilon \left ( 1 - \sum\limits_{r=1}^l f_{r,\eta_{t'}} \right ) \\
    & \overset{(a)}{\leq} - \delta^* \| \boldsymbol{\mu} \|_1 + \epsilon \left ( 1 - \sum\limits_{r=1}^l f_{r,\eta_{t'}} \right )  \notag \\
    & \overset{(b)}{\leq} - \delta^* \| \boldsymbol{\mu} \|_1 + \xi^* \epsilon  \notag \\
    & \leq - \left ( \delta^* \| \boldsymbol{\mu} \|_1 - \xi^* \epsilon \right ) \\
    & \overset{(c)}{<} 0,
\end{align*}
where for (a), we define
\[ \delta^* = \min_{ \{ \eta \in S_n \} } \left \{  \min_{ \{ l \in [n]\setminus \{ n \} \} } \left \{ \sum_{r=1}^l \frac{\mu_{\eta(r)}}{\| \boldsymbol{\mu} \|_1} - \sum_{r=1}^l f_{r,\eta}  \right \} \right \}, \]
which is positive due to Equation~\eqref{eq:strictlyless_withrate}. For (b), we define 
\[ \xi^* = \max_{ \{ \eta \in S_n \} } \left \{ \max_{l \in [n]\setminus \{ n\} } \left \{ 1 - \sum_{r=1}^l f_{r, \eta} \right \} \right \}, \]
which is positive due to Equation~\eqref{eq:strictlyless_withrate}. That is, we have
\[ \sum_{l=1}^m f_{l,\eta}  < \frac{ \sum_{l=1}^m \mu_{\eta (l)} }{ \| \boldsymbol{\mu} \|_1 } < 1 , \forall \, m \in [n]\setminus \{ n \}, \, \eta \in S_n. \]
Finally, (c) holds when $ \epsilon < \delta^* \| \boldsymbol{\mu} \|_1 / \xi^* $. Then, Equation~\eqref{eq:SSC_sumofbeta} is further bounded as follows:
\begin{align}
    & -2T \epsilon \frac{q_{\eta_{t'}(1)}}{\gamma_{\eta_{t'}(1)}} + 2T \sum\limits_{l=1}^{n-1} \left [ \left ( \sum\limits_{r=1}^l \beta_{r,\eta_{t'}} \right ) \left (\frac{q_{\eta_{t'}(l)}}{\gamma_{\eta_{t'}(l)}} - \frac{q_{\eta_{t'}(l+1)}}{\gamma_{\eta_{t'}(l+1)}} \right ) \right ]  + 2T \sum\limits_{l \in I_3} \mu_{\eta_{t'}(l)}\frac{q_{\eta_{t'}(l)}}{\gamma_{\eta_{t'}(l)}} \notag \\
    & \leq -2T \epsilon \frac{q_{\eta_{t'}(1)}}{\gamma_{\eta_{t'}(1)}} - 2T \left ( \delta^* \| \boldsymbol{\mu} \|_1 - \xi^* \epsilon \right ) \sum\limits_{l=1}^{n-1} \left (\frac{q_{\eta_{t'}(l)}}{\gamma_{\eta_{t'}(l)}} - \frac{q_{\eta_{t'}(l+1)}}{\gamma_{\eta_{t'}(l+1)}} \right ) + 2T \sum\limits_{l \in I_3} \mu_{\eta_{t'}(l)}\frac{q_{\eta_{t'}(l)}}{\gamma_{\eta_{t'}(l)}} \notag \\
    & = -2T \epsilon \frac{q_{\eta_{t'}(1)}}{\gamma_{\eta_{t'}(1)}} - 2T \left ( \delta^* \| \boldsymbol{\mu} \|_1 - \xi^* \epsilon \right ) \left ( \frac{q_{\eta_{t'}(1)}}{\gamma_{\eta_{t'}(1)}} - \frac{q_{\eta_{t'}(n)}}{\gamma_{\eta_{t'}(n)}} \right ) + 2T \sum\limits_{l \in I_3} \mu_{\eta_{t'}(l)} \frac{q_{\eta_{t'}(l)}}{\gamma_{\eta_{t'}(l)}} \notag \\
    & \leq -2T \epsilon \frac{q_{\eta_{t'}(1)}}{\gamma_{\eta_{t'}(1)}} - 2T \left ( \delta^* \| \boldsymbol{\mu} \|_1 - \xi^* \epsilon \right ) \left ( \frac{q_{\eta_{t'}(1)}}{\gamma_{\eta_{t'}(1)}} - \frac{q_{\eta_{t'}(n)}}{\gamma_{\eta_{t'}(n)}} \right ) + \frac{2T n \mu_{\max}T S_{\max}}{\gamma_{\min}}
    \label{eq:ssc_case1_withrate}
\end{align}
The first term in Equation~\eqref{eq:ssc_case1_withrate} is bounded as follows:
\begin{align*}
    -2T \epsilon \frac{q_{\eta_{t'}(1)}}{\gamma_{\eta_{t'}(1)}} & \overset{(a)}{\leq} -2T \epsilon \frac{\| \mathbf{q} \|_1}{ \| \boldsymbol{\gamma} \|_1 } \\
    & = -2T\epsilon \frac{1} {\sqrt{ \| \boldsymbol{\gamma} \|_1 }} \sqrt{ \sum_{l=1}^n \left ( \frac{ \| \mathbf{q} \|_1 }{ \| \boldsymbol{\gamma} \|_1 } \sqrt{\gamma_{\eta_{t'}(l)}} \right )^2 } \notag \\
    & = -2T \epsilon \frac{1}{ \sqrt{ \| \boldsymbol{\gamma} \|_1 } } \| \mathbf{o}_{\parallel} \|_2,
\end{align*}
where $(a)$ follows from Equation~\eqref{eq:scaled_q1_greater_norm}.\\

The second term in Equation~\eqref{eq:ssc_case1_withrate} is bounded as follows:
\begin{align*}
    \| \mathbf{o}_{\perp} \|_2^2 & = \sum\limits_{l=1}^n \left ( \frac{q_{\eta_{t'}(l)}}{\sqrt{\gamma_{\eta_{t'}(l)}}} - \frac{ \| \mathbf{q} \|_1 }{ \| \boldsymbol{\gamma} \|_1 } \sqrt{\gamma_{\eta_{t'}(l)}}  \right )^2 \\
    & = \sum\limits_{l=1}^n \gamma_{\eta_{t'}(l)} \left ( \frac{q_{\eta_{t'}(l)}}{\gamma_{\eta_{t'}(l)}} - \frac{ \| \mathbf{q} \|_1 }{ \| \boldsymbol{\gamma} \|_1 }  \right )^2 \\
    & \overset{(a)}{\leq} n \gamma_{\max} \left ( \frac{q_{\eta_{t'}(1)}}{\gamma_{\eta_{t'}(1)}} - \frac{q_{\eta_{t'}(n)}}{\gamma_{\eta_{t'}(n)}} \right )^2,
\end{align*}
where $(a)$ follows from the fact that $ q_{\eta_{t'}(1)}/\gamma_{\eta_{t'}(1)} \geq \| \mathbf{q} \|_1 / \| \boldsymbol{\gamma} \|_1 \geq q_{\eta_{t'}(n)}/\gamma_{\eta_{t'}(n)}  $ when $ q_{\eta_{t'}(1)}/\gamma_{\eta_{t'}(1)} \geq ... \geq q_{\eta_{t'}(n)}/\gamma_{\eta_{t'}(n)} $ . \\

Combining equations~\eqref{eq:ssc_aveofssc_withrate} and \eqref{eq:ssc_case1_withrate}, we get
\begin{align}
    \mathbb{E}[\Delta W(\tilde{\mathbf{O}}(t'))\ | \ \tilde{\mathbf{O}}(t')= \mathbf{o}]  & \leq \frac{-2T \epsilon}{  \sqrt{ \| \boldsymbol{\gamma} \|_1 }} \| \mathbf{o}_{\parallel} \|_2 - \frac{ 2T \left ( \delta^* \| \boldsymbol{\mu} \|_1 - \xi^* \epsilon \right )}{\sqrt{n \gamma_{\max}}} \| \mathbf{o}_{\perp} \|_2 + K_1, \label{eq:ssc_driftofssc_withrates} 
\end{align}
where
\begin{align}
    K_1 & =  \frac{2T^2 n \mu_{\max} S_{\max}}{\gamma_{\min}} \notag \\
    & \quad + \frac{T^2 \left ( n\lambda f_{ \max } + \mu_{\max} \right )^2 + Tf_{ \max }n\sigma_{\lambda}^2 + T^2 n^2\lambda^2 \tau_{\max}^2 +T\sigma_{\max}^2}{\gamma_{\min}} \notag \\
    & \quad + \frac{(n-1)}{\gamma_{\min}} \left [ T^2 \left ( n\lambda f_{\max} + \mu_{\max} \right )^2 + Tf_{\max}n\sigma_{\lambda}^2 + T^2 n^2\lambda^2 \tau_{\max}^2 +T\sigma_{\max}^2 \right ] \notag \\
    & \quad + \frac{(n-1)}{\gamma_{\min}} \left [ T^2 n^2\lambda^2 f^2_{\max} + Tf_{\max}n\sigma_{\lambda}^2 + T^2 n^2\lambda^2 \tau_{\max}^2 \right ]. \notag 
\end{align}

We now lower bound $ \mathbb{E}[\Delta W_{\parallel}(\tilde{\mathbf{O}}(t'))\ | \  \tilde{\mathbf{O}}(t')=\mathbf{o}] $ as follows:
\begin{align}
    & \mathbb{E}[\Delta W_{\parallel}(\tilde{\mathbf{O}}(t'))\ | \ \tilde{\mathbf{O}}(t') = \mathbf{o}] = \mathbb{E}[\| \tilde{\mathbf{O}}_{\parallel}(t'+1) \|_2^2 - \| \tilde{\mathbf{O}}_{\parallel}(t') \|_2^2 \  | \  \tilde{\mathbf{O}}(t')=\mathbf{o} ] \notag \\
    & = \mathbb{E}[\| \mathbf{O}_{\parallel}(t+T) \|_2^2 - \| \mathbf{O}_{\parallel}(t) \|_2^2 \  | \  \mathbf{O}(t)=\mathbf{o} ] \notag \\
    & = \mathbb{E} \left [ \sum_{l=1}^n \left ( \frac{ \| \mathbf{Q}(t+T) \|_1 }{ \| \boldsymbol{\gamma} \|_1 } \sqrt{\gamma_{\eta_{t'}(l)}} \right )^2 - \sum_{l=1}^n \left ( \frac{ \| \mathbf{Q}(t) \|_1 }{ \| \boldsymbol{\gamma} \|_1 } \sqrt{\gamma_{\eta_{t'}(l)}} \right )^2  \ \middle | \  \mathbf{O}(t)=\mathbf{o} \right ] \notag \\
    & = \mathbb{E} \left [ \sum_{l=1}^n \left [ \left ( \frac{ \| \mathbf{Q}(t+T) \|_1 }{ \| \boldsymbol{\gamma} \|_1 } \sqrt{\gamma_{\eta_{t'}(l)}} \right )^2 - \left ( \frac{ \| \mathbf{Q}(t) \|_1 }{ \| \boldsymbol{\gamma} \|_1 } \sqrt{\gamma_{\eta_{t'}(l)}} \right )^2 \right ]  \ \middle | \  \mathbf{O}(t)=\mathbf{o} \right ] \notag \\
    & = \mathbb{E} \left [ \sum_{l=1}^n \frac{ \gamma_{\eta_{t'}(l)} }{ \| \boldsymbol{\gamma} \|_1^2 } \left ( \| \mathbf{Q}(t+T) \|_1^2 - \| \mathbf{Q}(t) \|_1^2  \right )  \ \middle | \  \mathbf{O}(t)=\mathbf{o} \right ] \notag \\
    & = \mathbb{E} \left [ \frac{ 1 }{ \| \boldsymbol{\gamma} \|_1 } \left ( \| \mathbf{Q}(t+T) \|_1^2 - \| \mathbf{Q}(t) \|_1^2  \right )  \ \middle | \  \mathbf{O}(t)=\mathbf{o} \right ] \notag \\
    & = \frac{ 1 }{ \| \boldsymbol{\gamma} \|_1 } \mathbb{E} \left [ \left ( \| \mathbf{Q}(t) \|_1 + \sum\limits_{j=0}^{T-1} \sum\limits_{l=1}^n A_{\eta_{t'}(l)}(t+j) - \sum\limits_{j=0}^{T-1} \sum\limits_{l=1}^n S_{\eta_{t'}(l)}(t+j) + \sum\limits_{j=0}^{T-1} \sum\limits_{l=1}^n U_{\eta_{t'}(l)}(t+j) \right )^2 \right. \notag \\
    & \qquad\qquad - \| \mathbf{Q}(t) \|_1^2 \  \Bigg | \  \mathbf{O}(t)=\mathbf{o} \Bigg ] \notag \\
    & = \frac{ 1 }{ \| \boldsymbol{\gamma} \|_1 } \left \{ \mathbb{E} \left [ \left ( \| \mathbf{Q}(t) \|_1 + \sum\limits_{j=0}^{T-1} \sum\limits_{l=1}^n A_{\eta_{t'}(l)}(t+j) - \sum\limits_{j=0}^{T-1} \sum\limits_{l=1}^n S_{\eta_{t'}(l)}(t+j) \right )^2 - \| \mathbf{Q}(t) \|_1^2 \  \middle | \  \mathbf{O}(t)=\mathbf{o} \right ] \right. \notag \\
    & \qquad\qquad + \mathbb{E} \left [ 2\left ( \| \mathbf{Q}(t) \|_1 + \sum\limits_{j=0}^{T-1} \sum\limits_{l=1}^n A_{\eta_{t'}(l)}(t+j) - \sum\limits_{j=0}^{T-1} \sum\limits_{l=1}^n S_{\eta_{t'}(l)}(t+j) \right )\left ( \sum\limits_{j=0}^{T-1} \sum\limits_{l=1}^n U_{\eta_{t'}(l)}(t+j) \right) \  \middle | \  \mathbf{O}(t)=\mathbf{o} \right ] \notag \\
    & \qquad\qquad \left. + \mathbb{E} \left [ \left ( \sum\limits_{j=0}^{T-1} \sum\limits_{l=1}^n U_{\eta_{t'}(l)}(t+j) \right)^2 \  \middle | \  \mathbf{O}(t)=\mathbf{o} \right ] \right \} \notag \\
    & = \frac{ 1 }{ \| \boldsymbol{\gamma} \|_1 } \left \{ \mathbb{E} \left [ 2 \| \mathbf{Q}(t) \|_1 \left (\sum\limits_{j=0}^{T-1} \sum\limits_{l=1}^n A_{\eta_{t'}(l)}(t+j) - \sum\limits_{j=0}^{T-1} \sum\limits_{l=1}^n S_{\eta_{t'}(l)}(t+j) \right ) \  \middle | \  \mathbf{O}(t)=\mathbf{o} \right ] \right. \notag \\
    & \qquad\qquad + \mathbb{E} \left [ \left (\sum\limits_{j=0}^{T-1} \sum\limits_{l=1}^n A_{\eta_{t'}(l)}(t+j) - \sum\limits_{j=0}^{T-1} \sum\limits_{l=1}^n S_{\eta_{t'}(l)}(t+j) \right ) ^2 \  \middle | \  \mathbf{O}(t)=\mathbf{o} \right ] \notag \\
    & \qquad\qquad + \mathbb{E} \left [ 2 \left ( \| \mathbf{Q}(t) \|_1 + \sum\limits_{j=0}^{T-1} \sum\limits_{l=1}^n A_{\eta_{t'}(l)}(t+j) \right )\left ( \sum\limits_{j=0}^{T-1} \sum\limits_{l=1}^n U_{\eta_{t'}(l)}(t+j) \right) \  \middle | \  \mathbf{O}(t)=\mathbf{o} \right ] \notag \\
    & \qquad\qquad - \mathbb{E} \left [ 2\left ( \sum\limits_{j=0}^{T-1} \sum\limits_{l=1}^n S_{\eta_{t'}(l)}(t+j) \right) \left ( \sum\limits_{j=0}^{T-1} \sum\limits_{l=1}^n U_{\eta_{t'}(l)}(t+j) \right) \  \middle | \  \mathbf{O}(t)=\mathbf{o} \right ] \notag \\
    & \qquad\qquad \left. + \mathbb{E} \left [ \left ( \sum\limits_{j=0}^{T-1} \sum\limits_{l=1}^n U_{\eta_{t'}(l)}(t+j) \right)^2 \  \middle | \  \mathbf{O}(t)=\mathbf{o} \right ] \right \} \notag \\
    & \overset{(a)}{\geq} \frac{ 1 }{ \| \boldsymbol{\gamma} \|_1 } \mathbb{E} \left[ 2 \| \mathbf{Q}(t) \|_1 \left (\sum\limits_{j=0}^{T-1} \sum\limits_{l=1}^n A_{\eta_{t'}(l)}(t+j) - \sum\limits_{j=0}^{T-1} \sum\limits_{l=1}^n S_{\eta_{t'}(l)}(t+j) \right ) \  \middle | \  \mathbf{O}(t)=\mathbf{o} \right ] \notag \\
    & \qquad\qquad - \mathbb{E} \left [ 2\left ( \sum\limits_{j=0}^{T-1} \sum\limits_{l=1}^n S_{\eta_{t'}(l)}(t+j) \right) \left ( \sum\limits_{j=0}^{T-1} \sum\limits_{l=1}^n U_{\eta_{t'}(l)}(t+j) \right) \  \middle | \  \mathbf{O}(t)=\mathbf{o} \right ] \notag \\
    & \overset{(b)}{\geq} 2 \frac{ 1 }{ \| \boldsymbol{\gamma} \|_1 } \left ( T n\lambda -T \| \boldsymbol{\mu} \|_1 \right )\| \mathbf{q} \|_1 - 2 \frac{ 1 }{ \| \boldsymbol{\gamma} \|_1 } TnS_{\max}TnS_{\max} \notag \\
    & = - 2 T \epsilon \frac{ \| \mathbf{q} \|_1 }{ \| \boldsymbol{\gamma} \|_1 } - \frac{ 2T^2n^2 S_{\max}^2 }{ \| \boldsymbol{\gamma} \|_1 } \notag \\
    & \overset{(c)}{=} -2T \epsilon \frac{1}{ \sqrt{ \| \boldsymbol{\gamma} \|_1 } } \| \mathbf{o}_{\parallel} \|_2  - \frac{ 2T^2n^2 S_{\max}^2  }{ \| \boldsymbol{\gamma} \|_1 } \notag \\
    & = \frac{-2T \epsilon}{ \sqrt{ \| \boldsymbol{\gamma} \|_1 } } \| \mathbf{o}_{\parallel} \|_2 - K_2, \label{eq:ssc_dirftofparallel_withraate}
\end{align}
with $K_2 = 2T^2 n^2 S_{\max}^2 / \| \boldsymbol{\gamma} \|_1 $, where (a) follows from removing the positive term, (b) follows from using that $ U_l(t+j) \leq S_{\max}$ for all $l \in [n]$ and $j\geq 0$ and (c) follows from the same transformation in Equation~\eqref{eq:ssc_case1_withrate} for $\| \mathbf{o}_{\parallel} \|_2$. \\

Finally, combining equations~\eqref{eq:difference_drifts_withrate}, \eqref{eq:ssc_driftofssc_withrates}, and \eqref{eq:ssc_dirftofparallel_withraate}, we get
\begin{align}
     & \mathbb{E}[ \Delta V_{\perp}(\tilde{\mathbf{O}}(t'))\  |\ \tilde{\mathbf{O}}(t')=\mathbf{o} ] \notag \\ 
     & \leq \frac{1}{2 \| \mathbf{o}_{\perp} \|_2 } \mathbb{E} \left [\Delta W(\tilde{\mathbf{O}}(t')) - \Delta W_{\parallel}(\tilde{\mathbf{O}}(t')) \ \middle |\ \tilde{\mathbf{O}}(t')= \mathbf{o} \right ] \notag \\ 
     & \leq \frac{1}{2 \| \mathbf{o}_{\perp} \|_2 } \left [ - \frac{2T \epsilon}{  \sqrt{ \| \boldsymbol{\gamma} \|_1 }} \| \mathbf{o}_{\parallel} \|_2 - \frac{ 2T \left ( \delta^* \| \boldsymbol{\mu} \|_1 - \xi^* \epsilon \right )}{\sqrt{n \gamma_{\max}}} \| \mathbf{o}_{\perp} \|_2 + K_1 + \frac{2T \epsilon}{ \sqrt{ \| \boldsymbol{\gamma} \|_1 } } \| \mathbf{o}_{\parallel} \|_2 + K_2 \right] \notag \\
     & = - \frac{ T \left ( \delta^* \| \boldsymbol{\mu} \|_1 - \xi^* \epsilon \right )}{\sqrt{n \gamma_{\max}}} + \frac{K_1 + K_2}{2 \| \mathbf{o}_{\perp} \|_2 } \notag \\
     & = - \frac{ T \xi^* \left ( \frac{\delta^*}{\xi^*} \| \boldsymbol{\mu} \|_1  - \epsilon \right )}{\sqrt{n \gamma_{\max}}} + \frac{K}{2 \| \mathbf{o}_{\perp} \|_2 }. \label{eq:delta_V_perp_O_negative_drift_bound}
\end{align}
where $K:=K_1+K_2$.

\subsection{Proof of Lemma~\ref{lem_our:bd_O_perp}} \label{prf:bd_O_perp}
To prove the lemma, we first establish the corresponding bound for the embedded chain:
\[
    \mathbb E\!\left[\|\tilde{\mathbf O}_\perp\|_2^2\right]
    \le N_\perp^2(n,T).
\]
The passage from the embedded chain to the original steady-state process is handled at the end of the proof. We now apply Theorem~\ref{thm:hajek_drift} to \(V_\perp(\tilde{\mathbf O}(t'))\). Define
\[
H_{t'} := V_{\perp}(\tilde{\mathbf{O}}(t'))=\|\tilde{\mathbf{O}}_{\perp}(t')\|_2,
\]
so that $H_{t'+1}-H_{t'}=\Delta V_{\perp}(\tilde{\mathbf Q}(t'))$.

We first verify Condition B2 by Lemma~\ref{lem_our:bd_absolute_v_perp_O},
\begin{align*}
    \left|H_{t'+1}-H_{t'}\right| = \left|\Delta V_{\perp}(\tilde{\mathbf{O}}(t'))\right| \leq \frac{2}{\sqrt{\gamma_{\min}}} T(nA_{\max}+\sqrt{n}S_{\max})
\end{align*}
for all \(t' \in \mathbb N \). Hence, we choose \(G\) in Theorem~\ref{thm:hajek_drift} to be the deterministic constant
\begin{align*}
    G := \frac{2}{\sqrt{\gamma_{\min}}} T(nA_{\max}+\sqrt{n}S_{\max}).
\end{align*}
Then, for every \(u\ge 0\), $\mathbb P\!\left(\left|H_{t'+1}-H_{t'}\right|>u \mid \mathcal F_{t'}\right)
\le \mathbb P(G>u)$, and, for any $\theta > 0$, $\mathbb E[e^{\theta G}] = e^{\theta G} < \infty$. Therefore, Condition B2 holds.

We next verify Condition B1. Recall from Lemma \ref{lem_our:bd_negative_v_perp_O}
that the negative-drift bound requires $\| \tilde{\mathbf{Q}}(t') \|_1 \geq TS_{\max}\| \boldsymbol{\gamma}\|_1/\gamma_{\min}$. To ensure that this condition holds, we use the following inequality:
\begin{align*}
    \| \tilde{\mathbf{Q}}(t') \|_1
    &\ge \|\tilde{\mathbf{Q}}(t')\|_2\overset{(a)}{\ge} \|\tilde{\mathbf{Q}}_{\perp}(t')\|_2,
\end{align*}
where (a) follows from the orthogonal decomposition $\mathbf{Q}=\mathbf{Q}_{\parallel}+\mathbf{Q}_{\perp}$ and the Pythagorean theorem. Therefore, if $\|\tilde{\mathbf{Q}}_{\perp}(t')\|_2 \ge TS_{\max}\| \boldsymbol{\gamma}\|_1/\gamma_{\min}$, then $ \|\tilde{\mathbf{Q}}(t')\|_1 \ge TS_{\max}\| \boldsymbol{\gamma}\|_1/\gamma_{\min}$, and hence Equation~\eqref{eq:delta_V_perp_O_negative_drift_bound} applies.

Further, since we assume that
\[ \epsilon \leq \frac{ \delta^* \| \boldsymbol{\mu} \|_1 }{ 2\xi^*}, \]
we have
\begin{align}
    \mathbb{E} \left [ \Delta V_{\perp}(\tilde{\mathbf{O}}_{\perp}(t')) + \epsilon_0 \  \middle | \  \| \tilde{\mathbf{O}}_{\perp}(t') \|_2 > a \right ] < 0, \notag
\end{align}
for
\begin{align*}
     a &= \max \left \{ \frac{K \sqrt{n \gamma_{\max}} }{ T \xi^* \Delta  }, \frac{TS_{\max}\|\boldsymbol{\gamma}\|_1}{\gamma_{\min}} \right \} \qquad \text{and} \qquad \epsilon_0 = \frac{ T \xi^* \Delta }{2 \sqrt{n \gamma_{\max}} },
\end{align*}
where
\[ \Delta = \frac{\delta^* \| \boldsymbol{\mu} \|_1 }{2\xi^*}. \]
We next show that $a = K \sqrt{n \gamma_{\max}} /(T \xi^* \Delta) $. Indeed, with $\|\boldsymbol{\gamma}\|_1=1$,
\begin{align*}
    \frac{K\sqrt{n\gamma_{\max}}}{T\xi^*\Delta}
    &\overset{(a)}{\geq}
    \frac{2T^2 n\mu_{\max}S_{\max}\sqrt{n\gamma_{\max}}}
    {T\xi^*\Delta\gamma_{\min}} \\
    &\overset{(b)}{=}
    \frac{4T n\mu_{\max}S_{\max}\sqrt{n\gamma_{\max}}}
    {\delta^*\|\boldsymbol{\mu}\|_1\gamma_{\min}} \\
    &\overset{(c)}{\geq}
    \frac{4T S_{\max}\sqrt{n\gamma_{\max}}}
    {\delta^*\gamma_{\min}} \\
    &\overset{(d)}{\geq}
    \frac{4T S_{\max}}
    {\delta^*\gamma_{\min}} \\
    &\overset{(e)}{\geq}
    \frac{TS_{\max}}{\gamma_{\min}} \\
    & =
    \frac{TS_{\max}\|\boldsymbol{\gamma}\|_1}{\gamma_{\min}}.
\end{align*}
where (a) follows from the definition of $K=K_1+K_2$ and the fact that $K \geq K_1 \geq 2T^2 n\mu_{\max}S_{\max}/\gamma_{\min}$ in Equation~\eqref{eq:ssc_driftofssc_withrates}, (b) follows from the definition $\Delta=\delta^*\|\boldsymbol{\mu}\|_1/(2\xi^*)$, (c) follows from $\mu_{\max}\geq \|\boldsymbol{\mu}\|_1/n$, (d) follows from the normalization $\|\boldsymbol{\gamma}\|_1=1$, which implies $\gamma_{\max}\geq 1/n$ and hence $\sqrt{n\gamma_{\max}}\geq 1$, and (e) follows from $\delta^*\leq 1$.

Finally, we apply theorem \ref{thm:hajek_drift} and obtain
\begin{align*}
    & \mathbb{E} \left [ e^{\zeta\| \tilde{\mathbf{O}}_{\perp}(t')\|_2} \right ] \leq \varrho^te^{\zeta \| \tilde{\mathbf{O}}_{\perp}(0) \|_2} + \frac{1-\varrho^{t'}}{1-\varrho} e^{\theta G} e^{\zeta a}
\end{align*}
Since \(\|\tilde{\mathbf O}_{\perp}(0)\|_2=0\) and \(0<\varrho<1\), passing to the stationary distribution and using the Portmanteau theorem for the nonnegative lower-semicontinuous functions gives
\[
    \mathbb{E}\!\left[e^{\zeta\|\tilde{\mathbf O}_{\perp}\|_2}\right] \le \liminf_{t'\to\infty} \mathbb{E}\!\left[e^{\zeta\|\tilde{\mathbf O}_{\perp}(t')\|_2}\right]
    \le \frac{e^{\theta G}e^{\zeta a}}{1-\varrho}.
\]
Setting $\theta = 1/G$ and defining
\begin{align*}
    C_H & := \frac{ \mathbb{E} \left [e^{\theta G} \right] - (1+\theta \mathbb{E}[G]) }{ \theta^2 } \notag \\
    & = G^2(e-2),
\end{align*}
we get
\begin{align}
    & \mathbb{E} \left [ e^{\zeta\| \tilde{\mathbf{Q}}_{\perp}\|_2} \right ] \leq \frac{1}{1-\varrho}e^{\zeta a + 1}, \label{eq:exp_bound}
\end{align}
where
\begin{align}
    \zeta & = \min \left\{ \theta, \frac{\epsilon_0}{2 C_H} ,\frac{1}{a} \right \} \notag \\
    & = \min\left\{ \frac{1}{G}, \frac{T \xi^* \Delta}{4 \sqrt{n \gamma_{\max} } G^2(e-2)}, \frac{T \xi^* \Delta}{K \sqrt{n \gamma_{\max}}} \right\} \notag \\
    \varrho & = 1 - \epsilon_0 \zeta + C_H \zeta^2 \notag \\  
    & = 1 - \frac{T \xi^* \Delta}{2 \sqrt{n \gamma_{\max}} } \zeta + G^2(e-2)\zeta^2, \notag
\end{align}
with
\[ G = \frac{2}{\sqrt{\gamma_{\min}}} T(nA_{\max}+\sqrt{n}S_{\max}). \]
Taking the Taylor expansion for the left-hand side in Equation~\eqref{eq:exp_bound} yields
\[ \mathbb{E} \left [ 1 + \zeta \| \tilde{\mathbf{O}}_{\perp} \|_2 + \frac{\zeta^2}{2!}\| \tilde{\mathbf{O}}_{\perp} \|_2^2 + \sum\limits_{k=3}^{\infty}\frac{\zeta^k}{k!} \| \tilde{\mathbf{O}}_{\perp} \|_2^k \right ] \leq \frac{1}{1-\varrho}e^{\zeta a + 1}, \]
and thus
\begin{align}
    \mathbb{E} \left [ \| \tilde{\mathbf{O}}_{\perp} \|_2^2 \right ] & \leq \frac{2}{\zeta^2}\frac{1}{1-\varrho}e^{\zeta a+1}  \label{eq:exp_bound_taylor} \\
    & = \frac{4 \sqrt{n \gamma_{\max} } e^{\zeta K \sqrt{n \gamma_{\max}} /\left (T \xi^* \Delta \right )+1}}{T \xi^* \Delta \zeta^3 - 2 \sqrt{n \gamma_{\max}} G^2(e-2)\zeta^4} \notag \\
    & =: N_{\perp}^2(n,T). \label{eq:N_perp} \\
    & \leq \frac{4 \sqrt{n \gamma_{\max} } e^2}{T \xi^* \Delta \zeta^3 - 2 \sqrt{n \gamma_{\max}} G^2(e-2)\zeta^4}. \notag 
\end{align}

Finally, even though the SSC result is only for the sampled queue-length vector $\tilde{\mathbf{O}}$, between those sampling points the queues can only increase or decrease by at most $T(nA_{\max}+S_{\max})$. Therefore, $\|\tilde{\mathbf{O}}_{\perp}\|_2^2$ can only increase by at most $T^2(nA_{\max}+S_{\max})^2n$. This bound is independent of $\epsilon$ and is of order $\Theta(T^2)$, which is smaller than or equal to $N_{\perp}$ (see Section \ref{dis:order_N_2}). It follows that we have the same SSC result for $\mathbf{O}$.

\subsection{Order Analysis of $N_{\perp}(n,T)$ in Lemma~\ref{lem_our:bd_O_perp}}
\label{dis:order_N_2}

We derive the order of \(N_{\perp}(n,T)\) from the proof of Lemma~\ref{lem_our:bd_O_perp}. Throughout this subsection, we use the same normalization \(\|\boldsymbol{\gamma}\|_1=1\).

Recall that \(G=2T(nA_{\max}+\sqrt n S_{\max})/\sqrt{\gamma_{\min}} \) and \(\epsilon_0= T\xi^*\Delta/ (2\sqrt{n\gamma_{\max}}) \). Hence, \(G=O(Tn/\sqrt{\gamma_{\min}})\), \(G^2=O(T^2n^2/\gamma_{\min})\), and \(C_H=G^2(e-2)=O(T^2n^2/\gamma_{\min})\).

Next, from the definitions of \(K_1\) and \(K_2\), the leading contribution to \(K_1\) comes from the terms with the prefactor \((n-1)/\gamma_{\min}\), whose bracketed expressions are \(O(T^2n^2)\). Therefore,
\[
    K_1=O\!\left(\frac{T^2n^3}{\gamma_{\min}}\right),
\]
while \(K_2=O(T^2n^2)\). Therefore,
\[
    K=K_1+K_2 = O\!\left(\frac{T^2n^3}{\gamma_{\min}}\right).
\]
Since $ a = K\sqrt{n\gamma_{\max}} / (T\xi^*\Delta) $, we have
\[
    a = O\!\left( \frac{Tn^3\sqrt{n\gamma_{\max}}}{\gamma_{\min}\Delta} \right),
    \qquad
    \frac{1}{a} = \Omega\!\left( \frac{\gamma_{\min}\Delta}{Tn^3\sqrt{n\gamma_{\max}}} \right).
\]

By Theorem~\ref{thm:hajek_drift}, the exponential parameter is $ \zeta = \min \left \{ 1/G, \epsilon_0 /(2C_H), 1/a \right \} $. The three terms satisfy
\[
    \frac{1}{G} = \Omega \left(\frac{\sqrt{\gamma_{\min}}}{Tn}\right),
    \qquad
    \frac{\epsilon_0}{2C_H} = \Omega \left( \frac{\gamma_{\min}\Delta}{Tn^2\sqrt{n\gamma_{\max}}} \right),
    \qquad
    \frac{1}{a} = \Omega \left( \frac{\gamma_{\min}\Delta}{Tn^3\sqrt{n\gamma_{\max}}} \right).
\]
Combining these three bounds, we obtain
\[
    \zeta = \Omega \left( \frac{\gamma_{\min}\Delta}{Tn^3\sqrt{n\gamma_{\max}}} \right).
\]
Moreover, since \(\varrho=1-\epsilon_0\zeta+C_H\zeta^2\) and \(\zeta\le \epsilon_0/(2C_H)\), we have
\[
    1-\varrho = \epsilon_0\zeta-C_H\zeta^2 \geq \frac{\epsilon_0\zeta}{2}.
\]
Therefore,
\[
    \frac1{1-\varrho} = O \left(\frac1{\epsilon_0\zeta}\right) =
    O \left( \frac{n^4\gamma_{\max}}{\gamma_{\min}\Delta^2} \right).
\]
From Equation~\eqref{eq:exp_bound_taylor},
\[
    \mathbb E\!\left[\|\tilde{\mathbf O}_{\perp}\|_2^2\right] \leq \frac{2}{\zeta^2}\frac{1}{1-\varrho}e^{\zeta a+1}.
\]
Since \(\zeta\le 1/a\), we have \(e^{\zeta a+1}=O(1)\). Also,
\[
    \frac1{\zeta^2} = O \left( \frac{T^2n^7\gamma_{\max}}{\gamma_{\min}^2\Delta^2} \right).
\]
Combining the above bounds gives
\[
    \mathbb E\!\left[\|\tilde{\mathbf O}_{\perp}\|_2^2\right] = O \left( \frac{T^2n^{11}\gamma_{\max}^2} {\gamma_{\min}^3\Delta^4} \right).
\]
Hence, \(N_{\perp}(n,T)\) can be chosen such that
\[
    N_{\perp}(n,T) = O \left( \frac{Tn^{11/2}\gamma_{\max}} {\gamma_{\min}^{3/2}\Delta^2} \right).
\]
In particular, for fixed \(n\), fixed normalized \(\boldsymbol{\gamma}\), and fixed \(\Delta>0\), we have \(N_{\perp}(n,T)\in O(T)\).

Recall that
\[
    \Delta=\frac{\delta^*\|\boldsymbol{\mu}\|_1}{2\xi^*},
\]
where \(\delta^*\) and \(\xi^*\) are defined in the proof of Lemma~\ref{lem_our:bd_O_perp}. For fixed \(n\), fixed normalized \(\boldsymbol{\gamma}\) and a fixed policy \(\pi\in\Pi\), the induced fractions \(f_{l,\eta}\) are fixed. Hence, under the condition in Equation~\eqref{eq:strictlyless_withrate}, \(\delta^*\), \(\xi^*\) and
\(\Delta \) are positive constants independent of \(T\). Therefore,
\[
    N_{\perp}(n,T) = O \left( \frac{Tn^{11/2}\gamma_{\max}}{\gamma_{\min}^{3/2}} \right).
\]
When \(N_{\perp}(n,T)\) is viewed as a function of the cycle length \(T\) with the other quantities fixed, this gives
\[
    N_{\perp}(n,T)\in O(T).
\]

\section{Proof of Corollary~\ref{co_our:relative_SSC}} 
\label{prf:relative_SSC}

Under any dispatching policy, the total queue length is lower bounded by the queue length of a resource-pooled single-server queue with the same aggregate arrivals $A(t)$ and aggregate service $S_{\Sigma}(t)=\sum_{l=1}^n S_l(t)$. Applying Lemma~4 in \cite{eryilmaz2012asymptotically-driftmethod} to this lower-bounding system, there exists a constant $c_0>0$, independent of $\epsilon$, such that
\[
    \liminf_{\epsilon\downarrow0} \epsilon \mathbb E[\|\mathbf Q\|_1] \geq c_0.
\]
Next, by Theorem~\ref{thm_our:SSC_withrate},
\[
    \mathbb E\left[ \|\mathbf Q^{(\gamma)}_\perp\|_\gamma^2 \right] \leq N_\perp^2(n,T),
\]
where $N_\perp(n,T)$ is independent of $\epsilon$. Applying Jensen's inequality to the concave function $g(y)=\sqrt{y}$ gives
\[
    \mathbb E\left[ \|\mathbf Q^{(\gamma)}_\perp\|_\gamma \right] = \mathbb E\left[ \sqrt{ \|\mathbf Q^{(\gamma)}_\perp\|_\gamma^2 } \right] 
    \leq \sqrt{ \mathbb E\left[ \|\mathbf Q^{(\gamma)}_\perp\|_\gamma^2 \right] }
    \leq N_\perp(n,T).
\]
We also have
\begin{align*}
    \left\| \mathbf Q - \frac{\boldsymbol\gamma}{\|\boldsymbol\gamma\|_1} \|\mathbf Q\|_1 \right\|_1 & = \sum_{l=1}^n \left | Q_l - \frac{\gamma_l}{\| \boldsymbol{\gamma} \|_1} \| \mathbf{Q} \|_1 \right | \\
    & = \sum_{l=1}^n \gamma_l \left| \frac{Q_l}{\gamma_l} - \frac{\|\mathbf Q\|_1}{\|\boldsymbol\gamma\|_1} \right| \\
    & = \sum_{l=1}^n \sqrt{\gamma_l} \cdot \sqrt{\gamma_l} \left| \frac{Q_l}{\gamma_l} - \frac{\|\mathbf Q\|_1}{\|\boldsymbol\gamma\|_1} \right| \\
    & \overset{(a)}{\leq} \left( \sum_{l=1}^n \gamma_l \right)^{1/2} \left( \sum_{l=1}^n \gamma_l \left( \frac{Q_l}{\gamma_l} - \frac{\|\mathbf Q\|_1}{\|\boldsymbol\gamma\|_1} \right)^2 \right)^{1/2} \\
    & = \sqrt{\|\boldsymbol\gamma\|_1} \|\mathbf Q^{(\gamma)}_\perp\|_\gamma,
\end{align*}
where (a) follows from the Cauchy--Schwarz inequality.

Combining the previous bounds,
\[
    0 \leq \lim_{\epsilon\downarrow0} \frac{ \mathbb E\left[ \left\| \mathbf Q-\frac{\boldsymbol\gamma}{\|\boldsymbol\gamma\|_1}\|\mathbf Q\|_1 \right\|_1 \right] }{ \mathbb E[\|\mathbf Q\|_1] } 
    \leq \lim_{\epsilon\downarrow0} \frac{ \sqrt{\|\boldsymbol\gamma\|_1}
    \mathbb E[\|\mathbf Q^{(\gamma)}_\perp\|_\gamma] }{ \mathbb E[\|\mathbf Q\|_1] } 
    \leq \lim_{\epsilon\downarrow0} \frac{ \sqrt{\|\boldsymbol\gamma\|_1} N_\perp(n,T) }{ \mathbb E[\|\mathbf Q\|_1] }
    \overset{(a)}{=} 0,
\]
where (a) follows because \(\liminf_{\epsilon\downarrow0}\epsilon\mathbb E[\|\mathbf Q\|_1]\geq c_0>0\) and the fact that \(N_\perp(n,T)\) is independent of \(\epsilon\).

\section{Proof of Theorem~\ref{thm_our:SSCUpper_withrate_withratio}} \label{prf:thm_upperbound_withrate_withratio}

The proof of Theorem~\ref{thm_our:SSCUpper_withrate_withratio} relies on a multi-step drift analysis in steady state. We first derive a one-cycle identity over an interval of length $T$, and then combine this identity with the state-space collapse result established in Theorem~\ref{thm_our:SSC_withrate}. This allows us to obtain the stated upper bound on the expected average queue length. \\

Assume that the original Markov chain is in steady state under a policy \(\pi \in \Pi\), and fix a sampling epoch \(t\) that is a multiple of \(T\). Let \(t' = t/T\). By the definition of the embedded chain, we have \(\tilde{\mathbf Q}(t') = \mathbf Q(t)\). Moreover, since \(\{\tilde{\mathbf Q}(m)\}_{m \ge 0}\) is obtained by sampling the stationary process \(\{\mathbf Q(t)\}_{t \ge 0}\) at times \(mT\), the embedded chain is stationary. Hence, \(\tilde{\mathbf Q}(t')\) and \(\tilde{\mathbf Q}(t'+1)\) have the same distribution. Since \(\tilde{\mathbf O}(t')\) is a deterministic transformation of \(\tilde{\mathbf Q}(t')\), it follows that \(\tilde{\mathbf O}(t')\) and \(\tilde{\mathbf O}(t'+1)\) also have the same distribution.

We also define
\[
    \Delta \|\tilde{\mathbf Q}(t')\|_1 := \|\tilde{\mathbf Q}(t'+1)\|_1 - \|\tilde{\mathbf Q}(t')\|_1.
\]

\noindent \textbf{Step 1.} We show that the expected steady-state drifts of $W_{\parallel}(\tilde{\mathbf O}(t'))$ and $\|\tilde{\mathbf Q}(t')\|_1$ are both zero.
\begin{lemma} \label{lem_our:ss_drift_zero_with_rate}
    Under the steady-state setup above for a fixed policy \(\pi\) with fixed \(T\), suppose that Equation~\eqref{eq:strictlyless_withrate} holds. Then, for all sufficiently small \(\epsilon>0\), there exists \(\zeta_\epsilon>0\) such that
    \[
        \mathbb E\!\left[e^{\zeta \|\tilde{\mathbf O}(t')\|_2}\right] < \infty, \qquad \forall \zeta\in(0,\zeta_\epsilon],\quad \forall t' \in \mathbb N.
    \]
    In addition, along the heavy-traffic sequence \(n\lambda=\|\boldsymbol\mu\|_1-\epsilon\), there exist constants \(c_T>0\), \(\epsilon_T>0\), and \(\theta_{\max}>0\), independent of \(\epsilon\), such that, for all \(0<\epsilon<\epsilon_T\),
    \[
        \zeta_\epsilon \ge c_T\epsilon,
    \]
    and
    \[
        \mathbb E\!\left[e^{\theta\epsilon\|\tilde{\mathbf Q}(t')\|_1}\right]<\infty,
        \qquad \forall \theta\in(0,\theta_{\max}],\quad \forall t'\in\mathbb N.
    \]
    Moreover, for every \(t' \in \mathbb N\),
    \[
        \mathbb E[\Delta W_{\parallel}(\tilde{\mathbf O}(t'))] = 0, 
        \qquad 
        \mathbb E[\Delta \|\tilde{\mathbf Q}(t')\|_1] = 0.
    \]
\end{lemma}
The proof is given in Appendix~\ref{prf:ss_drift_zero_with_rate}.\\

\noindent \textbf{Step 2}. We derive a one-cycle drift identity in steady state by using $\mathbb E[\Delta W_{\parallel}(\tilde{\mathbf O}(t'))]=0$.
\begin{lemma} \label{lem_our:drift_identity_steady_state_with_rate}
    Under the steady-state setup above for a policy \(\pi \), we have
    \begin{align}
        & \frac{ 1 }{ \| \boldsymbol{\gamma} \|_1 } \mathbb{E} \left[ \| \mathbf{Q}(t) \|_1 \left (\sum\limits_{j=0}^{T-1} \sum\limits_{l=1}^n S_{\eta_{t'}(l)}(t+j) - \sum\limits_{j=0}^{T-1} \sum\limits_{l=1}^n A_{\eta_{t'}(l)}(t+j) \right ) \right ] \notag \\
        & = \frac{ 1 }{ 2 \| \boldsymbol{\gamma} \|_1 } \left \{ \mathbb{E} \left [ \left (\sum\limits_{j=0}^{T-1} \sum\limits_{l=1}^n A_{\eta_{t'}(l)}(t+j) - \sum\limits_{j=0}^{T-1} \sum\limits_{l=1}^n S_{\eta_{t'}(l)}(t+j) \right ) ^2 \right ] \right. \notag \\
        & \quad + 2 \mathbb{E} \left [ \left ( \| \mathbf{Q}(t) \|_1 + \sum\limits_{j=0}^{T-1} \sum\limits_{l=1}^n A_{\eta_{t'}(l)}(t+j) - \sum\limits_{j=0}^{T-1} \sum\limits_{l=1}^n S_{\eta_{t'}(l)}(t+j) \right )\left ( \sum\limits_{j=0}^{T-1} \sum\limits_{l=1}^n U_{\eta_{t'}(l)}(t+j) \right) \right ]  \notag \\
        & \quad \left. + \mathbb{E} \left [ \left ( \sum\limits_{j=0}^{T-1} \sum\limits_{l=1}^n U_{\eta_{t'}(l)}(t+j) \right)^2 \right ] \right \} \label{eq:upb_equation_withrate}
    \end{align}
\end{lemma}
The proof is given in Appendix~\ref{prf:drift_identity_steady_state_with_rate}. \\

\noindent \textbf{Step 3}. We bound the first and third terms in Equation~\eqref{eq:upb_equation_withrate} by using $\mathbb E[\Delta \|\tilde{\mathbf Q}(t')\|_1]=0$.
\begin{lemma} \label{lem_our:unused-moment_with_rate}
    Under the steady-state setup above for a policy \(\pi \), we have
    \begin{align}
        \mathbb{E} \left[ \sum_{j=0}^{T-1} \sum_{l=1}^n U_l(t+j) \right] & = T \epsilon, \notag \\
        \mathbb{E} \left[ \left( \sum_{j=0}^{T-1}\sum_{l=1}^n A_l(t+j) - \sum_{j=0}^{T-1}\sum_{l=1}^n S_l(t+j) \right)^2\right] & \leq T^2 \epsilon ^2 + Tn \sigma_{\lambda}^2 + T \left ( \sum\limits_{l=1}^{n}\sigma_{\eta_{t'}(l)}^2 \right ), \notag \\
        \mathbb{E} \left[ \left( \sum_{j=0}^{T-1}\sum_{l=1}^n U_l(t+j) \right)^2\right] & \leq T^2nS_{\max}\epsilon. \notag 
    \end{align}
\end{lemma}
The proof is given in Appendix \ref{prf:unused-moment_with_rate}. \\

\noindent \textbf{Step 4}. We bound the second term in Equation~\eqref{eq:upb_equation_withrate}.
\begin{lemma} \label{lem_our:bd_cross_term_steady_state_with_rate}
    Under the steady-state setup above for a policy \(\pi \), we have
    \begin{align*}
        & \frac{ 1 }{ \| \boldsymbol{\gamma} \|_1 } \mathbb{E} \left[
        \left ( \| \mathbf{Q}(t) \|_1 + \sum\limits_{j=0}^{T-1} \sum\limits_{l=1}^n A_{\eta_{t'}(l)}(t+j) - \sum\limits_{j=0}^{T-1} \sum\limits_{l=1}^n S_{\eta_{t'}(l)}(t+j) \right )\left ( \sum\limits_{j=0}^{T-1} \sum\limits_{l=1}^n U_{\eta_{t'}(l)}(t+j) \right) \right ] \notag \\
        & \leq \frac{ T N_{\perp}(n,T) \sqrt{ S_{\max}\epsilon}}{\sqrt{\gamma_{\min}}} + \frac{ T^2 n A_{\max} \epsilon}{\gamma_{\min}},
    \end{align*}
    where $N_{\perp}(n,T)$ is the one given in Theorem \ref{thm_our:SSC_withrate}.
\end{lemma}
The proof is given in Appendix \ref{prf:bd_cross_term_steady_state_with_rate}.\\

\noindent \textbf{Step 5}. We combine the bounds and obtain the final estimate.

\medskip
Using the fact that the future arrival and service processes are independent of the queue-length vector $\mathbf{Q}(t)$, together with the fact that the expected net service over one cycle is $T \epsilon$, we obtain
\begin{align}
    \frac{ 1 }{ \| \boldsymbol{\gamma} \|_1 } \mathbb{E} \left[ \| \mathbf{Q}(t) \|_1 \left (\sum\limits_{j=0}^{T-1} \sum\limits_{l=1}^n S_{\eta_{t'}(l)}(t+j) - \sum\limits_{j=0}^{T-1} \sum\limits_{l=1}^n A_{\eta_{t'}(l)}(t+j) \right ) \right ]  
    = \frac{ T \epsilon}{\| \boldsymbol{\gamma} \|_1} \mathbb{E} \left[ \|\mathbf{Q}(t)\|_1 \right ], \notag 
\end{align}
Substituting this identity together with the bounds from lemmas~\ref{lem_our:unused-moment_with_rate} and \ref{lem_our:bd_cross_term_steady_state_with_rate} into Equation~\eqref{eq:upb_equation_withrate}, we obtain
\begin{align}
    \frac{T\epsilon \mathbb{E} \left[ \| \mathbf{Q}(t) \|_1 \right ]}{ \| \boldsymbol{\gamma} \|_1 }  & \leq \frac{T^2\epsilon^2 + Tn \sigma_{\lambda}^2 + T\left ( \sum\limits_{l=1}^{n}\sigma_{\eta_{t'}(l)}^2 \right )}{2 \| \boldsymbol{\gamma} \|_1 } \notag \\
    & \quad + \frac{ T N_{\perp}(n,T,\mathbf{f}, \boldsymbol{\gamma}) \sqrt{ S_{\max}\epsilon} }{\sqrt{\gamma_{\min}}} + \frac{ T^2 n A_{\max} \epsilon}{\gamma_{\min}} \notag \\
    & \quad + \frac{T^2 n S_{\max}\epsilon}{2 \| \boldsymbol{\gamma} \|_1 }, \notag
\end{align}
Since the original queue-length process is in steady state, $\mathbf{Q}(t)$ has the same distribution as the steady-state random vector $\mathbf{Q}$ for every $t \ge 0$. Hence,
\begin{align}
    \epsilon \mathbb{E} \left[ \frac{ \|\mathbf{Q} \|_1 }{n} \right ]
    &= \epsilon \mathbb{E} \left[ \frac{ \|\mathbf{Q}(t)\|_1 }{n} \right ] \notag \\
    &\leq \frac{n \sigma_{\lambda}^2 + \sum\limits_{l=1}^{n}\sigma_{l}^2 }{2n} + \epsilon^2 \frac{T}{2n} + \epsilon \frac{T S_{\max} \gamma_{\min} + 2 T \| \boldsymbol{\gamma} \|_1 A_{\max} }{2 \gamma_{\min}} + \sqrt{\epsilon} \frac{ \| \boldsymbol{\gamma} \|_1 N_{\perp}(n,T)\sqrt{ S_{\max}} }{\sqrt{\gamma_{\min}} n }. \notag 
\end{align}

\subsection{Proof of Lemma~\ref{lem_our:ss_drift_zero_with_rate}} \label{prf:ss_drift_zero_with_rate}
We show that $W_{\parallel}(\tilde{\mathbf O})$ and $\|\tilde{\mathbf Q}\|_1$ are integrable. The proof is similar to that of Lemma~\ref{lem_our:bd_O_perp} and also relies on Theorem~\ref{thm:hajek_drift}.

To apply Theorem~\ref{thm:hajek_drift}, define \( H_{t'} := V(\tilde{\mathbf O}(t')) = \|\tilde{\mathbf O}(t')\|_2 \) so that $ H_{t'+1}-H_{t'} = \Delta V(\tilde{\mathbf O}(t'))$.

We first verify Condition~B2. By Lemma~\ref{lem_our:bd_absolute_delta_vO}, for all $t' \in \mathbb N$,
\[
    |H_{t'+1}-H_{t'}| = |\Delta V(\tilde{\mathbf O}(t'))| \leq \frac{1}{\sqrt{\gamma_{\min}}} T ( nA_{\max} + \sqrt{n} S_{\max} ).
\] 
Hence, we choose
\[
    G:=\frac{T ( nA_{\max} + \sqrt{n} S_{\max})}{\sqrt{\gamma_{\min}}}.
\]
Therefore, Condition~B2 holds. In the application of Theorem~\ref{thm:hajek_drift}, we take \(\theta_H=1/G\). Then
\[
    C_H := \frac{\mathbb E[e^{\theta_H G}]-(1+\theta_H\mathbb E[G])}{\theta_H^2} = (e-2)G^2.
\]

We next verify Condition~B1. Recall from Lemma~\ref{lem_our:bd_negative_delta_vO} that the negative-drift bound requires $\|\tilde{\mathbf Q}(t')\|_1 \geq TS_{\max}\| \boldsymbol{\gamma} \|_1 / \gamma_{\min}$. Since $\|\tilde{\mathbf Q}(t')\|_1 \ge \|\tilde{\mathbf Q}(t')\|_2$, it suffices to require $\|\tilde{\mathbf Q}(t')\|_2 \ge TS_{\max}\| \boldsymbol{\gamma} \|_1 / \gamma_{\min}$. Thus, there exist constants $\epsilon' > 0$ and $K < \infty$ such that, whenever $\|\mathbf q\|_2 \ge TS_{\max}\| \boldsymbol{\gamma} \|_1 / \gamma_{\min}$,
\[
    \mathbb E[\Delta V(\tilde{\mathbf O}(t')) \mid \tilde{\mathbf O}(t')=\mathbf o] \leq -\frac{\epsilon' T \sqrt{\gamma_{\min}}}{ \| \boldsymbol{\gamma} \|_1 } + \frac{K\sqrt{\gamma_{\max}}}{2\|\mathbf q\|_2}.
\]
Now let
\[
    a := \max \left \{ \frac{TS_{\max}\|\boldsymbol{\gamma}\|_1}{\gamma_{\min}^{3/2}}, \frac{K\sqrt{\gamma_{\max}}\|\boldsymbol{\gamma}\|_1}{\epsilon'T\gamma_{\min}} \right \},
    \qquad \epsilon_0 := \frac{\epsilon'T\sqrt{\gamma_{\min}}}{2\|\boldsymbol{\gamma}\|_1}.
\]
If $\|\tilde{\mathbf O}(t')\|_2 > a$, then
\[
    \|\mathbf q\|_2 \geq \sqrt{\gamma_{\min}}\|\tilde{\mathbf O}(t')\|_2 > \frac{K\sqrt{\gamma_{\max}}\|\boldsymbol{\gamma}\|_1}{\epsilon'T\sqrt{\gamma_{\min}}},
\]
and therefore
\[
    \frac{K\sqrt{\gamma_{\max}}}{2\|\mathbf q\|_2} < \frac{\epsilon'T\sqrt{\gamma_{\min}}}{2\|\boldsymbol{\gamma}\|_1} = \epsilon_0.
\]
Hence,
\[
    \mathbb E[\Delta V(\tilde{\mathbf O}(t')) \mid \|\tilde{\mathbf O}(t')\|_2 > a] < -\epsilon_0.
\]
Therefore, Condition~B1 holds.

Let
\[
    \zeta_\epsilon := \min\left\{ \frac1G,\frac{\epsilon_0}{2C_H},\frac1a \right\}.
\]
Then, for any \(\zeta\in(0,\zeta_\epsilon]\), Theorem~\ref{thm:hajek_drift} gives
\[
    \mathbb E\!\left[e^{\zeta \|\tilde{\mathbf O}(t')\|_2}\right] \leq \varrho^{t'} e^{\zeta \|\tilde{\mathbf O}(0)\|_2} + \frac{1-\varrho^{t'}}{1-\varrho} e^{\theta_H G} e^{\zeta a}
\]
for some \(0<\varrho<1\). Since \(\|\tilde{\mathbf O}(0)\|_2=0\) and \(0<\varrho<1\), passing to the stationary distribution and using the Portmanteau theorem for the nonnegative lower-semicontinuous function \(g(\mathbf o)=e^{\zeta\|\mathbf o\|_2}\) gives
\[
    \mathbb{E}\!\left[e^{\zeta\|\tilde{\mathbf O}\|_2}\right] \leq \liminf_{t'\to\infty} \mathbb{E}\!\left[e^{\zeta\|\tilde{\mathbf O}(t')\|_2}\right]
    \leq \frac{e^{\theta_H G}e^{\zeta a}}{1-\varrho} < \infty.
\]

We next show that \(\zeta_\epsilon\) can be chosen to be at least of order \(\epsilon\) along the heavy-traffic sequence. Recall from the proof of Lemma~\ref{lem_our:bd_negative_delta_vO} that \(\epsilon'\) can be chosen as
\[
    \epsilon' = \min_{\eta\in S_n} \min_{m\in[n]} \left\{ \sum_{l=1}^m\mu_{\eta(l)} - n\lambda\sum_{l=1}^m f_{l,\eta} \right\}.
\]
Let
\[
    F_{m,\eta}:=\sum_{l=1}^m f_{l,\eta}.
\]
Since Equation~\eqref{eq:strictlyless_withrate} holds and the set of pairs \((m,\eta)\) is finite, we have
\[
    \delta_{\mathrm{ht}} := \min_{\eta\in S_n} \min_{m\in[n]\setminus\{n\}} \left\{ \sum_{l=1}^m\mu_{\eta(l)} - \|\boldsymbol\mu\|_1 F_{m,\eta} \right\} >0.
\]
Along the heavy-traffic sequence \(n\lambda=\|\boldsymbol\mu\|_1-\epsilon\), for every \(m<n\) and \(\eta\in S_n\),
\[
\begin{aligned}
    \sum_{l=1}^m\mu_{\eta(l)} - n\lambda F_{m,\eta} 
    & = \sum_{l=1}^m\mu_{\eta(l)} - (\|\boldsymbol\mu\|_1-\epsilon)F_{m,\eta}  \\
    & = \left( \sum_{l=1}^m\mu_{\eta(l)} - \|\boldsymbol\mu\|_1 F_{m,\eta} \right) + \epsilon F_{m,\eta} \\
    & \ge \delta_{\mathrm{ht}}.
\end{aligned}
\]
On the other hand, since \(F_{n,\eta}=1\), for every \(\eta\in S_n\),
\[
    \sum_{l=1}^n\mu_{\eta(l)} - n\lambda F_{n,\eta} = \|\boldsymbol\mu\|_1-(\|\boldsymbol\mu\|_1-\epsilon) = \epsilon.
\]
Therefore, for all \(0<\epsilon<\delta_{\mathrm{ht}}\),
\[
    \epsilon'=\epsilon.
\]

Since \(n\lambda=\|\boldsymbol\mu\|_1-\epsilon\le \|\boldsymbol\mu\|_1\), the constant \(K\) in the above drift bound can be bounded by a constant \(\bar K<\infty\) independent of \(\epsilon\), for all sufficiently small \(\epsilon\). Define
\[
    a_1:=\frac{TS_{\max}\|\boldsymbol{\gamma}\|_1}{\gamma_{\min}^{3/2}},
    \qquad
    a_2:=\frac{\bar K\sqrt{\gamma_{\max}}\|\boldsymbol{\gamma}\|_1}{T\gamma_{\min}}.
\]
Then, for all sufficiently small \(\epsilon\),
\[
    a \leq \max\left\{ a_1,\frac{a_2}{\epsilon} \right\} \le \frac{\max\{a_1,a_2\}}{\epsilon}.
\]
Thus,
\[
    \frac1a \ge \frac{\epsilon}{\max\{a_1,a_2\}}.
\]
Moreover, since \(\epsilon'=\epsilon\),
\[
    \epsilon_0 = \frac{\epsilon T\sqrt{\gamma_{\min}}}{2\|\boldsymbol{\gamma}\|_1},
\]
and hence
\[
    \frac{\epsilon_0}{2C_H} = \frac{\epsilon T\sqrt{\gamma_{\min}}}{4\|\boldsymbol{\gamma}\|_1 C_H}.
\]
For \(0<\epsilon<1\), we also have \(1/G\ge \epsilon/G\). Therefore,
\[
    \zeta_\epsilon = \min\left\{ \frac1G,\frac{\epsilon_0}{2C_H},\frac1a \right\} \ge c_T\epsilon,
\]
where
\[
    c_T := \min\left\{ \frac1G,\, \frac{T\sqrt{\gamma_{\min}}}{4\|\boldsymbol{\gamma}\|_1 C_H},\, \frac1{\max\{a_1,a_2\}} \right\} >0.
\]
Let
\[
    \epsilon_T:=\min\{\delta_{\mathrm{ht}},1\}.
\]
Then the above bound holds for all \(0<\epsilon<\epsilon_T\).

Now, since \(\mathbf q = (\sqrt{\gamma_l}o_l)_{l=1}^n\), we have
\[
    \|\mathbf q\|_1 = \langle \mathbf c,\mathbf o\rangle \overset{(a)}{\leq} \|\mathbf c\|_2\|\mathbf o\|_2 = \sqrt{\|\boldsymbol{\gamma}\|_1}\,\|\mathbf o\|_2,
\]
where \(\mathbf c=(\sqrt{\gamma_l})_{l=1}^n\) and (a) follows by the Cauchy-Schwarz inequality. Let
\[
    \theta_{\max}:=\frac{c_T}{2\sqrt{\|\boldsymbol{\gamma}\|_1}}.
\]
Then, for any \(\theta\in(0,\theta_{\max}]\) and \(0<\epsilon<\epsilon_T\),
\[
    \theta\epsilon\|\tilde{\mathbf Q}\|_1
    \le
    \theta\epsilon\sqrt{\|\boldsymbol{\gamma}\|_1}\,\|\tilde{\mathbf O}\|_2
    \le
    \frac{c_T\epsilon}{2}\|\tilde{\mathbf O}\|_2
    \le
    \zeta_\epsilon\|\tilde{\mathbf O}\|_2.
\]
Therefore,
\[
    \mathbb E\!\left[e^{\theta\epsilon\|\tilde{\mathbf Q}\|_1}\right]
    \le
    \mathbb E\!\left[e^{\zeta_\epsilon\|\tilde{\mathbf O}\|_2}\right]
    <\infty,
    \qquad
    \forall \theta\in(0,\theta_{\max}],\quad 0<\epsilon<\epsilon_T.
\]
Since \(x\le e^{\zeta_\epsilon x}/\zeta_\epsilon\) and \(x^2\le 2e^{\zeta_\epsilon x}/\zeta_\epsilon^2\) for all \(x\ge0\), it follows that
\[
    \mathbb E\!\left[\|\tilde{\mathbf O}\|_2\right] < \infty,
    \qquad
    \mathbb E\!\left[\|\tilde{\mathbf O}\|_2^2\right] < \infty.
\]
Moreover,
\[
    W_{\parallel}(\mathbf o) = \|\mathbf o_{\parallel}\|_2^2 = \frac{\|\mathbf q\|_1^2}{\|\boldsymbol{\gamma}\|_1} \leq \|\mathbf o\|_2^2.
\]
Combining these inequalities with the moment bounds above, we obtain
\[
    \mathbb E\!\left[\|\tilde{\mathbf Q}\|_1\right] < \infty,
    \qquad
    \mathbb E\!\left[W_{\parallel}(\tilde{\mathbf O})\right] < \infty.
\]

Under the steady-state setup in Section~\ref{prf:thm_upperbound_withrate_withratio}, the embedded chain is stationary. Hence, for every \(t' \in \mathbb N\), the random vectors \(\tilde{\mathbf Q}(t')\) and \(\tilde{\mathbf Q}(t'+1)\) have the same distribution. Since \(\tilde{\mathbf O}(t')\) is a deterministic transformation of \(\tilde{\mathbf Q}(t')\), \(\tilde{\mathbf O}(t')\) and \(\tilde{\mathbf O}(t'+1)\) also have the same distribution. Since \(W_{\parallel}(\tilde{\mathbf O}(t'))\) and \(\|\tilde{\mathbf Q}(t')\|_1\) are integrable, we have
\[
    \mathbb E\!\left[W_{\parallel}(\tilde{\mathbf O}(t'+1))\right]
    =
    \mathbb E\!\left[W_{\parallel}(\tilde{\mathbf O}(t'))\right],
    \qquad
    \mathbb E\!\left[\|\tilde{\mathbf Q}(t'+1)\|_1\right]
    =
    \mathbb E\!\left[\|\tilde{\mathbf Q}(t')\|_1\right].
\]
Therefore,
\[
    \mathbb E[\Delta W_{\parallel}(\tilde{\mathbf O}(t'))] = 0,
    \qquad
    \mathbb E[\Delta \|\tilde{\mathbf Q}(t')\|_1] = 0.
\]

\subsection{Proof of Lemma~\ref{lem_our:drift_identity_steady_state_with_rate}} \label{prf:drift_identity_steady_state_with_rate}
Under the steady-state setup of Section~\ref{prf:thm_upperbound_withrate_withratio}, we have
\begin{align}
    & \mathbb{E} \left [ \Delta W_{\parallel} ( \tilde{\mathbf{O}}(t') ) \right ] = \mathbb{E} \left [ \| \tilde{\mathbf{O}}_{\parallel}(t'+1) \|^2 - \| \tilde{\mathbf{O}}_{\parallel}(t') \|^2 \right ] \notag \\
    & = \mathbb{E} \left [ \| \mathbf{O}_{\parallel}(t+T) \|^2 - \| \mathbf{O}_{\parallel}(t) \|^2 \right ] \notag \\
    & = \frac{ 1 }{ \| \boldsymbol{\gamma} \|_1 } \left \{ \mathbb{E} \left [ 2 \| \mathbf{Q}(t) \|_1 \left (\sum\limits_{j=0}^{T-1} \sum\limits_{l=1}^n A_{\eta_{t'}(l)}(t+j) - \sum\limits_{j=0}^{T-1} \sum\limits_{l=1}^n S_{\eta_{t'}(l)}(t+j) \right ) \  \middle | \  \mathbf{O}(t)=\mathbf{o} \right ] \right. \notag \\
    & \qquad\qquad + \mathbb{E} \left [ \left (\sum\limits_{j=0}^{T-1} \sum\limits_{l=1}^n A_{\eta_{t'}(l)}(t+j) - \sum\limits_{j=0}^{T-1} \sum\limits_{l=1}^n S_{\eta_{t'}(l)}(t+j) \right ) ^2 \  \middle | \  \mathbf{O}(t)=\mathbf{o} \right ] \notag \\
    & \qquad\qquad + \mathbb{E} \left [ 2 \left ( \| \mathbf{Q}(t) \|_1 + \sum\limits_{j=0}^{T-1} \sum\limits_{l=1}^n A_{\eta_{t'}(l)}(t+j) \right )\left ( \sum\limits_{j=0}^{T-1} \sum\limits_{l=1}^n U_{\eta_{t'}(l)}(t+j) \right) \  \middle | \  \mathbf{O}(t)=\mathbf{o} \right ] \notag \\
    & \qquad\qquad - \mathbb{E} \left [ 2\left ( \sum\limits_{j=0}^{T-1} \sum\limits_{l=1}^n S_{\eta_{t'}(l)}(t+j) \right) \left ( \sum\limits_{j=0}^{T-1} \sum\limits_{l=1}^n U_{\eta_{t'}(l)}(t+j) \right) \  \middle | \  \mathbf{O}(t)=\mathbf{o} \right ] \notag \\
    & \qquad\qquad \left. + \mathbb{E} \left [ \left ( \sum\limits_{j=0}^{T-1} \sum\limits_{l=1}^n U_{\eta_{t'}(l)}(t+j) \right)^2 \  \middle | \  \mathbf{O}(t)=\mathbf{o} \right ] \right \} \notag 
\end{align}
Since, in steady-state, we have $ \mathbb{E}[\Delta W_{\parallel}(\tilde{\mathbf{O}}(t')) ] = 0 $, then
\begin{align*}
    & \frac{ 1 }{ \| \boldsymbol{\gamma} \|_1 } \mathbb{E} \left[ \| \mathbf{Q}(t) \|_1 \left (\sum\limits_{j=0}^{T-1} \sum\limits_{l=1}^n S_{\eta_{t'}(l)}(t+j) - \sum\limits_{j=0}^{T-1} \sum\limits_{l=1}^n A_{\eta_{t'}(l)}(t+j) \right ) \right ] \notag \\
    & = \frac{ 1 }{ 2 \| \boldsymbol{\gamma} \|_1 } \left \{ \mathbb{E} \left [ \left (\sum\limits_{j=0}^{T-1} \sum\limits_{l=1}^n A_{\eta_{t'}(l)}(t+j) - \sum\limits_{j=0}^{T-1} \sum\limits_{l=1}^n S_{\eta_{t'}(l)}(t+j) \right ) ^2 \right ] \right. \notag \\
    & \quad + 2 \mathbb{E} \left [ \left ( \| \mathbf{Q}(t) \|_1 + \sum\limits_{j=0}^{T-1} \sum\limits_{l=1}^n A_{\eta_{t'}(l)}(t+j) - \sum\limits_{j=0}^{T-1} \sum\limits_{l=1}^n S_{\eta_{t'}(l)}(t+j) \right )\left ( \sum\limits_{j=0}^{T-1} \sum\limits_{l=1}^n U_{\eta_{t'}(l)}(t+j) \right) \right ]  \notag \\
    & \quad \left. + \mathbb{E} \left [ \left ( \sum\limits_{j=0}^{T-1} \sum\limits_{l=1}^n U_{\eta_{t'}(l)}(t+j) \right)^2 \right ] \right \}
\end{align*}

\subsection{Proof of Lemma~\ref{lem_our:unused-moment_with_rate}} \label{prf:unused-moment_with_rate}
Under the steady-state setup of Section~\ref{prf:thm_upperbound_withrate_withratio}, we have
\begin{align}
    0& = \mathbb{E} \left [ \Delta \|\tilde{\mathbf Q}(t')\|_1 \right ] \notag \\
    & = \mathbb{E} \left [ \| \tilde{\mathbf{Q}}(t'+1) \|_1 - \| \tilde{\mathbf{Q}}(t') \|_1 \right ] \notag \\ 
    & = \mathbb{E} \left [ \| \mathbf{Q}(t+T) \|_1 - \| \mathbf{Q}(t) \|_1 \right ]  \notag \\ 
    & = \mathbb{E} \left [ \| \mathbf{Q}(t) \|_1 + \sum\limits_{j=0}^{T-1} \sum\limits_{l=1}^n A_{\eta_{t'}(l)}(t+j) - \sum\limits_{j=0}^{T-1} \sum\limits_{l=1}^n S_{\eta_{t'}(l)}(t+j) + \sum\limits_{j=0}^{T-1} \sum\limits_{l=1}^n U_{\eta_{t'}(l)}(t+j)  - \| \mathbf{Q}(t) \|_1 \right ], \notag
\end{align}
This implies
\begin{align}
    \mathbb{E} \left [ \sum\limits_{j=0}^{T-1} \sum\limits_{l=1}^n U_{\eta_{t'}(l)}(t+j) \right ] &= \mathbb{E} \left [ \sum\limits_{j=0}^{T-1} \sum\limits_{l=1}^n S_{\eta_{t'}(l)}(t+j) - \sum\limits_{j=0}^{T-1} \sum\limits_{l=1}^n A_{\eta_{t'}(l)}(t+j) \right ] \notag \\
    & = T\epsilon. \label{eq:upb_Eofu_withrate}
\end{align}
Using Equation~\eqref{eq:upb_Eofu_withrate}, we bound the first term in Equation~\eqref{eq:upb_equation_withrate} by
\begin{align}
    \mathbb{E} \left[
        \left (\sum\limits_{j=0}^{T-1} \sum\limits_{l=1}^n A_{\eta_{t'}(l)}(t+j) - \sum\limits_{j=0}^{T-1} \sum\limits_{l=1}^n S_{\eta_{t'}(l)}(t+j) \right ) ^2 \right ] 
    & \overset{(a)}{\leq} T^2 \epsilon ^2 + Tn \sigma_{\lambda}^2 + T \left ( \sum\limits_{l=1}^{n}\sigma_{\eta_{t'}(l)}^2 \right ), \notag 
\end{align}
where (a) follows by letting $Z := \sum_{j=0}^{T-1} \sum_{l=1}^n A_{\eta_{t'}(l)}(t+j) - \sum_{j=0}^{T-1} \sum_{l=1}^n S_{\eta_{t'}(l)}(t+j)$ and using the identity $\mathbb{E}[Z^2] = (\mathbb{E}[Z])^2 + \mathrm{Var}(Z)$. Moreover, because both the arrival process is independent across time, the service processes are independent across time, and the arrival and service processes are independent, we obtain $\mathrm{Var}(Z) = Tn\sigma_\lambda^2 + T\sum_{l=1}^n \sigma_{\eta_{t'}(l)}^2$.

The third term in Equation~\eqref{eq:upb_equation_withrate} can be bounded as follows:
\begin{align}
    \mathbb{E} \left[ \left ( \sum\limits_{j=0}^{T-1} \sum\limits_{l=1}^n U_{\eta_{t'}(l)}(t+j) \right)^2 \right ] & \overset{(a)}{\leq} TnS_{\max} \mathbb{E} \left[ \left ( \sum\limits_{j=0}^{T-1} \sum\limits_{l=1}^n U_{\eta_{t'}(l)}(t+j) \right) \right ] \notag  \\
    & \leq T^2nS_{\max}\epsilon \notag 
\end{align}
where (a) follows from the fact that $U_{\eta_{t'}(l)}(t+j) \leq S_{\max}$ for all $l\in [n]$ and $j\geq 0$.

\subsection{Proof of Lemma~\ref{lem_our:bd_cross_term_steady_state_with_rate}} \label{prf:bd_cross_term_steady_state_with_rate}
We first bound the integrand of the second term in Equation~\eqref{eq:upb_equation_withrate}.
\begin{align}
    & \frac{ 1 }{ \| \boldsymbol{\gamma} \|_1 } \left ( \| \mathbf{Q}(t) \|_1 + \sum\limits_{j=0}^{T-1} \sum\limits_{l=1}^n A_{\eta_{t'}(l)}(t+j) - \sum\limits_{j=0}^{T-1} \sum\limits_{l=1}^n S_{\eta_{t'}(l)}(t+j) \right )\left ( \sum\limits_{j=0}^{T-1} \sum\limits_{l=1}^n U_{\eta_{t'}(l)}(t+j) \right) \label{eq:upb_qu_org_withrate} \\
    & = \frac{ 1 }{ \| \boldsymbol{\gamma} \|_1 } \left ( \| \mathbf{Q}(t) \|_1 + \sum\limits_{j=0}^{T-1} \sum\limits_{l=1}^n A_{\eta_{t'}(l)}(t+j) - \sum\limits_{j=0}^{T-1} \sum\limits_{l=1}^n S_{\eta_{t'}(l)}(t+j) \right. \notag \\
    & \quad \left. + \sum\limits_{j=0}^{T-1} \sum\limits_{l=1}^n U_{\eta_{t'}(l)}(t+j) - \sum\limits_{j=0}^{T-1} \sum\limits_{l=1}^n U_{\eta_{t'}(l)}(t+j) \right ) \left ( \sum\limits_{j=0}^{T-1} \sum\limits_{l=1}^n U_{\eta_{t'}(l)}(t+j) \right) \notag \\
    & \leq \frac{ 1 }{ \| \boldsymbol{\gamma} \|_1 } \left ( \| \mathbf{Q}(t) \|_1 + \sum\limits_{j=0}^{T-1} \sum\limits_{l=1}^n A_{\eta_{t'}(l)}(t+j) - \sum\limits_{j=0}^{T-1} \sum\limits_{l=1}^n S_{\eta_{t'}(l)}(t+j) + \sum\limits_{j=0}^{T-1} \sum\limits_{l=1}^n U_{\eta_{t'}(l)}(t+j) \right ) \notag \\ 
    & \quad \times \left ( \sum\limits_{j=0}^{T-1} \sum\limits_{l=1}^n U_{\eta_{t'}(l)}(t+j) \right) \notag \\
    & = \frac{ 1 }{ \| \boldsymbol{\gamma} \|_1 } \| \mathbf{Q}(t+T) \|_1 \left ( \sum\limits_{j=0}^{T-1} \sum\limits_{l=1}^n U_{\eta_{t'}(l)}(t+j) \right) \notag \\
    & = \sum_{l=1}^n \left [ \frac{ \| \mathbf{Q}(t+T) \|_1 }{ \| \boldsymbol{\gamma} \|_1 } \left ( \sum\limits_{j=0}^{T-1} U_{\eta_{t'}(l)}(t+j) \right) \right ] \notag \\
    & = \sum_{l=1}^n \left [ \frac{1}{\sqrt{\gamma_{\eta_{t'}(l)}}} \left ( \frac{ \| \mathbf{Q}(t+T) \|_1 }{ \| \boldsymbol{\gamma} \|_1 } \sqrt{\gamma_{\eta_{t'}(l)}} - \frac{Q_{\eta_{t'}(l)}(t+T)  }{\sqrt{\gamma_{\eta_{t'}(l)}}} \right ) \left ( \sum\limits_{j=0}^{T-1} U_{\eta_{t'}(l)}(t+j) \right) \right ] \notag \\
    & \quad + \sum_{l=1}^n \left [ \frac{Q_{\eta_{t'}(l)}(t+T) }{\gamma_{\eta_{t'}(l)}}  \left ( \sum\limits_{j=0}^{T-1} U_{\eta_{t'}(l)}(t+j) \right) \right ] \notag \\
    & \leq \sum_{l=1}^n \left [ \frac{1}{\sqrt{\gamma_{\eta_{t'}(l)}}} \left ( \frac{ \| \mathbf{Q}(t+T) \|_1 }{ \| \boldsymbol{\gamma} \|_1 } \sqrt{\gamma_{\eta_{t'}(l)}} - \frac{Q_{\eta_{t'}(l)}(t+T)}{\sqrt{\gamma_{\eta_{t'}(l)}}} \right ) \left ( \sum\limits_{j=0}^{T-1} U_{\eta_{t'}(l)}(t+j) \right) \right ] \notag \\
    & \quad + \frac{1}{\gamma_{\min}} \sum_{l=1}^n \left [ Q_{\eta_{t'}(l)}(t+T) \left ( \sum\limits_{j=0}^{T-1} U_{\eta_{t'}(l)}(t+j) \right) \right ] \label{eq:upb_qu_withrate}
\end{align}
We first bound $ Q_{\eta_{t'}(l)}(t+T) \left( \sum_{j=0}^{T-1} U_{\eta_{t'}(l)}(t+j) \right ) $ in Equation~\eqref{eq:upb_qu_withrate}. For each $ l $, we have
\[ Q_{\eta_{t'}(l)}(t+j+1)U_{\eta_{t'}(l)}(t+j) = 0 \]
and
\[ Q_{\eta_{t'}(l)}(t+j+1) = Q_{\eta_{t'}(l)}(t+j) + A_{\eta_{t'}(l)}(t+j) - S_{\eta_{t'}(l)}(t+j) + U_{\eta_{t'}(l)}(t+j), \]
for all $j \in \{0,1,...,T-1\}$. It then follows that
\begin{align}
    & Q_{\eta_{t'}(l)}(t+T) \left( \sum\limits_{j=0}^{T-1} U_{\eta_{t'}(l)}(t+j) \right ) \notag \\
    & =  Q_{\eta_{t'}(l)}(t+T) \left( \sum\limits_{j=0}^{T-2} U_{\eta_{t'}(l)}(t+j) \right ) + Q_{\eta_{t'}(l)}(t+T) U_{\eta_{t'}(l)}(t+T-1) \notag \\
    & = Q_{\eta_{t'}(l)}(t+T-1) \left( \sum\limits_{j=0}^{T-2} U_{\eta_{t'}(l)}(t+j) \right ) \notag \\
    & \quad + \Big[A_{\eta_{t'}(l)}(t+T-1) - S_{\eta_{t'}(l)}(t+T-1)+U_{\eta_{t'}(l)}(t+T-1) \Big] \left( \sum\limits_{j=0}^{T-2} U_{\eta_{t'}(l)}(t+j) \right ) \notag \\
    & = Q_{\eta_{t'}(l)}(t+1) U_{\eta_{t'}(l)}(t) + \sum\limits_{i=1}^{T-1}\left [ \Big ( A_{\eta_{t'}(l)}(t+i) -S_{\eta_{t'}(l)}(t+i)+U_{\eta_{t'}(l)}(t+i) \Big ) \left( \sum\limits_{j=0}^{i-1} U_{\eta_{t'}(l)}(t+j) \right ) \right ] \notag \\
    & = \sum\limits_{i=1}^{T-1}\left [ \Big ( A_{\eta_{t'}(l)}(t+i) -S_{\eta_{t'}(l)}(t+i) + U_{\eta_{t'}(l)}(t+i) \Big ) \left( \sum\limits_{j=0}^{i-1} U_{\eta_{t'}(l)}(t+j) \right ) \right ] \notag \\
    & \overset{(a)}{\leq} \sum\limits_{i=1}^{T-1}\left [ A_{\eta_{t'}(l)}(t+i) \left( \sum\limits_{j=0}^{i-1} U_{\eta_{t'}(l)}(t+j) \right ) \right ] \notag \\
    & \leq \sum\limits_{i=1}^{T-1} A_{\eta_{t'}(l)}(t+i) \left( \sum\limits_{j=0}^{T-1}\sum\limits_{k=1}^{n} U_k(t+j) \right ), \label{eq:upb_alul_withrate}
\end{align}
where (a) follows from the fact that $S_{\eta_{t'}(l)}(t+i) \geq U_{\eta_{t'}(l)}(t+i)$ for all $ l\in [n]$ and $i \in \{ 0,...,T-1 \}$. Summing over all $n$ servers gives
\begin{align}
    \sum\limits_{l=1}^{n}\left [ Q_{\eta_{t'}(l)}(t+T) \left( \sum\limits_{j=0}^{T-1} U_{\eta_{t'}(l)}(t+j) \right ) \right ] & \leq \sum\limits_{l=1}^{n} \left [ \sum\limits_{i=1}^{T-1} A_{\eta_{t'}(l)}(t+i)  \left( \sum\limits_{j=0}^{T-1}\sum\limits_{k=1}^{n} U_k(t+j) \right ) \right ] \notag \\
    & = \left [ \sum\limits_{i=1}^{T-1} \left ( \sum\limits_{l=1}^{n} A_{\eta_{t'}(l)}(t+i) \right ) \right ] \left( \sum\limits_{j=0}^{T-1}\sum\limits_{k=1}^{n} U_k(t+j) \right ) \notag \\
    & \overset{(a)}{\leq} T n A_{\max} \left( \sum\limits_{j=0}^{T-1}\sum\limits_{k=1}^{n} U_k(t+j) \right ), \notag
\end{align}
where (a) follows from the assumption that the number of arrivals is bounded by $ n A_{\max} $. Combining this with equations~\eqref{eq:upb_qu_org_withrate} and \eqref{eq:upb_qu_withrate}, we obtain
\begin{align}
    & \frac{ 1 }{ \| \boldsymbol{\gamma} \|_1 } \mathbb{E} \left[ \left ( \| \mathbf{Q}(t) \|_1 + \sum\limits_{j=0}^{T-1} \sum\limits_{l=1}^n A_{\eta_{t'}(l)}(t+j) - \sum\limits_{j=0}^{T-1} \sum\limits_{l=1}^n S_{\eta_{t'}(l)}(t+j) \right )\left ( \sum\limits_{j=0}^{T-1} \sum\limits_{l=1}^n U_{\eta_{t'}(l)}(t+j) \right) \right ] \notag \\
    & \leq \mathbb{E} \left [ \sum_{l=1}^n \left [ \frac{1}{\sqrt{\gamma_{\eta_{t'}(l)}}} \left ( \frac{ \| \mathbf{Q}(t+T) \|_1 }{ \| \boldsymbol{\gamma} \|_1 } \sqrt{\gamma_{\eta_{t'}(l)}} - \frac{Q_{\eta_{t'}(l)}(t+T)}{\sqrt{\gamma_{\eta_{t'}(l)}}} \right ) \left ( \sum\limits_{j=0}^{T-1} U_{\eta_{t'}(l)}(t+j) \right) \right ] \right ] \notag \\
    & \quad + \frac{1}{\gamma_{\min}} \mathbb{E} \left [ \sum_{l=1}^n \left [ Q_{\eta_{t'}(l)}(t+T) \left ( \sum\limits_{j=0}^{T-1} U_{\eta_{t'}(l)}(t+j) \right) \right ] \right ] \notag \\
    & \leq \mathbb{E} \left [ \sum_{l=1}^n \left [ \frac{1}{\sqrt{\gamma_{\eta_{t'}(l)}}} \left ( \frac{ \| \mathbf{Q}(t+T) \|_1 }{ \| \boldsymbol{\gamma} \|_1 } \sqrt{\gamma_{\eta_{t'}(l)}} - \frac{Q_{\eta_{t'}(l)}(t+T)}{\sqrt{\gamma_{\eta_{t'}(l)}}} \right ) \left ( \sum\limits_{j=0}^{T-1} U_{\eta_{t'}(l)}(t+j) \right) \right ] \right ] \notag \\
    & \quad + \frac{T n A_{\max}}{\gamma_{\min}} \mathbb{E} \left [ \sum\limits_{j=0}^{T-1}\sum\limits_{k=1}^{n} U_k(t+j) \right ] \notag \\
    & \overset{(a)}{\leq} \sqrt{\mathbb{E} \left[ \sum\limits_{l=1}^{n} \frac{1}{\gamma_{\eta_{t'}(l)}} \left ( \frac{ \| \mathbf{Q}(t+T) \|_1 }{ \| \boldsymbol{\gamma} \|_1 } \sqrt{\gamma_{\eta_{t'}(l)}} - \frac{Q_{\eta_{t'}(l)}(t+T)}{\sqrt{\gamma_{\eta_{t'}(l)}}} \right )^2 \right ] \mathbb{E} \left[ \sum\limits_{l=1}^{n} \left ( \sum\limits_{j=0}^{T-1} U_{\eta_{t'}(l)}(t+j) \right )^2 \right ] } \notag \\
    & \quad + \frac{T n A_{\max}}{\gamma_{\min}} \mathbb{E} \left [ \sum\limits_{j=0}^{T-1}\sum\limits_{k=1}^{n} U_k(t+j) \right ] \notag \\
    & \leq \frac{1}{\sqrt{\gamma_{\min}}} \sqrt{\mathbb{E} \left[ \sum\limits_{l=1}^{n} \left ( \frac{ \| \mathbf{Q}(t+T) \|_1 }{ \| \boldsymbol{\gamma} \|_1 } \sqrt{\gamma_{\eta_{t'}(l)}} - \frac{Q_{\eta_{t'}(l)}(t+T)}{\sqrt{\gamma_{\eta_{t'}(l)}}} \right )^2 \right ] \mathbb{E} \left[ \sum\limits_{l=1}^{n} \left ( \sum\limits_{j=0}^{T-1} U_{\eta_{t'}(l)}(t+j) \right )^2 \right ] } \notag \\
    & \quad + \frac{T n A_{\max}}{\gamma_{\min}} \mathbb{E} \left [ \sum\limits_{j=0}^{T-1}\sum\limits_{k=1}^{n} U_k(t+j) \right ] \notag \\
    & \overset{(b)}{=} \frac{1}{\sqrt{\gamma_{\min}}} \sqrt{\mathbb{E} \left[ \sum\limits_{l=1}^{n} \left ( \frac{ \| \mathbf{Q}(t) \|_1 }{ \| \boldsymbol{\gamma} \|_1 } \sqrt{\gamma_{\eta_{t'}(l)}} - \frac{Q_{\eta_{t'}(l)}(t)}{\sqrt{\gamma_{\eta_{t'}(l)}}} \right )^2 \right ] \mathbb{E} \left[ \sum\limits_{l=1}^{n} \left ( \sum\limits_{j=0}^{T-1} U_{\eta_{t'}(l)}(t+j) \right )^2 \right ] } \notag \\
    & \quad + \frac{T n A_{\max}}{\gamma_{\min}} \mathbb{E} \left [ \sum\limits_{j=0}^{T-1}\sum\limits_{k=1}^{n} U_k(t+j) \right ] \notag \\
    & = \frac{1}{\sqrt{\gamma_{\min}}} \sqrt{\mathbb{E} \left[ \| \mathbf{O}_{\perp} \|_2^2 \right ] \mathbb{E} \left[ \sum\limits_{l=1}^{n} \left ( \sum\limits_{j=0}^{T-1} U_{\eta_{t'}(l)}(t+j) \right )^2 \right ] } + \frac{T n A_{\max}}{\gamma_{\min}} \mathbb{E} \left [ \sum\limits_{j=0}^{T-1}\sum\limits_{k=1}^{n} U_k(t+j) \right ],  \label{eq:upb_Eofqu_withrate}
\end{align}
where (a) follows from the Cauchy-Schwarz inequality, and (b) follows from the fact that $\tilde{\mathbf{Q}}(t'+1) \overset{d}{=} \tilde{\mathbf{Q}}(t') $ because they are in steady-state. The summation is invariant to the permutation $\eta_{t'}$, so the steady-state distribution of $\mathbf Q(t+T)$ can be replaced by that of $\mathbf Q(t)$. 

To bound the first term in Equation~\eqref{eq:upb_Eofqu_withrate}, we use that
\begin{align}
    \mathbb{E} \left[ \sum\limits_{l=1}^{n} \left ( \sum\limits_{j=0}^{T-1} U_{\eta_{t'}(l)}(t+j) \right )
        ^2 \right ] & \leq \mathbb{E} \left[ \sum\limits_{l=1}^{n} \left [ \left ( \sum\limits_{j=0}^{T-1} U_{\eta_{t'}(l)}(t+j) \right )
        TS_{\max} \right ] \right ] \notag \\
    & = TS_{\max} \mathbb{E} \left [ \sum\limits_{j=0}^{T-1} \sum\limits_{l=1}^n U_{\eta_{t'}(l)}(t+j) \right ]  \notag \\
    & = T^2 S_{\max}\epsilon. \label{eq:upb_Eofu2_withrate}
\end{align}

Using the bounds of equations~\eqref{eq:upb_Eofu_withrate} and \eqref{eq:upb_Eofu2_withrate} in Equation~\eqref{eq:upb_Eofqu_withrate}, we bound the second term in Equation~\eqref{eq:upb_equation_withrate} by
\begin{align}
    & \frac{ 1 }{ \| \boldsymbol{\gamma} \|_1 } \mathbb{E} \left[
        \left ( \| \mathbf{Q}(t) \|_1 + \sum\limits_{j=0}^{T-1} \sum\limits_{l=1}^n A_{\eta_{t'}(l)}(t+j) - \sum\limits_{j=0}^{T-1} \sum\limits_{l=1}^n S_{\eta_{t'}(l)}(t+j) \right )\left ( \sum\limits_{j=0}^{T-1} \sum\limits_{l=1}^n U_{\eta_{t'}(l)}(t+j) \right) \right ] \notag \\
    & \leq \frac{1}{\sqrt{\gamma_{\min}}} \sqrt{\mathbb{E} \left[ \| \mathbf{O}_{\perp} \|_2^2 \right ] \mathbb{E} \left[ \sum\limits_{l=1}^{n} \left ( \sum\limits_{j=0}^{T-1} U_{\eta_{t'}(l)}(t+j) \right )^2 \right ] } + \frac{T n A_{\max}}{\gamma_{\min}} \mathbb{E} \left [ \sum\limits_{j=0}^{T-1}\sum\limits_{k=1}^{n} U_k(t+j) \right ] \notag \\
    & \overset{(a)}{\leq}  \frac{\sqrt{ N_{\perp}^2(n,T) T^2 S_{\max}\epsilon}}{\sqrt{\gamma_{\min}}} + \frac{ T n A_{\max}T\epsilon}{\gamma_{\min}} \notag \\
    & = \frac{ T N_{\perp}(n,T) \sqrt{ S_{\max}\epsilon}}{\sqrt{\gamma_{\min}}} + \frac{ T^2 n A_{\max} \epsilon}{\gamma_{\min}} \notag
\end{align}
where (a) follows from Theorem \ref{thm_our:SSC_withrate}.

\section{Proof of Theorem~\ref{thm_our:queuelength_distribution}} \label{prf:thm_our:queuelength_distribution}
The proof uses a one-cycle transform identity in steady state. The main technical challenge is to control the unused-service term, which couples the large queue length with boundary idleness events. We prove that this term is \(o(\epsilon^2)\), so the leading-order behavior of the transform is governed by the aggregate arrival and service fluctuations. This yields the limiting MGF of \(\epsilon\|\mathbf Q\|_1\). The vector convergence of \(\epsilon\mathbf Q\) then follows by combining this total queue-length limit with Theorem~\ref{thm_our:SSC_withrate} and Slutsky's theorem. \\

Assume that the original Markov chain is in steady state under a fixed policy  \(\pi\in\Pi\) with fixed \(T\), and fix a sampling epoch \(t\) that is a multiple of \(T\). By Lemma~\ref{lem_our:ss_drift_zero_with_rate}, there exist constants  \(\epsilon_T>0\) and \(\theta_{\max}>0\), independent of \(\epsilon\), such that,  for all \(0<\epsilon<\epsilon_T\) and all \(\theta\in(0,\theta_{\max}]\),
\[
    \mathbb E\!\left[e^{\theta\epsilon\|\mathbf Q(t)\|_1}\right]<\infty .
\]
Throughout the proof, fix \(0<\epsilon<\epsilon_T\) and  \(\theta\in(0,\theta_{\max}]\). Therefore, all moment generating functions  appearing below are well defined for sufficiently small positive \(\theta\). For negative $\theta$, the transform of $\epsilon\|\mathbf Q(t)\|_1$ is bounded by one, so it suffices to establish the MGF convergence for sufficiently small positive $\theta$.

We define the following cycle-level vectors
\begin{align*}
    \boldsymbol{\Sigma} \mathbf{U_o}(t+T-1) & := \left( \sum_{j=0}^{T-1} \frac{U_l(t+j)}{\sqrt{\gamma_l}} \right)_{l=1}^n, \qquad  \boldsymbol{\Sigma}\mathbf{U}(t+T-1) := \left( \sum_{j=0}^{T-1} U_l(t+j) \right)_{l=1}^n, \\
    \boldsymbol{\Sigma}\mathbf{A}(t+T-1) &:= \left ( \sum_{j=0}^{T-1} A_l(t+j) \right )_{l=1}^n, \qquad \boldsymbol{\Sigma}\mathbf{S}(t+T-1) := \left ( \sum_{j=0}^{T-1} S_l(t+j) \right )_{l=1}^n.
\end{align*}
Before the three steps, we record a one-cycle transform identity that will be used throughout the proof. Using the one-cycle queue dynamics, we have
\begin{align}
    & \mathbb{E} \left [ \left ( e^{\theta \epsilon \left < \mathbf{c}, \mathbf{O}(t+T) \right >} - 1 \right ) \left ( e^{-\theta \epsilon \left < \mathbf{c}, \boldsymbol{\Sigma} \mathbf{U_o}(t+T-1) \right >} - 1 \right )\right ] \notag \\
    & = \mathbb{E} \left [ \left ( e^{\theta \epsilon \| \mathbf{Q}(t+T) \|_1 } - 1 \right ) \left ( e^{-\theta \epsilon \| \boldsymbol{\Sigma} \mathbf{U}(t+T-1) \|_1 } - 1 \right )\right ] \notag \\
    &  = \mathbb{E} \left [ e^{\theta \epsilon \| \mathbf{Q}(t+T) \|_1 -\theta \epsilon \| \boldsymbol{\Sigma} \mathbf{U}(t+T-1) \|_1 } \right ] - \mathbb{E} \left [ e^{\theta \epsilon \| \mathbf{Q}(t+T) \|_1 } \right ] + 1 - \mathbb{E} \left [ e^{-\theta \epsilon \| \boldsymbol{\Sigma} \mathbf{U}(t+T-1) \|_1 } \right ] \notag \\
    &  = \mathbb{E} \left [ e^{\theta \epsilon \| \mathbf{Q}(t) \|_1 + \theta \epsilon \| \boldsymbol{\Sigma} \mathbf{A}(t+T-1) \|_1 - \theta \epsilon \| \boldsymbol{\Sigma} \mathbf{S}(t+T-1) \|_1 } \right ] - \mathbb{E} \left [ e^{\theta \epsilon \| \mathbf{Q}(t+T) \|_1 } \right ] + 1 - \mathbb{E} \left [ e^{-\theta \epsilon \| \boldsymbol{\Sigma} \mathbf{U}(t+T-1) \|_1 } \right ] \notag \\
    & \overset{(a)}{=} \mathbb{E} \left [ e^{\theta \epsilon \| \mathbf{Q}(t) \|_1 + \theta \epsilon \| \boldsymbol{\Sigma} \mathbf{A}(t+T-1) \|_1 - \theta \epsilon \| \boldsymbol{\Sigma} \mathbf{S}(t+T-1) \|_1 } \right ] - \mathbb{E} \left [ e^{\theta \epsilon \| \mathbf{Q}(t) \|_1 } \right ] + 1 - \mathbb{E} \left [ e^{-\theta \epsilon \| \boldsymbol{\Sigma} \mathbf{U}(t+T-1) \|_1 } \right ]. \label{eq:exponential_express_lemma}
\end{align}
where (a) follows from the steady-state assumption. \\

\noindent \textbf{Step 1.} We show that the unused-service cross term is of order $o(\epsilon^2)$.

\begin{lemma} \label{lemma_our:exponential_is_epsilon_square}
    There exists $\theta_{\max} > 0$ such that for all $\theta \in (0, \theta_{\max}]$, we have:
    \begin{align}
        & \mathbb{E} \left [ \left ( e^{\theta \epsilon \left < \mathbf{c}, \mathbf{O}(t+T) \right >} - 1 \right ) \left ( e^{-\theta \epsilon \left < \mathbf{c}, \boldsymbol{\Sigma} \mathbf{U_o}(t+T-1) \right >} - 1 \right )\right ] \in o(\epsilon^2).\notag 
    \end{align}
\end{lemma}
The proof is given in Appendix~\ref{prf:exponential_is_epsilon_square}. \\

\noindent \textbf{Step 2.} We identify the limiting MGF of $\epsilon\|\mathbf Q(t)\|_1$.

Applying Lemma~\ref{lemma_our:exponential_is_epsilon_square} in Equation~\eqref{eq:exponential_express_lemma}, we obtain
\begin{align*}
    \mathbb{E} \left [ e^{\theta \epsilon \| \mathbf{Q}(t) \|_1 } \left ( 1 - e^{ \theta \epsilon (\| \boldsymbol{\Sigma} \mathbf{A}(t+T-1) \|_1 - \| \boldsymbol{\Sigma} \mathbf{S}(t+T-1) \|_1 ) } \right ) \right ] = 1 - \mathbb{E} \left [ e^{-\theta \epsilon \| \boldsymbol{\Sigma} \mathbf{U}(t+T-1) \|_1 } \right ] + o(\epsilon^2).
\end{align*}
Noting that $\|\boldsymbol{\Sigma}\mathbf A(t+T-1)\|_1-\|\boldsymbol{\Sigma}\mathbf S(t+T-1)\|_1$ is independent of $\mathbf Q(t)$ and reorganizing the terms, we obtain
\begin{align*}
    \mathbb{E} \left [ e^{\theta \epsilon \| \mathbf{Q}(t) \|_1 } \right ] & = \frac{ 1 - \mathbb{E} \left [ e^{-\theta \epsilon \| \boldsymbol{\Sigma} \mathbf{U}(t+T-1) \|_1 } \right ] + o(\epsilon^2) }{ 1 - \mathbb{E} \left [ e^{ \theta \epsilon (\| \boldsymbol{\Sigma} \mathbf{A}(t+T-1) \|_1 - \| \boldsymbol{\Sigma} \mathbf{S}(t+T-1) \|_1 ) } \right ] } \\
    & = \frac{ \theta T \epsilon^2 + o(\epsilon^2) }{ \theta T \epsilon^2 - \frac{ (\theta \epsilon)^2}{2} \left ( T n \sigma_{\lambda}^2 + T \sum\limits_{l=1}^n \sigma_l^2 + T^2 \epsilon^2 \right ) + O(\epsilon^3) } \\
    & = \frac{1 + o(1)}{1 - \frac{\theta}{2} \left ( n \sigma_{\lambda}^2 + \sum\limits_{l=1}^n \sigma_l^2 \right ) + O(\epsilon) },
\end{align*}
and thus that $\epsilon \| \mathbf{Q}(t) \|_1$ converges to an exponential random variable with mean $ \left ( n \sigma_{\lambda}^2 + \sum_{l=1}^n \sigma_l^2 \right ) / 2 $.\\

\noindent \textbf{Step 3.} We use state-space collapse to obtain the vector limit.

By the definition of $\mathbf O_\perp(t)$, for each $l\in[n]$,
\[
    Q_l(t) = \frac{\gamma_l}{\|\boldsymbol\gamma\|_1}\|\mathbf Q(t)\|_1 + \sqrt{\gamma_l}O_{\perp,l}(t).
\]
Multiplying by $\epsilon$ and writing the identities in vector form gives
\[
    \epsilon\mathbf Q(t)
    -
    \frac{\boldsymbol\gamma}{\|\boldsymbol\gamma\|_1}
    \epsilon\|\mathbf Q(t)\|_1
    =
    \epsilon
    \left(
    \sqrt{\gamma_l}O_{\perp,l}(t)
    \right)_{l=1}^n.
\]
Hence,
\begin{align*}
    \mathbb{E} \left [ \left\| \epsilon\mathbf Q(t) - \frac{\boldsymbol\gamma}{\|\boldsymbol\gamma\|_1} \epsilon\|\mathbf Q(t)\|_1 \right\|_2 \right ] & = \mathbb{E} \left [ \epsilon \left\| \left( \sqrt{\gamma_l}O_{\perp,l}(t) \right)_{l=1}^n \right\|_2 \right ] \\
    & \leq \epsilon \sqrt{\gamma_{\max}} \mathbb{E} \left [ \|\mathbf O_\perp(t)\|_2 \right ] \\
    & \overset{(a)}{\leq} \epsilon \sqrt{\gamma_{\max}} \sqrt{ \mathbb{E} \left [ \|\mathbf O_\perp(t)\|_2^2 \right ] } \\
    & \overset{(b)}{\leq} \epsilon \sqrt{\gamma_{\max}} N_{\perp}(n,T)
\end{align*}
where (a) follows from Jensen's inequality applied to $ g(y)=\sqrt{y} $ for all $y \geq 0$ and (b) follows from Theorem~\ref{thm_our:SSC_withrate}. 

By Markov's inequality and the bound above, for every $\delta>0$,
\[
    0 \leq \mathbb P\left( \left\| \epsilon\mathbf Q(t) - \frac{\boldsymbol\gamma}{\|\boldsymbol\gamma\|_1} \epsilon\|\mathbf Q(t)\|_1 \right\|_2 > \delta \right) \leq \frac{\epsilon\sqrt{\gamma_{\max}}N_\perp(n,T)}{\delta}.
\]
Since $N_\perp(n,T)$ is independent of $\epsilon$, it follows that
\[
    \lim_{\epsilon\downarrow0} \mathbb P\left( \left\| \epsilon\mathbf Q(t) - \frac{\boldsymbol\gamma}{\|\boldsymbol\gamma\|_1} \epsilon\|\mathbf Q(t)\|_1 \right\|_2>\delta \right) = 0.
\]
Thus, the difference above converges to $\mathbf 0$ in probability. Since Step 2 shows that $\epsilon\|\mathbf Q(t)\|_1$ converges in distribution to an exponential random variable $\Upsilon$ with mean \( (n\sigma_\lambda^2+\sum_{l=1}^n\sigma_l^2)/2  \), Slutsky's Theorem implies that $\epsilon\mathbf Q(t)$ converges in distribution to
\[
    \Upsilon
    \left(
    \frac{\gamma_1}{\|\boldsymbol\gamma\|_1},
    \ldots,
    \frac{\gamma_n}{\|\boldsymbol\gamma\|_1}
    \right).
\]

\subsection{Proof of Lemma~\ref{lemma_our:exponential_is_epsilon_square}} \label{prf:exponential_is_epsilon_square}
To prove this lemma, we expand the term as follows.
\begin{align}
    & \mathbb{E} \left [ \left ( e^{\theta \epsilon \left < \mathbf{c}, \mathbf{O}(t+T) \right >} - 1 \right ) \left ( e^{-\theta \epsilon \left < \mathbf{c}, \boldsymbol{\Sigma} \mathbf{U_o}(t+T-1) \right >} - 1 \right ) \right ] \notag \\
    & = \mathbb{E} \left [ \left ( e^{\theta \epsilon \| \mathbf{Q}(t+T) \|_1 } - 1 \right ) \left ( e^{-\theta \epsilon \| \boldsymbol{\Sigma} \mathbf{U}(t+T-1) \|_1 } - 1 \right )\right ] \notag \\
    & \leq \mathbb{E} \left [ \left | \left ( e^{\theta \epsilon \| \mathbf{Q}(t+T) \|_1 } - 1 \right ) \left ( e^{-\theta \epsilon \| \boldsymbol{\Sigma} \mathbf{U}(t+T-1) \|_1 } - 1 \right ) \right | \right ] \notag \\
    & = |\theta| \epsilon \mathbb{E} \left [ \left | \left \| \boldsymbol{\Sigma} \mathbf{U}(t+T-1) \right \|_1 \left ( e^{\theta \epsilon \| \mathbf{Q}(t+T) \|_1 } - 1 \right ) \left ( \frac{ e^{-\theta \epsilon \| \boldsymbol{\Sigma} \mathbf{U}(t+T-1) \|_1 } - 1 }{-\theta \epsilon \| \boldsymbol{\Sigma} \mathbf{U}(t+T-1) \|_1} \right ) \right | \mathbf{1}_{\{\| \boldsymbol{\Sigma} \mathbf{U}(t+T-1) \|_1 \neq 0 \}} \right ] \notag \\
    & \overset{(a)}{\leq} \theta \epsilon \left ( \frac{ e^{ \theta \epsilon Tn S_{\max}} - 1 }{ \theta \epsilon Tn S_{\max} } \right ) \mathbb{E} \left [ \left | \left \| \boldsymbol{\Sigma} \mathbf{U}(t+T-1) \right \|_1 \left ( e^{\theta \epsilon \| \mathbf{Q}(t+T) \|_1 } - 1 \right ) \right | \mathbf{1}_{\{\| \boldsymbol{\Sigma} \mathbf{U}(t+T-1) \|_1 \neq 0 \}} \right ] \notag \\
    & = \theta \epsilon \left ( \frac{ e^{ \theta \epsilon Tn S_{\max}} - 1 }{ \theta \epsilon Tn S_{\max} } \right ) \mathbb{E} \left [ \left | \sum_{l=1}^n \left [ \left ( \sum_{j=0}^{T-1} U_l(t+j) \right ) \left ( e^{\theta \epsilon \| \mathbf{Q}(t+T) \|_1 } - 1 \right ) \right ]  \right | \mathbf{1}_{\{\| \boldsymbol{\Sigma} \mathbf{U}(t+T-1) \|_1 \neq 0 \}} \right ] \notag \\
    & = \theta \epsilon \left ( \frac{ e^{ \theta \epsilon Tn S_{\max}} - 1 }{ \theta \epsilon Tn S_{\max} } \right ) \notag \\ 
    & \quad \times \mathbb{E} \left [ \left | \sum_{l=1}^n \left [ \left ( \sum_{j=0}^{T-1} U_l(t+j) \right ) \left ( e^{ \theta \epsilon \| \boldsymbol{\gamma} \|_1 \frac{Q_l(t+T)}{ \gamma_l } } e^{ \frac{ \theta \epsilon \| \boldsymbol{\gamma} \|_1 }{ \sqrt{\gamma_l} } \left ( \frac{ \| \mathbf{Q}(t+T) \|_1 }{ \| \boldsymbol{\gamma} \|_1 } \sqrt{\gamma_l} - \frac{Q_l(t+T)}{\sqrt{\gamma_l}} \right ) } -1 \right ) \right ] \right | \mathbf{1}_{\{\| \boldsymbol{\Sigma} \mathbf{U}(t+T-1) \|_1 \neq 0 \}} \right ] \notag \\
    & = \theta \epsilon \left ( \frac{ e^{ \theta \epsilon Tn S_{\max}} - 1 }{ \theta \epsilon Tn S_{\max} } \right ) \notag \\
    & \quad \times \mathbb{E} \left [ \left | \sum_{l=1}^n \left [ \left ( \sum_{j=0}^{T-1} U_l(t+j) \right ) e^{ \frac{ \theta \epsilon \| \boldsymbol{\gamma} \|_1 }{ \gamma_l } Q_l(t+T) } e^{ - \frac{ \theta \epsilon \| \boldsymbol{\gamma} \|_1 }{ \sqrt{\gamma_l} } O_{\perp, l}(t+T) } - \left ( \sum_{j=0}^{T-1} U_l(t+j) \right ) \right ] \right | \mathbf{1}_{\{\| \boldsymbol{\Sigma} \mathbf{U}(t+T-1) \|_1 \neq 0 \}} \right ] \notag \\
    & \leq \theta \epsilon \left ( \frac{ e^{ \theta \epsilon Tn S_{\max}} - 1 }{\theta \epsilon Tn S_{\max} } \right ) \notag \\
    & \quad \times \mathbb{E} \left [ \sum_{l=1}^n \left [ \left | \left ( \sum_{j=0}^{T-1} U_l(t+j) \right ) e^{ \frac{ \theta \epsilon \| \boldsymbol{\gamma} \|_1 }{ \gamma_l } Q_l(t+T) } e^{ - \frac{ \theta \epsilon \| \boldsymbol{\gamma} \|_1 }{ \sqrt{\gamma_l} } O_{\perp, l}(t+T) } - \left ( \sum_{j=0}^{T-1} U_l(t+j) \right ) \right | \right ] \mathbf{1}_{\{\| \boldsymbol{\Sigma} \mathbf{U}(t+T-1) \|_1 \neq 0 \}} \right ] \label{eq:exponential_distribution}
\end{align}
where (a) follows from the fact that $ (e^x-1)/x $ is nonnegative and nondecreasing for all $ x \in \mathbb{R} $ and that $ \| \boldsymbol{\Sigma}\mathbf{U}(t+T-1) \|_1 \leq TnS_{\max} $.

Let $\alpha_l = \theta \epsilon \| \boldsymbol{\gamma} \|_1 / \gamma_l$, we bound the term $ \left ( \sum_{j=0}^{T-1} U_l(t+j) \right )e^{ \alpha_l Q_l(t+T) }$, for each $l\in [n]$, as follows. For each server $l$, since $U_l(t)Q_l(t+1)=0$ for all $l \in [n]$ and $t \geq 0$, then
\begin{align*}
    U_l(t+j) \left ( e^{\alpha Q_l(t+j+1)} - 1 \right ) = 0, \qquad \forall \, \alpha \in \mathbb{R}.
\end{align*}
The same term can then be bounded as follows.
\begin{align}
    & \left ( \sum\limits_{j=0}^{T-1} U_l(t+j) \right ) e^{ \alpha_l Q_l(t+T) } \notag \\
    & = \sum\limits_{j=0}^{T-1} U_l(t+j)e^{\alpha_l Q_l(t+T)} \notag \\
    & = \sum\limits_{j=0}^{T-1} U_l(t+j) e^{\alpha_l Q_l(t+j+1)} \prod\limits_{k=j+1}^{T-1} e^{\alpha_l\left(A_l(t+k)-S_l(t+k)+U_l(t+k)\right)} \notag \\
    & \overset{(a)}{=} \sum\limits_{j=0}^{T-1} U_l(t+j) \prod\limits_{k=j+1}^{T-1} e^{\alpha_l\left(A_l(t+k)-S_l(t+k)+U_l(t+k)\right)} \notag \\
    & \overset{(b)}{\leq} \sum\limits_{j=0}^{T-1} U_l(t+j) \prod\limits_{k=j+1}^{T-1} e^{\alpha_l A_l(t+k)} \notag \\
    & \overset{(c)}{\leq} \sum\limits_{j=0}^{T-1} U_l(t+j) \prod\limits_{k=j+1}^{T-1} e^{\alpha_l nA_{\max}} \notag \\
    & = \sum\limits_{j=0}^{T-1} U_l(t+j) e^{\alpha_l nA_{\max}(T-j-1)} \notag \\
    & \leq e^{\alpha_l nA_{\max}(T-1)} \left( \sum\limits_{j=0}^{T-1} U_l(t+j) \right). \label{eq:unused_exp_bound}
\end{align}
where (a) follows from the fact that $U_l(t+j)Q_l(t+j+1)=0$ for all $j=0,\ldots,T-1$, and hence $U_l(t+j)e^{\alpha_l Q_l(t+j+1)}=U_l(t+j)$. (b) follows from $U_l(t+k)\leq S_l(t+k)$, which implies $ A_l(t+k)-S_l(t+k)+U_l(t+k)\leq A_l(t+k) $, and from $\alpha_l\geq 0$. (c) follows from $A_l(t+k)\leq nA_{\max}$.

Equation~\eqref{eq:exponential_distribution} can then be bounded as follows. For each $l\in[n]$, by the triangle inequality,
\begin{align}
    & \left| \left( \sum_{j=0}^{T-1} U_l(t+j) \right) e^{\frac{\theta\epsilon\|\boldsymbol{\gamma}\|_1}{\gamma_l}Q_l(t+T)} e^{-\frac{\theta\epsilon\|\boldsymbol{\gamma}\|_1}{\sqrt{\gamma_l}}O_{\perp,l}(t+T)} - \left( \sum_{j=0}^{T-1} U_l(t+j) \right) \right| \notag \\
    & = \left| \left( \sum_{j=0}^{T-1} U_l(t+j) \right) e^{\frac{\theta\epsilon\|\boldsymbol{\gamma}\|_1}{\gamma_l}Q_l(t+T)} \left ( e^{-\frac{\theta\epsilon\|\boldsymbol{\gamma}\|_1}{\sqrt{\gamma_l}}O_{\perp,l}(t+T)} - 1 \right ) + \left( \sum_{j=0}^{T-1} U_l(t+j) \right) \left ( e^{\frac{\theta\epsilon\|\boldsymbol{\gamma}\|_1}{\gamma_l}Q_l(t+T)} -1\right )  \right| \notag \\
    & \leq \left( \sum_{j=0}^{T-1} U_l(t+j) \right) e^{\frac{\theta\epsilon\|\boldsymbol{\gamma}\|_1}{\gamma_l}Q_l(t+T)} \left| e^{-\frac{\theta\epsilon\|\boldsymbol{\gamma}\|_1}{\sqrt{\gamma_l}}O_{\perp,l}(t+T)}-1 \right| + \left( \sum_{j=0}^{T-1} U_l(t+j) \right) \left| e^{\frac{\theta\epsilon\|\boldsymbol{\gamma}\|_1}{\gamma_l}Q_l(t+T)} -1 \right| \notag \\
    & \overset{(a)}{\leq} \left( \sum_{j=0}^{T-1} U_l(t+j) \right) e^{\frac{\theta\epsilon\|\boldsymbol{\gamma}\|_1}{\gamma_l}Q_l(t+T)} \left| e^{-\frac{\theta\epsilon\|\boldsymbol{\gamma}\|_1}{\sqrt{\gamma_l}}O_{\perp,l}(t+T)}-1 \right| + \left( \sum_{j=0}^{T-1} U_l(t+j) \right) \left ( e^{\frac{\theta\epsilon\|\boldsymbol{\gamma}\|_1}{\gamma_l}Q_l(t+T)} -1 \right ) \notag \\
    & \overset{(b)}{\leq} \left( \sum_{j=0}^{T-1} U_l(t+j) \right) e^{\frac{\theta\epsilon\|\boldsymbol{\gamma}\|_1}{\gamma_l}nA_{\max}(T-1)} \left| e^{-\frac{\theta\epsilon\|\boldsymbol{\gamma}\|_1}{\sqrt{\gamma_l}}O_{\perp,l}(t+T)}-1 \right| + \left( \sum_{j=0}^{T-1} U_l(t+j) \right) \left ( e^{\frac{\theta\epsilon\|\boldsymbol{\gamma}\|_1}{\gamma_l}nA_{\max}(T-1)} -1 \right ) \notag  \\
    & \leq \left( \sum_{j=0}^{T-1} U_l(t+j) \right) e^{\frac{\theta\epsilon\|\boldsymbol{\gamma}\|_1}{\gamma_{\min}}nA_{\max}(T-1)} \left| e^{-\frac{\theta\epsilon\|\boldsymbol{\gamma}\|_1}{\sqrt{\gamma_l}}O_{\perp,l}(t+T)}-1 \right| + \left( \sum_{j=0}^{T-1} U_l(t+j) \right) \left ( e^{\frac{\theta\epsilon\|\boldsymbol{\gamma}\|_1}{\gamma_{\min}}nA_{\max}(T-1)} -1 \right ). \label{eq:correct_abs_bound_each_l}
\end{align}
where (a) follows because $\theta > 0$ and (b) follows from Equation~\eqref{eq:unused_exp_bound}.

Substituting Equation~\eqref{eq:correct_abs_bound_each_l} into Equation~\eqref{eq:exponential_distribution}, we obtain
\begin{align}
    & \theta \epsilon \left( \frac{ e^{ \theta \epsilon Tn S_{\max}} - 1 }{ \theta \epsilon Tn S_{\max} } \right) \notag \\
    & \quad \times \mathbb{E} \left[ \sum_{l=1}^n \left[ \left| \left( \sum_{j=0}^{T-1} U_l(t+j) \right) e^{\frac{\theta\epsilon\|\boldsymbol{\gamma}\|_1}{\gamma_l}Q_l(t+T)} e^{-\frac{\theta\epsilon\|\boldsymbol{\gamma}\|_1}{\sqrt{\gamma_l}}O_{\perp,l}(t+T)} - \left( \sum_{j=0}^{T-1} U_l(t+j) \right) \right| \right] \mathbf{1}_{\{\| \boldsymbol{\Sigma} \mathbf{U}(t+T-1) \|_1 \neq 0\}} \right] \notag \\
    &\leq \theta \epsilon \left( \frac{ e^{ \theta \epsilon Tn S_{\max}} - 1 }{ \theta \epsilon Tn S_{\max} } \right) e^{\frac{\theta\epsilon\|\boldsymbol{\gamma}\|_1}{\gamma_{\min}}nA_{\max}(T-1)} \mathbb{E} \left[ \sum_{l=1}^n \left [ \left( \sum_{j=0}^{T-1} U_l(t+j) \right) \left| e^{-\frac{\theta\epsilon\|\boldsymbol{\gamma}\|_1}{\sqrt{\gamma_l}}O_{\perp,l}(t+T)} -1 \right| \right ] \mathbf{1}_{\{\| \boldsymbol{\Sigma} \mathbf{U}(t+T-1) \|_1 \neq 0\}} \right] \notag \\
    & \quad + \theta \epsilon \left(  \frac{ e^{ \theta \epsilon Tn S_{\max}} - 1 }{ \theta \epsilon Tn S_{\max} } \right) \left( e^{\frac{\theta\epsilon\|\boldsymbol{\gamma}\|_1}{\gamma_{\min}}nA_{\max}(T-1)} -1 \right) \mathbb{E} \left[ \left( \sum_{l=1}^n \sum_{j=0}^{T-1} U_l(t+j) \right) \mathbf{1}_{\{\| \boldsymbol{\Sigma} \mathbf{U}(t+T-1) \|_1 \neq 0\}} \right]. \label{eq:exp_two_terms}
\end{align}

Next, we prove that the first term and second terms in Equation~\eqref{eq:exp_two_terms} are both $o(\epsilon^2)$. For the first term, we have
\begin{align}
    & \theta \epsilon \left ( \frac{ e^{ \theta \epsilon Tn S_{\max}} - 1 }{ \theta \epsilon Tn S_{\max} } \right ) e^{ \frac{ \theta \epsilon \| \boldsymbol{\gamma} \|_1 }{ \gamma_{\min} } nA_{\max} (T-1) } \mathbb{E} \left [ \sum_{l=1}^n \left [ \left ( \sum_{j=0}^{T-1} U_l(t+j) \right ) \left | e^{ - \frac{ \theta \epsilon \| \boldsymbol{\gamma} \|_1 }{ \sqrt{\gamma_l} } O_{\perp, l}(t+T) } - 1 \right | \right ] \mathbf{1}_{\{\| \boldsymbol{\Sigma} \mathbf{U}(t+T-1) \|_1 \neq 0 \}} \right ] \notag \\
    & \overset{(a)}{\leq} \theta \epsilon \left ( \frac{ e^{ \theta \epsilon Tn S_{\max}} - 1 }{ \theta \epsilon Tn S_{\max} } \right ) e^{ \frac{ \theta \epsilon \| \boldsymbol{\gamma} \|_1 }{ \gamma_{\min} } nA_{\max} (T-1) } \mathbb{E} \left [ \sum_{l=1}^n \left [ \left ( \sum_{j=0}^{T-1} U_l(t+j) \right ) \left | e^{ - \frac{ \theta \epsilon \| \boldsymbol{\gamma} \|_1 }{ \sqrt{\gamma_l} } O_{\perp, l}(t+T) } - 1 \right | \right ] \right ] \notag \\
    & \overset{(b)}{\leq} \theta \epsilon \left ( \frac{ e^{ \theta \epsilon Tn S_{\max}} - 1 }{ \theta \epsilon Tn S_{\max} } \right ) e^{ \frac{ \theta \epsilon \| \boldsymbol{\gamma} \|_1 }{ \gamma_{\min} } nA_{\max} (T-1) } \mathbb{E} \left [ \sum_{l=1}^n \left ( \sum_{j=0}^{T-1} U_l(t+j) \right )^p \right ]^{\frac{1}{p}} \mathbb{E} \left [ \sum_{l=1}^n \left | e^{ - \frac{ \theta \epsilon \| \boldsymbol{\gamma} \|_1 }{ \sqrt{\gamma_l} } O_{\perp, l}(t+T) } - 1  \right |^{\frac{p}{p-1}} \right ]^{\frac{p-1}{p}} \notag \\
    & = \theta \epsilon \left ( \frac{ e^{ \theta \epsilon Tn S_{\max}} - 1 }{\theta \epsilon Tn S_{\max} } \right )  e^{ \frac{ \theta \epsilon \| \boldsymbol{\gamma} \|_1 }{ \gamma_{\min} } nA_{\max} (T-1) } \mathbb{E} \left [ \sum_{l=1}^n \left ( \sum_{j=0}^{T-1} U_l(t+j) \right )^{1+p-1} \right ]^{\frac{1}{p}} \mathbb{E} \left [ \sum_{l=1}^n \left |  e^{ - \frac{ \theta \epsilon \| \boldsymbol{\gamma} \|_1 }{ \sqrt{\gamma_l} } O_{\perp, l}(t+T) } - 1 \right |^{\frac{p}{p-1}} \right ]^{\frac{p-1}{p}} \notag \\
    & \overset{(c)}{\leq} \theta \epsilon^{1+\frac{1}{p}} T^{\frac{1}{p}} (TS_{\max})^{\frac{p-1}{p}} \left ( \frac{ e^{ \theta \epsilon Tn S_{\max}} - 1 }{\theta \epsilon Tn S_{\max} } \right ) e^{ \frac{ \theta \epsilon \| \boldsymbol{\gamma} \|_1 }{ \gamma_{\min} } nA_{\max} (T-1) } \mathbb{E} \left [ \sum_{l=1}^n \left | e^{ - \frac{ \theta \epsilon \| \boldsymbol{\gamma} \|_1 }{ \sqrt{\gamma_l} } O_{\perp, l}(t+T) } - 1 \right |^{\frac{p}{p-1}} \right ]^{\frac{p-1}{p}} \notag \\
    & = \theta \epsilon^{1+\frac{1}{p}} T (S_{\max})^{\frac{p-1}{p}} \left ( \frac{ e^{ \theta \epsilon Tn S_{\max}} - 1 }{ \theta \epsilon Tn S_{\max} } \right ) e^{ \frac{ \theta \epsilon \| \boldsymbol{\gamma} \|_1 }{ \gamma_{\min} } nA_{\max} (T-1) } \mathbb{E} \left [ \sum_{l=1}^n \left | e^{ - \frac{ \theta \epsilon \| \boldsymbol{\gamma} \|_1 }{ \sqrt{\gamma_l} } O_{\perp, l}(t+T) } - 1 \right |^{\frac{p}{p-1}} \mathbf{1}_{ \{ O_{\perp, l}(t+T) \neq 0 \} } \right ]^{\frac{p-1}{p}} \notag \\
    & \leq \theta \epsilon^{1+\frac{1}{p}} T (S_{\max})^{\frac{p-1}{p}} \left ( \frac{ e^{ \theta \epsilon Tn S_{\max}} - 1 }{ \theta \epsilon Tn S_{\max} } \right ) e^{ \frac{ \theta \epsilon \| \boldsymbol{\gamma} \|_1 }{ \gamma_{\min} } nA_{\max} (T-1) } \notag \\
    & \quad \times \left (  \sum_{l=1}^n \mathbb{E} \left [ \left | \frac{ e^{ - \frac{ \theta \epsilon \| \boldsymbol{\gamma} \|_1 }{ \sqrt{\gamma_l} } O_{\perp, l}(t+T) } - 1 }{ - \frac{ \theta \epsilon \| \boldsymbol{\gamma} \|_1 }{ \sqrt{\gamma_l} } O_{\perp, l}(t+T) } \right | ^{\frac{p}{p-1}} \left | \frac{ \theta \epsilon \| \boldsymbol{\gamma} \|_1 }{ \sqrt{\gamma_l} } O_{\perp, l}(t+T) \right |^{\frac{p}{p-1}} \mathbf{1}_{ \{ O_{\perp, l}(t+T) \neq 0 \} } \right ] \right )^{\frac{p-1}{p}} \notag \\
    & \leq \theta^2 \epsilon^{2+\frac{1}{p}} T (S_{\max})^{\frac{p-1}{p}} \left ( \frac{ e^{ \theta \epsilon Tn S_{\max}} - 1 }{\theta \epsilon Tn S_{\max} } \right ) \frac{ \| \boldsymbol{\gamma} \|_1 }{ \sqrt{\gamma_{\min}} } e^{ \frac{ \theta \epsilon \| \boldsymbol{\gamma} \|_1 }{ \gamma_{\min} } nA_{\max} (T-1) } \notag \\
    & \quad \times \left (  \sum_{l=1}^n \mathbb{E} \left [ \left | \frac{ e^{ - \frac{ \theta \epsilon \| \boldsymbol{\gamma} \|_1 }{ \sqrt{\gamma_l} } O_{\perp, l}(t+T) } - 1 }{ - \frac{ \theta \epsilon \| \boldsymbol{\gamma} \|_1 }{ \sqrt{\gamma_l} } O_{\perp, l}(t+T) } \right | ^{\frac{p}{p-1}} | O_{\perp, l}(t+T) |^{\frac{p}{p-1}} \mathbf{1}_{ \{ O_{\perp, l}(t+T) \neq 0 \} } \right ] \right )^{\frac{p-1}{p}}, \label{eq:exp_first_term}
\end{align}
where (a) follows from the fact that $ 0\leq \mathbf{1}_{\{\| \boldsymbol{\Sigma} \mathbf{U}(t+T-1) \|_1 \neq 0 \}} \leq 1  $, (b) follows by H\"older's inequality and (c) follows from the facts that $ \mathbb{E} \left [ \sum_{l=1}^n \left ( \sum_{j=0}^{T-1} U_l(t+j) \right ) \right ] = T\epsilon $, $U_l(t+j) \leq S_{\max}$ for all $l,j$ and that $ x^{1/p} $ is an increasing function for $x \geq 0$. \\

By Taylor expansion, we have:
\begin{align*}
    \lim\limits_{\epsilon \downarrow 0} \frac{ e^{ \theta \epsilon Tn S_{\max}} - 1 }{ \theta \epsilon Tn S_{\max} } = 1
\end{align*}

Next, we show that the last term in Equation~\eqref{eq:exp_first_term} is uniformly bounded, we prove the following claim. 
\begin{claim} \label{clm:epsilon}
    There exists a finite $\theta_{\max} > 0 $ such that for all $\theta \in (0, \theta_{\max}]$, we have: 
    \begin{align*}
        \mathbb{E} \left [ \left | \frac{ e^{ - \frac{ \theta \epsilon \| \boldsymbol{\gamma} \|_1 }{ \sqrt{\gamma_l} } O_{\perp, l}(t+T) } - 1 }{ - \frac{ \theta \epsilon \| \boldsymbol{\gamma} \|_1 }{ \sqrt{\gamma_l} } O_{\perp, l}(t+T) } \right | ^{\frac{p}{p-1}} | O_{\perp, l}(t+T) |^{\frac{p}{p-1}} \mathbf{1}_{ \{ O_{\perp, l}(t+T) \neq 0 \} } \right ] \in O(1).
    \end{align*}
\end{claim}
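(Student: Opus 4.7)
The starting point is the algebraic identity, valid on $\{O_{\perp l}(t+T)\neq 0\}$,
\begin{align*}
\left|\frac{e^{-\alpha_l O_{\perp l}(t+T)}-1}{-\alpha_l O_{\perp l}(t+T)}\right|^{p/(p-1)}|O_{\perp l}(t+T)|^{p/(p-1)}
=\frac{|e^{-\alpha_l O_{\perp l}(t+T)}-1|^{p/(p-1)}}{|\alpha_l|^{p/(p-1)}},
\end{align*}
where $\alpha_l := \theta\epsilon\|\boldsymbol{\gamma}\|_1/\sqrt{\gamma_l}$. Since $|\alpha_l|$ is of order $\epsilon$, the claim is equivalent to showing that $E[|e^{-\alpha_l O_{\perp l}(t+T)}-1|^{p/(p-1)}]\in o(\epsilon^{2+p/(p-1)})$, i.e.\ the numerator must decay strictly faster than the dimensional factor $\epsilon^{p/(p-1)}$ by two further powers of $\epsilon$.

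The natural first pass is a two-step argument. First, the mean-value inequality $|e^{-z}-1|\leq |z|e^{|z|}$ applied with $z=\alpha_l O_{\perp l}(t+T)$ gives the pointwise bound
\begin{align*}
\frac{|e^{-\alpha_l O_{\perp l}(t+T)}-1|^{p/(p-1)}}{|\alpha_l|^{p/(p-1)}}\leq |O_{\perp l}(t+T)|^{p/(p-1)}\, e^{(p/(p-1))|\alpha_l||O_{\perp l}(t+T)|}.
\end{align*}
Second, I would take expectations and separate the polynomial factor from the exponential factor via Cauchy--Schwarz, using Theorem \ref{thm_our:SSC_withrate} to control $E[|O_{\perp l}(t+T)|^{2p/(p-1)}]$ and the Hajek-type exponential moment bound in Equation \eqref{eq:exp_bound} to control $E[e^{(2p/(p-1))|\alpha_l||O_{\perp l}(t+T)|}]$. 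Restricting to $|\theta|\leq \theta_{\max}$ with $\theta_{\max}$ chosen small enough that $(2p/(p-1))|\theta_{\max}|\|\boldsymbol{\gamma}\|_1/\sqrt{\gamma_{\min}}<\eta$ (where $\eta$ is the decay rate from \eqref{eq:exp_bound}) keeps the exponential moment finite uniformly in $\epsilon$.

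The main obstacle is that this direct estimate only produces an $\epsilon$-uniform bound $E[\,\cdot\,]\leq C$, whereas the claim demands strictly $o(\epsilon^2)$ decay: in the pointwise bound above the exponential factor converges to $1$ and the polynomial factor to $|O_{\perp l}(t+T)|^{p/(p-1)}$, which has a bounded but generally non-vanishing limit under Theorem \ref{thm_our:SSC_withrate}. To close this gap, I would exploit the distributional collapse of Theorem \ref{thm_our:queuelength_distribution}, which gives $\epsilon\mathbf{O}_\perp(t+T)\Rightarrow \mathbf{0}$ as $\epsilon\downarrow 0$, and strengthen it to an $L^{p/(p-1)}$ convergence rate via the uniform integrability supplied by \eqref{eq:exp_bound}. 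Concretely, the plan is (i) to decompose $O_{\perp l}(t+T)$ into its low-frequency component along the state-space-collapse subspace (vanishing exactly) and a residual component controlled by a Taylor remainder, then (ii) to apply the zero-drift identity $E[\Delta W_\parallel]=0$ together with the $U$-moment estimates in Equations \eqref{eq:upb_Eofu_withrate}--\eqref{eq:upb_Eofu2_withrate} to extract the missing two powers of $\epsilon$. This rate-upgrade from the bounded-moment SSC estimate to a quantitative $o(\epsilon^2)$ decay is the hardest step, and it is what genuinely distinguishes the claim from the crude $\epsilon$-uniform bound that would otherwise be obtained.
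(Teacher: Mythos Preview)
Your first-pass argument is essentially the paper's proof: separate the two factors via H\"older (the paper uses a general H\"older exponent $\tilde p$ chosen so that $(p/(p-1))\tilde p=2$, rather than Cauchy--Schwarz), bound the polynomial factor $E\big[|O_{\perp l}(t+T)|^{2}\big]$ by the state-space-collapse estimate of Theorem~\ref{thm_our:SSC_withrate}, and bound the exponential factor by an MGF. Your variant using the Hajek exponential bound \eqref{eq:exp_bound} on $\|\mathbf O_\perp\|_2$ directly is in fact slightly cleaner than the paper's detour through $\|\mathbf Q\|_1$ and the finite-MGF hypothesis. In either case one obtains only $E[\,\cdot\,]\le C$ uniformly in $\epsilon$, exactly as you observed.

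The gap you identify is real, but it is a defect in the \emph{statement} of the claim, not in your argument. The paper's own proof never obtains $o(\epsilon^2)$ either: the sentence ``by state space collapse \ldots\ we have that $E[|O_{\perp l}(t+T)|^{(p/(p-1))\tilde p}]^{1/\tilde p}\in o(\epsilon^2)$'' is a misstatement---Theorem~\ref{thm_our:SSC_withrate} gives only an $\epsilon$-independent constant bound. What is actually needed in context is the $O(1)$ bound, because the expectation in the claim sits inside Equation~\eqref{eq:exp_first_term} multiplied by a prefactor $|\theta|^2\epsilon^{2+1/p}$; once the expectation is bounded, the whole expression is $O(\epsilon^{2+1/p})=o(\epsilon^2)$, which is the target of Lemma~\ref{lemma:exponential_is_epsilon_square}.

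Your proposed rate-upgrade should therefore be dropped. Beyond being unnecessary, it is circular: you invoke Theorem~\ref{thm_our:queuelength_distribution}, but the claim is a step inside the proof of that very theorem. The correct fix is to replace ``$\in o(\epsilon^2)$'' in the claim by ``is bounded uniformly in $\epsilon$'' and keep your first-pass argument.
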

For each $ l \in [n]$, Holder's inequality implies that
\begin{align*}
    & \mathbb{E} \left [ \left | \frac{ e^{ - \frac{ \theta \epsilon \| \boldsymbol{\gamma} \|_1 }{ \sqrt{\gamma_l} } O_{\perp, l}(t+T) } - 1 }{ - \frac{ \theta \epsilon \| \boldsymbol{\gamma} \|_1 }{ \sqrt{\gamma_l} } O_{\perp, l}(t+T) } \right | ^{\frac{p}{p-1}} | O_{\perp, l}(t+T) |^{\frac{p}{p-1}} \mathbf{1}_{ \{ O_{\perp, l}(t+T) \neq 0 \} } \right ] \\
    & \leq \mathbb{E} \left [ \left | \frac{ e^{ - \frac{ \theta \epsilon \| \boldsymbol{\gamma} \|_1 }{ \sqrt{\gamma_l} } O_{\perp, l}(t+T) } - 1 }{ - \frac{ \theta \epsilon \| \boldsymbol{\gamma} \|_1 }{ \sqrt{\gamma_l} } O_{\perp, l}(t+T) } \right | ^{\frac{p}{p-1} \frac{\tilde{p}}{\tilde{p}-1} } \mathbf{1}_{ \{ O_{\perp, l}(t+T) \neq 0 \} } \right ]^{ \frac{\tilde{p}-1}{\tilde{p}} } \mathbb{E} \left [  | O_{\perp, l}(t+T) |^{\frac{p}{p-1} \tilde{p} } \mathbf{1}_{ \{ O_{\perp, l}(t+T) \neq 0 \} } \right ]^{\frac{1}{\tilde{p}}} \\
    & \leq \mathbb{E} \left [ \left | \frac{ e^{ - \frac{ \theta \epsilon \| \boldsymbol{\gamma} \|_1 }{ \sqrt{\gamma_l} } O_{\perp, l}(t+T) } - 1 }{ - \frac{ \theta \epsilon \| \boldsymbol{\gamma} \|_1 }{ \sqrt{\gamma_l} } O_{\perp, l}(t+T) } \right | ^{\frac{p}{p-1} \frac{\tilde{p}}{\tilde{p}-1} } \mathbf{1}_{ \{ O_{\perp, l}(t+T) \neq 0 \} } \right ]^{ \frac{\tilde{p}-1}{\tilde{p}} } \mathbb{E} \left [  | O_{\perp, l}(t+T) |^{\frac{p}{p-1} \tilde{p} } \right ]^{\frac{1}{\tilde{p}}},
\end{align*}
where $\tilde{p}>1$. Choose any $p>2$ and set $\tilde p=2(p-1)/p>1$ such that $ (p/(p-1))\tilde p=2 $. Then, by Theorem~\ref{thm_our:SSC_withrate}, $ \mathbb E\left[|O_{\perp,l}(t+T)|^2\right] \leq \mathbb E\left[\|\mathbf O_\perp(t+T)\|_2^2\right] \leq N_\perp^2(n,T)$, so this factor is uniformly bounded in $\epsilon$.

For the term inside the expectation above, we have
\begin{align*}
    \left| \frac{ e^{-\frac{\theta\epsilon\|\boldsymbol{\gamma}\|_1}{\sqrt{\gamma_l}}O_{\perp, l}(t+T)}-1 }{ -\frac{\theta\epsilon\|\boldsymbol{\gamma}\|_1}{\sqrt{\gamma_l}}O_{\perp, l}(t+T) } \right| \mathbf{1}_{ \{ O_{\perp, l}(t+T) \neq 0 \} }
    & \overset{(a)}{\leq} \frac{ e^{\frac{\theta\epsilon\|\boldsymbol{\gamma}\|_1}{\sqrt{\gamma_l}}|O_{\perp, l}(t+T)|}-1 }{ \frac{\theta\epsilon\|\boldsymbol{\gamma}\|_1}{\sqrt{\gamma_l}}|O_{\perp, l}(t+T)| } \mathbf{1}_{ \{ O_{\perp, l}(t+T) \neq 0 \} } \\
    & \overset{(b)}{\leq} \frac{ e^{\frac{\theta\epsilon\|\boldsymbol{\gamma}\|_1}{\sqrt{\gamma_l}}\|\mathbf O_\perp(t+T)\|_2}-1 }{ \frac{\theta\epsilon\|\boldsymbol{\gamma}\|_1}{\sqrt{\gamma_l}}\|\mathbf O_\perp(t+T)\|_2 } \mathbf{1}_{ \{ O_{\perp, l}(t+T) \neq 0 \} }  \\
    & \overset{(c)}{\leq} e^{\frac{\theta\epsilon\|\boldsymbol{\gamma}\|_1}{\sqrt{\gamma_l}}\|\mathbf O_\perp(t+T)\|_2} \mathbf{1}_{ \{ O_{\perp, l}(t+T) \neq 0 \} } \\
    & \leq e^{\frac{\theta\epsilon\|\boldsymbol{\gamma}\|_1}{\sqrt{\gamma_l}}\|\mathbf O_\perp(t+T)\|_2} \\
    &\leq e^{\frac{\theta\epsilon\|\boldsymbol{\gamma}\|_1}{\sqrt{\gamma_{\min}}}\|\mathbf O_\perp(t+T)\|_2}.
\end{align*}
where (a) follows from the fact that $|e^{-x}-1|\leq e^{|x|}-1$ for all $x \in \mathbb{R}$, (b) follows from $ |O_{\perp, l}(t+T)|\leq \|\mathbf O_\perp(t+T)\|_2 $ and the fact that 
$(e^x-1)/x$ is nondecreasing on $x\geq0$ and (c) follows from the fact that $ (e^x-1)/x \leq e^x $ for all $ x \in \mathbb{R}_{+} $.

Therefore, we have
\begin{align*}
    \mathbb{E} \left [ \left | \frac{ e^{ - \frac{ \theta \epsilon \| \boldsymbol{\gamma} \|_1 }{ \sqrt{\gamma_l} } O_{\perp, l}(t+T) } - 1 }{ - \frac{ \theta \epsilon \| \boldsymbol{\gamma} \|_1 }{ \sqrt{\gamma_l} } O_{\perp, l}(t+T) } \right | ^{\frac{p}{p-1} \frac{\tilde{p}}{\tilde{p}-1} } \mathbf{1}_{ \{ O_{\perp, l}(t+T) \neq 0 \} } \right ] & \leq \mathbb{E} \left [ e^{  \frac{ \theta \| \boldsymbol{\gamma} \|_1 }{ \sqrt{\gamma_{\min}} } \frac{p}{p-1} \frac{\tilde{p}}{\tilde{p}-1} \epsilon \| \mathbf{O}_{\perp}(t+T)\|_2 } \right ] \\
    & \overset{(a)}{\leq} \mathbb{E} \left [ e^{  \frac{ \theta \| \boldsymbol{\gamma} \|_1 }{ \sqrt{\gamma_{\min}} } \frac{p}{p-1} \frac{\tilde{p}}{\tilde{p}-1} \epsilon \| \mathbf{O}(t+T)\|_2 } \right ] \\
    & \overset{(b)}{\leq} \mathbb{E} \left [ e^{  \frac{ \theta \| \boldsymbol{\gamma} \|_1 }{ \sqrt{\gamma_{\min}} } \frac{p}{p-1} \frac{\tilde{p}}{\tilde{p}-1} \epsilon \| \mathbf{O}(t+T)\|_1 } \right ] \\
    & \overset{(c)}{\leq} \mathbb{E} \left [ e^{  \frac{ \theta \| \boldsymbol{\gamma} \|_1 }{ \gamma_{\min} } \frac{p}{p-1} \frac{\tilde{p}}{\tilde{p}-1} \epsilon \| \mathbf{Q}(t+T)\|_1 } \right ] \\
    & \overset{(d)}{\leq} K_3
\end{align*}
where (a) follows from $ \| \mathbf{O}_{\perp}(t+T) \|_2 \leq \| \mathbf{O}(t+T) \|_2 $ by the fact $ \| \mathbf{O}_{\perp}(t+T) \|^2_2 + \| \mathbf{O}_{\parallel}(t+T) \|^2_2 = \| \mathbf{O}(t+T) \|^2_2  $, (b) follows from $ \| \mathbf{O}(t+T)  \|_2 \leq \| \mathbf{O}(t+T)  \|_1 $, (c) follows from $ \| \mathbf{O}(t+T) \|_1 \leq \| \mathbf{Q}(t+T) \|_1 /\sqrt{\gamma_{\min}} $ and (d) follows from Lemma~\ref{lem_our:ss_drift_zero_with_rate} and the fact that $\mathbf Q(t+T)\overset{d}{=}\mathbf Q(t)$ in steady state. 


Therefore, the last parenthetical term in Equation~\eqref{eq:exp_first_term} is \(O(1)\), uniformly for all sufficiently small \(\epsilon\). Hence, the first term in Equation~\eqref{eq:exp_two_terms} is \(O(\epsilon^{2+1/p})=o(\epsilon^2)\).

The second term in Equation~\eqref{eq:exp_two_terms} is bounded as follows:
\begin{align*}
    & \theta \epsilon \left ( \frac{ e^{ \theta \epsilon Tn S_{\max}} - 1 }{ \theta \epsilon Tn S_{\max} } \right ) \left ( e^{ \frac{ \theta \epsilon \| \boldsymbol{\gamma} \|_1 }{ \gamma_{\min} } nA_{\max} (T-1) } - 1 \right ) \mathbb{E} \left [ \left( \sum_{l=1}^n \sum_{j=0}^{T-1} U_l(t+j) \right) \mathbf{1}_{\{\| \boldsymbol{\Sigma} \mathbf{U}(t+T-1) \|_1 \neq 0 \}} \right ] \\
    & = \theta \epsilon \left ( \frac{ e^{ \theta \epsilon Tn S_{\max}} - 1 }{\theta \epsilon Tn S_{\max} } \right ) \left ( e^{ \frac{ \theta \epsilon \| \boldsymbol{\gamma} \|_1 }{ \gamma_{\min} } nA_{\max} (T-1) } - 1 \right ) \\
    & \quad \times \mathbb{E} \left [ \left( \sum_{l=1}^n \sum_{j=0}^{T-1} U_l(t+j) \right) \middle | \, \| \boldsymbol{\Sigma} \mathbf{U}(t+T-1) \|_1 \neq 0 \right ] \mathbb{P}\left \{ \| \boldsymbol{\Sigma} \mathbf{U}(t+T-1) \|_1 \neq 0 \right \} \\
    & \leq T n S_{\max} \theta \epsilon \left ( \frac{ e^{ \theta \epsilon Tn S_{\max}} - 1 }{ \theta \epsilon Tn S_{\max} } \right ) \left ( e^{ \frac{ \theta \epsilon \| \boldsymbol{\gamma} \|_1 }{ \gamma_{\min} } nA_{\max} (T-1) } - 1 \right ) \mathbb{P}\left \{ \| \boldsymbol{\Sigma} \mathbf{U}(t+T-1) \|_1 \neq 0 \right \}
\end{align*}
By Taylor expansion, we have:
\begin{align*}
    e^{ \frac{ \theta \epsilon \| \boldsymbol{\gamma} \|_1 }{ \gamma_{\min} } nA_{\max} (T-1) } - 1  = \frac{ \theta \| \boldsymbol{\gamma} \|_1 }{ \gamma_{\min} } nA_{\max} (T-1) \epsilon + o(\epsilon)
\end{align*}
Next we bound $ \mathbb{P} \left \{ \| \boldsymbol{\Sigma} \mathbf{U}(t+T-1) \|_1 \neq 0 \right \} $. We use the fact that in steady state, $ \mathbb{E} [ \| \boldsymbol{\Sigma} \mathbf{U}(t+T-1) \|_1 ] = T\epsilon $. Then
\begin{align*}
    \mathbb{P} \left \{ \| \boldsymbol{\Sigma} \mathbf{U}(t+T-1) \|_1 \neq 0 \right \} & = \frac{ \mathbb{E} [ \| \boldsymbol{\Sigma} \mathbf{U}(t+T-1) \|_1 ] }{ \mathbb{E} [ \| \boldsymbol{\Sigma} \mathbf{U}(t+T-1) \|_1 \, | \, \| \boldsymbol{\Sigma} \mathbf{U}(t+T-1) \|_1 \neq 0 ] } \\
    & \leq T \epsilon.
\end{align*}
Therefore, the second term is $O(\epsilon^3)=o(\epsilon^2)$.

Combining the bounds for the first and second terms in Equation~\eqref{eq:exp_two_terms}, we obtain
\begin{align*}
    & \mathbb{E} \left [ \left ( e^{\theta \epsilon \left < \mathbf{c}, \mathbf{O}(t+T) \right >} - 1 \right ) \left ( e^{-\theta \epsilon \left < \mathbf{c}, \boldsymbol{\Sigma} \mathbf{U_o}(t+T-1) \right >} - 1 \right )\right ] \in o(\epsilon^2).
\end{align*}

\section{Proof of Corollary~\ref{co_our:Pod-ED}}
\label{prf:Pod-ED}

Under Po\(d\)-ED, we have \(T=1\) and \(\boldsymbol{\gamma}=\boldsymbol{\mu}\). Thus, the sampled permutation is generated by the \(\boldsymbol{\mu}\)-scaled queue lengths. Conditional on a sampled permutation \(\eta\), Po\(d\)-ED samples \(d\) servers uniformly without replacement and assigns the arrival to the shortest sampled \(\boldsymbol{\mu}\)-scaled queue. Hence, the induced dispatch fraction vector \(f_\eta\) is the same as that of Po\(d\), except that the sampled permutation is formed using \(\boldsymbol{\gamma}=\boldsymbol{\mu}\). This leads to the same stability expression as the Po\(d\) condition in \cite{hurtado2021throughput-Pod-stability-heterogeneous}. \\

\noindent \textbf{Step 1.} We compute the policy-induced dispatch fractions.

For any \(\eta\in\mathcal S_n\), the \(l\)-th longest \(\boldsymbol{\mu}\)-scaled queue is selected if it is sampled and the other \(d-1\) sampled queues are among the first \(l-1\) positions in the sampled permutation. Therefore,
\begin{align*}
    f_{l,\eta} =
    \begin{cases}
        0, & l<d,\\
        \dfrac{\binom{l-1}{d-1}}{\binom{n}{d}}, & l\ge d.
    \end{cases}
\end{align*}
Thus, for any \(j\in[n]\),
\begin{align*}
    \sum_{l=1}^j f_{l,\eta} =
    \begin{cases}
        0, & j<d,\\
        \dfrac{\binom{j}{d}}{\binom{n}{d}}, & j\ge d,
    \end{cases}
\end{align*}
where the second equality follows from \(\sum_{l=d}^j\binom{l-1}{d-1}=\binom{j}{d}\). In particular, \(f_\eta\) does not depend on \(\eta\). \\

\noindent \textbf{Step 2.} We prove part (a).

By the definition of \(h^*\),
\begin{align*}
    h^* & = \min_{\eta\in\mathcal S_n}\min_{j\in[n]} \frac{\sum_{l=1}^j\mu_{\eta(l)}}{\sum_{l=1}^j f_{l,\eta}} \\
    & \overset{(a)}{=} \min_{j\in\{d,\ldots,n\}} \left\{ \frac{\binom{n}{d}}{\binom{j}{d}} \left(\sum_{l=1}^j\mu_l\right) \right\},
\end{align*}
where (a) uses \(\mu_1\le\cdots\le\mu_n\). Theorem~\ref{thm_our:stable_withrate_withratio} gives positive recurrence when \(n\lambda<h^*\). Since \(f_\eta\) does not depend on \(\eta\), Corollary~\ref{cor_our:rank_invariant_pbd} implies Assumption~\ref{assump:persistent_bottleneck_dominance}. Then Theorem~\ref{thm_our:stable_withrate_withratio_necessary_general} gives transience when \(h^*<n\lambda<\sum_{i=1}^n\mu_i\). \\

\noindent \textbf{Step 3.} We prove parts (b) and (c).

By Corollary~\ref{co_our:condition_withrate_withratio}, Po\(d\)-ED is throughput optimal if, for all \(j\in[n]\) and \(\eta\in\mathcal S_n\),
\begin{align*}
    \sum_{l=1}^j f_{l,\eta} \le \frac{\sum_{l=1}^j\mu_{\eta(l)}}{\sum_{l=1}^n\mu_l}.
\end{align*}
For \(j<d\), we have \( \sum_{l=1}^j f_{l,\eta} = 0 \). For \(j\ge d\), we have \( \sum_{l=1}^j f_{l,\eta} = \binom{j}{d}/\binom{n}{d}\). Since \(\sum_{l=1}^j\mu_{\eta(l)}\ge \sum_{l=1}^j\mu_l\), the condition in part (b) implies the above inequality for all \(j<n\), and the case \(j=n\) holds with equality. Therefore, Po\(d\)-ED is throughput optimal.

Finally, Corollary~\ref{co_our:delayoptimality} requires the same partial-sum inequalities to be strict for all \(j<n\). The strict condition in part (c), together with \(\sum_{l=1}^j\mu_{\eta(l)}\ge \sum_{l=1}^j\mu_l\), gives the required strict inequalities for all \(j<n\) and all \(\eta\in\mathcal S_n\). Therefore, Po\(d\)-ED is asymptotically delay optimal in heavy traffic.

\section{Proof of Corollary~\ref{co_our:k-WRAND-SLQ-d}}
\label{prf:k_WRAND_SLQ_d}

Under \(k\)-WRAND-SLQ-\(d\), the sampling interval is \(T=k\), and the scaling vector is \(\boldsymbol{\gamma}=\mathbf 1\). Conditional on a sampled permutation \(\eta\), the first \(d\) positions are skipped, and arrivals are assigned among positions \(d+1,\ldots,n\) with probabilities proportional to the corresponding service rates. \\

\noindent \textbf{Step 1.} We compute the policy-induced dispatch fractions.

For every \(\eta\in\mathcal S_n\),
\begin{align*}
    f_{l,\eta} =
    \begin{cases}
        0, & l\le d,\\
        \dfrac{\mu_{\eta(l)}}{\sum_{j=d+1}^n\mu_{\eta(j)}}, & l>d.
    \end{cases}
\end{align*}
Thus, for any \(m\in[n]\),
\begin{align*}
    \sum_{l=1}^m f_{l,\eta} =
    \begin{cases}
        0, & m\le d,\\
        \dfrac{\sum_{l=d+1}^m\mu_{\eta(l)}}{\sum_{j=d+1}^n\mu_{\eta(j)}}, & m>d.
    \end{cases}
\end{align*}

\noindent \textbf{Step 2.} We prove throughput optimality and asymptotic delay optimality.

By Corollary~\ref{co_our:condition_withrate_withratio}, it is enough for throughput optimality to show that, for all \(m\in[n]\) and \(\eta\in\mathcal S_n\),
\begin{align*}
    \sum_{l=1}^m f_{l,\eta} \le \frac{\sum_{l=1}^m\mu_{\eta(l)}}{\sum_{l=1}^n\mu_l}.
\end{align*}
If \(m\le d\), then \(\sum_{l=1}^m f_{l,\eta}=0\), so the inequality holds. If \(m>d\), then
\begin{align*}
    \sum_{l=1}^m f_{l,\eta} = \frac{\sum_{l=d+1}^m\mu_{\eta(l)}}{\sum_{l=d+1}^n\mu_{\eta(l)}}.
\end{align*}
Therefore, the difference between the right-hand side and the left-hand side is
\begin{align*}
    \frac{\sum_{l=1}^m\mu_{\eta(l)}}{\sum_{l=1}^n\mu_l} -  \sum_{l=1}^m f_{l,\eta} =\frac{\sum_{l=1}^m\mu_{\eta(l)}}{\sum_{l=1}^n\mu_l} - \frac{\sum_{l=d+1}^m\mu_{\eta(l)}}{\sum_{l=d+1}^n\mu_{\eta(l)}}
    = \frac{ \left(\sum_{l=1}^d\mu_{\eta(l)}\right)\left(\sum_{l=m+1}^n\mu_{\eta(l)}\right) }{ \left(\sum_{l=1}^n\mu_l\right) \left(\sum_{l=d+1}^n\mu_{\eta(l)}\right) } \ge 0.
\end{align*}
Thus the partial-sum condition in Corollary~\ref{co_our:condition_withrate_withratio} holds, and \(k\)-WRAND-SLQ-\(d\) is throughput optimal for all \(d\in\{1,\ldots,n-1\}\).

Next, by Corollary~\ref{co_our:delayoptimality}, it is enough to show that the same partial-sum inequalities are strict for all \(m\in[n]\setminus\{n\}\) and all \(\eta\in\mathcal S_n\). If \(m\le d\), then \(\sum_{l=1}^m f_{l,\eta}=0\), while \(\sum_{l=1}^m\mu_{\eta(l)}/\sum_{l=1}^n\mu_l>0\). If \(d<m<n\), then both \(\sum_{l=1}^d\mu_{\eta(l)}\) and \(\sum_{l=m+1}^n\mu_{\eta(l)}\) are strictly positive, so the displayed difference is strictly positive. Hence the strict partial-sum condition holds for all \(m<n\), and \(k\)-WRAND-SLQ-\(d\) is asymptotically delay optimal in heavy traffic for all \(d\in\{1,\ldots,n-1\}\).


\end{document}